\newcommand{\footremember}[2]{%
    \footnote{#2}
    \newcounter{#1}
    \setcounter{#1}{\value{footnote}}%
}
\newcommand{\R}{\mathbb{R}}
\newcommand{\var}{\text{Var}}
\newcommand{\pr}{\mathbb{P}}
\newcommand{\E}{\mathbb{E}}
\newcommand{\cov}{\text{Cov}}
\newcommand{\im}{\sqrt{-1}}
\newcommand{\mt}{\tilde{m}}
\newcommand{\FF}{{\cal F}}
\newtheorem{lemma}{Lemma}[section]
\newtheorem{remark}{Remark}
\newtheorem{corollary}{Corollary}
\newtheorem{proposition}{Proposition}
\newtheorem{theorem}{Theorem}[section]
\newtheorem{assumption}{Assumption}
\newtheorem{example}{Example}
\begin{document}

	\title{Frequency Detection and Change Point Estimation for Time Series of Complex Oscillation}

\author{Hau-Tieng Wu\footremember{1}{Departments of Mathematics and Department of Statistical Science, Duke University} \ and Zhou Zhou\footremember{2}{Department of Statistics, University of Toronto} }
\date{}

\maketitle

\begin{abstract}
We consider detecting the evolutionary oscillatory pattern of a signal when it is contaminated by non-stationary noises with complexly time-varying data generating mechanism. A high-dimensional dense progressive periodogram test is proposed to accurately detect all oscillatory frequencies. A further phase-adjusted local change point detection algorithm is applied in the frequency domain to detect the locations at which the oscillatory pattern changes. Our method is shown to be able to detect all oscillatory frequencies and the corresponding change points within an accurate range with a prescribed probability asymptotically. A Gaussian approximation scheme and an overlapping-block multiplier bootstrap methodology for sums of complex-valued high dimensional non-stationary time series without variance lower bounds are established, which could be of independent interest. This study is motivated by oscillatory frequency estimation and change point detection problems encountered in physiological time series analysis. An application to spindle detection and estimation in electroencephalogram recorded during sleep is used to illustrate the usefulness of the proposed methodology.
 
\end{abstract}

\noindent%
{\it Keywords:} oscillation frequency detection; oscillation change point detection; non-stationary time series; time-frequency analysis; high dimensional Gaussian approximation; spectral domain methods.

\section{Introduction}
A major task in physiological time series analysis is to detect and estimate the complex oscillatory pattern of the observed stochastic process over time. In the past century, researchers have established various physiological knowledge about the complex oscillatory signals and its clinical applications; see, for example, dynamics in the breathing signal \citep{benchetrit2000} and electrocardiogram \citep{malik1996} for a far from exhaustive list of reference. However, there are still a lot left unknown when we encounter a physiological time series, probably due to its complicated characteristic structure. Specifically, common characteristics shared by physiological time series include but not exclusively the following. First, the time series is usually composed of multiple oscillatory components, and each component usually oscillates with time-varying frequency, amplitude, or even oscillatory morphology. Second, the signal is usually contaminated by non-stationary noise, and various artifacts. Moreover, the frequency or amplitude of an oscillatory component might abruptly jump from one to another. 

There have been quite a few analysis tools developed in the time-frequency (TF) analysis society (\cite{daubechies1992}, \cite{flandrin1998time}) toward studying this kind of time series with various clinical applications. However, TF analysis tools, particularly those nonlinear-type tools, have not been widely considered in the time series society and their statistical properties are largely unknown, except few current efforts; for example, \cite{Adak1998,Nason2000,chen2014,bruna2015intermittent,sourisseau2019,liu2022asymptotic}. It is interesting to ask if it is possible to apply existing TF analysis tools and the underlying ideas, to design more suitable statistical analysis and inferential tools. We shall mention that the local change point detection statistic can be viewed as detecting discontinuity by the Haar wavelet.
In this study, motivated by the clinical needs of detecting oscillatory components and quantifying its dynamics, we focus on two critical problems with a direct clinical interest. Specifically, under the oscillatory signal model with non-stationary noise model that we will introduce soon, we design a statistic to determine if there is an oscillatory component in a given physiological signal, and provide a strategy applying the wavelet analysis to decide if an oscillatory component has a change point behavior in its frequency over time.
To our knowledge, there exists no statistics literature on oscillation change point detection, except some remotely related literature (\cite{last2008detecting}, \cite{lavielle2000multiple} and \cite{preuss2015detection}). 
A detailed literature review of oscillation frequency and change point detection can be found in Section \ref{sec:lr} of the supplementary.

\subsection{Challenges in statistical analysis of complex oscillation}\label{sec:challenges}

Let $\{X_{i,n}:=\mu_{i,n}+\epsilon_{i,n}\}_{i=1}^n$, where $\E(\epsilon_{i,n})=0$, be an observed non-stationary time series and $\mu_{i,n}$ is a deterministic signal. A commonly used tool for oscillatory frequency detection is the {\em periodogram} defined as $I_{n,X}(\omega)=|\sum_{j=1}^nX_{j,n}e^{\sqrt{-1}\omega j}|^2/n$, $\omega\in[0,\pi]$. In principle, $I_{n,X}(\omega_0)$ should be large if $X_{i,n}$ is oscillatory at frequency $\omega_0$. However, non-stationarity in the noises $\epsilon_{i,n}$ as well as possible change points in the oscillation bring great challenges to the rigorous statistical analysis of complex oscillatory signals. We will discuss these challenges in detail in the next two subsections. 

\vspace{-0.5cm}

\subsubsection{Spectral dependency }\label{sec:sd_cp}
Let $\omega^*_j=2\pi j/n$, $j=1,2,\cdots,\lfloor n/2\rfloor$ be the {\em canonical frequencies}. A fundamental result for classic oscillatory frequency detection is that  $I_{n,X}(\omega^*_j)$, $j=1,2,\cdots,\lfloor n/2\rfloor$, are asymptotically independent when $\{X_{i,n}\}$ is stationary under some mild conditions. See for instance \cite{davis1999} and \cite{lin2009}. Consequently a Gumbel-type limiting distribution can be derived for the maximum deviation of the periodogram on the  canonical frequencies under the null hypothesis that there is no oscillation. Nevertheless, the periodograms on the canonical frequencies are no longer asymptotically independent for non-stationary time series.  
See, for instance, \cite{dwivedi2011} and \cite{zhou2014} for detailed calculations and discussions. Consequently, it has been a difficult and open problem to derive the maximum deviation of the periodogram on a dense set of frequencies for non-stationary time series. Without such results, it is difficult to distinguish peaks of the periodogram which reflect the oscillation from those caused by the noise. We refer to Sections \ref{sec:sd} and \ref{sec:class_freq} in the supplementary for numerical experiments on spectral dependency and the performance of classic frequency detection algorithms under non-stationarity.

\subsubsection{Spectral energy leak}\label{sec:sel}
One major challenge in oscillation change point detection is the spectral energy leak phenomenon.  
For $\omega\in[0,\pi]$, let $\{L_n(i,\omega):=\sum_{k=1}^i X_{k,n}e^{\sqrt{-1}\omega k}\}_{i=1}^n$ be the partial sum process of the Fourier transform of $X_{i,n}$ at frequency $\omega$. One of the most popular change point detection algorithms is the cumulative sum (CUSUM) test, which utilizes  $C_n(i,\omega):=[L_n(i,\omega)-iL_n(n,\omega)/n]/\sqrt{n}$.  In principle $C_n(i,\omega)$ should be small uniformly across $i$ if there is no oscillation change point at frequency $\omega$. Consider an example where $X_{i,n}=\cos(\omega^o i )+\epsilon_i$ with $\omega^o\in(0,\pi]$ and $\epsilon_i$ i.i.d. standard normal. Clearly $X_{i,n}$ is oscillating at $\omega^o$ without any change points. See Figure \ref{fig:energy_leak}  for an example of the heat map of $|C_n(i,\omega)|$.  We observe large values of $|C_n(i,\omega)|$ in a frequency band around  $\omega^o$,  although $C_n(i,\omega)$ are indeed uniformly small at frequency $\omega^o$. 
Therefore, the CUSUM test fails in this case as it produces strong false positive information near an oscillatory frequency. This problem persists if we apply other change point detection algorithms such as those based on binary segmentation or dynamic programming to the spectral domain. The cause of the problem is spectral energy leak in the sense that 
$\sum_{k=1}^i \cos (\omega^o k)e^{\sqrt{-1}\omega k}$ is a nonlinear function of $i$ with magnitude $O(|\omega-\omega^o|^{-1})$ if $i$ is large and $\omega$ is close (but not too close) to $\omega^o$. See Section \ref{sec:el} in the supplementary material for more discussions and a numerical experiment.

\subsection{Proposed methodology and its theoretical property}
In this paper, we devise a two-stage methodology for oscillatory frequency detection and change point estimation respectively which addresses the aforementioned challenges.

\subsubsection{The dense progressive periodogram test}\label{sec:DPPT}
One limitation of the classic oscillatory frequency detection algorithms is that only the canonical frequencies are considered. Hence the estimation accuracy is $O_\pr(1/n)$ which is slower than the parametric rate $O_\pr(n^{-3/2})$ for oscillatory frequency estimation \citep{genton2007}.  To address the latter issue as well as phase cancellation, we propose to investigate the progressive periodogram defined as $\{|L_n(i,\omega)|/\sqrt{n}\}_{i=1}^n$ on a dense grid of frequencies with mesh size no larger than $O(n^{-3/2})$ for the oscillatory frequency detection. {The reason for selecting a mesh size no larger than $O(n^{-3/2})$ is to make it possible for our method to detect the oscillation frequencies within an optimal $O_{\pr}(n^{-3/2})$ range}.
Note that in principle $\max_{1\le i\le n}|L_n(i,\omega)|/\sqrt{n}$ should be large if there are oscillations of sufficient length with possible occasional abrupt phase changes at frequency $\omega$.  

The key to the successful implementation of the above-mentioned dense progressive periodogram test (DPPT) is to investigate the maximum deviation 
\begin{eqnarray}\label{dppt}
F(W):=\max_{\omega\in W}\max_{1\le i\le n}|L_n(i,\omega)|/\sqrt{n}
\end{eqnarray} 
under the null hypothesis that there is no oscillation, where $W$ is a dense collection of frequencies with cardinality $p$. As we mentioned in Section \ref{sec:challenges}, the latter is a difficult problem due to spectral dependency caused by non-stationarity of $\{\epsilon_{i,n}\}$. In this paper, we tackle this problem by viewing (\ref{dppt}) as the maximum of a complex-valued high-dimensional dependent random vector of dimensionality $pn$. Then we utilize Stein's method of expectation approximation \citep{stein1986} to show that the law of $F(W)$ can be well approximated by that of the DPPT of a non-stationary Gaussian process which preserves the covariance structure of $\{X_{i,n}\}$. A high-dimensional extension to the overlapping-block multiplier bootstrap (OBMB) in \cite{zhou2013} is proposed to approximate the behaviour of the latter non-stationary Gaussian DPPT. We will show that the DPPT with the OBMB is able to detect oscillatory frequencies within an $O_\pr(n^{-1})$ range if there are changes in the phases of the oscillation. For oscillatory frequencies whose phases do not change over time, the DPPT is able to detect them within a nearly optimal  $O_\pr(n^{-3/2}\log n)$ range. In the special case where there is no oscillation, the DPPT will be shown to be able to accept the null hypothesis of no oscillation with a prescribed probability asymptotically.

In the literature, Gaussian approximations to the maximum of high-dimensional sums using Stein's method were investigated by, among others, \cite{chernozhukov2013} for independent data and \cite{zhang2017} and \cite{zhang2018} for time series. An important assumption in the latter papers is that the variances of the vector components should be bounded from both above and below and hence be balanced. Consequently, their results cannot be used directly for the DPPT since the variances of $L_n(i,\omega)$ are proportional to $i$ and hence are highly unbalanced across time.  In this paper, we generalize Nazarov's anti-concentration inequality \citep{nazarov2003} to the unbalanced variance case and extend the results of the aforementioned papers to complex-valued high-dimensional time series without variance lower bound. As many high-dimensional problems are without balanced variances among the vector components, our result may be of separate interest.  
  

\subsubsection{The phase-adjusted local change point detection algorithm}\label{sec:palcpd}
As mentioned in Section \ref{sec:sel}, the spectral energy leak phenomenon has to be carefully addressed in order to perform oscillation change point detection. 
One remedy to the latter phenomenon is to reduce error amplification of the Fourier transforms. 
This observation inspires us to consider a {\it local} change point detection algorithm. Specifically, for any oscillatory frequency $\hat \omega_k$ estimated from stage 1 and time point $i$, we utilize the norm of the difference between phase-adjusted local Fourier transforms 
\begin{eqnarray*}
T(i)=\frac{1}{\sqrt{2\tilde m}}\left|\sum_{l=i-\tilde m}^{i}e^{\sqrt{-1}\hat\omega_k(l-i)}X_l-\sum_{l=i+1}^{i+\tilde m+1}e^{\sqrt{-1}\hat\omega_k(l-i)}X_l\right|
\end{eqnarray*} 
to test whether there is an oscillation change point at time $i$ and frequency {$\omega_k$}, where $\tilde m$ is a bandwidth controlling the size of the local neighborhood. 
$T(i)$ should be large if the oscillatory pattern at frequency {$\omega_k$} changes at time $i$. Since $T(i)$ performs Fourier transforms only in a radius $\tilde m$ neighborhood of $i$, the {phases of the Fourier transforms} are amplified at most $\tilde m$ times uniformly over time. As we will require $\tilde m/n\rightarrow 0$, 
the energy leak problem is greatly reduced and will be shown to be asymptotically negligible. We adopt an extension of the OBMB with adjusted phases to approximate the maximum deviation of $T(i)$ uniformly across time $i$. We will show that the local change point detection algorithm has the correct Type-I error rate asymptotically if there is no change point at the oscillatory frequency. If there are oscillation change points, using a delicate empirical process theory for Fourier transforms of non-stationary time series, we will show that the latter algorithm is able to detect all change points within an $O(\log \mt)$ range with a pre-specified probability asymptotically, where the $O(\log \mt)$ rate is almost the parametric $O(1)$ rate for change point detection except a factor of logarithm.

The rest of the paper is organized as follows. Section \ref{sec:PLS} introduces a flexible class of non-stationary time series models for the noises $\{\epsilon_{i,n}\}$. In Section \ref{sec:two-stage}, we introduce the two-stage methodology in detail. Section \ref{sec:theory} investigates the consistency and accuracy of the proposed methodology. In particular, some optimality properties of the methodology are established. Section \ref{sec:simu} performs numerical experiments to investigate the finite sample property of the two-stage methodology. In Section \ref{sec:data}, we apply our methods to an electroencephalogram (EEG) recorded during sleep for spindle detection and estimation. Finally, literature review, additional simulation results and figures, Gaussian approximation and comparison schemes, and proofs of the theoretical results are put in the online supplementary material.

\section{Preliminaries: models for the signal and the noise}\label{sec:PLS}
Motivated by the aforementioned complex oscillatory pattern detection and estimation problems, consider an observed real time series $X=\{X_{i,n}\}_{i=1}^n$ that follows the model
 \begin{eqnarray}\label{eq:model}
    X_{i,n} = \mu_{i,n} + \epsilon_{i,n},
\end{eqnarray} 
where $\{\epsilon_{i,n}\}_{i = 1}^n$ is a centred non-stationary noise process whose data generating mechanism may evolve both smoothly and abruptly over time. The mean $\mu_{i,n}$ is assumed to be
\begin{eqnarray}\label{eq:mean}
\mu_{i,n} = \sum_{k = 1}^{|\Omega|} \sum_{r = 0}^{M_k} ( A_{r,k} \cos(  \omega_k i ) + B_{r,k} \sin( \omega_k i )) \mathbb{I}({b_{k,r} < i \leq b_{k,r+1} }) + f(i/n)\,,
\end{eqnarray}
where $\omega_k \in \Omega$, $\Omega$ is a finite set of  unknown oscillatory frequencies, $M_k\in \mathbb{N}$ is the number of segments for the $k$-th oscillatory component, $A_{r,k}, B_{r,k}\in \mathbb{R}$, $\mathbb{I}$ is the indicator function, $b_{k,1}<\cdots<b_{k,M_{k}}$ are the unknown change points corresponding to the oscillatory frequency $\omega_k$ with the convention $b_{k,0}=0$ and $b_{k,M_{k}+1}=n$ and $f$ is assumed to be a smooth function that ``oscillates slowly''. $f$ is usually understood as the {\em trend} or {\em baseline wandering} in biomedical signal processing. Here we assume that all $\omega_k\in\Omega$ are sufficiently high in the sense that $\min_{1\le k\le |\Omega|}\omega_k\ge \delta_0$ for some positive constant $\delta_0$ to distinguish the oscillation from the smooth trend $f$ that is represented by oscillations with frequencies much lower than $\delta_0$. {Note that while only the amplitude and phase jumps are modeled in \eqref{eq:mean}, the frequency jump has been captured implicitly. Indeed, if there is an oscillatory component with the frequency jump at time $c_{k}\in (0,1)$, it is equivalent to a summation of two oscillatory components with amplitude and phase jumps at time $c_k$. Thus, the frequency jump is a special case of \eqref{eq:mean} with different $\Omega$ and $M_k$.} The purpose of this paper is to test whether $\Omega$ is empty and if not, we would like to accurately estimate all $\omega_k$ and then test and locate the corresponding change points $\{b_{k,r}\}_{r=1}^{M_k}$. 

The rest of this section is devoted to the modelling of $\{\epsilon_{i,n}\}$. Many physiological time series, like breathing signal and photoplethysmogram, are oscillatory and contaminated by non-stationary noises with complex generating mechanisms. Signals like EEG is usually understood as stochastic, and the non-stationarity can be seen clearly from their time series plots, while {\em spindles} showing up during deep sleep can be modelled by oscillatory components. See Section \ref{sec:data}. For the sake of better modelling these time series, it is desirable to have a flexible non-stationary time series model for the stochastic process $\{\epsilon_{i,n}\}$, either noise or the stochastic part of EEG or others, which allows the underlying data generating mechanism to change both smoothly and abruptly over time.  To this end, we shall adopt the {\em piecewise locally stationary} (PLS) time series model proposed in \cite{zhou2013}.

We say $\{\epsilon_{i,n}\}_{i = 1}^n$ is PLS with $r$ break points (PLS($r$)) if there exist constants $0 = s_0 < s_1 <\ldots< s_r < s_{r+1} = 1$ and nonlinear filters (measurable functions) $\mathcal G_0,\ldots,\mathcal G_{r}$ such that 
\[	
\epsilon_{i,n} = \mathcal G_j(t_i,\mathcal{F}_i), \quad \text{if} \quad s_j < t_i \leq s_{j+1},	
\]	
$j=0,1,\cdots, r$, where $t_i = i/n$,  $\mathcal{F}_i = (\dots,e_0,...,e_i)$ and $e_i $ are i.i.d. random variables.  The function $\mathcal G_j$ is assumed to be smooth between $s_j$ and $s_{j +1}$ in some appropriate sense so that the time series is locally stationary between $s_j$ and $s_{j+1}$ in the sense that the data generating mechanism changes slowly. And the data generating mechanism changes abruptly from $\mathcal G_{j-1}$ to $\mathcal G_{j}$ at $s_j$, $j=1,2,\cdots, r$. 

A locally stationary model is more general compared with a stationary model as shown in the versatility of the generating function. A PLS($r$) model allows for additional abrupt changes in the mechanism of data generation adding to a more generalized model assumption. 
Assume that $\max_{1\le i\le n}\|\epsilon_{i,n}\|_q\le C_q$ for some finite constant $C_q$ and some $q\ge 4$, where $\|\cdot\|_q:=(\E|\cdot|^q)^{1/q}$ is the ${\cal L}^q$ norm of a random variable. We define the {\em physical dependence measures} for PLS time series as 
\[
\delta_{q}(k) =\max_{ 0 \leq j \leq r } {\sup_{i:s_j < t_i \le s_{j+1}} \| \mathcal{G}_{j}(t_i,\mathcal{F}_i) -\mathcal{G}_{j}(t_i,\mathcal{F}_{i,i-k}) \|_q},
\] 
where $\mathcal{F}_{i,i-k}$ is defined as $\mathcal{F}_{i,i-k} := (\dots,\hat{e}_{i - k},...,e_i)$, and $\hat{e}_{i - k}$ is an identically distributed copy of $e_{i-k}$ and is independent of $\{e_{j}\}_{j\in\mathbb{Z}}$.
Using the idea of coupling, $\delta_q(k)$ measures the influence of the innovations of the underlying data generating mechanism $k$ steps ahead on the current time series observation uniformly across time.   
We refer to \cite{zhou2013} and \cite{zhou2014} for more discussions and examples of PLS processes and the associated dependence measures.

\section{The two-stage methodology}\label{sec:two-stage}

\subsection{First stage: oscillatory frequency detection}\label{sec:stage1}

Given time series data $X=\{X_{i,n} \}_{i = 1}^n$, our first goal is to detect and then estimate the oscillatory frequencies. Recall from \eqref{dppt} that we require $W$ to be a dense set of possible oscillatory frequencies with mesh size no greater than $O(n^{-3/2})$. Denote $p := |W|$ as the size of the set of potential frequencies $W$.  Throughout this article we assume that $p$ is proportional to $n^{3/2}\log n$. In practice, one could pick 
\[ 
W = \{ \delta_0 n^{3/2} \log(n) ,  \delta_0 n^{3/2}\log(n) + 1,\dots, \delta_0 n^{3/2} \log(n)+\lfloor (\pi-\delta_0)n^{3/2}\log(n)\rfloor \}/(n^{3/2}\log(n) ),  
\] 
where $\delta_0$ is some small positive constant. The set of potential frequencies $W$ has its lowest frequency at $\delta_0$ as a rule of thumb in order to avoid small frequencies as the oscillation at the latter frequencies represents the smooth trend $f(\cdot)$. Regarding $\delta_0$, we could choose $\delta_0=n^{-1}$ if $f=0$. When $f\neq 0$ and is smooth, one could first difference the time series and obtain $\tilde X_{i,n}=X_{i,n}-X_{i-1,n}$ to reduce the impact of the trend $f(\cdot)$, since $|f(i/n)-f((i-1)/n)|=O(1/n)$. Note that $\frac{\cos(\omega i ) - \cos(\omega (i-1))}{\sin( \omega /2 )} = -2\sin( \omega ( i - 1/2) )$ and $\frac{\sin(\omega i ) - \sin(\omega (i-1))}{ \sin( \omega /2 )} = 2\cos( \omega ( i - 1/2) )$.
Hence, differentiation will not influence the oscillation as long as every $\omega\in\Omega$ is sufficiently high and satisfies $\omega\ge \delta_0$. Furthermore, we point out that in many real applications (such as the sleep spindle example considered in this article) the oscillatory frequencies of interest are typically known to be restricted to a certain range. In this case, $\delta_0$ can be chosen at the lower bound of the target range of the oscillation frequencies.  

The dense progressive periodogram test (DPPT) statistic
$F(W)$ in \eqref{dppt} should be large if there is an oscillation at some $\omega\in W$. In particular, $\bar{F}(\omega):=\max_{ 1 \leq k \leq n } |L_n(k,\omega)|/\sqrt{n}$ will show peaks at the oscillatory frequencies. Figure \ref{fig:2} shows a typical plot of $\bar{F}(\omega)$ for a time series with two oscillatory frequencies contaminated by PLS noises.

As we mentioned in the introduction, spectral dependency caused by the high density of $W$ as well as the non-stationarity of $\{\epsilon_{i,n}\}$ makes it difficult to investigate the limiting distribution of $F(w)$. As a solution to this conundrum we will use an extension of the OBMB \citep{zhou2013} to approximate the critical values of the DPPT under the null hypothesis of no oscillation. The bootstrap is simple to implement and will be shown to be able to detect all oscillatory frequencies accurately with a prescribed probability asymptotically. 

Define $S_{j,m}(w) = \sum_{i = j}^{j+m -1 } \sin( i w) X_{i,n}$, $C_{j,m}(w) =$ $\sum_{i = j}^{j+m -1 } \cos( i w) X_{i,n}$ and $E_{j,m}(w)=C_{j,m}(w)+\sqrt{-1}S_{j,m}(w)$ for an integer bandwidth $m<n$. %
The {\em OBMB} statistic is defined as 
\[ 
 \tilde{F}_{m,l}(W) = \max_{ w \in W} \tilde F_{m,l}(\omega),
\]
where 
\[
\tilde F_{m,l}(\omega):=\max_{ 1 \leq k \leq n -m }  \left\{   \left| \sum_{j= 1}^{k} E_{j,m}(w)G_{j,l} \right|  \right\} / \sqrt{m (n-m) }, 
\]
where $\{G_{j,l}\}_{j=1,2,\cdots, n-m, \, l=1,2,\cdots}$ are i.i.d. standard normal random variables independent of $\{{X_{i,n}}\}_{i=1}^n$. 
The DPPT is performed by simulating the distribution of $F(W)$ by that of the multiplier bootstrap statistic $\tilde{F}_{m,l}(W)$. Let $\{ \tilde{F}_{m,l}(W)\}_{l = 1}^K$ be $K$ (say, $K=1,000$) multiplier bootstrap statistics that we generate. Under a pre-specified significance level $\alpha \in (0,1)$ we estimate the $\alpha$-th critical value for $F$ using the empirical $(1- \alpha)$-th quantile 
\[ 
crit_{\alpha,1}(W) := Quantile(\{ \tilde{F}_{m,l}(W)\}_{l = 1}^K ,1 - \alpha) .
\] 
If $F(W) \leq crit_{\alpha,1}(W)$, then we claim that there is no oscillation. Otherwise, the first oscillatory frequency can be estimated by taking the frequency that maximizes $F(W)$: 
\[ 
\hat{\omega}_1 :=\{ \omega \in W | \bar F(\omega) = F(W) \} .
\]  

\begin{algorithm}[H]
\SetAlgoLined
\KwResult{estimated frequencies  $\hat{\Omega}$ }
 Let $W_1 = W$, $k = 1$ and $\hat{\Omega}_0 = \emptyset$\;
 Compute $F(W_1)$ and $crit_{\alpha,1}(W_1)$\;
 \While{$F(W_k) > crit_{\alpha,k}(W_k)$}{
 $\hat{\omega}_k = \arg\max_{\omega \in W_k} \bar F(\omega) $\;
 $\hat{\Omega}_k = \hat{\Omega}_{k-1} \cup \{\hat{w}_k\} $\;
  $W_{k+1} = W_k \setminus [\hat{\omega}_k - \log(m)/(4m^{1/2}),\hat{\omega}_k + \log(m)/(4m^{1/2}) ] $\;
  Compute $F(W_{k+1})$ and $crit_{\alpha,k+1}(W_{k+1})$\;
 Increase $k$ by 1\;  
 }
 \caption{Frequency Estimation}
\end{algorithm}

If there are multiple maximizers, let $\hat{\omega}_1$ be the smallest among them. Take $W_2 := W \setminus [\hat{\omega}_1- \log(m)/(4m^{1/2}), \hat{\omega}_1 + \log(m)/(4m^{1/2}) ]$. The second frequency estimate is taken to be  
\[ 
\hat{\omega}_2 := \{ \omega \in W_2 |\bar F(\omega) = F(W_2) \} 
\] 
if $F(W_2)> crit_{\alpha,2}(W_2)$, where $crit_{\alpha,2}(W_2)$ is defined in the same way. Repeat the previous steps until $F(W_k) \leq crit_{\alpha,k}(W_k)$ for some integer $k$. A heuristic reason that the multiplier 
bootstrap statistic works is because with high probability the conditional covariance structure of $\tilde{F}_{m,l}(W)$ can approximate that of the DPPT. See Theorem \ref{thm:stage1boots} in Section \ref{sec:theory} for a rigorous treatment.  
The algorithm is described in Algorithm 1.

In Algorithm 1, we remove a $\log m/(4m^{1/2})$ neighborhood from each estimated oscillatory frequency before we perform the next iteration. The reason is that it can be shown that 
\[
\tilde F_{m,l}(\omega)\gg Q_{1-\alpha}(\max_{ \theta \in W} \max_{ 1 \leq k \leq n } |\sum_{j=1}^k \epsilon_{j,n}e^{\sqrt{-1}\theta j}|/\sqrt{n})
\] 
with high probability if $|\omega-\omega_k|\ll m^{-1/2}$ for some $\omega_k\in\Omega$, where $Q_{1-\alpha}(Z)$ is the $(1-\alpha)$ quantile of a random variable $Z$. Here $a_n\gg b_n$ for positive sequences $a_n$ and $b_n$ means that $a_n/b_n\rightarrow\infty $ as $n\rightarrow\infty$. $a_n\ll b_n$ is defined similarly. Therefore, the bootstrap critical values will be too large if $\omega$ is within an $O(m^{-1/2})$ neighborhood of an oscillatory frequency. Hence, we remove a $\log m/(4m^{1/2})$ neighborhood of the estimated frequencies to avoid reduced sensitivity for oscillatory frequency detection. On the other hand, we need the oscillatory frequencies to be well separated in order for Algorithm 1 to detect all of them. In particular, we require that $|\omega_i-\omega_{j}|\gg m^{-1/2}\log m$ for all $\omega_i,\omega_j\in\Omega$, $\omega_i\neq\omega_j$.

\subsection{Second stage: oscillation change point testing and estimation}\label{sec:seconds}
Given the set of estimated frequencies $\hat{\Omega}$ (not empty), the second stage aims at testing the existence and then estimating the locations of any change points at each oscillatory frequency $w \in \Omega$. To this end, we propose a phase-adjusted local change point detection algorithm that compares the phase-adjusted Fourier transforms at frequency $\hat{w}$ before and after each time point. Specifically,  let $\hat{w} \in \hat{\Omega}$ and let $B_k$ be a set of potential change points at step $k$, $k=1,2,\cdots$. 
Let $B_1 =\{\mt + m',\mt+m'+1,\cdots,n - \mt -m' \}$, where $\mt,m'\in \mathbb{N}$.
The statistic for the second stage is defined as, \begin{align}\label{eq:stage2stat}  T(B_1,\hat w) =  &  \max_{i \in B_1 }  \left| \sum_{l = i - \mt  }^{i}  \exp( \sqrt{-1} \hat{w}( l- i)) X_{l,n} -\sum_{l = i +1 }^{i + \mt +1} \exp( \sqrt{-1} \hat{w} ( l- i) ) X_{l,n} \right| /\sqrt{2\mt},
\end{align} 
where $\mt$ is a bandwidth satisfying $\mt \rightarrow\infty$ with $ \mt /n\rightarrow 0$.

A phase-adjusted OBMB is employed here to approximate the critical values of $T(B_1,\hat w)$. Specifically, for $j \in B_1$, define 
\begin{align}
\Upsilon_{i}(j,\hat w) &\,= \left(\sum_{l = i -m'  }^{i }  e^{\sqrt{-1} \hat{w} (l -j )} X_{l,n} -\sum_{l = i +1 }^{i + m' +1} e^{\sqrt{-1} \hat{w} (l -j ) } X_{l,n} \right) /\sqrt{2m'},\label{eq:stage2boots1}
\end{align}
where $m'$ satisfies $m'\rightarrow\infty$ with $m'/\mt\rightarrow 0$. Now let $G_{i,j}$ be i.i.d standard Gaussian random variables independent of the data, where $i,j\in \mathbb{Z}$. We define the bootstrap statistic as 
\begin{align*} 
\hat{T}_{l}(B_1,\hat w) = & \max_{j \in B_1 }   \left| \sum_{i = j -\mt  }^{j }  \Upsilon_{i}(j,\hat w) G_{i,l} -\sum_{i = j +1 }^{j + \mt +1} \Upsilon_{i}(j,\hat w) G_{i,l}\right|/\sqrt{2\mt}\quad \mbox{where}\quad l\in \mathbb{N},
\end{align*} 
and simulate the distribution of $ T(B_1,\hat w)$ by generating $K_0$ bootstrap statistics $ \{\hat{T}_{l}(B_1,\hat w)\}_{l = 1}^{K_0}$. 

\begin{algorithm}[H]
\SetAlgoLined
\KwResult{estimated change points $\hat B$ at oscillatory frequency $\omega$. }
 Let $B_1 =\{\mt+ m',\mt +m'+1,\dots,n - \mt-m' \}$, $k = 1$ and $\hat{B}_0 = \emptyset$\;
 Compute $T(B_1,\hat w)$ and $crit_{\beta,1}(B_1)$\;
 \While{$T(B_k,\hat w) > crit_{\beta,k}(B_k)$}{
 obtain $\hat b_k$ using (\ref{eq:changepoint})\;
 $\hat{B}_k = \hat{B}_{k-1} \cup \hat{b}_k $\;
  $B_{k+1} = B_k \setminus [\hat{B}_k - \mt ,\hat{B}_k + \mt] $\;
  Compute $T(B_{k+1},\hat w)$ and $crit_{\beta,k+1}(B_{k+1})$\;
 Increase $k$ by 1\;
  }
 \caption{Change Point Estimation}
\end{algorithm}

Next find the estimated critical value under $\beta$ significance level for the first change point:
\begin{align*}
&crit_{\beta,1}(B_1) 
= Quantile \left( \{\hat{T}_{l}(B_1,\hat w)\}_{l = 1}^{K_0},{1 - \beta} \right). 
\end{align*}
We claim that there is no change point at frequency $\omega$ at level $\beta$ if $T(B_1,\hat w) \le crit_{\beta,1}(B_1)$. Define the first change point estimator as
\begin{align}\label{eq:changepoint}  
\hat{b}_{1}:=  & \arg\max_{i \in B_1 } \left| \sum_{l = i -\mt  }^{i }  \exp(\sqrt{-1}\hat{\omega}(l-i)) X_{l,n} -\sum_{l = i +1 }^{i + \mt +1} \exp(\sqrt{-1}\hat{\omega}(l-i)) X_{l,n} \right|/\sqrt{2\mt} 
\end{align}
provided that $T(B_1,\hat w) > crit_{\beta,1}(B_1)$. If there are multiple maximizers, then let $\hat b_1$ be the smallest among them. Define $B_2 := B_1 \setminus [\hat{b}_{1} - \mt,\hat{b}_{1} + \mt] $, which is the second potential set of change point. We iterate the aforementioned procedure until $T(B_k,\hat w) \le crit_{\beta,k}(B_k)$ for some $k$. The detailed algorithm is listed in Algorithm 2.

\begin{remark}
Let $Z_{i,n}$ be a zero-mean short memory non-stationary process and define its partial sum process 
$\{S_Z(i)\}_{i=1}^n=\{\sum_{j=1}^iZ_{j,n}/\sqrt{n}\}_{i=1}^n. 
$
The essence behind the OBMB and the multiplier bootstrap in \cite{zhou2013} is the observation that the covariance structure of $\{S_Z(i)\}$ can be well approximated by the conditional covariance structure of 
\begin{eqnarray}\label{eq:114}
\{\chi_Z(i)\}:=\{\sum_{j=1}^i\mathcal{S}_{j,m}G_j/\sqrt{n}\},
\end{eqnarray}
where $\mathcal{S}_{j,m}=\sum_{k=j}^{j+m-1}Z_{k,n}/\sqrt{m}$ and $G_j$ are i.i.d. standard normal random variables independent of $\{Z_{i,n}\}$. Careful observations of the bootstrap statistics proposed in Sections \ref{sec:stage1} and \ref{sec:seconds} reveal that they are essentially localized, differenced or phased-adjusted versions of \eqref{eq:114} with $Z_{k,n}$ replaced by $X_{k,n}\exp(\sqrt{-1}\omega k)$.
\end{remark}

\subsection{Tuning parameter selection}\label{sec:tuning}
To implement our methodology, one needs to select three tuning parameters: $m$, $\mt$ and $m'$. In order to choose $m$ in Stage 1, we observe that the accuracy of the bootstrap is determined by how well the conditional covariance of $\sum_{j= 1}^{k} E_{j,m}(\omega)G_{j,l}/\sqrt{m(n-m)}$ approximates the covariance of $L_n(k,\omega)/\sqrt{n}$. The block size $m$ determines the latter accuracy. We propose and utilize a plug-in block size selector that minimizes the (asymptotic) mean squared error (MSE) of the aforementioned approximation. Due to page constraints, a detailed description of the plug-in block size selector for general non-stationary time series can be found in Section \ref{tps} of the supplementary material. In order to utilize the plug-in block size selector to select $m$ in stage 1, we vectorize $L_n(n,\omega)$ over all candidate frequencies and write
${\cal L}_n=(\Re{L_n(n,\omega_1)},\Im{L_n(n,\omega_1)},\cdots, \Re{L_n(n,\omega_p)},\Im{L_n(n,\omega_p)})^\top$, where $\omega_1<\cdots<\omega_p$ are the elements of $W$ and $\Re$ and $\Im$ represent the real and imaginary parts of a complex number, respectively. Observe that the accuracy of the OBMB is determined by how well the quantity $\sum_{j=1}^{n-m}{\cal R}_{j,m}{\cal R}_{j,m}^\top/(m(n-m))$ approximates $\E[{\cal L}_n{\cal L}^\top_n]/n$, where 
${\cal R}_{j,m}=(C_{j,m}(\omega_1),S_{j,m}(\omega_1),\cdots,C_{j,m}(\omega_p),S_{j,m}(\omega_p))^\top$. The latter formulation matches exactly that in Section \ref{sec:mts} of the supplementary. Therefore the plug-in block size selector can be applied to selecting $m$ in stage 1. We refer the readers to the supplementary for the detailed implementation. Similarly, the tuning parameter $m'$ in stage 2 can also be selected by the plug-in method. In this case, we choose $m'$ that minimizes the (asymptotic) MSE of $\sum_{j=1}^{n-2m'}{\cal V}_{j,m'}{\cal V}_{j,m'}^\top/(2m'(n-2m'))$ as an estimator of the long-run covariance of the sequence $\{(\cos(\hat\omega i)X_{i,n},\sin(\hat\omega i)X_{i,n})^\top\}_{i=1}^n$, where ${\cal V}_{j,m'}=(C_{j,m'}(\hat\omega)-C_{j+m',m'}(\hat\omega),S_{j,m'}(\hat\omega)-S_{j+m',m'}(\hat\omega))^\top$. As a result, the plug-in block size selection method in Section \ref{tps} of the supplementary, particularly in Sections \ref{sec:dlrv} and \ref{sec:mts} can be applied to select $m'$. We again refer the readers to the supplementary for the detailed implementation.

We now discuss the choice of $\mt$ in stage 2. Here we shall use a penalized (BIC) method for the purpose. The rationale is to balance the goodness-of-fit and the complexity of the change point model. Observe that if $\mt$ is too small, the change point algorithm is unstable due to the fact that the phase-adjusted local change point algorithm is performed over 
narrow local neighborhoods with large smoothing errors. In this case, one will observe either a large lack-of-fit due to some undetected true change points or a large model complexity due to some spurious change points detected. On the other hand, when $\mt$ is too large, the effect of energy leak kicks in and several spurious change point will be detected which increases the model complexity without improving the goodness-of-fit. To implement the algorithm, we first set the candidate pool of $\mt$ as $(\lfloor n^{16/29}\rfloor, \lfloor n^{2/3}\rfloor)$ as discussed in Lemmas \ref{lem:b9} and \ref{lem:10} in the supplementary. For each $m^*$ in this candidate pool, define the penalized goodness-of-fit measure (BIC)
$$GM(m^*)=n\log(SSE(m^*)/n)+M(m^*)\log n,$$
where $M(m^*)$ and $SSE(m^*)$ are the number of change points detected and the residual sum of squares of the segmentation when the block size is selected at $m^*$, respectively. In particular, to avoid the energy leak problem, at each point $i$ the fitted value $\widehat{e^{\sqrt{-1}\hat\omega}X_{i,n}}$is defined as the average of $\{e^{\sqrt{-1}\hat\omega}X_{i,n}\}$ {over the nearest neighborhood} of $i$ of length $\lfloor n^{2/3}\rfloor$ such that the latter neighborhood is within the same segment as $i$. Here the distance between a point and an interval is defined as the distance between the point and the middle point of the interval. The $SSE(m^*)$ is then defined as $\sum_{i=1}^n|e^{\sqrt{-1}\hat\omega}X_{i,n}-\widehat{e^{\sqrt{-1}\hat\omega}X_{i,n}}|^2$.   
We propose to select $\mt$ that minimizes $GM(m^*)$. If there are multiple minimizers $m^*_1<m^*_2<\cdots<m^*_d$ with $d>1$, then select $\mt=m^*_{\lfloor (d+1)/2\rfloor}$. In our simulation studies and data analysis, the plug-in block size selector and the penalized (BIC) selector work well. We also refer to Section A.5 in the supplementary for a sensitivity analysis of our tuning parameter selection methods.

\section{Theoretical results}\label{sec:theory}

In Section \ref{sec:nullboots}, we demonstrate the asymptotic consistency of the OBMB for both stages. 
Those results establish that the OBMB can well approximate the probabilistic behavior of the DPPT and the phase-adjusted local change point detection algorithms asymptotically under the corresponding null hypotheses of no oscillation at frequencies $\ge\delta_0$ and no change points.   
In the literature, $L^\infty$ Gaussian approximations to high dimensional time series  are typically realized by the {\em non-}overlapping multiplier bootstrap (cf. \cite{jirak2015} and \cite{dette2018}). The non-overlapping multiplier bootstrap could suffer from a relatively small number of blocks for time series of moderate length and hence may be numerically unstable under such circumstances. To remedy the latter problem, the OBMB is implemented and theoretically justified. Under the alternative hypotheses where oscillations at frequencies $\ge\delta_0$ or oscillatory change points exist, in Section \ref{sec:ea} we further investigate the estimation accuracy of our bootstrap-assisted two-stage algorithm. We establish that the estimation accuracy of our two-stage algorithm is of nearly parametric rate except a factor of logarithm.

{Before stating our theorems, some assumptions for the time series are placed in order. 
In the sequel, we shall omit the subscript $n$ in $X_{i,n}$ and $\epsilon_{i,n}$ in the time series model \eqref{eq:model} to simplify notation. Meanwhile, the symbol $C$ denotes a generic positive and finite constant which may vary from place to place. 

\begin{assumption} \label{assumption1}
Assume  that $\max_{1\leq i \leq n}  \E X^4_{i} < c_1$ for $c_1>0$ and there exists $\mathcal{D}_n >0$ such that one of the following two conditions holds:
	\begin{align}\label{eq12}
	\max_{1 \leq j \leq n} \E \exp( |X_j| / \mathcal{D}_n ) \leq 1,
	\end{align}	 
	\begin{align}\label{eq11}
	\max_{1 \leq j \leq n} \E g( |X_j| / \mathcal{D}_n ) \leq 1,
	\end{align}
    for some strictly increasing convex function $g$ defined on $[0,\infty)$ satisfying $g(0)=0$.
    
\end{assumption}

\begin{assumption} \label{assumption2}  Assume there exist $M = M(n) >0$ and $\gamma = \gamma(n) \in (0,1)$ such that 
	\begin{align}
	& n^{3/8} M^{-1/2} l_n^{-5/8} \geq C_1 \max \{\mathcal{D}_n l_n ,l_n^{1/2} \}  \textit{ under Condition (\ref{eq12})}
	\\
	&n^{3/8} M^{-1/2} l_n^{-5/8} \geq C_2 \max \{\mathcal{D}_n g^{-1}(n/\gamma) ,l_n^{1/2} \}  \textit{ under Condition (\ref{eq11})}
	\end{align}	
	for $C_1,C_2 > 0$, where $\mathcal{D}_n$ is given in Assumption 1, and $l_n = \log (pn/\gamma_n) \vee 1$ with $p=|W|$. In both cases, suppose $n^{7/4} M^{-2} l_n^{-9/4} \geq C_3 >0$.
\end{assumption}

\begin{assumption}  \label{assumption3}
Assume that 
\begin{align}
	&\max_{1 \leq i \leq n} \|\epsilon_{i}\|_q \le C_q \quad \text{ and } \quad  \delta_q(k)=O((k+1)^{-d})
	\end{align} 
	for some finite constants $C_q>0$, $q\ge 4$ and $d\ge 5$. Define $\Theta_{k,q} := \sum_{l = k}^{+\infty} \delta_q(l) $. Further assume that, for some finite constant $C>0$,
	\begin{eqnarray}\label{eq:SLC}
	\|\mathcal G_j(t,\FF_0)-\mathcal G_j(s,\FF_0)\|_4\le C|t-s|,\quad t,s\in [s_j,s_{j+1}],\quad j=0,1,\cdots, r.
	\end{eqnarray}

\end{assumption}

\begin{assumption}  \label{assumption4}
$\min_{\omega\in W}\omega= \delta_0$ for some positive constant $\delta_0$. 
\end{assumption}

\begin{assumption}  \label{assumption5}
$f(\cdot)$ is twice differentiable on [0,1] with Lipschitz continuous $f''(\cdot)$. \end{assumption}

\begin{assumption}  \label{assumption6}
Let $v(t, \omega)=\sum_{k=-\infty}^\infty\mbox{Cov}(\mathcal G_j(t,\FF_0), \mathcal G_j(t,\FF_{k}))\exp(\sqrt{-1}k\omega)$, where $t\in(s_j,s_{j+1}]$, $j=0,1,\cdots, r$, be the spectral density of $\{\epsilon_{i}\}$ at time $t$ and frequency $\omega$. 
Assume that there is a positive constant $\delta_1$ such that $v(t, \omega)\ge\delta_1$ for all $t\in[0,1]$ and $\omega\in (0,\pi).$ 
\end{assumption}

\begin{assumption}  \label{assumption7}
For $k=0,1,\cdots,$ and $j=0,\cdots,r$, let $\gamma_{j,k}(t)=\cov(\mathcal G_j(t,\FF_0), \mathcal G_j(t,\FF_{k}))$. Assume that, for each $j$ and $k$, 
$\gamma_{j,k}(t)$ is twice continuous differentiable on $[s_{j},s_{j+1}]$ with Lipschitz continuous second derivatives.
\end{assumption}

\begin{assumption}  \label{assumption8}
If $\Omega$ is not empty, then there exist a positive constant $\delta_2$ such that $\delta_0\le \omega_i\le \pi-\delta_2$, $\forall \omega_i\in\Omega$. And there exists a positive constant $c$ such that $|\omega_i-\omega_j|\ge c/\log n$ for any $\omega_i$, $\omega_j\in\Omega$ and $\omega_i\neq \omega_j$.
\end{assumption}

\begin{assumption} \label{assumption9}
 If $\Omega$ is not empty, then $\forall \omega_k\in\Omega$, assume that $|A_{0,k}|+|B_{0,k}|\ge \delta_3>0$ if there is no change point at frequency $\omega_k$. If there are change points at frequency $\omega_k$, then we assume that $b_{k,r}=c_{k,r}n$, where $0<c_{k,1}<\cdots<c_{k,M_k}<1$ are constants, and $[|A_{r,k}-A_{r-1,k}|^2+|B_{r,k}-B_{r-1,k}|^2]^{1/2} \ge \delta_4$, $r=1,2,\cdots, M_{k}+1$, where $\delta_4>0$ is a constant. 
\end{assumption}

Some discussions of these assumptions are in order. 
Assumptions \ref{assumption1} and \ref{assumption2} are regularity conditions for the high dimensional Gaussian approximation. They correspond to Assumptions (A1) and (A2) in Section \ref{sec:A}. Typical choice of $g$ in \eqref{eq11} is the power function $g=x^q$, $q\ge 1$. 
$M$ in Assumption \ref{assumption2} is the dependence truncation constant used in the Gaussian approximation proof, and we put a relatively weak constraint on $M(n)$, $\gamma(n)$ and $l_n$ relative to the data length $n$. Assumption \ref{assumption3} puts constraints on the moments and dependence of the noise sequence $\{\epsilon_{i}\}$. It requires that $\epsilon_{i}$ has finite $q$-th moment for $q\ge 4$ and weak dependence which decays at a sufficiently fast algebraic rate. Equation \eqref{eq:SLC} is a piece-wise stochastic Lipschitz continuity condition which guarantees that the data generating mechanism of the noise process is smooth between adjacent jump points. Assumption \ref{assumption4} requires that our candidate frequencies to be separated from 0 in order to avoid detecting variations caused by the smooth trend $f(\cdot)$.  Observe that $v(t,\omega)$ in  Assumption \ref{assumption6} is the instantaneous spectral density of the noise process $\{\epsilon_{i}\}$ at time $t$ and frequency $\omega$. Hence Assumption \ref{assumption6} is a mild condition that requires that the  instantaneous spectral density of the noise process $\{\epsilon_{i}\}$ is uniformly positive. Assumption \ref{assumption7} requires that the auto-covariances of the noise process are piece-wise twice differentiable with respect to time with piece-wise Lipschitz continuous derivatives. Assumption \ref{assumption8} puts a positive lower bound on the oscillatory frequencies to distinguish the oscillation from the smooth trend $f(\cdot)$. Meanwhile, a mild technical assumption is put in Assumption \ref{assumption8} to separate the oscillatory frequencies from $\pi$. Assumption \ref{assumption8} also requires that the oscillatory frequencies to be well separated by at least $O(1/\log n)$. Assumption \ref{assumption9} requires that the amplitudes of the oscillations have a positive lower bound. Meanwhile, the oscillation change points, when they exist, are required to be well separated with jump sizes larger than a positive constant.}

\subsection{Bootstrap consistency}\label{sec:nullboots}

Recall the definitions of $F(W)$ and $\tilde{F}(W)$ in Section \ref{sec:stage1} where $\tilde{F}(W)$ is short for $\tilde{F}_{m,l}(W)$. Observe that both $F(W)$ and $\tilde{F}(W)$ can be viewed as the maximum of the coordinate-wise sums of a high-dimensional vector. 
Define $n$-dimensional vectors $C^{(\epsilon)}(\omega)$ and $S^{(\epsilon)}(\omega)$ with the $k$-th vector coordinate 
\[
C_{k}^{(\epsilon)}(\omega) := \sum_{j= 1}^k  \cos(  j \omega) \epsilon_j \quad \text{ and } \quad S_{k}^{(\epsilon)}(\omega) := \sum_{j= 1}^k  \sin(  j \omega) \epsilon_j, 
\] 
where $k=1,2,\cdots, n$ respectively. 
Let  
\[ 
\Theta^{(\epsilon)}(W) := [C^{(\epsilon)}(\omega_1)^\top,S^{(\epsilon)}(\omega_1)^\top,\dots, C^{(\epsilon)}(\omega_p)^\top,S^{(\epsilon)}(\omega_p)^\top]^\top/\sqrt{n}\in \mathbb{R}^{2np},
\]
where $W=\{\omega_1,\cdots,\omega_p\}$. Recall the definition of $W$ at the beginning of Section \ref{sec:stage1}. For any given $w\in W$, define the $2n$-dim vector $S(w)$ with the $k$-th vector coordinate
\[ S_k(w) := \begin{cases}
\ C_{1,m}(w)G_1 &\text{ if } k\leq m\\
\ \sum_{j= 1}^{k - m+1} C_{j,m}(w)G_j &\text{ if }  m < k\leq n \\
\ S_{1,m}(w)G_1 &\text{ if } n + 1 \leq k\leq n + m +1\\
\ \sum_{j= 1}^{k - m+1} S_{j,m}(w)G_j &\text{ if }  n + m +2 \leq k\leq 2n \,,  \\
\end{cases} 
\]
where we omit the subscript $l$ in $G_{j,l}$ to simplify notation. Finally, we define $ S(W) =  [S(w_1)^\top,\dots,S(w_p)^\top]^\top/\sqrt{m(n-m)}\in \mathbb{R}^{2np}$. 
Define the following measure of difference in the covariance structure between two statistics $F(W)$ and $\tilde{F}(W)$ as 
\[ 
\Delta := \max_{1\le i,j \leq 2pn}|[\cov(\Theta^{(\epsilon)}(W)) - \cov(S(W)|X)]_{ij}|, 
\]
where $\cov(A)$ denotes the covariance matrix of a random vector $A$, and for a matrix $D$, $D_{ij}$ denotes the entry of $D$ at its $i$-th row and $j$-th column. 
Now we are ready to present the first main theorem.

 \begin{theorem}\label{thm:stage1boots}
	Suppose Assumptions \ref{assumption1}-\ref{assumption7} 
	hold true. Assume that $\Omega$ is empty in \eqref{eq:mean}. Further assume that
	\begin{eqnarray}\label{eq:thmcon}	
		m\asymp n^\theta \mbox{ with } 0<\theta<1\,\, \mbox{ and }\,\, q>8\log p/((1-\theta)\log n).
\end{eqnarray}
Then, we have 
	$	
	\Delta  =   O_\pr(p^{4/q} \sqrt{m/n}  + 1/m ) \,.
	$
Define the sequence of events $A_n = \{ \Delta  \leq (p^{4/q} \sqrt{m/n} + 1/m  ) h_n \}$, where $h_n>0$ is a sequence diverging at an arbitrarily slow rate so that $\pr(A_n) = 1 - o(1)$. {For every outcome $\boldsymbol{\omega}\in A_n$}, we have 
	\begin{align}\label{eq:boundthm1}
	&\sup_{|x| > d^\circ_{n,p} } \left| \pr( F(W) \leq x) -  \pr( \tilde{F}(W) \leq x | X({\boldsymbol{\omega})} ) \right|\nonumber\\ 
	\lesssim\,&  (p^{4/q} \sqrt{m/n} + 1/m  )^{1/3}h_n^{1/3}  \log^2(pn) + G^*(n,np), \mbox{ where }
	\end{align} 
	$$
	d^\circ_{n,p}=C[(p^{4/q} \sqrt{m/n} + 1/m  )^{1/3}h_n^{1/3} \log^{1/6}(pn)+  \log^{-1/2}(np)+d^*_{n,np}]
	$$ 
	with some finite constant $C$ that does not depend on $n$, and $G^*(n,np)$ as well as $d^*_{n,np}$ are defined in Proposition \ref{prop:complex_approximation} in the supplementary material with $h$ therein replaced by $np$. 
\end{theorem}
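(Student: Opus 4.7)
The plan is to cast both $F(W)$ and $\tilde F(W)$ as the sup-norm of a centered $2np$-dimensional real vector: for $F(W)$ this vector is $\Theta^{(\epsilon)}(W)$, while for $\tilde F(W)$ conditional on $X$ it is a Gaussian mixture with covariance $\cov(S(W)|X)$. I would then insert an intermediate centered Gaussian vector $Y$ whose covariance matches $\cov(\Theta^{(\epsilon)}(W))$ exactly and use a triangle inequality: the Kolmogorov gap between $F(W)$ and $\|Y\|_\infty$ is handled by the complex-valued high-dimensional Gaussian approximation in Proposition \ref{prop:complex_approximation}, contributing the $G^*(n,np)$ term together with the non-approximable band $|x|\le d^*_{n,np}$; the Kolmogorov gap between $\|Y\|_\infty$ and $\tilde F(W)$ conditional on $X$ reduces to a Gaussian-to-Gaussian comparison whose covariance discrepancy is exactly the quantity $\Delta$.

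The first concrete step is to show $\Delta=O_\pr(p^{4/q}\sqrt{m/n}+1/m)$. Under the null assumption $\mu_{i,n}=f(i/n)$ with $f$ smooth and every $\omega\in W$ bounded away from $0$ by $\delta_0$, the contribution of $f$ to each $E_{j,m}(\omega)$ is $O(1/m)$ by orthogonality of high-frequency exponentials against smooth functions, so $\cov(S(W)|X)$ coincides up to an $O(1/m)$ deterministic perturbation with the same quadratic form built from $\epsilon$. The expectation of that quadratic form differs from $\cov(\Theta^{(\epsilon)}(W))$ by the standard overlapping-block taper bias, which is $O(1/m)$ under the summable physical dependence measures $\delta_q(k)$ from Assumptions \ref{assumption1}--\ref{assumption7}. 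For the stochastic fluctuation I would apply a Nagaev- or Rosenthal-type $q$-moment inequality for PLS sums to each of the $O((pn)^2)$ entries of $\cov(S(W)|X)-\E\,\cov(S(W)|X)$ and take a union bound; the entrywise deviation is of order $\sqrt{m/n}$ and the $q$-th moment union bound produces the $p^{4/q}$ factor. The growth condition \eqref{eq:thmcon} on $q$ is exactly what is needed to keep $p^{4/q}\sqrt{m/n}=o(1)$, and the event $A_n$ absorbs the arbitrarily slow divergence $h_n$.

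Given the covariance bound, a Stein/Slepian smoothing argument converts $\Delta$ into a Kolmogorov-distance bound of order $\Delta^{1/3}\log^{2}(pn)$, the cube root arising from optimizing the smoothing parameter against an anti-concentration inequality for $\|Y\|_\infty$. This is where the main obstacle enters: the coordinates of $\Theta^{(\epsilon)}(W)$ have variances proportional to $i/n$ for $i=1,\dots,n$, so the standard deviations of the components span several orders of magnitude and in particular are not uniformly bounded below. Classical Nazarov and Chernozhukov--Chetverikov--Kato anti-concentration bounds involve an explicit $\sigma_{\min}^{-1}$ factor which is useless here. To handle this I would use the extension of Nazarov's inequality to unbalanced variances that the paper develops in the supplementary material; it gives anti-concentration at rate $C\log^{1/2}(np)\cdot \varepsilon$ that is effective only outside an $O(\log^{-1/2}(np))$ band around zero, which is precisely the extra $\log^{-1/2}(np)$ contribution appearing inside $d^\circ_{n,p}$. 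Combining the Gaussian approximation for $F(W)$, the Gaussian-to-Gaussian comparison at covariance-distance $\Delta$ for $\|Y\|_\infty$ versus $\tilde F(W)|X$, and the unbalanced-variance anti-concentration, and tracking the residual unapproximable tail region, yields exactly \eqref{eq:boundthm1}.
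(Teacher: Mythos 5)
Your overall architecture is the paper's: insert an intermediate Gaussian vector with the covariance of $\Theta^{(\epsilon)}(W)$, handle the first gap by the complex-valued Gaussian approximation (Proposition \ref{prop:complex_approximation}, polygon approximation of the modulus), handle the second by a Gaussian-to-Gaussian comparison driven by $\Delta$, and obtain anti-concentration from the extended Nazarov inequality for unbalanced variances, which is exactly where the $\log^{-1/2}(np)$ term in $d^\circ_{n,p}$ and the cube root in $\Delta^{1/3}$ come from. That part is correct and well identified.

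There are, however, two concrete gaps. First, your bound on the stochastic part of $\Delta$ does not deliver the stated rate. A union bound over the $O((pn)^2)$ entries of $\cov(S(W)|X)-\E\,\cov(S(W)|X)$, each controlled in $\mathcal{L}^{q'}$ with $q'=q/2$ (the quadratic form only has half the moments of the data), yields a factor $((pn)^2)^{2/q}=(pn)^{4/q}$, not $p^{4/q}$; the extra $n^{4/q}$ would force a strictly stronger moment condition than \eqref{eq:thmcon}. The paper avoids this by exploiting the partial-sum structure in the time index: Lemma \ref{lemma4} (built on Lemma 1 of \cite{zhou2013}) bounds $\bigl\|\max_{k}\bigl|\sum_{i=1}^{k-m+1}\bigl(S^{(\epsilon)}_{i,m}(w)S^{(\epsilon)}_{i,m}(w')-\E[\cdot]\bigr)\bigr|\bigr\|_{q'}=O(\sqrt{m/n})$ with the maximum over $k$ already inside the norm, so the union bound is only over the $p^2$ frequency pairs and gives $(p^2)^{1/q'}=p^{4/q}$. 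You need this maximal inequality, not an entrywise union bound. Second, you never address the trend in the statistic $F(W)$ itself: $F(W)$ is built from $X_i=f(i/n)+\epsilon_i$, and identifying its underlying vector with $\Theta^{(\epsilon)}(W)$ requires showing $\max_{k,\omega}|\sum_{i\le k}f(i/n)e^{\sqrt{-1}\omega i}|=O(n^{1/4})$ for $\omega\ge\delta_0$ (Lemma \ref{lemma:smooth}) and then absorbing the resulting $O(n^{-1/4})$ shift of the maximum via the anti-concentration bound (Lemma \ref{lemma3}). Relatedly, the per-block trend contribution to $E_{j,m}(\omega)$ is $O(1)$ (a bounded geometric-type sum, since $\omega\ge\delta_0$), not $O(1/m)$; the $O(1/m)$ perturbation of the bootstrap covariance arises only after summing the $n$ squared block sums and dividing by $m(n-m)$, and the cross terms between $S^{(\mu)}_{i,m}$ and $S^{(\epsilon)}_{i,m}$ must be bounded separately (they contribute $O_\pr(p^{2/q}/\sqrt{n})$, which is dominated).
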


In \eqref{eq:boundthm1}, the term $G^*(n,np)$ corresponds to the Gaussian approximation error of $F(W)$ by 
$\max_{ \theta \in W} \max_{ 1 \leq k \leq n } |\sum_{j=1}^k y_{j,n}e^{\sqrt{-1}\theta j}|/\sqrt{n}$, where $\{y_{i,n}\}$ is a centered Gaussian time series that preserves the covariance structure of $\{\epsilon_{i}\}$. The term $(p^{4/q} \sqrt{m/n} + 1/m  )^{1/3}h^{1/3}(n)  \log^2(pn)$ corresponds to the bootstrap error when approximating the distribution of $\max_{ \theta \in W} \max_{ 1 \leq k \leq n }$ $|\sum_{j=1}^k y_{j,n}e^{\sqrt{-1}\theta j}|/\sqrt{n}$ by that of $\tilde{F}(W)$.
The requirement $q>4\log p/((1-\theta)\log n)$ in Theorem \ref{thm:stage1boots} is to ensure that $p^{4/q} \sqrt{m/n}  + 1/m $ converges to 0 polynomially fast. For the DPPT, we set $p\asymp n^{3/2}\log n$, in which case the above requirement is equivalent to $q>12/(1-\theta)$.
As we will discuss after the proof, $G^*(n,np)$ and $d^\circ_{n,p}$ converge to $0$ when $n\to \infty$.

As a result, Theorem \ref{thm:stage1boots} asserts that under the null hypothesis of no oscillation at frequencies $\ge \delta_0$ the conditional cumulative distribution function of the bootstrap well approximates that of $F(W)$ with high probability if $|x|>d^\circ_{n,p}$ as the right hand side of \eqref{eq:boundthm1} converges to 0. The restricted range $|x|>d^\circ_{n,p}$ is due to the unbalanced variances of $L_n(i,\omega)$ across $i$, where there is no positive lower bounds. 
The unbalanced variances of $L_n(i,\omega)$  lead to the possible failure of the Gaussian approximation when $|x|$ is very small. 
On the other hand, Lemma \ref{lem6} in the supplementary material assures that, for $\alpha\in (0,1-\alpha_0]$ where $\alpha_0$ is any positive constant, the $(1-\alpha)$ quantile of $F(W)$ is no less than $c_\alpha\sqrt{\log n}$ for sufficiently large $n$ and some positive constant $c_\alpha$. Hence $d^\circ_{n,p}$ is dominated by the latter quantile and Theorem \ref{thm:stage1boots} can be used for the DPPT.

The next theorem shows the validity of the phase-adjusted OBMB for the second stage statistics under the null hypothesis of no change points at an oscillatory frequency $\omega$. We first need to introduce some notation. 
Let $\Theta^{(2)} \in \R^{ 2(n - \mt - m') }$ be the vectorized stage 2 statistics at the true oscillatory frequency $w$; that is,  for $\mt + m' \leq i \leq n - \mt - m'$ we define the coordinate of $\Theta^{(2)}$ as \[ \Theta^{(2)}_{i - (\mt + m') + 1 } =  \left( \sum_{k = i -\mt  }^{i }  \cos({ w (k - i ) }) \epsilon_k -\sum_{k = i +1 }^{i + \mt +1} \cos({ w (k - i ) }) \epsilon_k \right)/\sqrt{2\mt}  \] and \[ \Theta^{(2)}_{i + n- 2(\mt + m') + 1  } =  \left( \sum_{k = i - \mt  }^{i }  \sin({ w (k - i ) }) \epsilon_k -\sum_{k = i +1 }^{i + \mt +1} \sin({ w (k - i ) }) \epsilon_k \right)/\sqrt{2\mt}. \] 
We can define the vectorized multiplier bootstrap statistics at the true oscillatory frequency $w$ in a similar way. Let $\tilde{S}^{(2)} \in \R^{ 2(n - \mt - m') }$, which is defined as 
\[
\tilde{S}^{(2)}_{i - (\mt + m') + 1 } = \left( \sum_{k = i - \mt  }^{i }  \Phi_{k}(i, w) G_k -\sum_{k = i +1 }^{i + \mt +1}  \Phi_{k}(i, w) G_k \right)/\sqrt{2\mt} 
\] 
and 
\[ 
\tilde{S}^{(2)}_{i + n- 2(\mt + m') + 1  } = \left( \sum_{k = i - \mt  }^{i }  \Psi_{k}(i, w) G_k -\sum_{k = i +1 }^{i + \mt +1}  \Psi_{k}(i, w) G_k \right)/\sqrt{2\mt}
\]  
for $\mt + m' \leq i \leq n - \mt - m'$. Lastly, define a measure of difference in covariance structure for the second stage statistics: 
\begin{eqnarray}\label{eq:Delta'}
\Delta' := \max_{1\leq i,j \leq 2(n - \mt - m')}|[\cov(\Theta^{(2)}) - \cov(\tilde{S}^{(2)}|X)]_{ij}|. 
\end{eqnarray}

\begin{theorem} \label{thm:s2_consistancy}
Assume that $\Omega$ is not empty, $\omega\in\Omega$ and there is no change point at frequency $\omega$. Suppose that Assumptions \ref{assumption3} to \ref{assumption8} 
hold true. Further assume that $\mt\asymp n^{\gamma_1}$ with $16/29<\gamma_1<1$, $m'\asymp n^\eta$ with $0<\eta<\gamma_1$ and $q>4/(\gamma_1-\eta)$. Then, one can find a sequence of events $B_n$ with probability at least $1-C/\log^{q/2}n$, where $C>0$ does not depend on $n$, such that {for every outcome $\boldsymbol{\omega}\in B_n$}, we have $\Delta'({\boldsymbol{\omega}})\lesssim 1/m' + n^{2/q} \sqrt{m'/\mt}\log n$ and  
\begin{equation}\label{eq:s2null}
\sup_{x\in\mathbb{R} } \left| \pr( T(B_1,\hat w) \leq x) -  \pr( \hat{T}(B_1,\hat w) \leq x | X({\boldsymbol{\omega}}) ) \right|=o(1). 
\end{equation}	
\end{theorem}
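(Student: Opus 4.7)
The plan is to parallel the architecture of Theorem \ref{thm:stage1boots} with four reductions: (i) replace $\hat w$ by the true oscillatory frequency $\omega$; (ii) strip off the deterministic mean $\mu_{i,n}$ from both $T$ and $\hat T$ under the null of no change point at $\omega$; (iii) bound the covariance discrepancy $\Delta'$ between $\Theta^{(2)}$ and $\tilde S^{(2)}$; (iv) combine a Gaussian approximation for the max-modulus functional of $\Theta^{(2)}$ with a Gaussian-to-Gaussian comparison driven by $\Delta'$. A key simplification over Stage 1 is that every coordinate of $\Theta^{(2)}$ is built from exactly $\tilde m$ summands, so its variance is bounded above \emph{and} below; consequently no restricted-range phenomenon appears and the supremum can be taken over all of $\mathbb{R}$.

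For (i), Stage 1 provides $|\hat w - \omega| = O_\pr(n^{-1})$ in the worst case, so $|(\hat w - \omega)(l - i)| \le \tilde m|\hat w - \omega| = o(1)$ throughout the window. A first-order expansion of $e^{\sqrt{-1}(\hat w - \omega)(l - i)}$ together with the moment bounds of Assumption \ref{assumption3} shows that the change-of-frequency error in $T$ and $\hat T$ is $o_\pr(\log^{-1/2} n)$, which is harmless for the final rate. For (ii), since there is no change point at $\omega$, the signal at frequency $\omega$ is a single cosine/sine pair with constant amplitudes whose contribution cancels between the two halves of the local window up to an $O(\tilde m^{-1/2})$ boundary term; the smooth trend $f(\cdot/n)$ and all oscillatory components at frequencies separated from $\omega$ by at least a constant produce Dirichlet-kernel contributions of the same order. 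Hence $T(B_1,\hat w)$ is uniformly within a negligible remainder of the max-modulus of the complex vector assembled from the two halves of $\Theta^{(2)}$, and the same reduction applies to $\hat T(B_1,\hat w)$.

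Step (iii) is the quantitative core. Entry-wise, $\mathrm{Cov}(\tilde S^{(2)}\mid X)$ is an average of cross-products of local block-sums of $\epsilon_{i,n}$ over windows of length $m'$, whereas $\mathrm{Cov}(\Theta^{(2)})$ is the corresponding local long-run covariance. The Lipschitz PLS structure combined with the physical dependence decay of Assumption \ref{assumption3} yields a deterministic bias of order $1/m'$. The stochastic fluctuation is handled by a Nagaev-type maximal inequality for partial sums under the moment bound $\|\epsilon_{i,n}\|_q \le C_q$; after a union bound over the $O(n^2)$ coordinate pairs, this gives a tail of order $n^{2/q}\sqrt{m'/\tilde m}\log n$, the Stage 2 analogue of the $p^{4/q}\sqrt{m/n}$ term in Theorem \ref{thm:stage1boots}. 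Together these produce $\Delta' \lesssim 1/m' + n^{2/q}\sqrt{m'/\tilde m}\log n$ on an event of probability at least $1 - C(\log n)^{-q/2}$, which accounts for the probability qualifier in the theorem.

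For (iv), I would invoke Proposition \ref{prop:complex_approximation} to approximate the law of the max-modulus of $\Theta^{(2)}$ by that of a complex Gaussian vector preserving its covariance; a Gaussian-to-Gaussian (Slepian-type) comparison, combined with a standard Nazarov anti-concentration bound (the unbalanced-variance extension is not needed here thanks to the balanced variances), bounds the distance to the conditional law of the max-modulus of $\tilde S^{(2)}$ given $X$ by a polynomial of $\Delta'\log^2(n-\tilde m - m')$. Under the rate conditions $\gamma_1 > 16/29$, $\eta < \gamma_1$, and $q > 4/(\gamma_1 - \eta)$, all three error sources --- frequency/signal reduction, Gaussian approximation, and $\Delta'$-driven comparison --- are $o(1)$, which delivers \eqref{eq:s2null}. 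The main obstacle is step (iii): controlling $O(n^2)$ covariance entries under physical dependence with only $q$ moments, and then calibrating the resulting stochastic rate against the Gaussian-approximation rate, is what forces the specific threshold $\gamma_1 > 16/29$ and the joint moment condition $q > 4/(\gamma_1 - \eta)$.
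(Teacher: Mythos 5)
Your four-step architecture matches the paper's proof in steps (i)--(iii): the replacement of $\hat w$ by $\omega$ is the paper's Lemmas \ref{lem:b8}, \ref{lem:b9} and \ref{lem:12}, the reduction $T(B_1,\omega)=T^{(\epsilon)}(B_1,\omega)+O(\mt^{-1/2})$ under the null is used verbatim, and your bias-plus-fluctuation decomposition of $\Delta'$ (deterministic $1/m'$ term plus a moment/union bound over coordinate pairs yielding $n^{2/q}\sqrt{m'/\mt}\log n$ on an event of probability $1-C/\log^{q/2}n$) is exactly Lemma \ref{lem:11}. The observation that balanced variances remove the restricted range $|x|>d^\circ$ is also the paper's.

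The genuine gap is in step (iv). The paper does \emph{not} obtain the Gaussian approximation for $T^{(\epsilon)}(B_1,\omega)$ from Proposition \ref{prop:complex_approximation}; it uses a KMT-type strong invariance principle (Lemma \ref{lem:10}, via the construction in \cite{wu2011gaussian}) giving $\max_i|S_{U,i}-S_{Y,i}|=O_\pr(n^{8/29}\log^{35/29}n)$, and the condition $\gamma_1>16/29$ is precisely what makes $n^{8/29}\log^{35/29}n/\sqrt{\mt}\to 0$ so that the coupling error is negligible after the $\sqrt{2\mt}$ normalization. Your attribution of the $16/29$ threshold to the covariance-control step (iii) is therefore incorrect, and your proposed route leaves this threshold unexplained: if you recast the local statistic as $\max_j|\sum_{i=1}^n x_{ij}|/\sqrt{n}$ with window-supported weights so that Theorem \ref{thmA1} applies, the individual summands scale like $\sqrt{n/\mt}\,|\epsilon_i|$, which inflates $\mathcal{D}_n$ to $n^{(1-\gamma_1)/2}$ and alters Assumption (A2) and the resulting rates; you would need to redo that calibration and would not recover $16/29$. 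A secondary inconsistency: in step (i) you invoke the worst-case rate $|\hat w-\omega|=O_\pr(n^{-1})$, but the deterministic part of the frequency-substitution error is of order $\mt^{3/2}|\hat w-\omega|$ (Lemma \ref{lem:b9}), which diverges for $\gamma_1>2/3$ at that rate; the theorem allows $\gamma_1$ up to $1$ only because the hypothesis of no change point at $\omega$ forces constant phase at $\omega$, whence the sharper rate $|\hat w-\omega|=o_\pr(n^{-3/2}\log n)$ of Lemma \ref{lemb6} applies and gives $\mt^{3/2}|\hat w-\omega|=o(1)$ for all $\gamma_1<1$.
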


Theorem \ref{thm:s2_consistancy} implies that the phase-adjusted OBMB achieves the correct Type-I error rate $\beta$ asymptotically if there is no change point at frequency $\omega$. The assumption $q>4/(\gamma_1-\eta)$ ensures that $n^{2/q}\sqrt{m'/\mt}\log n$, hence $\Delta'$, converges to 0 algebraically fast. The $o(1)$ error in \eqref{eq:s2null} is composed of three parts. First, we show that 
$\sup_{x\in\mathbb{R} }| \pr( T(B_1,\hat w) \leq x)$ $-  \pr( T(B_1,w) \leq x) |$ converges to 0 and, conditional on the data, 
$\sup_{x\in\mathbb{R} } | \pr( \hat T(B_1, \hat w) \leq x| X) -  \pr( \hat T(B_1,w) \leq x| X) |$ converges to 0 algebraically fast with high probability. Second, we derive that
$$\sup_{x\in\mathbb{R} } | \pr( T(B_1, w) \leq x) -  \pr( T^{(y)}(B_1,w) \leq x) |=o(1), \mbox { where }$$ 
$T^{(y)}(B_1,w )$ is the version of $T(B_1, w)$ with $\{X_{l}\}$ therein replaced by a centered Gaussian process $\{y_{l,n}\}$ with the same covariance structure. Then utilizing comparison of distribution results for complex Gaussian random vectors, we show that, with high probability, 
$$\sup_{x\in\mathbb{R} } \left| \pr( T^{(y)}(B_1,w) \leq x) -  \pr( \hat{T}(B_1,w) \leq x | X ) \right|=O((\Delta')^{1/3}\log^{7/6} n).$$

\subsection{Estimation accuracy}\label{sec:ea}
Under the alternative hypotheses where oscillations at frequencies $\ge \delta_0$ or oscillatory change points exist, this subsection investigates the accuracy of our bootstrap-assisted algorithm.
Here the accuracy of our methodology is composed of two parts. First, for given $\alpha,\beta\in (0,1)$, we hope that our methodology is able to estimate the correct number of oscillatory frequencies with probability $1-\alpha$ and the correct number of change points with probability $1-\beta$ asymptotically. Second, we would like our methodology to estimate the oscillatory frequencies and change points, if they exist, within an accurate range. The following Theorem \ref{thm:s1_accuracy} and Proposition \ref{prop:nophasechange} establish the desired result for our stage 1 methodology.

\begin{theorem}[Accuracy of frequency estimation]	\label{thm:s1_accuracy}
	Assume that Assumptions \ref{assumption1} to \ref{assumption9} in Section \ref{Section:AppenC} 
	hold true.  Then  when $n\to \infty$,

	1. if $\Omega$ is empty, then $\pr( |\hat{\Omega}| = 0) \rightarrow 1-\alpha $;
	
	2. if $|\Omega| > 0$, 
	then 
	$  
	 \pr \left( \max_{k} |\hat{w}_k - \omega_k| \lesssim  n^{-1}h_n,\,\, |\hat{\Omega}| = |\Omega|  \right) \rightarrow 1 - \alpha
	 $
for any sequence $h_n>0$ that diverges to infinity arbitrarily slowly.	
\end{theorem}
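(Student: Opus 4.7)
The plan is to use Theorem \ref{thm:stage1boots} as the main driver for the bootstrap critical values, coupled with a Dirichlet-kernel style analysis of the signal's Fourier transform. I would treat the two cases $\Omega = \emptyset$ and $|\Omega| > 0$ separately.

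For Part 1, under $\Omega = \emptyset$ we have $\mu_{i,n} = f(i/n)$ for a Lipschitz $f$. A summation-by-parts argument, exploiting that $\omega \ge \delta_0 > 0$ is bounded away from zero, gives $\sup_{1\le k\le n}\sup_{\omega\in W}\bigl|\sum_{i=1}^k f(i/n)e^{\sqrt{-1}\omega i}\bigr| = O(1)$, so the signal contribution to $F(W)$ is negligible after dividing by $\sqrt{n}$. Theorem \ref{thm:stage1boots} then applies and controls $\sup_{|x|>d^\circ_{n,p}}|\pr(F(W)\le x) - \pr(\tilde F(W)\le x \mid X)|$; meanwhile Lemma \ref{lem6} in the supplement shows the $(1-\alpha)$ quantile of $F(W)$ is at least of order $\sqrt{\log n}$, which dominates $d^\circ_{n,p}$. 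Hence $\pr(F(W)>crit_{\alpha,1}(W))\to\alpha$ and $\pr(|\hat\Omega|=0)\to 1-\alpha$ since the algorithm halts after one iteration.

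For Part 2, write $\Omega = \{\omega_1,\ldots,\omega_K\}$ and proceed by induction on the iteration $k$ of Algorithm 1. The inductive claim is that with probability $\to 1-\alpha$ the first $k$ outputs $\hat\omega_1,\ldots,\hat\omega_k$ each land within $h_n/n$ of a distinct true frequency, and after $K$ iterations the algorithm terminates. The detection step at iteration $k\le K$ uses a signal lower bound: for any unresolved $\omega_j\in W_k$, pick its longest piecewise-constant-phase segment $[b_{j,r},b_{j,r+1}]$ of length of order $n$; then $\bigl|L_n(b_{j,r+1},\omega_j)-L_n(b_{j,r},\omega_j)\bigr|$ is of order $n$ up to a noise fluctuation of $O_\pr(\sqrt{n})$, so $\bar F(\omega_j)$ is of order $\sqrt{n}$ by the triangle inequality. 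Since Theorem \ref{thm:stage1boots} together with the null bound on $\tilde F$ yields $crit_{\alpha,k}(W_k)=O_\pr(\sqrt{\log n})$, detection succeeds with probability $1-o(1)$. The accuracy step is the core of the argument: for any $\omega\in W_k$ at distance exceeding $h_n/n$ from every true frequency, the geometric-sum identity gives $\bigl|\sum_{i=a}^b e^{\sqrt{-1}(\omega-\omega_j)i}\bigr|\le 1/|\sin((\omega-\omega_j)/2)|=O(n/h_n)$, so the signal contribution to $|L_n(k,\omega)|$ is $O(n/h_n)$ uniformly, hence $\bar F(\omega)=O(\sqrt{n}/h_n)+O_\pr(\sqrt{\log n})=o(\sqrt{n})$. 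Therefore the maximizer $\hat\omega_k$ must lie within $h_n/n$ of some $\omega_j$. The separation hypothesis $|\omega_i-\omega_j|\gg m^{-1/2}\log m$ together with the exclusion radius $\log(m)/(4m^{1/2})$ ensures that removing a neighborhood of $\hat\omega_k$ eliminates exactly one true frequency, so the induction propagates.

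Termination at iteration $K+1$ follows from a second Dirichlet-kernel bound: on $W_{K+1}$ every $\omega$ is at distance at least $\log(m)/m^{1/2}-h_n/n$ from every $\omega_j$, giving uniform signal contribution of order $O(m^{1/2}/(\sqrt{n}\log m))=o(1)$ under $m\asymp n^\theta$, so $F(W_{K+1})$ behaves like its null-case analog; the bootstrap statistic $\tilde F(W_{K+1})$ shares this property because the same geometric-sum estimate bounds the signal's contamination of $E_{j,m}(\omega)$, and the reasoning of Part 1 reapplies to yield termination with probability $\to 1-\alpha$. Combining detection, accuracy and termination gives $\pr\bigl(\max_k|\hat\omega_k-\omega_k|\lesssim h_n/n,\ |\hat\Omega|=|\Omega|\bigr)\to 1-\alpha$. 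The main obstacle is uniform control of spectral leakage across both frequency $\omega$ and partial-sum endpoint $k$ in the presence of piecewise phase changes: summing the Dirichlet-kernel bound over up to $M_j$ pieces per frequency requires careful triangle-inequality bookkeeping to retain the $O(1/|\omega-\omega_j|)$ rate. A secondary subtlety is that $crit_{\alpha,k}(W_k)$ is computed from the raw data $X$, which still carries unestimated signal; the exclusion radius $\log(m)/(4m^{1/2})$ together with the geometric-sum bound on $E_{j,m}(\omega)$ is what guarantees $crit_{\alpha,k}(W_k)$ stays of order $\sqrt{\log n}$ rather than blowing up.
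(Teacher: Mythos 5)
Your overall strategy coincides with the paper's: Part 1 via Theorem \ref{thm:stage1boots} plus the quantile lower bound of Lemma \ref{lem6}, and Part 2 via a Dirichlet-kernel comparison of the signal's Fourier transform at true versus off-grid frequencies (this is exactly Lemma \ref{lem:8} in the supplement). However, there is a genuine gap in your termination step. The set $W_{K+1}$ is \emph{random} — it is $W$ minus neighborhoods of the estimated frequencies $\hat\omega_1,\dots,\hat\omega_K$ — so you cannot simply "reapply the reasoning of Part 1," since Theorem \ref{thm:stage1boots} is proved for a deterministic frequency set and the bootstrap quantile $crit_{\alpha,K+1}(W_{K+1})$ is computed on a data-dependent domain. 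The paper's proof spends most of its effort on precisely this point: it introduces the deterministic set $V$ (neighborhoods of the true $\omega_k$ of radius $h_n/n+\log(m)/(4m^{1/2})$) and the random set $V'$ (neighborhoods of the $\hat\omega_k$ of radius $\log(m)/(4m^{1/2})$), shows $V'\subset V$ with $|V\setminus V'|=O(\sqrt{n}h_n\log n)$ on a high-probability event, proves via Lemma \ref{lem6} that $F(V\setminus V')$ stays below the $\sqrt{\log n}$ scale of the relevant quantile, and concludes that the $(1-\alpha)$-quantiles of $F(W\setminus V)$ and $F(W\setminus V')$ — and of the corresponding bootstrap statistics — are asymptotically equivalent, so that Theorem \ref{thm:stage1boots} applied to the \emph{deterministic} set $W\setminus V$ yields the exact limit $\pr(|\hat\Omega|>|\Omega|)\to\alpha$. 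Since the theorem asserts an exact limit $1-\alpha$ rather than a one-sided bound, this de-randomization cannot be skipped.

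A secondary inaccuracy: during the detection iterations $k\le K$ you assert $crit_{\alpha,k}(W_k)=O_\pr(\sqrt{\log n})$, attributing this to the exclusion radius. But at step $k\le K$ the set $W_k$ still contains the \emph{undetected} oscillatory frequencies, at which the blocks $E_{j,m}(\omega)$ carry signal of order $m$; the bootstrap conditional variance there is of order $m$ rather than $O(1)$, and the correct bound is $crit_{\alpha,k}(W_k)=O_\pr(\sqrt{m}\log(pn))$ (Lemma \ref{lem:9}(b) in the supplement). Detection still succeeds because $\sqrt{m}\log(pn)=O(n^{\theta/2}\log n)\ll\sqrt{n}$, so this does not break your argument, but the stated reason is wrong — the exclusion radius only shields already-detected frequencies, not the remaining ones.
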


The rate $n^{-1}h_n$ is slower than the parametric rate $n^{-3/2}$ for oscillatory frequency estimation (cf. \cite{genton2007}). A further investigation reveals that this is caused by possible changes in the phase of the oscillation. When there is an abrupt change in the phase, the Fourier transformation of the oscillation curve is not maximized at the true oscillatory frequency. Instead, it will be maximized in an $O(1/n)$ neighborhood of the oscillatory frequency. The following is a detailed example.

\begin{example}
Let $\mu_i=C_1\cos(\omega_0 i+\alpha_1)$, $i=1,2,\cdots, n_1$, and $\mu_i=C_2\cos(\omega_0 i+\alpha_2)$, $i=n_1+1,\cdots,n$, where $C_1,C_2,\alpha_1,\alpha_2>0$, $0<\omega_0<\pi/2$ and $0<\alpha_1-\alpha_2<\pi/2$. Write $n_2=n-n_1$ and assume $n_1=c_1 n$ for some $c_1\in(0,1)$. Let $\epsilon_{i}=0$.
Then $F(\omega_0)=C_1^2n_1^2+C_2^2n_2^2+2C_1C_2n_1n_2\cos(\alpha_1-\alpha_2)+O(1)$. For any $\omega'$ such that $|\omega'-\omega_0|\asymp 1/(nh_n)$, where $h_n$ is diverging at an arbitrarily slow rate, elementary but tedious calculations yield 
$F(\omega')-F(\omega_0)=C_1C_2\sin(\alpha_1-\alpha_2)n_1n_2^2(\omega'-\omega_0)(1+o(1))$.
Hence, if $\omega'-\omega_0>0$, $F(\omega')>F(\omega_0)$ for sufficiently large $n$. Clearly, if $|\omega'-\omega_0|\gg 1/n$,  $F(\omega')<F(\omega_0)$. Therefore, we conclude that for this example, $F(W)$ is maximized at a point $\omega^*$ such that $1/(nh_n)\le|\omega^*-\omega_0|\le h_n/n$
for sufficiently large $n$. This example also shows that the rate $n^{-1}h_n$ in Theorem \ref{thm:s1_accuracy} cannot be improved (except for a factor of an arbitrarily slowly diverging function) for the DPPT if there exist changes in the phase of the oscillation.  
\end{example}
The following Proposition shows that if the phase of the oscillation does not change over time; that is, if only the amplitude of the oscillation is allowed to change over time, then the DPPT can detect the oscillatory frequencies at the $n^{-3/2}$ parametric rate except for a factor of logarithm. 
\begin{proposition}\label{prop:nophasechange}
Suppose the assumptions of Theorem \ref{thm:s1_accuracy} and \eqref{eq:thmcon} hold true. Further assume $\Omega\neq\emptyset$, and for each $\omega_k\in\Omega$ and all $r=0,1,\cdots,M_k$,  if $A_{r,k}^2+B_{r,k}^2\neq 0$, we have
\begin{eqnarray}\label{eq:nophasechange}
A_{r,k}/\sqrt{A_{r,k}^2+B_{r,k}^2}=c_k\ \mbox{ and }\ B_{r,k}/\sqrt{A_{r,k}^2+B_{r,k}^2}=d_k \mbox{ for some constants } c_k, d_k
\end{eqnarray}
independent of $r$. Then  
$  \pr \left( \max_{k} |\hat{w}_k - \omega_k| \lesssim  n^{-3/2}\log(n), |\hat{\Omega}| = |\Omega|  \right) \rightarrow 1 - \alpha. $
\end{proposition}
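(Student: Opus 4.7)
The plan is to sharpen the $n^{-1}h_n$-rate of Theorem \ref{thm:s1_accuracy} by exploiting the constant-phase hypothesis \eqref{eq:nophasechange}. That theorem already supplies, with probability tending to $1-\alpha$, the event $\mathcal E_n$ on which $|\hat\Omega|=|\Omega|$ and each $\hat\omega_k$ belongs to $\mathcal N_k:=\{\omega\in W:|\omega-\omega_k|\le n^{-1}h_n\}$. On $\mathcal E_n$ it therefore suffices to show that $\arg\max_{\omega\in\mathcal N_k}\bar F(\omega)$ lies within $Cn^{-3/2}\log n$ of $\omega_k$, for some large constant $C$, with probability tending to one.

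Under \eqref{eq:nophasechange}, the $k$-th oscillation reduces to $R^{(k)}_j\cos(\omega_k j-\phi_k)$ with $R^{(k)}_j:=\sqrt{A_{r,k}^2+B_{r,k}^2}\ge 0$ for $b_{k,r}<j\le b_{k,r+1}$ and a common phase $\phi_k$ determined by $c_k,d_k$. Dropping the $O(1)$ counter-rotating contribution (valid since $\omega_k$ is bounded away from $0$ and $\pi$) and the $O(\sqrt m)$ contribution of the other oscillations (well separated from $\omega_k$ by the separation enforced in Algorithm 1), the deterministic Fourier transform at $\omega=\omega_k+\delta$ factorizes as
\begin{equation*}
L^{(\mu)}_n(i,\omega_k+\delta)=\tfrac12 e^{\sqrt{-1}\phi_k}\sum_{j=1}^i R^{(k)}_j e^{\sqrt{-1}\delta j}+O(\sqrt m).
\end{equation*}
Writing $M_r^{(i)}:=\sum_{j=1}^i j^r R^{(k)}_j$, a second-order Taylor expansion in $\delta$ yields
\begin{equation*}
|L^{(\mu)}_n(i,\omega_k)|^2-|L^{(\mu)}_n(i,\omega_k+\delta)|^2=\tfrac14\delta^2\bigl(M_0^{(i)}M_2^{(i)}-(M_1^{(i)})^2\bigr)+O(\delta^4 i^6)+O(n\sqrt m).
\end{equation*}
By the identity $M_0^{(i)}M_2^{(i)}-(M_1^{(i)})^2=\tfrac12\sum_{j,\ell\le i}R^{(k)}_j R^{(k)}_{\ell}(j-\ell)^2$, the quadratic coefficient is nonnegative and, under the boundedness and non-degeneracy of the amplitudes on a segment of length $\asymp n$, of exact order $i^4$ whenever $i\asymp n$. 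This is the signal sharpness that drives the $n^{-3/2}$ rate.

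Reintroducing the PLS noise, one expands $|L_n(i,\omega_k+\delta)|^2-|L_n(i,\omega_k)|^2$ in $\delta$ using $D_1(i):=\sum_{j=1}^i j X_{j,n}e^{\sqrt{-1}\omega_k j}$. The key observation is that the linear-in-$\delta$ coefficient $-2\delta\,\mathrm{Im}(\overline{L_n(i,\omega_k)}\,D_1(i))$ vanishes to leading deterministic order: the signal parts of $L_n(i,\omega_k)$ and $D_1(i)$ both carry the phase $e^{\sqrt{-1}\phi_k}$ and so their product is real. What survives is a signal--noise cross term plus pure-noise contributions, whose $L^2$-norm is pointwise $O(|\delta|n^{5/2})$, giving
\begin{equation*}
|L_n(i,\omega_k+\delta)|^2-|L_n(i,\omega_k)|^2\le -cn^4\delta^2+E_n(i,\delta)
\end{equation*}
uniformly in $i$ inside the neighborhood of $n$ that must contain the argmax. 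Dividing by $n$ and comparing maxima, the conclusion $\bar F(\omega_k+\delta)<\bar F(\omega_k)$ reduces to the uniform stochastic bound $\sup_{i,\delta}E_n(i,\delta)=O_p(|\delta|n^{5/2}\sqrt{\log n})$ over $\{1,\dots,n\}\times(\mathcal N_k\cap W)$, whose cardinality is $O(h_n n^{3/2}\log n)$. This last uniform bound is the main technical obstacle; I would attack it by invoking the Gaussian approximation behind Theorem \ref{thm:stage1boots} (the term $G^*(n,np)$) to replace the PLS noise by a matched Gaussian process, then exploit the smoothness provided by the multiplicative factor $\delta$, and finally combine a chaining argument over $\omega\in\mathcal N_k\cap W$ with a standard maximal inequality over $i$. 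Once this is in place, the maximizer $\hat\omega_k$ necessarily satisfies $|\hat\omega_k-\omega_k|\le Cn^{-3/2}\log n$, and the asymptotic probability $1-\alpha$ is inherited additively from the event $\mathcal E_n$ of Theorem \ref{thm:s1_accuracy}.
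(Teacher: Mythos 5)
Your proposal is correct and follows essentially the same route as the paper, whose proof is delegated to Lemma \ref{lemb6}: the constant-phase hypothesis kills the linear term in the Taylor expansion of the deterministic Fourier magnitude, yielding a quadratic drop of order $n^3\delta^2$ (your $n^4\delta^2$ for the squared modulus), while the stochastic fluctuation over $|\omega-\omega_k|\le\delta$ is bounded by $\delta\cdot O_\pr(n^{3/2}\log n)$ via the derivative process $\sum_j j\epsilon_j e^{\sqrt{-1}tj}$ controlled by Gaussian approximation and chaining, and the two are balanced at $\delta\asymp n^{-3/2}\log n$. The counting claim $|\hat\Omega|=|\Omega|$ with limit $1-\alpha$ is then inherited from the argument of Theorem \ref{thm:s1_accuracy}, exactly as you indicate.
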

If we write $A_{r,k}+\sqrt{-1}B_{r,k}=C_{r,k}\exp(\sqrt{-1}\theta_{r,k})$, $C_{r,k}\ge 0$, $0\le \theta_{r,k}<2\pi$, then \eqref{eq:nophasechange} is equivalent to $\theta_{r,k}=\theta_{k}$ for some $\theta_k$; that is, there is no change in the phase of the oscillation. On the other hand, note that $C_{r,k}$, the magnitude of the oscillation, is allowed to change with respect to $r$. For the sleep EEG data analyzed in this paper, we are interested in detecting spindles which could be modeled as an oscillation at a fixed frequency that occurs for a short period of time and then vanishes. It can be easily seen that \eqref{eq:nophasechange} is suitable to model such short-term oscillations. 

\begin{remark}
The level $\alpha$ in Theorem \ref{thm:s1_accuracy} and Proposition \ref{prop:nophasechange} can be chosen as $\alpha=\alpha_n\rightarrow 0$ as long as $\alpha_n$ converges to 0 slower than the right hand side of \eqref{eq:boundthm1} and $1-P(A_n)$. In this case, it can be shown that the probabilities in Theorem \ref{thm:s1_accuracy} and Proposition \ref{prop:nophasechange} equal $1-\alpha_n(1+o(1))$ asymptotically.
\end{remark}

The following theorem investigates the asymptotic accuracy of the phase-adjusted local change point detection algorithm. Observe that the established $O_\pr(\log (\mt))$ rate of abrupt change point estimation is nearly the parametric $O_\pr(1)$ rate except a factor of logarithm.

\begin{theorem}[Accuracy of change point estimation]\label{thm:step2acc} %
Write $\Omega=\{\omega_1,\cdots,\omega_K\}$ and let $D_k:=\{b_{1,k},\cdots,b_{M_k,k}\}$ be the set of change points associated with $\omega_k$ and $\hat{D}_k$ be the set of all estimated change points by Algorithm 2 using $\hat{\omega}_k$, $k=1,2,\cdots, K$. For each $k=1,\cdots,K$, suppose that $\gamma_1<2/3$ if the oscillatory phase changes at frequency $\omega_k$. Further assume that all assumptions of Theorem \ref{thm:s2_consistancy} and Assumption \ref{assumption9} 
hold true. Then, for each $k$, $k=1,2,\cdots,K$, we have 
	
	1. If $D_k$ is empty, then $\pr( |\hat{D}_k| = 0) \rightarrow 1 - \beta$.
	
	2. If $|D_k| > 0 $, then $\pr \left( \max_r |\hat{b}_{r,k} - b_{r,k}| \leq \log(\tilde{m}) h_n, \,\,  |\hat{D}_k| = |D_k| \right) \rightarrow 1- \beta$
for any sequence $h_n>0$ that diverges to infinity arbitrarily slowly.
\end{theorem}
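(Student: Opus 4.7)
The plan is to decompose $T(i,\hat\omega_k)=S(i)+Z(i)$ into a deterministic signal and a stochastic noise, then combine three ingredients: the Stage~1 rate from Theorem~\ref{thm:s1_accuracy} and Proposition~\ref{prop:nophasechange}, the bootstrap consistency for the noise from Theorem~\ref{thm:s2_consistancy}, and a local analysis of $S$ and $Z$ around each true change point. The first step is a reduction to the true frequency: I would show $\max_{i\in B_1}|T(i,\hat\omega_k)-T(i,\omega_k)|=o_\pr(\sqrt{\mt})$. Since $|e^{\sqrt{-1}\hat\omega_k(l-i)}-e^{\sqrt{-1}\omega_k(l-i)}|\le|\hat\omega_k-\omega_k|\mt$ for $|l-i|\le\mt$, Abel summation combined with the moment bound on $X_{l,n}$ gives an error of order $|\hat\omega_k-\omega_k|\,\mt^{3/2}$. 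Under phase changes Theorem~\ref{thm:s1_accuracy} gives $O_\pr(h_n n^{3\gamma_1/2-1})$, which is $o_\pr(\sqrt{\mt})$ precisely when $\gamma_1<2/3$; without phase changes the sharper rate $n^{-3/2}\log n$ from Proposition~\ref{prop:nophasechange} makes the reduction trivial for any admissible $\gamma_1$. All subsequent work is carried out at the true $\omega_k$.

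When $D_k=\emptyset$ the oscillation coefficients $(A_{r,k},B_{r,k})$ are constant across every window intersecting at most one interval $(b_{k,r-1},b_{k,r+1}]$, so the phase-adjusted difference of means cancels identically and $T(\cdot,\omega_k)$ reduces to the pure-noise statistic handled by Theorem~\ref{thm:s2_consistancy} (oscillations at other $\omega_j\in\Omega$ cancel approximately via Dirichlet-kernel decay under the spacing imposed on $\Omega$). That theorem then delivers Type-I error $\beta$ at the first iteration, and the sequential structure gives $\pr(|\hat D_k|=0)\to 1-\beta$, establishing Part~1. For Part~2 I would first compute $S(i)$: a geometric-sum calculation at $i=b_{r,k}+d$, with $|d|\le\mt$ no larger than the gap to neighbouring change points, yields
\begin{equation*}
S(i)=\frac{\mt-|d|+1}{2\sqrt{2\mt}}\,\rho_{r,k}+O(1/\sqrt{\mt}),\qquad \rho_{r,k}:=\bigl|(A_{r,k}-A_{r-1,k})+\sqrt{-1}(B_{r,k}-B_{r-1,k})\bigr|,
\end{equation*}
so $S(b_{r,k})\asymp\sqrt{\mt}$ and the signal decays linearly in $|d|$ with slope $\asymp 1/\sqrt{\mt}$. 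For the noise, Theorem~\ref{thm:s2_consistancy} and its underlying Gaussian approximation supply the uniform envelope $\max_i|Z(i)|=O_\pr(\sqrt{\log n})$, and a dyadic chaining combined with a blockwise Bernstein bound based on the PLS dependence decay of Section~\ref{sec:PLS} upgrades this to the local modulus
\begin{equation*}
\max_{|d|\le\ell}|Z(b_{r,k}+d)-Z(b_{r,k})|=O_\pr\bigl(\sqrt{\ell\log\mt/\mt}\bigr).
\end{equation*}

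Because $\sqrt{\mt}\gg\sqrt{\log n}$, the value $T(b_{r,k},\omega_k)\gtrsim S(b_{r,k})$ dominates every bootstrap critical value (which lives in the $\sqrt{\log n}$ regime), so each true change point is flagged. Writing $\hat b$ for the local maximizer inside $[b_{r,k}-\mt/2,b_{r,k}+\mt/2]$, the inequality $T(\hat b)\ge T(b_{r,k})$ together with the signal slope and the local modulus yields $|\hat b-b_{r,k}|/\sqrt{\mt}\lesssim\sqrt{|\hat b-b_{r,k}|\log\mt/\mt}$, whence $|\hat b-b_{r,k}|\lesssim\log\mt$ with probability going to one; absorbing the implicit constant into the arbitrarily slow $h_n$ promotes this to the stated probability $1-\beta$. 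A union bound over the finitely many true change points, together with Assumption~\ref{assumption9} on their separation (which guarantees that each removed interval $[\hat b-\mt,\hat b+\mt]$ excises exactly one true change point), combined with Part~1 applied to the residual candidate set, forces the algorithm to terminate with the correct count.

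The main obstacle is the local modulus for $Z$: the bootstrap and Gaussian approximation only deliver the global envelope $\sqrt{\log n}$, while the sharper $\sqrt{\ell\log\mt/\mt}$ bound---required for the near-parametric $\log\mt$ localization rate and needed to hold uniformly over all change points and all radii $\ell$---calls for a blockwise Bernstein inequality tailored to PLS processes together with careful control of the dependence between adjacent local windows. Secondary difficulties are the frequency-reduction step under phase changes (which is what forces $\gamma_1<2/3$) and the sequential bookkeeping of the algorithm, where one must simultaneously rule out both missed true change points and spurious detections on the residual candidate set.
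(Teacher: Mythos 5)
Your overall strategy coincides with the paper's: reduce $\hat\omega_k$ to $\omega_k$ via the Stage-1 rates, dispose of Part~1 through Theorem~\ref{thm:s2_consistancy}, and localize each change point by balancing a signal that decays linearly in $d=|i-b_{r,k}|$ against a noise increment of order $\sqrt{d\log\mt/\mt}$; the paper's Lemma on change-point localization uses exactly the two quantitative inputs you identify (signal drop $\gtrsim d/\sqrt{\mt}$, local modulus $O_\pr(\sqrt{d\log\mt/\mt})$, detection threshold $\sqrt{\mt}\gg\sqrt{\log n}$), and the sequential bookkeeping at the end is the same. The methodological difference is that the paper compares \emph{squared} moduli, expanding $|H_l|^2-|H_{b}|^2$ into the squared-signal drop $\lesssim -\mt d$, the term $|H^{(\epsilon)}_l|^2=O_\pr(\mt\log\mt)$, and two explicit signal--noise cross terms; all three error terms are dominated once $d\gg\log\mt$. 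Your unsquared inequality $d/\sqrt{\mt}\lesssim\sqrt{d\log\mt/\mt}$ is not literally what $T(\hat b)\ge T(b_{r,k})$ gives: since $T$ is the modulus of signal plus noise, the naive bound is $S(b)-S(\hat b)\le |Z(\hat b)-Z(b)|+2|Z(b)|$, and the extra $2|Z(b)|=O_\pr(1)$ term (in your normalization) would only yield $d\lesssim\sqrt{\mt}$. To recover $\log\mt$ you must show that the modulus-residuals $Z(i)$ themselves have increment $O_\pr(\sqrt{d\log\mt/\mt})$ plus terms of order $d\sqrt{\log\mt}/\mt$ and $\log\mt/\sqrt{\mt}$ — e.g.\ by a first-order expansion of $|\cdot|$ around the signal, or by working with squares as the paper does. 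This is a fixable but genuine gap in the written argument.

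A second, smaller slip: for the frequency-reduction step you require the perturbation to be $o_\pr(\sqrt{\mt})$, but $h_n\,n^{3\gamma_1/2-1}=o(n^{\gamma_1/2})$ holds for all $\gamma_1<1$, so that criterion does not produce $\gamma_1<2/3$. The binding requirement (and the one the paper uses via its lemma bounding $|T_n^{(\mu)}(w)-T_n^{(\mu)}(\hat w)|\lesssim \mt^{3/2}a_n$) is that the perturbation of the \emph{normalized} statistic be $o_\pr(1)$, so that the bootstrap critical-value comparison and the Type-I error on change-point-free segments are unaffected; with $a_n=n^{-1}h_n$ this is exactly $\gamma_1<2/3$. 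Your remaining ingredients — the $O_\pr(\sqrt{d\log\mt/\mt})$ local modulus (which the paper obtains from its Gaussian-approximation machinery rather than a Bernstein/chaining argument, but with the same conclusion), the Dirichlet-kernel cancellation of the other oscillatory components, and the termination argument on the residual candidate set — match the paper's proof.
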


\section{Simulation study}\label{sec:simu}
We shall perform our simulation studies under four models, (M1)-(M4), for the non-stationary noise $\{\epsilon_{i,n}\}$ listed as follows. 
(M1) A locally stationary model with  $\epsilon_{k,n} := 0.5\cos(  k/n) \epsilon_{k-1,n} + e_k+0.3(k/n)e_{k-1},$ where $e_k$ are  i.i.d standard normal. 
    (M2) A piece-wise locally stationary model with $\epsilon_{k,n} := [0.5\cos(  k/n)\mathbbm{1}_{(k/n) < 0.75} + (k/n -0.5)\mathbbm{1}_{(k/n) \geq 0.75}]\epsilon_{k-1,n} +e_k$, where $e_k$ are  i.i.d standard normal. 
    (M3) A piece-wise locally stationary model with multiple breaks with 
    $\epsilon_{k,n} := [0.5\cos(  k/n)\mathbbm{1}_{(k/n) < 0.3} + (k/n -0.3)^2 \mathbbm{1}_{0.3 \leq (k/n) < 0.75}+ 0.3 \sin(  k/n)\mathbbm{1}_{(k/n)\ge 0.75}]\epsilon_{k-1,n} +e_k$, where $e_k$ are i.i.d standard normal. 
    (M4) A locally stationary model with heavy tails: $\epsilon_{k,n} :=0.5 \cos(  k/n)\epsilon_{k-1,n}+ e_k+0.3(k/n)e_{k-1} $, where $e_k$ are i.i.d $t(5)/\sqrt{5/3}$ where $t(5)$ denotes the $t$ distribution with 5 degrees of freedom. Note that $\sqrt{5/3}$ is the standard deviation of  $t(5)$. 
In the following simulations, the tuning parameters are selected according to the methods described in Section \ref{sec:tuning}. The simulations are performed with 1000 repetitions and for each repetition, the OBMB is performed using 1000 pseudo samples. 
\subsection{Test accuracy for Stage 1}
	The simulated data sets are generated from $X_{k,n} = \mu_{k,n} + \epsilon_{k,n}$, where $\mu_k = k/n $. Observe that $X_{i,n}$ has a smoothly time-varying mean and no oscillation. 
	The noises $\epsilon_{i,n} $ are generated according to (M1)-(M4) described above. 
	The  simulated rejection rates are reported in Table \ref{tab:s1 null}, which shows that the DPPT has reasonably accurate rejection rates under the null hypothesis of no oscillation for various kinds of non-stationary noises $\epsilon_{i,n}$.

\begin{table}[htb!]
\begin{center}
		\begin{tabular}{ |c| c| c| c| c |c|c| }
			\hline
			& \multicolumn{6}{|c|}{ Simulated Rejection Rates } \\\hline\hline
			 & \multicolumn{3}{|c|}{ $\alpha$ = 0.05 } & \multicolumn{3}{|c|}{ $\alpha$ = 0.10 }\\ \hline
			 Model &n=500& n = 1000 & n = 2000 &n=500&n = 1000 & n = 2000\\ \hline
			 M1 &0.061& 0.044 & 0.051 &0.121& 0.104 & 0.110\\ \hline
			 M2 &0.058& 0.057 &0.053 &0.126& 0.118 & 0.110\\ \hline
			 M3 &0.056& 0.046 & 0.050 &0.118& 0.108 & 0.110\\ \hline
			 M4 &0.060& 0.045 & 0.048 &0.120&0.098 & 0.113\\ \hline
		\end{tabular}
\end{center}
\caption{Simulated rejection rates under Stage 1 null conditions when $\mu_{k,n} = k/n $.\label{tab:s1 null}}
\end{table}

\subsection{Test accuracy for Stage 2}

	The second stage Type-I error simulation is performed under the null conditions a): $\mu_{k,n} = 2\sin(  \omega k)$ with $\omega = \pi/15$ and b): $\mu_{k,n} = 2.5\sin(  \omega_1 k)+2\sin(  \omega_2 k)$ with $\omega_1 =0.17(2\pi)$ and $\omega_2 = 0.3805(2\pi)$. The noises $\epsilon_{k,n}$ are generated from model (M1)-(M4).  Observe that the mean function $\mu_{k,n}$ does not have change points in its oscillatory behaviour. The simulated rejection rates  for a) and b) are recorded in Tables 2 and 3, respectively. Based on Tables 2 and 3, the simulated rejection rates are reasonably close to the nominal level $\beta$.
	
	\begin{table}[htb!]
	
\centering
		\begin{tabular}{ |c| c| c| c| c | c|c|}
			\hline
			& \multicolumn{6}{|c|}{ Simulated Rejection Rates}\\\hline\hline
			 & \multicolumn{3}{|c|}{ $\beta$ = 0.05 } & \multicolumn{3}{|c|}{ $\beta$ = 0.10 }\\ \hline
			 Model &$n=500$& $n=1000$ & $n=2000$ &$n=500$& $n=1000$ & $n=2000$  \\ \hline
			 M1 &0.058&0.047 & 0.053  &0.121& 0.102 & 0.116   \\ \hline
			 M2 &0.04&0.037& 0.046 &0.078 &0.083 & 0.105   \\ \hline
			 M3 &0.055& 0.048& 0.054 &0.112& 0.099 &  0.106  \\ \hline
			 M4 &0.068& 0.060 & 0.048 &0.135&  0.124  & 0.113  \\ \hline
		\end{tabular}
\caption{Simulated rejection rates for the proposed stage 2 algorithm when $\mu_{k,n} = 2\sin(\omega k)$ with $\omega = \pi/15$.\label{tab:s2 null}}
\end{table}

	\begin{table}[htb!]
	
	\centering
		\begin{tabular}{ |c| c| c| c| c |c|c|c|c| }
			\hline
			& \multicolumn{8}{|c|}{ Simulated Rejection Rates } \\\hline\hline
   & \multicolumn{4}{|c|}{ $n=500$ } &\multicolumn{4}{|c|}{ $n=1000$ }\\\hline
			 & \multicolumn{2}{|c|}{ $\beta$ = 0.05 } & \multicolumn{2}{|c|}{ $\beta$ = 0.10 }& \multicolumn{2}{|c|}{ $\beta$ = 0.05 } & \multicolumn{2}{|c|}{ $\beta$ = 0.10 }\\ \hline
			 Model & $\omega_1$ & $\omega_2$ & $\omega_1$ & $\omega_2$ & $\omega_1$ & $\omega_2$ & $\omega_1$ & $\omega_2$\\ \hline
			 M1 & 0.058 &0.040  &0.130  &  0.125& 0.048 &0.042  &0.116  &  0.110  \\ \hline
			 M2 & 0.031 &0.030 &   0.080 & 0.075 & 0.040 &0.041 &   0.101 & 0.085       \\ \hline
			 M3 & 0.058 &0.035 & 0.132 & 0.088& 0.048 &0.050 & 0.123 & 0.09   \\ \hline
			 M4 & 0.059 &0.0375 &  0.121  & 0.075& 0.048 &0.0475 &  0.102  & 0.085    \\ \hline
		\end{tabular}
		
\caption{Simulated rejection rates for the proposed stage 2 
algorithm when $\mu_{k,n} = 2.5\sin(  \omega_1 k)+2\sin(  \omega_2 k)$ with $\omega_1 =0.17(2\pi)$ and $\omega_2 = 0.3805(2\pi)$.\label{tab:s2 null 2}}
\end{table}

\subsection{Estimation accuracy for short-term oscillations}

We are concerned with short-term oscillations since spindles in sleep EEG signal studied in Section \ref{sec:data} could be modeled as short-term oscillations within a given frequency band.
In this subsection we would like to investigate the estimation accuracy of our methodology in this situation.
Specifically, we emulate short-term oscillations at two oscillatory frequencies and  different time locations, where $n=1000$, 
\begin{eqnarray}\label{eq:simu2}
\mu_k = 2 \cos(  \omega_1 k) \mathbb{I}_{0.2 n \leq k \leq 0.45n } + 2.5 \cos(  \omega_2 k) \mathbb{I}_{0.45n \leq k \leq 0.65n }+2 \cos(  \omega_1 k) \mathbb{I}_{0.65 n \leq k \leq 0.85n } 
\end{eqnarray} 
with $\omega_1 = 0.17007(2\pi)$ and $\omega_2 = 0.38007(2\pi)$.  The noises $\epsilon_{k,n}$ are generated from models (M1)-(M4).
Tables 4 and 5 report the accuracy of the estimators by computing their mean squared errors (MSE) and the probability of estimating the accurate number of oscillatory frequencies and change points, and they show a high  simulated accuracy. Furthermore, since there is no change in phase, Proposition \ref{prop:nophasechange} implies that the estimation precision $|\hat{w} - w| \approx n^{-3/2}\log(n) \approx 2\times10^{-4}$ which implies that $MSE(\hat{w}) \approx 4\times10^{-8}$. It can be seen that the results from Table 4 are consistent with this theoretical accuracy. Similarly, comparing the result from Table 5 with the theoretical accuracy of Theorem \ref{thm:step2acc}, we have $MSE(\hat{b}_i) \approx \log(n)^2 \approx 10^{1}$. The two results stay consistent. Finally, additional simulation results on the power performance as well as the estimation accuracy of our methodology can be found in Section A of the supplementary material.

\begin{table}[htp!]\label{tab:s1_spindle2}
\centering
	\begin{tabular}{ |c| c| c|c|c| c| c|c| }
		\hline
		&\multicolumn{3}{|c|}{ $\alpha$ = 0.05 }&\multicolumn{3}{|c|}{ $\alpha$ = 0.1 } \\
		\hline
		Model  &  $ MSE( \hat{w}_1)$ & $ MSE( \hat{w}_2)$ & $ \pr (|\hat{\Omega}|  = 2 ) $  &  $ MSE( \hat{w}_1)$ & $ MSE( \hat{w}_2)$ & $ \pr (|\hat{\Omega}|  = 2 ) $  \\ 
		\hline
		M1   &3.4e-08 & 6.7e-07 & 0.940 & 3.4e-08 & 6.7e-07 & 0.905 \\ 
		M2   &3.1e-08 & 6.5e-07 & 0.939 & 3.2e-08 & 6.5e-07 & 0.898 \\ 
		M3  & 3.1e-08 & 6.4e-07 & 0.944 & 3.1e-08 & 6.4e-07 & 0.892 \\ 
		M4  & 3.2e-08 & 6.8e-07 & 0.941 & 3.2e-08 & 6.8e-07 & 0.903 \\ 
		\hline
	\end{tabular}\label{tab:s2_spindle2}
	\caption{Simulated stage 1  estimation accuracy for the oscillatory frequencies specified in \eqref{eq:simu2}. } 
\end{table}

\section{Real data example: detecting sleep spindles}\label{sec:data}

    We demonstrate how the proposed two-stage algorithm can be applied to detecting spindles in the EEG signal recorded during sleep. Spindles are bursts of neural oscillatory activity during sleep that are captured by the EEG. They are generated by the complicated interplay of the thalamic reticular nucleus and other thalamic nuclei \citep{DEGENNARO2003423} during the N2 sleep stage, which is defined based on the AASM sleep stage classification system \cite{Iber_Ancoli-Isreal_Chesson_Quan:2007}. Spindles oscillate in a frequency range of about 11 to 16 Hz with a duration of 0.5 seconds or greater (usually 0.5-1.5 seconds). The EEG signal during the N2 stage serves a good example for the change point detection problem. The spindle might exist from time to time, and there might be multiple spindles during the N2 stage. The dynamics of spindles encode important physiological information \citep{DEGENNARO2003423}. While it is possible to have experts labeling it, it might not be feasible if the data size is large. We thus need an automatic labeling algorithm. To apply our proposed two-stage algorithm we check if the model \eqref{eq:mean} is reasonable. Although the spindle frequency might change from time to time, the frequency changes slowly and the spindles exist only for a relatively short period, so we can reasonably assume that the frequency $\omega_k$ in \eqref{eq:mean} is fixed. On the other hand, the appearance of spindle can be well captured by the amplitude $A_{r,k}$ and $B_{r,k}$ in \eqref{eq:mean}. Moreover, the EEG signal other than the spindle is non-stationary, and we assume that it can be well captured by the PLS model. We emphasize that how well the PLS model captures this non-stationarity is out of the scope of this paper.

\begin{table}[htb!]
\centering
	\begin{tabular}{|c | c| c| c|c|c|c| }
		\hline
		\multicolumn{7}{|c|}{ $\omega_1$ }\\
		\hline
		&\multicolumn{3}{|c|}{ $\beta$ = 0.05 }&\multicolumn{3}{|c|}{ $\beta$ = 0.1 } \\
		\hline
		Model & $ MSE( \hat b_{1,1})$ & $ MSE(\hat b_{1,2}) $ & $ \pr( |\hat{B}| = 4 )$  & $ MSE( \hat b_{1,1})$ & $ MSE(\hat b_{1,2}) $ & $ \pr( |\hat{B}| = 4 )$ \\ 
		\hline
		M1  & 20.24 & 12.93 & 0.928 & 19.96 & 12.76 & 0.870 \\ 
		M2  & 18.62 & 17.68 & 0.931 & 19.18 & 17.31 & 0.872 \\ 
		M3  & 26.34 & 19.85 & 0.923 & 26.63 & 19.90 & 0.862 \\ 
		M4  & 21.11 & 13.06 & 0.930 & 21.21 & 13.14 & 0.872\\
		\hline
		Model & $ MSE( \hat b_{1,3})$ & $ MSE(\hat b_{1,4}) $ &   & $ MSE( \hat b_{1,3})$ & $  MSE(\hat b_{1,4}) $ &  \\ 
		\hline
		M1  & 13.12 & 19.75 & & 13.54 & 19.88 & \\ 
		M2  & 19.52 & 20.88 & & 19.43 & 20.05 &  \\ 
		M3  & 18.25 & 25.41 & & 18.30 & 25.01 & \\ 
		M4  & 13.55 & 20.21 & & 13.77 & 20.30 & \\
		\hline
		\hline
		\multicolumn{7}{|c|}{  $\omega_2$}\\
		\hline
		&\multicolumn{3}{|c|}{ $\beta$ = 0.05 }&\multicolumn{3}{|c|}{ $\beta$ = 0.1 } \\
		\hline
		Model & $ MSE( \hat b_{2,1})$ & $ MSE(\hat b_{2,2}) $ & $ \pr( |\hat{B}| = 2 )$  & $ MSE( \hat b_{2,1})$ & $ MSE(\hat b_{2,2}) $ & $ \pr( |\hat{B}| = 2 )$ \\ 
		\hline
		M1  & 9.2 & 8.61 & 0.935 & 9.96 & 8.73 & 0.880 \\ 
		M2  & 7.88 & 6.84 & 0.945 & 7.88 & 6.89 & 0.887 \\ 
		M3  & 8.16 & 9.27 & 0.933 & 8.19 & 9.39 & 0.876 \\ 
		M4  & 9.88 & 8.64 & 0.937 & 9.76 & 8.61 & 0.883\\
		\hline
	\end{tabular}
	\caption{Simulated stage 2  estimation accuracy for $\mu_{k,n}$ specified in \eqref{eq:simu2}, where $b_{1,1}=0.2n$, $b_{1,2}=0.45n$, $b_{1,3}=0.65n$, $b_{1,4}=0.85n$, $b_{2,1}=0.45n$, and $b_{2,2}=0.65n$.}
\end{table}

	In this section, EEG was recorded from the standard polysomnogram signals on patients suspicious of sleep apnea syndrome at the sleep center in Chang Gung Memorial Hospital (CGMH), Linkou, Taoyuan, Taiwan, under the approval of the Institutional Review Board of CGMH (No. 101-4968A3). All recordings were acquired on the Alice 5 data acquisition system (Philips Respironics, Murrysville, PA). EEG is sampled at 200Hz.
    The sleep stages, including wake, rapid eyeball movement (REM) and N1, N2 and N3 of NREM, were annotated by two experienced sleep specialists according to the AASM 2007 guidelines \citep{Iber_Ancoli-Isreal_Chesson_Quan:2007} with consensus and the sleep specialists provide annotation for 30-seconds long epochs. Below, we focus on those epochs labeled as the N2 stage.

\begin{figure}[bht!]
\centering
	\includegraphics[trim=0 10 20 50,clip,width=.75\textwidth]{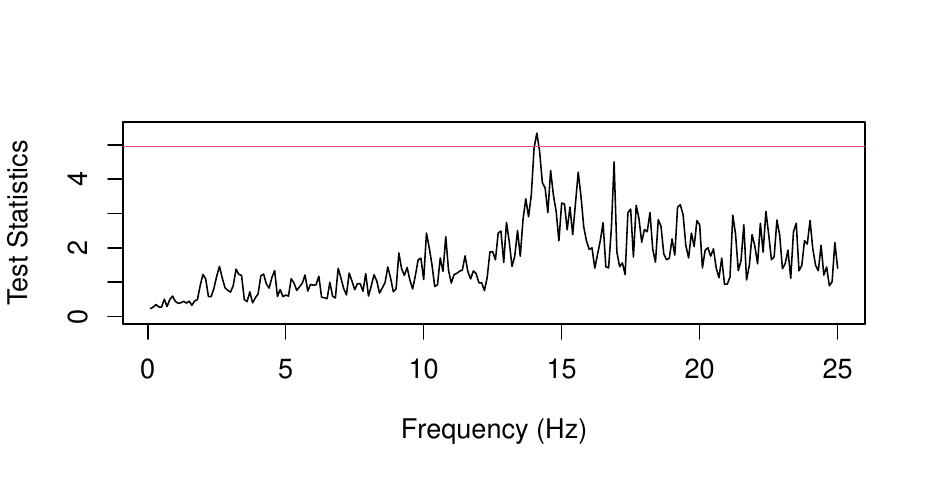}
	\includegraphics[trim=0 10 20 50,clip,width=.75\textwidth]{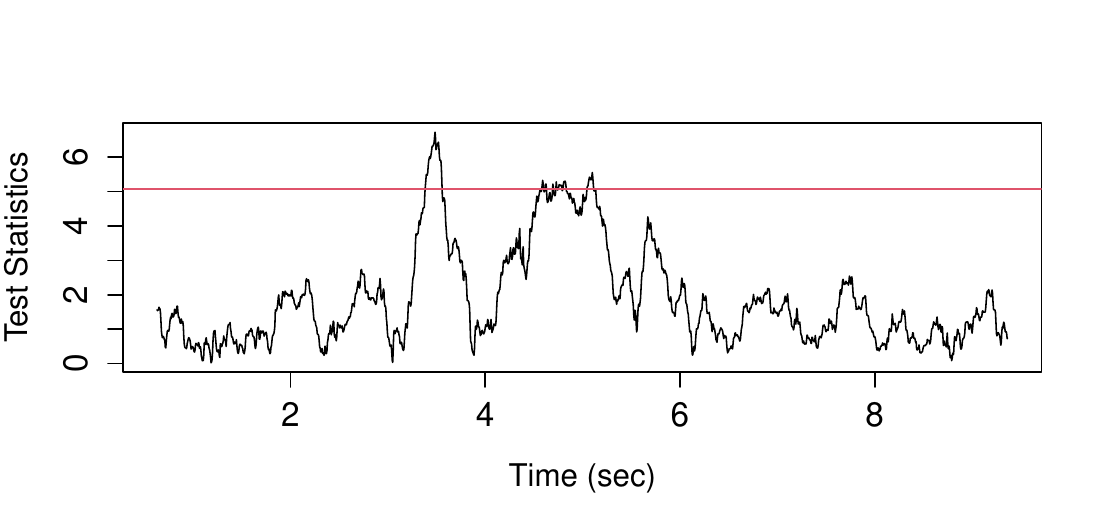}
	\includegraphics[trim=0 10 20 50,clip,width=.75\textwidth]{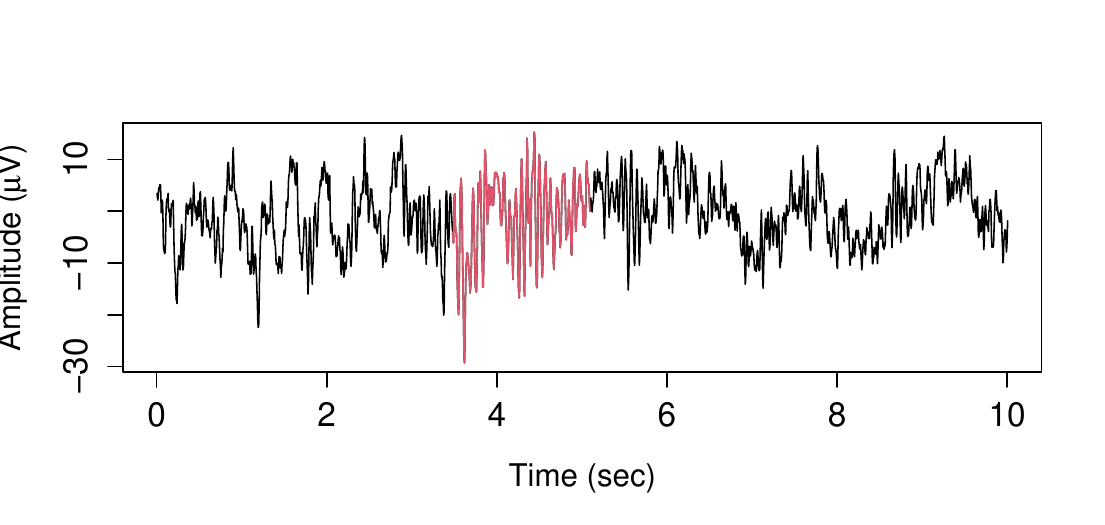}
	\caption{{First panel: the DPPT statistics $\bar{F}(\cdot)$. Second panel: the second-stage statistics $T(\cdot,14.1)$. Third panel: the EEG signal under analysis with the detected spindle superimposed in red}.\label{Figure:EEG example1}} 
	\end{figure}
	
    The data analysis results for a 10-sec segment of the EEG signal from channel C4A1 are shown in Figure \ref{Figure:EEG example1}.
	In the two-stage algorithm, we first want to find and estimate the frequencies of existing oscillations in the signal.
	The first-stage, the DPPT statistic with $m=14$ selected by our plug-in method is shown in the first panel against a grid of test frequencies in Hz, where the horizontal line show 90th simulated quantile value of the first iteration step by using the OBMB. Clearly, there is a peak around 14Hz, which suggests that there exists an oscillatory component inside the EEG signal. Note that the size of the peaks is affected by the amplitude of the oscillatory pattern and sample size of the time series data. Under significant level $\alpha = 0.1$, we are confident that there is one oscillatory component with the estimated frequency 14.1Hz. Furthermore, the DPPT is not significant after removing the first oscillatory frequency, indicating that the signal is oscillating at only one frequency.
	For the detected oscillatory component, next we estimate if there is any change point. The second panel shows the second-stage statistics with $\tilde{m}=112$ and $m'=13$ plotted against time position. Both tuning parameters stated above are selected by methods proposed in Section \ref{sec:tuning}. The horizontal line indicates the 90-th percent simulated quantile of the first iteration step under estimated frequency 14.1Hz. The estimated break points are at 696-th, 905-th and 1019-th units under $\beta = 0.1$. Again, the size and shape of each peak in the second-stage statistics are affected by the choice of the bandwidth parameter $\mt$ and amplitude of the oscillatory pattern existing in the data. 
	In the third panel, the detected spindle is colored in red, which coincides with the experts' annotation. 
	\begin{figure}[bht!]
	\centering
	\includegraphics[trim=0 10 20 50,clip,width=.75\textwidth]{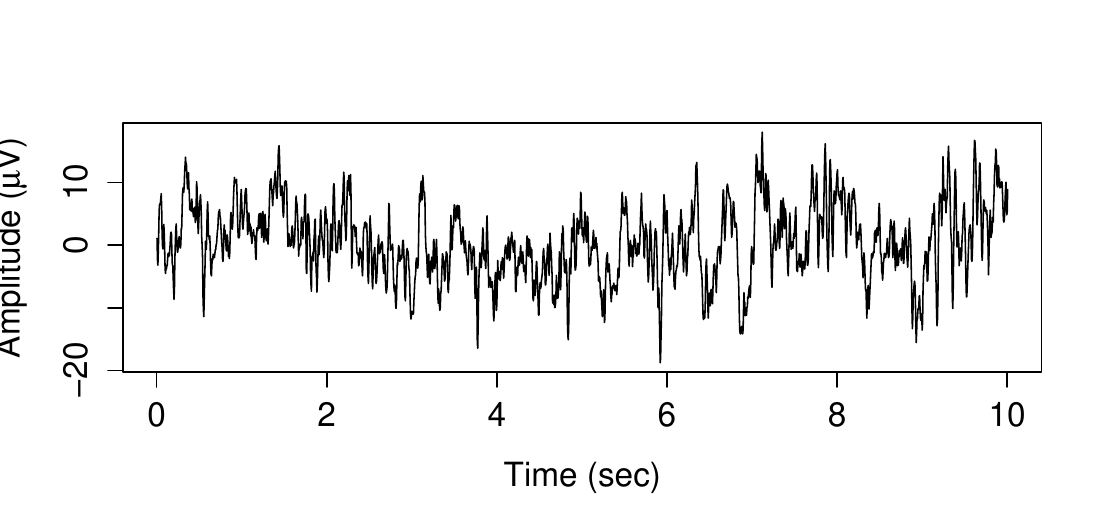}
	\includegraphics[trim=0 10 20 50,clip,width=.75\textwidth]{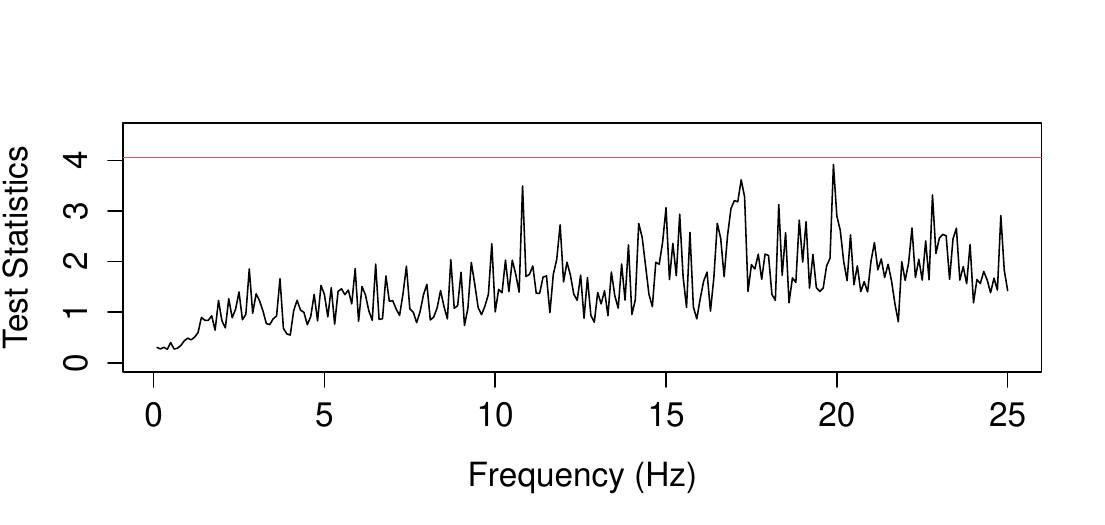}
	\caption{First panel: The recorded EEG signal. Second panel: the DPPT statistics $\bar{F}(\cdot)$.\label{Figure:EEG example2}} 
	\end{figure}
	
	 Another segment of the EEG signal from channel C4A1 without any annotated spindle is shown in the first panel in Figure \ref{Figure:EEG example2}. 
	The first-stage, the DPPT statistic, is shown in the second panel against a grid of test frequencies in Hz, where the horizontal line show 90th percent simulated quantile value by using the OBMB. While there seem to be peaks around 20Hz, 26Hz and 32Hz, they are not significant under significant level $\alpha = 0.1$. This finding suggests that there does not exist an oscillatory component inside the EEG signal that is sufficiently strong or long.

	The above preliminary analysis results indicate the potential of the proposed two-stage algorithm. A systematic application of the proposed algorithm to physiological signals for clinical applications will be reported in our future work.

\bibliography{reference}

	\setcounter{equation}{0}
	\setcounter{proposition}{0}
	\setcounter{corollary}{0}
	\setcounter{lemma}{0}
	\setcounter{theorem}{0}
	\renewcommand{\thepage}{S.\arabic{page}}
	\renewcommand{\thesection}{S.\arabic{section}}
	\renewcommand{\theequation}{S.\arabic{equation}}
	\renewcommand{\thelemma}{S.\arabic{lemma}}
	\renewcommand{\thetable}{S.\arabic{table}}

\clearpage
\appendix

	\renewcommand{\thepage}{S.\arabic{page}}
	\renewcommand{\thecorollary}{\Alph{section}.\arabic{corollary}}
	\renewcommand{\thelemma}{\Alph{section}.\arabic{lemma}}
	\renewcommand{\thetheorem}{\Alph{section}.\arabic{theorem}}
	\renewcommand{\theproposition}{\Alph{section}.\arabic{proposition}}

\title {\large\bf Supplementary}
	
\section{Literature review, additional simulation results and figures}\label{sec:asimu}

First of all, we present two figures. The first one shown in Figure \ref{fig:energy_leak} is an illustration of the energy leak phenomenon, and the second one shown in Figure \ref{fig:2} is about $\bar{F}(\omega)$. See the main article referring to these figures for detail.

\begin{figure}[htbp]
\centering
\includegraphics[width=.6\textwidth]{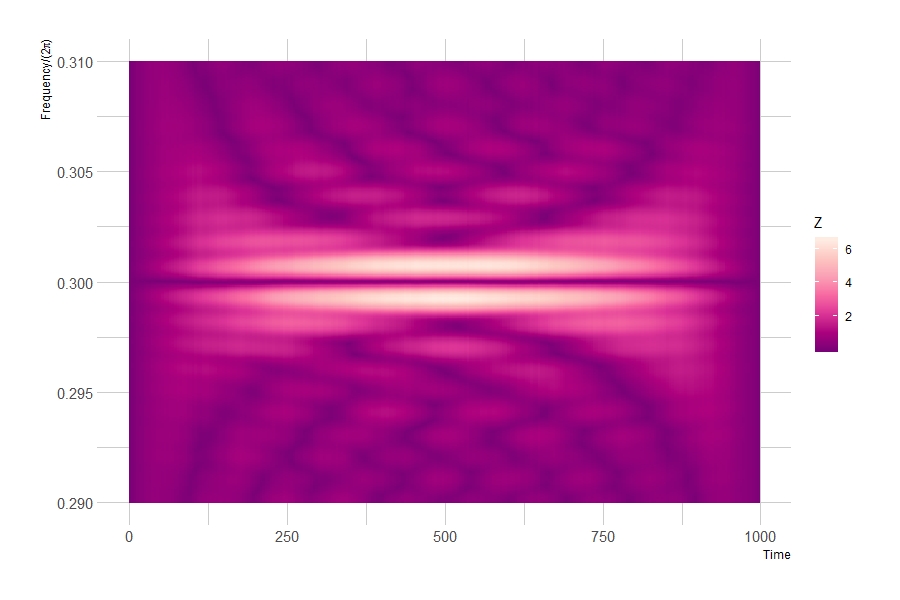}
\caption{Heat map of CUSUM statistics $|C_n(i,\omega)|$ with $n=1000$, mean function $\mu_{i,n} = \cos(0.6\pi i)$ and $\epsilon_{i,n}$ i.i.d. $N(0,1)$.  Plot shows energy leak around frequency $\omega = 0.6\pi$.\label{fig:energy_leak}}
\end{figure}

\begin{figure}[htb!]
\center
\includegraphics[scale = 0.5]{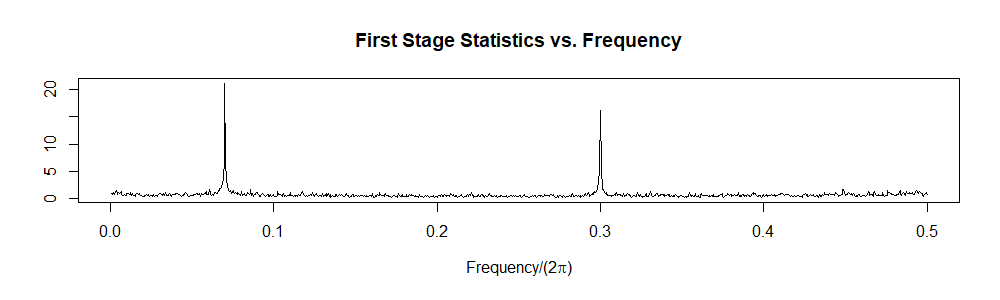}
\caption{Example of $\bar{F}(\omega)$ with $n=2000$, $\mu_{i,n} = 2 \cos(2\pi i 0.07) + 1.5 \cos(2\pi i 0.3)$ and $\epsilon_{i,n} = 0.5 \cos(i/n)\epsilon_{i-1,n}+e_i$, $e_i$ i.i.d. standard normal.}\label{fig:2} 
\end{figure}

\subsection{Literature review}\label{sec:lr}
The statistics literature of unknown periodicity detection dates back at least to \cite{fisher1929}, who considered testing the existence of a sinusoidal signal under i.i.d. Gaussian noise. Fisher's test was based on the maximum of the periodogram over the canonical frequencies and was later generalized to accommodate stationary and dependent noises and multiple oscillatory frequencies; see for instance  \cite{hannan1961}, \cite{chiu1989} and \cite{lin2009}. See also \cite{paraschakis2012} for the estimation of time-varying oscillatory frequencies and phases. To our knowledge, none of the previous spectral domain literature on periodicity detection considered non-stationary noises due to the difficulties mentioned in the main article. On the other hand, there are a few statistics papers, see for instance \cite{oh2004} and \cite{genton2007}, considering time domain estimation of unknown periodicity with application to light curve estimation of variable stars. See also the related astronomy literature cited therein. These papers typically assume that there is an oscillatory signal and aim at estimating the period and the corresponding periodic function. Another recent contribution is \cite{dahlhaus2017} where the authors consider estimating a generalized state-space model with an unknown periodic pattern function and a hidden stochastic and integrated phase process. It is worth mentioning that classic time series analysis of seasonal processes typically assumes that the periodicity is known and then removes the periodic trend by differencing the process at an appropriate order \citep{box2015}.

To our knowledge, there exists no statistics literature on oscillation change point detection except for a small number of related contributions in frequency domain change point detection for the spectral density function; see for instance \cite{last2008detecting}, \cite{lavielle2000multiple} and \cite{preuss2015detection} among others. We point out that the focus of those papers are essentially different from the current paper. The main difference is that these papers are all about detecting changes in the spectral density or power spectral associated with the covariance structure of the time series, whereas the current paper deals with changes in the oscillation in mean. In particular, their algorithms aim to detect change points in the covariance structure of our error process $\epsilon_{i,n}$, while we aim to detect the change point of the deterministic oscillation. 

The huge recent statistics literature on change point detection typically focuses on problems in the time domain and hence are free from the energy leak phenomenon in the spectral domain. In particular, we point out that local change point detection algorithms are common for time domain change point detection; see for instance \cite{preuss2015detection} and \cite{yau2016inference}. However, those local algorithms are not motivated by the energy leak problem in the frequency domain. For reviews of recent advances in change point detection, we refer the readers to \cite{aue2013structural} and \cite{niu2016review}. 

In the past decades, time-frequency (TF) analysis has been widely studied in the applied mathematics and application fields due to its flexibility to handle complicated and nonstationary time series \citep{wu2020current}. TF tools can be roughly classified into three types -- linear, bilinear and nonlinear \citep{flandrin1998time}. Since the linear-type TF analysis are directly related to this work, we focus on it. The basic idea is dividing the signal into segments and evaluating the spectrum for each segment, where how the signal is partitioned distinguishes different methods. When a fixed window is chosen and the Fourier transform is applied, it is the short-time Fourier transform (STFT) \citep{flandrin1998time}; when the segments depend on a dilated mother wavelet, it is the continuous wavelet transform (CWT) \citep{daubechies1992}. Due to the fundamental difference of their associated group structure (Heisenberg group for STFT and the affine group for the CWT), their fundamental differences manifest in various aspects. What concerns us in this work is their capability to study different functional spaces \citep{daubechies1992}. Particularly, the CWT can be applied to characterize local regularity, and hence has been widely applied to study singularities in the signal processing \citep{jaffard1996wavelet}. Note that the local change point detection statistic can be viewed as detecting discontinuity by the Haar wavelet. %

\subsection{Numerical experiments on spectral dependency}\label{sec:sd}
As we pointed out in Introduction, a classic result in frequency-domain time series analysis is that the periodogram at the canonical frequencies are approximately independent for a wide class of weakly stationary processes. In this subsection, we shall illustrate numerically that the latter result is no longer valid for PLS time series. To this end, we simulated model (M1) in Section 5 of the main article 1000 times with $n=1000$. For comparison, we also simulated 1000 times a stationary ARMA(1,1) model $X_i-0.6X_{i-1}=\epsilon_i+0.3\epsilon_{i-1}$ with i.i.d. standard normal $\epsilon_i$ and time series length $n=1000$.  The sample correlations between the periodograms at frequency lags 1 and 2 are depicted in Figure \ref{fig:sd} below. While the periodogram looks uncorrelated at lags 1 and 2 for the stationary ARMA(1,1) model as indicated by the classic theory, it is clear that the correlations between $I_n(\omega_j^*)$ and $I_n(\omega_{j+1}^*)$ are non-zero and varying with respect to $j$ for our non-stationary model (M1).  As we indicated in Section \ref{sec:sd_cp} of the main article, such spectral dependency makes it difficult theoretically to investigate the maximum deviation of the periodogram on a dense set of frequencies for non-stationary time series.  
1

	\begin{figure}[bht!]
	\includegraphics[width=\textwidth]{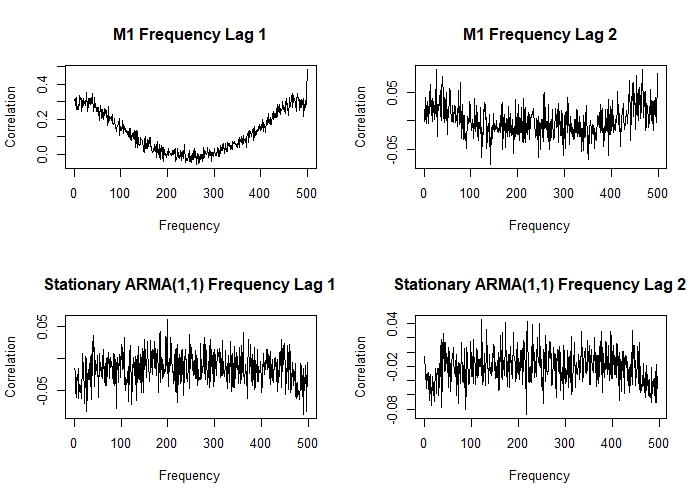}
	\caption{Sample correlations $\mbox{Corr}(I_n(\omega_j^*), I_n(\omega_{j+1}^*))$ and  $\mbox{Corr}(I_n(\omega_j^*), I_n(\omega_{j+2}^*))$ plotted against $j$ for (M1) and a stationary ARMA(1,1) model, $j=1,2,\cdots, 500$.\label{fig:sd}} 
	\end{figure}

\subsection{More discussions on energy leak and its influences on a global frequency change point algorithm}\label{sec:el}

The spectral energy leak problem persists if we estimate the oscillatory frequencies first and then apply change point detection algorithms directly to the Fourier transforms at the estimated frequencies. It is well-known that the parametric rate for oscillatory frequency estimation is  $O_\pr(n^{-3/2})$ (cf. \cite{genton2007}). Even at this very fast convergence rate, it can be shown that $|L_n(i,\omega)-L_n(i,\hat\omega)|=O_\pr(\sqrt{n})$ for sufficiently large $i$, where $\omega$ is a true oscillatory frequency and $\hat\omega$ is its estimate. The estimation error is not negligible asymptotically and change point detection algorithms will behave erratically if we simply plug-in the estimated oscillatory frequencies. See Lemma D.10 and Remark 2 in this supplementary material for more detailed calculations and discussions. 

In this section, we shall conduct a numerical experiment on how the spectral energy leak phenomenon distorts the accuracy of the CUSUM frequency change point test, a classic global change point detection algorithm (see Section \ref{sec:sel} for a description). In particular, we shall show that a seemingly slight estimation error in the oscillating frequency will have a devastating impact on the CUSUM test. To this end, we generate $ X_{k,n} = \mu_{k,n} + \epsilon_{k,n}$, $n=1000$, where $\epsilon_{k,n}$ are given by models (M1)-(M4) in Section \ref{sec:simu} of the main article and the mean function $\mu_{k,n} = 2\sin(\omega k)$ with $\omega = 2\pi^2/20 \in [0,2\pi]$. Note that there is no frequency change in $X_{k,n}$. We shall first estimate the oscillating frequency using our stage 1 algorithm. Then we shall use the OBMB technique developed in \cite{zhou2013} to calculate the critical values of the CUSUM test. It can be shown that the OBMB is asymptotically consistent if the estimated oscillating frequency exactly equals to $\omega = 2\pi^2/20$. With $1000$ replicates, the simulated rejection rates of the CUSUM test as well as the MSE of our stage 1 estimation are reported in Table \ref{tab:CUSuM null} below.

\begin{table}[htb!]
    \centering
	\begin{tabular}{ |c| c| c|c|  }
		\hline
		& \multicolumn{2}{|c|}{ Simulated Rejection Rates} &  Frequency Estimation \\
		\hline\hline
		& $\alpha$ = 0.05 &  $\alpha$ = 0.10& MSE \\ \hline
		M1 & 0.738 & 0.828 & 1.65E-06 \\ \hline
		M2 & 0.744 & 0.832 & 1.51E-06 \\ \hline
		M3 & 0.74  & 0.837 & 1.58E-06\\ \hline
		M4 & 0.739 & 0.81   & 1.53E-06 \\ \hline
	\end{tabular}
	\hspace{.1cm}
        \caption{Simulated rejection rates of the CUSUM test with estimated oscillating frequency. The table demonstrates distortion of the test accuracy due to spectral energy leak.\label{tab:CUSuM null}}
\end{table} 
We see from Table  \ref{tab:CUSuM null} that the MSE of estimation is consistent with the our theoretical $O_{\pr}(n^{-3/2}\log n)$ estimation error rate. On the other hand, the simulated rejection rates are all very large. As we discussed in the introduction of the main article, the loss of accuracy for the CUSUM test is due to spectral energy leak that separates the periodograms at the estimated frequency and the true frequency even if those two frequencies are within an $O(n^{-3/2}\log n)$ range from each other.

\subsection{Test power for Stage 1}\label{sec:ps1}
The first stage power simulation is performed under the setting where $\mu_{k,n} =  A \cos(  \omega k )$ for $\omega= 0.1(2\pi)$ and $\epsilon_{k,n} = 0.5\cos( k/n) \epsilon_{k-1,n} + e_k$ with $e_k$ i.i.d. standard normal. The significance level is set at $\alpha = 0.05$ and $0.1$ and $n$ = 1000. Figure \ref{fig:Power Plot 1} shows the simulated rejection rates plotted against the amplitude $A$. The variance of the generated time series data is approximately one and the signal strength is quantified by the oscillatory amplitude $A$. Thus $A$ is approximately the signal to noise ratio and Figure \ref{fig:Power Plot 1} show that the simulated rejection rates increase quite fast and approach 1 when $A$ is larger than 0.35. 

\begin{figure}[htb!]
\centering
\includegraphics[width=12cm,height=7cm,trim=0 0 0 40,clip]{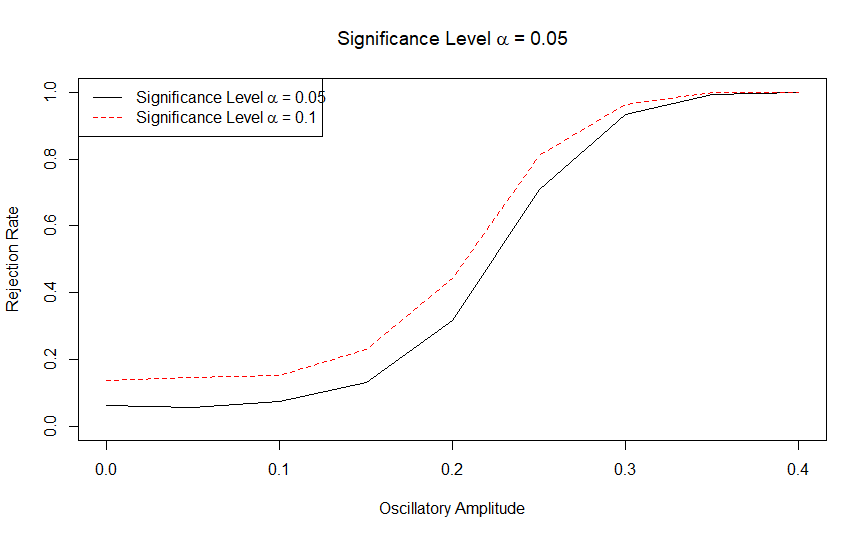}
\vspace{-0.5cm}
\caption{Simulated rejection rates for $\mu_{k,n} =  A \cos(  \omega k)$ for various values of $A$. }\label{fig:Power Plot 1}
\end{figure}

\subsection{Sensitivity analysis on tuning parameter selection}\label{sec:sa}
In this section, we shall perform a sensitivity analysis of the proposed plug-in block size selection method for the bootstrap as well as the penalization method for the local neighborhood selection.  Denote by $m_s$, $m'_s$ and $\mt_s$ as the Stage 1 bootstrap block size, the Stage 2 bootstrap block size and the Stage 2 local neighborhood size selected by our proposed tuning parameter selection methods, respectively. We shall investigate the robustness of our oscillation frequency and change point detection algorithms when the tuning parameters deviate moderately from $m_s$, $m'_s$ and $\mt_s$. Specifically, we investigate the Stage 1 and Stage 2 Type I error rates for Models M1-M4 stated in Section \ref{sec:simu} of the main article when the selected tuning parameters are enlarged or shrank by a factor of 1.2 or 1.4. The results based on $n=500$ with 1000 replications are shown in Tables \ref{tab:sens_s1_null}, \ref{tab:sens_s2_null_1} and \ref{tab:sens_s2_null_2}below.

We can see from the simulation results that our oscillation frequency and change point detection algorithms are reasonably stable when the tuning parameters deviate moderately from those selected by the proposed data-driven methods. In particular, the Type I error rates are reasonably robust to the bootstrap block sizes $m$ and $m'$ selected by our plug-in method. The local neighborhood size parameter $\mt$ is in general robust to moderate deviations from $\mt_s$. But we do observe an increase in Type I error when $\mt$ is increased to $1.4\mt_s$ in which case the energy leak problem occasionally come into effect. Further simulation studies not reported here show that the Stage 2 Type I error rates become reasonably accurate when $\mt=1.4\mt_s$ and the sample size increases to 1000. Additionally, the estimation accuracy of Stage 1 and 2 algorithms have similar sensitivity to the selected tuning parameters as discussed above.

\begin{table}[htb!]
\begin{center}
		\begin{tabular}{ |c| c| c| c| c |c|c|c|c| }
			\hline
			& \multicolumn{8}{|c|}{ Simulated Rejection Rates } \\\hline\hline
			 & \multicolumn{4}{|c|}{ $\alpha$ = 0.05 } & \multicolumn{4}{|c|}{ $\alpha$ = 0.10 }\\ \hline
			  &$1.2m_s$& $1.4m_s$& $m_s/1.2$ &$m_s/1.4$&$1.2m_s$ & $1.4m_s$&$m_s/1.2$&$m_s/1.4$\\ \hline
			 M1 &0.060& 0.062 & 0.058 & 0.061& 0.121& 0.130 & 0.120&0.126\\ \hline
			 M2 &0.057& 0.065 &0.053 &0.061&0.121& 0.130 & 0.120&0.125\\ \hline
			 M3 &0.055& 0.066 & 0.051&0.060 &0.118& 0.125 & 0.111&0.122\\ \hline
			 M4 &0.060& 0.065 & 0.061 &0.066&0.120&0.133 & 0.118&0.124\\ \hline
		\end{tabular}
\end{center}
\caption{\small Simulated rejection rates for various choices of the bootstrap block size parameter $m$ under Stage 1 null conditions when $\mu_{k,n} = 0$ and $n=500$. Here $m_s$ is the Stage 1 bootstrap block size selected by our proposed tuning parameter selection method. \label{tab:sens_s1_null}}
\end{table}

\begin{table}[htb!]
\begin{center}
		\begin{tabular}{ |c| c| c| c| c |c|c|c|c| }
			\hline
			& \multicolumn{8}{|c|}{ Simulated Rejection Rates } \\\hline\hline
			 & \multicolumn{4}{|c|}{ $\beta$ = 0.05 } & \multicolumn{4}{|c|}{ $\beta$ = 0.10 }\\ \hline
			  &$1.2m'_s$& $1.4m'_s$& $m'_s/1.2$ &$m'_s/1.4$&$1.2m'_s$ & $1.4m'_s$&$m'_s/1.2$&$m'_s/1.4$\\ \hline
			 M1 &0.065& 0.065 & 0.061 & 0.069& 0.123& 0.129 & 0.121&0.128\\ \hline
			 M2 &0.053& 0.060 &0.055 &0.061&0.109& 0.120 & 0.111&0.121\\ \hline
			 M3 &0.055& 0.062 & 0.053&0.064 &0.107& 0.116 & 0.105&0.120\\ \hline
			 M4 &0.061& 0.069 & 0.059 &0.068&0.121&0.131 & 0.120&0.128\\ \hline
		\end{tabular}
\end{center}
\caption{\small Simulated rejection rates for various choices of the bootstrap block size parameter $m'$ under Stage 2 null conditions when $\mu_{k,n} = 2\sin(\omega k)$ with $\omega = \pi/15$ and $n=500$. Here the local neighborhood size $\tilde m$ is fixed at $\mt_s$ selected by the penalization method. $m'_s$ is the Stage 2 bootstrap block size selected by our proposed tuning parameter selection method. \label{tab:sens_s2_null_1}}
\end{table}

\begin{table}[htb!]
\begin{center}
		\begin{tabular}{ |c| c| c| c| c |c|c|c|c| }
			\hline
			& \multicolumn{8}{|c|}{ Simulated Rejection Rates } \\\hline\hline
			 & \multicolumn{4}{|c|}{ $\beta$ = 0.05 } & \multicolumn{4}{|c|}{ $\beta$ = 0.10 }\\ \hline
			  &$1.2\mt_s$& $1.4\mt_s$& $\mt_s/1.2$ &$\mt_s/1.4$&$1.2\mt_s$ & $1.4\mt_s$&$\mt_s/1.2$&$\mt_s/1.4$\\ \hline
			 M1 &0.063& 0.085 & 0.061 & 0.07& 0.126& 0.139 & 0.125&0.131\\ \hline
			 M2 &0.055& 0.080 &0.055 &0.063&0.107& 0.140 & 0.110&0.130\\ \hline
			 M3 &0.057& 0.082 & 0.055&0.067 &0.109& 0.142 & 0.111&0.125\\ \hline
			 M4 &0.061& 0.086 & 0.060 &0.070&0.122&0.139 & 0.119&0.131\\ \hline
		\end{tabular}
\end{center}
\caption{\small Simulated rejection rates for various choices of the local neighborhood size parameter $\mt$ under Stage 2 null conditions when $\mu_{k,n} = 2\sin(\omega k)$ with $\omega = \pi/15$ and $n=500$. Here the bootstrap block size $m'$ is fixed at $m'_s$ selected by the plug-in method. $\mt_s$ is the Stage 2 local neighborhood size selected by our proposed tuning parameter selection method. \label{tab:sens_s2_null_2}}
\end{table}

\subsection{Numerical analysis comparing classic frequency detection algorithms}\label{sec:class_freq}
In this section we shall perform a numerical experiment on the performance of two of the classic periodogram-based oscillation frequency detection algorithms under non-stationarity and complex oscillation. The tests investigated are Fisher's maximum periodogram test and the test proposed by Liu and Lin (2009). For a stationary time series $\{X_i\}_{i=1}^n$, Fisher's maximum periodogram test is defined as  
$$F_n=\max_{1\le i\le q}I_n(\omega_i)/(2\pi\hat f(\omega_i)),$$
where $I_n(\omega_i)$ is the periodogram of $\{X_k\}$ at frequency $\omega_i$, $\hat f(\omega_i)$ is the estimated spectral density of $\{X_k\}$ at frequency $\omega_i$, $\omega_i=2\pi i/n$, $i=1,2,\cdots, q$ are the canonical frequencies and $q=\lfloor (n-1)/2\rfloor$. Liu and Lin (2009) proposed using the following test statistic
$$g_n=\frac{\max_{1\le i\le q}I_n(\omega_i)/\hat f(\omega_i)}{q^{-1}\sum_{i=1}^qI_n(\omega_i)/\hat f(\omega_i)}.$$
For stationary time series and under some extra mild conditions, both $F_n-\log q$ and $g_n-\log q$ converge to the standard double exponential distribution (Liu and Lin (2009)). In this section, first we shall investigate the behaviour of the above-mentioned two tests under non-stationarity and the null hypothesis of no oscillation. To this end, we let $X_{i,n}=\epsilon_{i,n}$, where $\epsilon_{i,n}$ is generated from Models M1-M4 stated in Section \ref{sec:simu} of the main article as well as a non-stationary white noise model M5: $\epsilon_{i,n}=(1/4+i/n)e_i$ where $e_i$ are i.i.d. standard normal. We also consider a benchmark stationary Gaussian white noise M6. The simulated Type I error rates with 1000 replicates are summarized in Table \ref{tab:sens_class_freq} below.

The simulation results illustrate that the classic tests perform well under stationary Gaussian white noise. However, both tests are very conservative under all non-stationary noise models considered. It seems that this is due to both spectral dependency and the fact that $f(\omega_i)$ are no longer the relevant quantity to quantify the variability of $I_n(\omega_i)$ for non-stationary time series. Hence $\hat f(\omega_i)$ may not be the appropriate normalizing constants for those classic tests under non-stationary noises. On the other hand, our simulation results in Section \ref{sec:simu} of the main article demonstrate that the dense progress periodogram test performs well in terms of its Type I error rates under piece-wise locally stationary errors. 

One direct consequence of the conservativeness of the classic tests is that they are much less powerful than the DPPT to detect oscillations. Our further simulations confirm this. For instance, recall that in Section \ref{sec:ps1} we illustrated that the DPPT achieves power 1 when $A$ is greater than 0.35 at nominal level 0.05. However, under the same non-stationary data generating mechanism, both $F_n$ and $g_n$ require $A$ to be larger than 1 in order to achieve high power. In terms of estimation accuracy, both $F_n$ and $g_n$'s estimation errors are of the order $n^{-1}$ as they both only consider the canonical frequencies. On the other hand, as we discussed in the main article, the DPPT is able to achieve an estimation error of the order $n^{-3/2}\log(n)$ due to its dense nature. In summary, our simulations illustrate that the DPPT seems to be a more accurate and powerful tool for oscillation frequency detection and estimation under non-stationarity.   

\begin{table}[htb!]
\begin{center}
		\begin{tabular}{ |c| c| c| c| c |c|c| }
			\hline
			& \multicolumn{6}{|c|}{ Simulated Rejection Rates } \\\hline\hline
   & \multicolumn{6}{|c|}{ $\alpha=0.05$ } \\\hline
			  &M1& M2& M3 &M4&M5 &M6\\ \hline
			 $F_n$ &0.022& 0.017 & 0.007 & 0.014& 0.014& 0.046 \\ \hline
			 $g_n$ &0.016& 0.018 &0.006 &0.016&0.008& 0.048\\ \hline
    & \multicolumn{6}{|c|}{ $\alpha=0.1$ } \\\hline
			  &M1& M2& M3 &M4&M5 &M6\\ \hline
			 $F_n$ &0.041& 0.015 & 0.017 & 0.014& 0.014& 0.082 \\ \hline
			 $g_n$ &0.028& 0.014 &0.012 &0.028&0.012& 0.084\\ \hline
		\end{tabular}
\end{center}
\caption{\small Simulated rejection rates for the classic oscillation detection tests $F_n$ and $g_n$ under Stage 1 null conditions (no oscillation) when $\mu_{k,n} = 0$ and $n=500$. Noises are generated according to Models M1 -M4 specified in Section 5 of the main article and Models M5-M6 specified in Section \ref{sec:class_freq}.   \label{tab:sens_class_freq}}
\end{table}

{
\section{Plug-in block size selection: a general setup}\label{tps}

\subsection{Bias and variance in long-run variance estimation}\label{pre}
Let $\{Y_{i,n}\}_{i=1}^n$ be a non-stationary time series with $Y_{i,n}=H_{i,n}({\cal F}_i)$ and $\mathbb E(Y_{i,n})=0$, where $H_{i,n}$ is a measurable function. Recall  the definition of ${\cal F}_i$ in Section \ref{sec:PLS} of the main article. One important problem in the inference of $\{Y_{i,n}\}$ is to estimate the so-called long-run variance
$L_n:=\E(\sum_{i=1}^nY_{i,n})^2/n.$ See for instance \cite{andrews1991heteroskedasticity} among many others. The OBMB algorithm utilized in this paper amounts to estimating $L_n$ by 
\begin{eqnarray}\label{eq:lrv_est}
\hat{L}_{m,n}=\sum_{j=1}^{n-m+1}{\cal S}^2_{j,m}/(n-m+1),
\end{eqnarray}
where ${\cal S}_{i,m}=\sum_{k=i}^{i+m-1}Y_{k,n}/\sqrt{m}$ and $m$ is a block size selected by the user satisfying $m\rightarrow \infty$ with $m/n\rightarrow 0$ as $n\rightarrow\infty$.  Choosing an appropriate tuning parameter $m$ is a difficult problem in practice. If $\{Y_{i,n}\}$ is a stationary AR(1) process, \cite{andrews1991heteroskedasticity} proposed a plug-in method for selecting $m$ which balances the bias and variance of $\hat{L}_{m,n}$. For general classes of non-stationary processes, there exist some data-driven methods in the literature; see for instance \cite{politis2004automatic} and \cite{zhou2013}. However, those data-driven methods are without theoretical justifications and it is unknown whether they possess certain optimality properties (asymptotically). Here we shall propose a plug-in block size selection method for a general class of non-stationary time series. Similar to  \cite{andrews1991heteroskedasticity}, we aim to find $m$ to minimize the mean squared error
$\E(\hat{L}_{m,n}-L_n)^2$. Unlike \cite{andrews1991heteroskedasticity} who calculated $\E(\hat{L}_{m,n}-L_n)^2$ for stationary AR(1) processes, our calculation and estimation are carried out for a general class of non-stationary time series. 

We shall start with calculating the variance of $\hat{L}_{m,n}$. By the same techniques as those in Lemma 1 of \cite{zhou2013}, it can be shown that under appropriate moment and short-range-dependent conditions on $\{Y_{i,n}\}$,
\begin{eqnarray}\label{eq:v_l}
\frac{n}{m}\E(\hat{L}_{m,n}-\E(\hat{L}_{m,n}))^2-C^2_{1,n}=o(1)
\end{eqnarray} 
where $C_{1,n}=O(1)$. The detailed expression for $C^2_{1,n}$ is a complicated function of the fourth cumulants of $\{Y_{i,n}\}$ and will not be discussed in this supplementary material for simplicity. On the other hand, by the same techniques as those in Lemmas 2 - 4 of \cite{zhou2013}, one can derive that under appropriate moment and short-range-dependent conditions on $\{Y_{i,n}\}$ the bias of $\hat{L}_{m,n}$ satisfies that
\begin{eqnarray}\label{eq:b_l}
m\E(\hat{L}_{m,n}-L_n)-C_{2,n}=o(1),
\end{eqnarray} 
where $C_{2,n}=2\sum_{j=1}^{\sqrt{n}} j\Gamma_n(j)$ with 
$$\Gamma_n(j)=\sum_{i=1}^{n-j}\mbox{Cov}(Y_{i,n},Y_{i+j,n})/(n-j).$$
It is clear that $C_{2,n}=O(1)$ under an appropriate short-range dependence condition. As a result, for large $n$ the MSE of $\hat{L}_{m,n}$ is approximately equal to $\frac{m}{n}C^2_{1,n}+\frac{1}{m^2}C^2_{2,n}$. Minimizing the MSE with respect to $m$, the (asymptotically) optimal $m$ should be chosen at 
\begin{eqnarray}\label{eq:optimal_band}    
m^*=\Big(\frac{2C^2_{2,n}}{C^2_{1,n}}\Big)^{1/3}n^{1/3}.
\end{eqnarray}

In the following two subsections, we shall discuss the estimation of $C_{1,n}$ and $C_{2,n}$, respectively.
\subsection{Estimating the asymptotic variance of $\hat{L}_{m,n}$}\label{sec:avl}
Observe that $\hat{L}_{m,n}$ is the average of the non-stationary time series $\{\mathcal{S}^2_{j,m}\}$. As a result, on the high level the asymptotic variance of $\hat{L}_{m,n}$ can be estimated in the same way as that of $\sum_{i=1}^nY_{i,n}/n$. The main difference is that $\E \mathcal{S}^2_{j,m}>0$ while $\E Y_{i,n}=0$. To overcome the non-zero mean, we propose to use the differenced block sum of $\{{\mathcal{S}^2_{j,m}}\}$ to cancel its mean. Specifically, for a given block size $m^o$ (say $m^o=\lfloor n^{1/3}\rfloor$), define 
\begin{eqnarray*}
H_{i,m^o}=\Big[\sum_{j=i-m^o+1}^i\mathcal{S}^2_{j,m^o}-\sum_{j=i+1}^{i+m^o}\mathcal{S}^2_{j,m^o}\Big]/\sqrt{2m^o},    
\end{eqnarray*}
$i=m^o,m^o+1,\cdots,n-2m^o+1$. As mentioned above, the purpose of introducing the differenced block sum $\sum_{j=i-m^o+1}^i\mathcal{S}^2_{j,m^o}-\sum_{j=i+1}^{i+m^o}\mathcal{S}^2_{j,m^o}$ is to cancel out the mean of $\mathcal{S}^2_{j,m^o}$ which is indeed the case for most of $i$ when $\{Y_{i,n}\}$ is a piece-wise locally stationary time series. By \eqref{eq:v_l}, the quantity $C_{1,n}^2$ can then be estimated by
\begin{eqnarray}\label{eq:hatc1}
\hat{C}_{1,n}^2=\sum_{i}H_{i,m^o}^2/[(n-3m^o+2)m^o].
\end{eqnarray}
The consistency of $\hat{C}_{1,n}^2$ is beyond the scope of this paper and will be pursued in a separate research paper.

\subsection{Estimating the asymptotic bias of $\hat{L}_{m,n}$}\label{sec:abl}
We propose to estimate $C_{2,n}$ by
\begin{eqnarray}\label{eq:biasl}
\hat C_{2,n} = \frac{2}{n-3l_n+1}\sum_{j=1}^{n-3l_n+1}\tilde{\cal S}_{j,l_n}(\tilde{\cal S}_{j+l_n,l_n}-\tilde{\cal S}_{j+2l_n,l_n}),    
\end{eqnarray}
where $\tilde{\cal S}_{j,m}=\sqrt{m}{\cal S}_{j,m}=\sum_{k=j}^{j+m-1}Y_{k,n}$ and $l_n$ is a user-chosen parameter such that $l_n\rightarrow \infty$ with $l_n/n\rightarrow 0$. According to our simulation studies and data analysis, we recommend choosing $l_n=\lfloor n^{1/6}\rfloor$. The purpose of using the difference $\tilde{\cal S}_{j+l_n,l_n}-\tilde{\cal S}_{j+2l_n,l_n}$ instead of $\tilde{\cal S}_{j+l_n,l_n}$ in \eqref{eq:biasl} is to cancel out possible non-zero means in $\tilde{\cal S}_{j,m}$ in practical applications with similar arguments as those in Section \ref{sec:avl} under the definition of $H_{i,m^o}$. Combining our discussions in Sections \ref{sec:avl} and \ref{sec:abl}, our plug-in block size selector chooses
\begin{eqnarray}\label{eq:bandwidth} 
\hat m^*=\max\Big(1, \Big\lfloor\Big(\frac{2\hat C^2_{2,n}}{\hat C^2_{1,n}}\Big)^{1/3}n^{1/3}\Big\rfloor\Big).    
\end{eqnarray}

Under the assumption that $\{Y_{i,n}\}$ is sufficiently short-range dependent, elementary calculations yield that
$\mathbb E(\hat C_{2,n})=C_{2,n}+o(1)$. Following similar arguments as those in 
Lemmas 1 - 4 of \cite{zhou2013} and under appropriate moment and short memory assumptions, it can be shown that $\E(\hat{C}_{2,n}-\E\hat{C}_{2,n})^2\rightarrow 0$. 
Again, the detailed proofs are beyond the scope of this paper and will be pursued in a separate research manuscript.

\subsection{Block size selection for differenced long-run variance estimators}\label{sec:dlrv}
In our stage two change point estimation algorithm, the long-run variance $L_n=\E(\sum_{i=1}^nY_{i,n})^2/n$ is approximated by a slightly different method; that is, we use the following
\begin{eqnarray}\label{eq:dlv}
\hat{L}^*_{m,n}=\sum_{j=1}^{n-2m+1}\bar{\cal S}^2_{j,m}/(n-2m+1)
\end{eqnarray}
to estimate $L_n$, where $\bar{\cal S}_{i,m}=(\sum_{k=i}^{i+m-1}Y_{k,n}-\sum_{k=i+m}^{i+2m-1}Y_{k,n})/\sqrt{2m}$. We shall call $\hat{L}^*_{m,n}$ a differenced long-run variance estimator. The plug-in method discussed in Sections \ref{sec:avl} and \ref{sec:abl} above can be easily extended to selecting block size $m$ for the differenced long-run variance estimator. Specifically, following the same discussions as those in Sections \ref{sec:avl} and \ref{sec:abl}, it can be shown that 
$\E(\hat{L}^*_{m,n}-\E(\hat{L}^*_{m,n}))^2=\frac{m}{n}[\bar{C}^2_{1,n}+o(1)]$
and $\E(\hat{L}^*_{m,n}-L_n)=[\bar C_{2,n}+o(1)]/m$, where $\bar C_{2,n}=3\sum_{j=1}^{\sqrt{n}}j\Gamma_n(j)=3C_{2,n}/2$. Now $\bar C^2_{1,n}$ can be estimated the same way as that in Section \ref{sec:avl} with ${\cal S}_{j,m^o}$ therein replaced with $\bar {\cal S}_{j,m^o}$. Meanwhile, $\bar C_{2,n}$ can be estimated by $3\hat C_{2,n}/2$, where $\hat C_{2,n}$ is defined in Section \ref{sec:abl}.

\subsection{Extension to multiple time series}\label{sec:mts}
The oscillating frequency detection stage of our paper involves choosing an appropriate block size for a multiple time series. With a slight abuse of notation, let $\{Y_{i,n}=(Y_{i,1,n},\cdots,Y_{i,h,n})^\top\}$ be a centered $h$-dimensional time series.
Then our OBMB algorithm estimates the so-called long-run covariance matrix
$\E([\sum_{i=1}^nY_{i,n}][\sum_{i=1}^nY_{i,n}]^\top)/n$. Directly assessing the goodness-of-fit for estimators of the latter covariance matrix is computationally intensive when $h$ is large. As a result, we suggest choosing a block size $m$ that minimizes the averaged mean squared error across dimensions. That is, we choose $m$ such that 
\begin{eqnarray}\label{eq:mloss}
\sum_{i=1}^h\E(\hat{L}_{i,m,n}-L_{i,n})^2/h
\end{eqnarray}
is minimized, where $L_{i,n}=\E(\sum_{j=1}^nY_{j,i,n})^2/n$, and $\hat{L}_{i,m,n}=\sum_{k=1}^{n-m+1}{\cal S}^2_{k,i,m}/(n-m+1)
$
with ${\cal S}_{k,i,m}=\sum_{r=k}^{k+m-1}Y_{r,i,n}/\sqrt{m}$,  $i=1,2,\cdots,h$. For each dimension $i$, The bias and variance of $\hat{L}_{i,m,n}$ can be quantified and estimated using the same techniques 
as those discussed in Sections \ref{sec:avl} and \ref{sec:abl}. Therefore the block size $m$ that minimizes \eqref{eq:mloss} can be estimated using essentially the same plug-in method discussed in the aforementioned sections. Details are omitted here.

}

\section{Gaussian approximation without variance lower bounds}\label{sec:A}

	This section of the appendix deals with approximations to the sum of an $h$-dimensional non-stationary time series $\{x_i\}$ (either real or complex) by a centered Gaussian time series $\{y_i\}$ with the same covariance structure without assuming there is a lower bound for the coordinate-wise variances of the normalized sum of $\{x_i\}$.

	First we need the following Lemma \ref{lemma1} which extends Nazarov's Inequality \citep{nazarov2003} to Gaussian random vectors without variance lower bound.
	
	\begin{lemma}[\textbf{An Extended Version of Nazarov's Inequality}]\label{lemma1}
		Let $Y = (Y_1,...,Y_h)^\top$ be a centred Gaussian random vector, where $h>1$. Let $y  \in \mathbb{R}^h$ satisfy $y_i> c$ or $y_i + a < -c$ for each $i$ for some $c > 0$ and $a>0$. Then, for any $b>0$,
		$$
		\pr(Y \leq y +a ) - \pr(Y \leq y)  \leq 4 (a/b)  \sqrt{\log{h}} + \frac{bh}{c\sqrt{2\pi}} \exp(-(c/b)^2/2) .
		$$
		where $Y \leq y$ means $Y_i\leq y_i$ for $i=1\ldots,h$.
	\end{lemma}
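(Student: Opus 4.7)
The plan is to partition the coordinates according to their standard deviations $\sigma_i = \sqrt{\var(Y_i)}$ into two groups and handle each by a separate argument. Specifically, I would set $I = \{i : \sigma_i \geq b\}$ and $J = \{i : \sigma_i < b\}$, where $b$ is the parameter appearing in the bound.

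Writing $\pr(Y \leq y+a) - \pr(Y \leq y) = \pr(\{Y \leq y+a\} \cap \{\exists j : Y_j > y_j\})$ and splitting the witness coordinate according to whether it lies in $I$ or $J$, I obtain
\begin{align*}
\pr(Y \leq y+a) - \pr(Y \leq y)
&\leq \pr(\{Y_I \leq y_I + a\} \setminus \{Y_I \leq y_I\}) \\
&\quad + \pr(\{Y \leq y+a\} \cap \{\exists j \in J : Y_j > y_j\}),
\end{align*}
where in the first term I use the inclusion $\{Y \leq y+a\} \subseteq \{Y_I \leq y_I + a\}$. The marginal $Y_I$ is a centered Gaussian vector all of whose components have variance at least $b^2$, so applying the classical Nazarov inequality of \citet{nazarov2003} to $Y_I$ directly bounds the first term by $4(a/b)\sqrt{\log |I|} \leq 4(a/b)\sqrt{\log h}$.

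For the second term, the key observation is that on the intersection $\{Y \leq y+a\} \cap \{Y_j > y_j\}$ with $j \in J$, the coordinate $Y_j$ lies in $(y_j, y_j + a]$. By hypothesis, either $y_j > c$ (whence $Y_j > c$) or $y_j + a < -c$ (whence $Y_j < -c$). A union bound over $J$ then reduces the second term to $\sum_{j \in J_+} \pr(Y_j > c) + \sum_{j \in J_-} \pr(Y_j < -c)$, where $J_\pm$ denote the two sub-classes. I would bound each summand by Mills' ratio, $\pr(Y_j > c) \leq \sigma_j \exp(-c^2/(2\sigma_j^2))/(c\sqrt{2\pi})$, and then invoke the monotonicity of $\sigma \mapsto \sigma \exp(-c^2/(2\sigma^2))$ (verifiable by differentiation) to replace $\sigma_j < b$ with $b$, yielding $\pr(Y_j > c) \leq \frac{b}{c\sqrt{2\pi}} \exp(-(c/b)^2/2)$, with the same bound for $\pr(Y_j < -c)$ by symmetry. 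Summing over at most $h$ terms produces the stated second term.

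The main obstacle is selecting the right decomposition of the shell event $\{Y \leq y+a\} \setminus \{Y \leq y\}$. For $j \in J_-$, a naive bound on $\pr(Y_j > y_j)$ is close to $1$ because $y_j < -c - a$, so the argument has to carefully retain the constraint $Y_j \leq y_j + a$ in order to extract the small-probability tail event $\{Y_j < -c\}$ rather than the large-probability event $\{Y_j > y_j\}$. Once this is in place, the rest reduces to the classical Nazarov inequality applied to $Y_I$ and standard one-sided Gaussian tail estimates on $Y_J$.
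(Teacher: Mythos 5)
Your proposal is correct and follows essentially the same route as the paper's proof: split the coordinates at the variance threshold $b$, apply the classical Nazarov inequality to the block with $\sigma_i \geq b$, and control the low-variance block by a union bound plus Gaussian tail estimates using the hypothesis $y_i > c$ or $y_i + a < -c$. The only cosmetic difference is that you keep the $y_j + a < -c$ coordinates inside the union bound (by retaining the constraint $Y_j \leq y_j+a$ to force $Y_j < -c$), whereas the paper disposes of that subcase separately by bounding $\pr(Y\leq y+a)$ itself; both are valid and yield the same bound.
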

	
	\begin{proof}
	For some constant $b >0$, we first separate the coordinates whose variances are greater than $b$ and the coordinates whose variances are smaller than $b$.
		First note that
		$$
		\mathbb P(A\cap B) =\mathbb  P(A) -\mathbb P(A \cap B^c ) \geq \mathbb P(A) -\mathbb P(B^c ).
		$$
Therefore, by taking $A=\{Y_i \leq y_i  , \forall i  \text{ where } \sigma_i >  b\}$ and $B=\{Y_i \leq y_i  , \forall i  \text{ where } \sigma_i \leq b\}$, we have
		\[ \pr(Y_i \leq y_i  ) \geq \pr(Y_i \leq y_i  , \forall i  \text{ where } \sigma_i >  b) -\pr(Y_i > y_i  , \exists i  \text{ where } \sigma_i \leq  b).  
		\]
		We can further expand using the above fact:
		\begin{align*}
		&\pr(Y \leq y +a ) - \pr(Y \leq y) \\
		= \,& \pr(Y_i \leq y_i + a ,\forall i \text{ where } \sigma_i \leq b \text{ and } Y_i \leq y_i + a ,\forall i \text{ where } \sigma_i > b)  \\
		 \,&\quad - \pr(Y_i \leq y_i  , \forall i  \text{ where } \sigma_i \leq  b \text{ and } Y_i \leq y_i  , \forall i  \text{ where } \sigma_i >  b) \\
		 \leq \,&\pr(Y_i \leq y_i +a , \forall i  \text{ where } \sigma_i >  b) -  \pr(Y_i \leq y_i  , \forall i  \text{ where } \sigma_i >  b)\\
		 \,&\quad + \pr(Y_i > y_i  , \exists i  \text{ where } \sigma_i \leq  b).
		\end{align*}
		%
		%
		Then, by Nazarov's inequality \citep{nazarov2003}, we can bound the part where the condition $\sigma_i > b$ holds. Specifically, 
		\begin{align*}
		&\pr(Y_i \leq y_i + a ,\forall i \text{ where } \sigma_i > b) - \pr(Y_i \leq y_i  , \forall i  \text{ where } \sigma_i >  b) \\\leq &\,(a/b)  (\sqrt{2 \log{h}} + 2) \leq 4 (a/b)  \sqrt{\log{h}}  .
		\end{align*}
		To evaluate $\pr(Y_i > y_i  , \exists i  \text{ where } \sigma_i \leq  b)$, we consider the assumption that $y_i> c$ for some $c > 0$ for all $i$ so that $\sigma_i\leq b$ is satisfied. In this case, we need to bound the tail probability under the condition $\sigma_i \leq  b$. Let $n(b)$ be the number of coordinates satisfy $\sigma_i \leq b$. Then
		\begin{align*}
		\pr(Y_i > y_i  , \exists i  \text{ where } \sigma_i \leq  b) &=   \pr(Y_i/\sigma_i > y_i/\sigma_i, \exists i  \text{ where } \sigma_i \leq  b) \\
		& \leq \sum_{k = 1}^{n(b)} \int_{y_i/\sigma_i}^{\infty} \frac{1}{\sqrt{2\pi}} \exp(-t^2/2) dt \quad \text{(Union bound)}  \\
		& \leq \frac{n(b)}{\sqrt{2\pi}} \int_{c/b}^{\infty} \exp(-t^2/2) dt
		\leq  \frac{bh}{c \sqrt{2\pi}}  \exp(-(c/b)^2/2) .
		\end{align*}	
		Thus, we have the claim when the assumption $y_i> c$ for some $c > 0$ is satisfied.
			
		To finish the proof, we consider the final case where there exists an $y_i$ such that $y_i + a <-c$  and $\sigma_i \leq b$ are both satisfies. In this case, by a trivial bound we have
		\[\pr(Y \leq y+a) - \pr(Y \leq y) \leq \pr(Y \leq y+a) \leq  \pr(Y_i \leq y_i+a)  \leq  \frac{bh}{c\sqrt{2\pi}} \exp(-(c/b)^2/2).\]
		By putting the above together, we obtain the claim.
	\end{proof}
	
	The following corollary is a direct application of \textbf{Lemma} \ref{lemma1} by picking proper constants $b$ and $c$.
	
	\begin{corollary}\label{cor1}
		Let $Y = (Y_1,...,Y_h)^\top$ be a centred Gaussian random vector, where $h>1$. Take $\delta > 0$.  Then for $c = 2\log(h)^{-\delta} $ and $a>0$, we have
		\[ 
		\sup_{|x| > c+ a } \pr\left(\Big|\max_{1 \leq j \leq h } Y_j - x\Big| \leq a\right) \leq 8 a (\log{h})^{1+\delta} +  \frac{1}{2 \sqrt{2\pi}(\log h)^{1/2} h} .
			\]
	\end{corollary}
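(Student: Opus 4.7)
The plan is to view Corollary \ref{cor1} as a direct specialization of Lemma \ref{lemma1}: recast the anti-concentration quantity on the left as the difference of two Gaussian CDFs evaluated at \emph{constant} vectors, then invoke the lemma with an optimally tuned Nazarov bandwidth $b$. The hypothesis $|x|>c+a$ is precisely engineered so that the resulting constant vector falls into one of the two branches of the dichotomy required by Lemma \ref{lemma1}.

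First I would reduce to a rectangle. Let $M:=\max_{1\le j\le h}Y_j$. Since the Gaussian maximum has a continuous distribution,
\[
\pr(|M-x|\le a)=\pr(M\le x+a)-\pr(M\le x-a)=\pr(Y\le y+2a\mathbf 1)-\pr(Y\le y),
\]
where $y:=(x-a)\mathbf 1\in\R^h$, $\mathbf 1$ is the all-ones vector, and the inequalities on $Y$ are coordinate-wise. Next I would verify the hypothesis: if $|x|>c+a$ then either $x-a>c$, which means $y_i>c$ for every coordinate, or $x+a<-c$, which means $y_i+2a<-c$ for every coordinate. This matches the setting of Lemma \ref{lemma1} with its shift parameter replaced by $2a$, and hence for every $b>0$,
\[
\pr(|M-x|\le a)\;\le\;4\,\frac{2a}{b}\sqrt{\log h}\;+\;\frac{bh}{c\sqrt{2\pi}}\exp\!\left(-(c/b)^2/2\right).
\]

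Then I would tune $b$. With $c=2\sqrt{2}(\log h)^{-\delta}$ fixed, the two summands above are monotone in $b$ in opposite directions, so a power-of-$\log h$ choice balances them. Picking $b$ of the form $b\asymp (\log h)^{-1/2-\delta}$ does two things simultaneously: (i) the first summand becomes of order $a(\log h)^{1+\delta}$, matching the leading term on the right-hand side of Corollary \ref{cor1}; and (ii) the ratio $c/b$ grows like $\sqrt{\log h}$, so $(c/b)^2/2\gtrsim \log h$ and $\exp(-(c/b)^2/2)$ contributes the needed factor of $1/h$, reducing the second summand to the claimed $\tfrac{1}{2\sqrt{\pi}(\log h)^{1/2}h}$. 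Substituting this $b$ into the displayed inequality and simplifying gives exactly the two stated summands.

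There is no real theoretical obstacle here, since the argument is a one-parameter optimization of Lemma \ref{lemma1}; the only care required is bookkeeping of absolute constants so that the stated leading coefficients $4$ and $1/(2\sqrt{\pi})$ come out correctly. In particular, one must keep the Nazarov constant $(\sqrt{2\log h}+2)$ slightly sharper than its crude bound $4\sqrt{\log h}$ where helpful, and choose the implicit constant in $b$ so that $(c/b)^2/2\ge 2\log h$ (which buys the extra $1/h$ in the second summand). Any $\lesssim$-type version of the inequality is in fact immediate from the two displays above, without any further computation.
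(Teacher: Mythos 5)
Your proposal is correct and follows essentially the same route as the paper: specialize Lemma \ref{lemma1} to the constant vector $y=(x-a)\mathbf{1}$ with shift $2a$, check that $|x|>c+a$ forces one of the two branches of the lemma's dichotomy, and take $b=(\log h)^{-1/2-\delta}$ so that $(c/b)^2/2=2\log h$ produces the $1/h$ factor in the tail term. You are in fact more careful than the paper about the interval having width $2a$ (the paper's proof bounds only a width-$a$ increment and passes to the corollary ``entrywisely''); just note that absorbing the resulting factor $2$ into the stated coefficient $4$ via the sharper Nazarov constant $\sqrt{2\log h}+2$ only succeeds for large $h$, so the leading constant should be read as generic rather than exact.
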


	\begin{proof}
		Pick $b = \log(h)^{-\frac{1}{2} - \delta}$. Then
		$$
		\frac{h}{\sqrt{2\pi}} \frac{b}{c} \exp(-(c/b)^2/2)  =   \frac{h}{2 \sqrt{2\pi} (\log h)^{1/2}}\exp(-2(\log h)) = \frac{1}{2 \sqrt{2\pi}(\log h)^{1/2} h}.
		$$
		Combining the previous results we get that for $y\in \mathbb{R}^h$ so that $y_i > c$ or $y_i<-c- a$, 
		$$
		\pr(Y \leq y +a ) - \pr(Y \leq y)  \leq 4 a (\log{h})^{1+
			\delta} +  \frac{1}{2\sqrt{2\pi}\log^{1/2} (h) h} .
		$$
		We thus get the proof by writing the above quantities entrywisely and noting that 
		\[
		\pr\left(\Big|\max_{1 \leq j \leq h } Y_j - x\Big| \leq a\right)=\pr\left(\max_{1 \leq j \leq h } Y_j \leq x+a\right)-\pr\left(\max_{1 \leq j \leq h } Y_j \leq x-a\right)\,.
		\]
		
	\end{proof}
	
	The next proposition is an extension of Proposition A.1 in \cite{zhang2018}. We shall first introduce some notation used in the latter paper.

Let $\{ x_i\}=$ $\{(x_{i,1},\cdots,x_{i,h})^\top\}$ be a centered $h$-dimensional $M$-dependent times series. Take any truncation levels $M_x>0$. 
For $N\geq M$ and $N$, $M$, $r\rightarrow +\infty$ as $n \rightarrow + \infty$, define block sums:
$$
A_{ij} := \sum_{l = iN+(i-1)M - N +1}^{iN+ (i-1) M} x_{l,j}, \quad B_{ij} := \sum_{l = i(N+M) - M+1}^{i (N+M)} x_{l,j},
$$
the block sum for truncated $\chi_{i,j}:= (x_{i,j} \wedge M_x)\vee(-M_x)$:
$$
\check{A}_{ij} := \sum_{l = iN+(i-1)M - N +1}^{iN+ (i-1) M} \chi_{l,j}, \quad \check{B}_{ij} := \sum_{l = i(N+M) - M+1}^{i (N+M)} \chi_{l,j},
$$
and the block sum for the truncated and centered $\tilde{x}_{i,j} := (x_{i,j} \wedge M_x)\vee(-M_x) - \mathbb{E} [(x_{i,j} \wedge M_x)\vee(-M_x)]$:
$$
\tilde{A}_{ij} := \sum_{l = iN+(i-1)M - N +1}^{iN+ (i-1) M} \tilde{x}_{l,j}, \quad \tilde{B}_{ij} := \sum_{l = i(N+M) - M+1}^{i (N+M)} \tilde{x}_{l,j}.
$$
Let $\varphi(M_x)$ be the smallest finite constant which satisfies
\[ 
\mathbb{E}(A_{ij} - \check{A}_{ij} )^2 \leq N \varphi^2(M_x) ,\quad \mathbb{E}(B_{ij} - \check{B}_{ij} )^2 \leq M \varphi^2(M_x)
\]
uniformly for $i$ and $j$.
Also,
let $\phi(M_x)$ be a constant which satisfies 
\[ 
\max_{1\leq j,k\leq h } \frac{1}{n} \sum_{i =1}^n \left|  \sum_{l=(i-M) \vee 1}^{(i+M)\wedge n } \left( \mathbb{E} x_{ij} x_{lk}  - \mathbb{E}\tilde{x}_{ij} \tilde{x}_{lk} \right)  \right| \leq \phi(M_x). 
\]  

Let $\{y_i\}$ be the Gaussian counterpart of $\{x_i\}$ with the same covariance structure of $\{x_{i}\}$, and take any truncation level $M_y$.
Define $\varphi(M_y)$ and $\phi(M_y)$ similarly based on $\{y_{ij}\}$. Set $ \phi(M_x,M_y) := \phi(M_x) + \phi(M_y)$, and set $\varphi(M_x,M_y) := \varphi(M_x) \vee \varphi(M_y)$.

Recall that $\FF_i=(\cdots,e_{i-1},e_i)$ and $\{e_i\}_{i\in\mathbb{Z}}$ are i.i.d. random variables. If $\{x_i={\cal G}_{i,n}(\FF_i)\}$ is not $M$-dependent, let $\{x_i^{(M)}:=\E (x_i|e_i,e_{i-1},\cdots,e_{i-M})\}$ be the $M$-dependent approximation to $\{x_i\}$. Define $\{ y_i^{(M)}\}$ as the $M$-dependent sequence of Gaussian random variables which preserves the covariance structure of $\{x_i^{(M)}\}$. Similarly we can define $A_{ij}^{(M)}$, $\tilde{A}_{ij}^{(M)}$, $\check{A}_{ij}^{(M)}$, $B_{ij}^{(M)}$, $\tilde{B}_{ij}^{(M)}$ and $\check{B}_{ij}^{(M)}$ based on $\{ x_{i}^{(M)}\}$, and hence 
$\phi^{(M)}(M_x,M_y)$ and $\varphi^{(M)}(M_x,M_y)$ are defined similarly based on $\{ x_i^{(M)}\}$ and $\{ y_i^{(M)}\}$.

	\begin{proposition}\label{prop:A1}
	Let $\{ x_i\}_{i=1}^n=$ $\{(x_{i1},\cdots,x_{ih})^\top\}_{i=1}^n$ be a centered $h$-dimensional $M$-dependent times series. Let $\{y_i\}$ be its Gaussian counterpart. 
	Define $M_{xy} =\max \{M_x, M_y\}$.
	Suppose $2\sqrt{5} \beta (6M+1)  M_{xy}/\sqrt{n} \leq 1 $, where $\beta>0$ is a constant. Also, suppose $M_x > u_x(\gamma) $ and $ M_y > u_y(\gamma)$ for some $\gamma \in (0,1)$, where $u_x(\gamma)$ is the $(1-\gamma)$-quantile of $\max_{i,j}|x_{ij}|$ with $u_y(\gamma)$ is defined similarly. Further, assume that $\max_{1\le i\le h}|\sum_{k,l=1}^{n}\cov(x_{ki},x_{li})/n|\le a_1$ for some finite constant $a_1$. 
	Define $\bar{m}_{x,k}:=\max_{1\le j\le h}[\sum_{i=1}^n\E|x_{ij}|^k/n]^{1/k}$, $k=1,2,\cdots,$ 
	\[
	T_X:=\max_{1\le j\le h}\sum_{i=1}^nx_{ij}/\sqrt{n}
	\] 
	and $\bar{m}_{y,k}$ and $T_Y$ are defined similarly.
	Then, for any $\psi >0$,
		\begin{align}
		&\sup_{|t| > d_{h} } |\mathbb P(T_X \leq t ) -\mathbb P(T_Y\le t)| \nonumber\\
		\lesssim \,&(\psi^2 + \psi \beta)  \phi(M_x,M_y) \\
		&+(\psi^3+\psi^2 \beta + \psi \beta^2) \frac{(2M+1)^2}{\sqrt{n}}(\bar{m}^3_{x,3} +\bar{m}^3_{y,3} ) \nonumber\\
		& + \psi \varphi(M_x,M_y) \sqrt{\log(h/\gamma)} + \gamma \nonumber\\ 
		&	+ (\beta^{-1}\log(h) + \psi^{-1})  (\log h)^{1+ \delta} + h^{-1} (\log h)^{-1/2} ,\nonumber
		\end{align}
		where $d_h := \beta^{-1}\log(h) + \psi^{-1} + 2\log(h)^{-\delta}$.
	\end{proposition}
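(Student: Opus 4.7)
The plan is to follow the block-decomposition plus Stein-type interpolation strategy developed in Chernozhukov et al.~(2013) and adapted to $M$-dependent time series in Zhang~(2018), but to replace the anti-concentration step by the extended Nazarov inequality (Corollary \ref{cor1}) so that no positive lower bound on the coordinate-wise variances of the normalised sums is needed. It is precisely this substitution that forces the restriction $|t|>d_h$ and produces the $(\log h)^{1+\delta}$ term in the final bound.

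First I would truncate. Replace each $x_{i,j}$ by the centered truncation $\tilde{x}_{i,j}=(x_{i,j}\wedge M_x)\vee(-M_x)-\E[(x_{i,j}\wedge M_x)\vee(-M_x)]$, and $y_{i,j}$ analogously at level $M_y$. Because $M_x>u_x(\gamma)$ and $M_y>u_y(\gamma)$, the unclipped maxima coincide with the truncated maxima with probability at least $1-2\gamma$, which accounts for the additive $\gamma$ in the bound; the residual moment and covariance distortions enter only through $\varphi(M_x,M_y)$ and $\phi(M_x,M_y)$. Next, using the $M$-dependence, split the indices into alternating big blocks of length $N$ (giving $\tilde{A}_{ij}$, independent across $i$) and small blocks of length $M$ (giving $\tilde{B}_{ij}$). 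The total contribution of the $\tilde{B}$-sums is a centered vector whose coordinate-wise $L^2$ norm is $O(\sqrt{M}\,\varphi)$, and by a Gaussian maximal inequality combined with a $\psi$-smoothing its effect on the max is at most $\psi\,\varphi(M_x,M_y)\sqrt{\log(h/\gamma)}$, which is the third summand in the claimed bound.

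For the big-block sums I would introduce the usual smooth surrogates: the log-sum-exp $F_\beta(w)=\beta^{-1}\log\sum_{j=1}^h e^{\beta w_j}$, satisfying $0\le F_\beta(w)-\max_j w_j\le\beta^{-1}\log h$, together with a $C^3$ mollifier $g_\psi$ of $\mathbbm{1}(\cdot\le t)$ with $\|g_\psi^{(k)}\|_\infty\lesssim\psi^k$ for $k=1,2,3$. A Slepian interpolation between the independent big blocks of $\tilde{x}$ and those of $\tilde{y}$, combined with Stein-type Taylor expansion, produces a difference of expectations of $g_\psi\circ F_\beta$ that splits into two pieces. The covariance-mismatch piece is controlled by $(\psi^2+\psi\beta)\,\phi(M_x,M_y)$, since the chain rule yields $\|D^2(g_\psi\circ F_\beta)\|\lesssim\psi^2+\psi\beta$. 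The third-order Taylor remainder equals $(\psi^3+\psi^2\beta+\psi\beta^2)(2M+1)^2 n^{-1/2}(\bar{m}_{x,3}^3+\bar{m}_{y,3}^3)$ after using that each truncated big-block increment has $L^3$ norm of order $(2M+1)\bar{m}_{x,3}$ and summing over $O(n/N)$ blocks; the factor $(2M+1)^2$ encodes within-block coupling through the $M$-neighbourhood. The hypothesis $2\sqrt{5}\beta(6M+1)M_{xy}/\sqrt{n}\le 1$ is used here to keep the Taylor remainder free of a hidden $\beta M_{xy}$ factor.

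The hardest step is the de-smoothing. Passing from $\E[g_\psi\circ F_\beta(T_X)]$ to $\pr(T_X\le t)$ costs an offset of at most $a:=\beta^{-1}\log h+\psi^{-1}$ in $t$, and likewise on the Gaussian side. If the variances of the $Y_j$'s were uniformly bounded below, Nazarov's inequality applied to $\pr(|\max_j Y_j-t|\le a)$ would close the loop; but here some coordinates may be nearly degenerate, which is exactly the gap that motivated Lemma~\ref{lemma1}. I would therefore invoke Corollary~\ref{cor1} with interval width $a$ and threshold $c=2\sqrt{2}(\log h)^{-\delta}$, obtaining the error $4a(\log h)^{1+\delta}+\tfrac{1}{2\sqrt{\pi}}h^{-1}(\log h)^{-1/2}$ on the set $|t|>c+a=d_h$. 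The delicate bookkeeping is making the $c=2\sqrt{2}(\log h)^{-\delta}$ cushion of Corollary~\ref{cor1} consistent with the $\beta^{-1}\log h+\psi^{-1}$ slack introduced by $F_\beta$ and $g_\psi$, and verifying that this restriction $|t|>d_h$ is the only price paid for discarding the variance lower bound; once this is done, collecting the four error sources yields the stated inequality.
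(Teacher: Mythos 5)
Your proposal is correct and follows essentially the same route as the paper: smooth the maximum with the log-sum-exp surrogate and a $C^3$ mollifier, run the Stein/Slepian interpolation over big and small blocks of the $M$-dependent truncated series to obtain the first four error terms, and then de-smooth via the extended Nazarov inequality (Corollary~\ref{cor1}) with offset $a=\beta^{-1}\log h+\psi^{-1}$ and cushion $c=2\sqrt{2}(\log h)^{-\delta}$, which is exactly what produces the restriction $|t|>d_h$ and the $(\log h)^{1+\delta}$ term. The only difference is presentational: the paper imports the interpolation bound wholesale from Equation~(39) of \cite{zhang2018} rather than re-deriving the block decomposition, and note the smooth max should be $F_\beta(y)=\beta^{-1}\log\sum_i\exp(\beta y_i)$ as you write it (the paper's display omits the logarithm, evidently a typo).
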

	
	\begin{proof}
		
		Let
		$$
		g_0(x) = \begin{cases}
		0, & x \geq 1,\\
		30 \int_x^1 s^2 (1-s)^2 ds , & 0 < x < 1,\\
		1, & x \leq 0.
		\end{cases}
		$$
		and pick $g(s) = g_0 (\psi (s - t -e_\beta ))$ with $e_\beta = \beta^{-1} \log h$ and $t$ to be chosen later. Denote 
		\begin{equation}\label{definiteion:G_k}
		G_k :=\sup_{x\in\mathbb{R}}\partial^k g(x)/\partial x^k, 
		\end{equation}
		$k=1,2,\cdots$. We have $G_0 \lesssim 1$, $G_1 \lesssim \psi$, $G_2 \lesssim \psi^2$ and $G_3 \lesssim \psi^3$. Moreover, the function also satisfies
		$$
		\mathbb{I}(x \leq t + e_\beta) \leq g(x) \leq \mathbb{I}(x\leq t + e_\beta + \psi^{-1}), \forall x \in \mathbb{R},
		$$
		where $\mathbb{I}$ is the indicator function.
		Define $m(y)=g\circ F_\beta(y)$, where 
		$$
		F_\beta(y)=\beta^{-1}\sum_{i=1}^h\exp(\beta y_i),\quad y=(y_1,\cdots,y_h)^\top\in\mathbb{R}^h.
		$$
		Denote $X:=(X_1,X_2,\cdots,X_h)^\top=\sum_{i=1}^n x_i/\sqrt{n}$ and $Y:=(Y_1,Y_2,\cdots,Y_h)^\top$ is defined similarly. Based on Equation (39) in the proof of Proposition A.1 of \cite{zhang2018}, we have the following
		\begin{align}
		|\E[m(X) - m(Y)]| \lesssim &(\psi^2 + \psi \beta)  \phi(M_x,M_y)\nonumber\\
		&+(\psi^3+\psi^2 \beta + \psi \beta^2) \frac{(2M+1)^2}{\sqrt{n}}(\bar{m}^3_{x,3} +\bar{m}^3_{y,3} )\nonumber\\
		& + \psi \varphi(M_x,M_y) \sqrt{\log(h/\gamma)} + \gamma   := B.
		\end{align}
		Now assume $|t| >  e_{\beta} + \psi^{-1} +2\sqrt{2} \log^{-\delta} h $, we have
		\begin{align*}
		\pr(\max_{1\leq j \leq h } X_j \leq t ) &\leq \pr( F_\beta (X) \leq t + e_\beta  ) \leq \E [g (F_\beta(X)) ] \\
		& \leq \E[g(F_\beta(Y))] + B\\
		& \leq \pr (F_\beta(Y) \leq t + e_\beta + \psi^{-1}) +B\\
		& \leq \pr \left(\max_{1 \leq j \leq h} Y_j \leq t +e_\beta + \psi^{-1} \right) + 
		B.
		\end{align*}
		Then, by Lemma \ref{lemma1} and assuming $|t| >  e_{\beta} + \psi^{-1}  + 2(\log h)^{-\delta} $, we have
		\begin{align*}
		&\pr \left(\max_{1 \leq j \leq h} Y_j \leq t +e_\beta + \psi^{-1} \right) - \pr \left(\max_{1 \leq j \leq h} Y_j \leq t \right)\\  \lesssim &\,(e_\beta + \psi^{-1}) (\log{h})^{1+\delta} + h^{-1} (\log h)^{-1/2} .
		\end{align*}
		Thus, since $T_X=\max_{1\leq j \leq h } X_j$ and $T_Y=\max_{1 \leq j \leq h} Y_j$, we conclude
		\begin{align*}
		\pr \left(T_X \leq t \right) - \pr\left(T_Y \leq t\right)  \leq B +  (e_\beta + \psi^{-1}) (\log{h})^{1+\delta} + h^{-1} (\log h)^{-1/2}.
		\end{align*}
		The opposite direction can be proved similarly by noting that
		$$
		\pr\left( \max_{1 \leq j \leq h } X_j \leq x\right) \geq \pr\left( \max_{1 \leq j \leq h }Y_j \leq x - e_\beta - \psi^{-1} \right) - B
		$$
		and
		\begin{align*}
		&\pr\left( \max_{1 \leq j \leq h }Y_j \leq x - e_\beta - \psi^{-1} \right) - \pr\left( \max_{1 \leq j \leq h } Y_j \leq x\right)\\  \geq &\, C[ -  (e_\beta + \psi^{-1}) (\log{h})^{1+\delta} - h^{-1} (\log h)^{-1/2}]
		\end{align*}
for some constant $C$.
	\end{proof}

Let $\{x_{i}\}_{i=1}^n$ be a general centered $h$-dim non-stationary time series satisfying 
\begin{itemize}
\item[(A0)] $x_{i}={\cal G}_{i,n}(\cdots,e_{i-1},e_i) \in \R^h$, $ i=1,\cdots, n$ (triangular array), where $e_i$, $i\in\mathbb{Z}$, are i.i.d. random elements and ${\cal G}_{i,n}$ are $h$-dim vector-valued Borel-measurable functions. For an integer $k\ge 0$ and a positive real number $q \ge 1$, define the physical dependence measures of $\{x_{i}\}$
\begin{eqnarray*}
\theta_{j,k,q}=\max_{1\le i\le n}\|{\cal G}_{i,j,n}(\cdots,e_{i-1},e_i)-{\cal G}_{i,j,n}(\cdots,\hat e_{i-k},e_{i-k+1},\cdots, e_{i-1},e_i)\|_q,
\end{eqnarray*}
 where ${\cal G}_{i,j,n}(\cdot)$ is the $j$-th component function of ${\cal G}_{i,n}$, $j=1,2,\cdots, h$, and $\hat e_{i-k}$ is identically distributed as $e_{i-k}$ and is independent of $\{e_i\}_{i\in\mathbb{Z}}$. 
\end{itemize}
We make the following three assumptions for the time series $\{x_i\}$ which corresponds to Assumptions (2.1) to (2.3) of \cite{zhang2018} but without assuming lower variance bounds. 
\begin{itemize}
\item[(A1)] Assume  that $\max_{1\leq i \leq n}\max_{1\le j\le h}  \E x^4_{i,j} < c_1$ for some finite $c_1>0$ and there exists $\mathcal{D}_n >0$ such that one of the following two conditions holds:
\begin{align}\label{eqa12}
	\max_{1 \leq i \leq n} \E \exp( \max_{1\le j\le h}|x_{i,j}| / \mathcal{D}_n ) \leq 1,
	\end{align}	 
	or
	\begin{align}\label{eqa11}
	\max_{1 \leq i \leq n} \E g(\max_{1\le j\le h} |x_{i,j}| / \mathcal{D}_n ) \leq 1
	\end{align}
	for some  strictly increasing and convex function $g$ defined on $[0,\infty)$ satisfying $g(0)=0$.
\item[(A2)] Assume that there exist $M = M(n) >0$ and $\gamma = \gamma(n) \in (0,1)$ such that 
	\begin{align}
& n^{3/8} M^{-1/2} l_n^{-5/8} \geq C_2 \max \{\mathcal{D}_n l_n ,l_n^{1/2} \}  \textit{ under Condition (\ref{eqa12})}\\
	&n^{3/8} M^{-1/2} l_n^{-5/8} \geq C_1 \max \{\mathcal{D}_n g^{-1}(n/\gamma) ,l_n^{1/2} \}  \textit{ under Condition (\ref{eqa11})}
	\end{align}	
	for $C_1,C_2 > 0$, where $l_n = \log (hn/\gamma) \vee 1$. In both cases, suppose $n^{7/4} M^{-2} l_n^{-9/4} \geq C_3 >0$.
\item[(A3)] Assume that $\max_{1\le i\le h}|\sum_{k,l=1}^n\cov(x_{k,i},x_{l,i})/n|\le a_1$ 
	 and $$ \sum_{j = 0}^\infty j\max_{1\leq k \leq h}\theta_{j,k,3}<a_2$$ for some finite constants $a_1$ and $a_2$, where $\theta_{j,k,q}$ is the $j$-th physical dependence measure of the $k$-th coordinate process of $\{x_i\}$ with respect to the ${\cal L}^q$ norm.	
\end{itemize}


Recall that $\{x_i^{(M)}:=\E (x_i|e_i,e_{i-1},\cdots,e_{i-M})\}$ is the $M$-dependent approximation to $\{x_i\}$ and $\{ y_i^{(M)}\}$ is an $M$-dependent sequence of Gaussian random variables which preserves the covariance structure of $\{x_i^{(M)}\}$.	
\begin{lemma} \label{lemma:Bounds}
	Let $\{x_j\}_{j=1}^n$ be a $h$-dim time series satisfy Assumptions (A0)-(A3).  Then  
	$$
	\phi^{(M)}(M_x,M_y)\le C'(1/M_x + 1/M_y^2)
	$$ 
	and 
	$$
	\varphi^{(M)}(M_x,M_y)\le C''(1/M^{5/6}_x + \sqrt{N}/M_x^3 + 1/M_y^2 ) $$
	 for some finite constants $C'$ and $C''$, where we recall that
	 $\phi^{(M)}(M_x,M_y)$ and $\varphi^{(M)}(M_x,M_y)$ are the versions of $\phi(M_x,M_y)$ and $\varphi(M_x,M_y)$ defined based on $\{x_i^{(M)}\}$ and $\{y_i^{(M)}\}$. 
\end{lemma}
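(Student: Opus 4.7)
The plan is to bound $\phi^{(M)}$ and $\varphi^{(M)}$ separately, in each case controlling the contributions from the $x$-process and its Gaussian counterpart $\{y_i^{(M)}\}$. Three ingredients are in play throughout: (i) moment-based truncation estimates of the form $\|x^{(M)} - \chi^{(M)}\|_r^r \leq \E|x|^{r+s}/M_x^s$, supplied by the fourth-moment bound in (A1); (ii) the $M$-dependence of $\{x_i^{(M)}\}$, which is immediate from the definition $x_i^{(M)} = \E(x_i \mid e_i,\ldots,e_{i-M})$ and makes any two such vectors independent once their time indices differ by more than $M$; and (iii) the physical-dependence summability from (A3), which passes to $\{x_i^{(M)}\}$ via the conditional-expectation contraction and, because Huber truncation is $1$-Lipschitz, also to $\{x_i^{(M)} - \chi_i^{(M)}\}$.

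For $\phi^{(M)}(M_x,M_y)$, I would decompose algebraically
\[
\cov(x_{i,j}^{(M)}, x_{l,k}^{(M)}) - \cov(\chi_{i,j}^{(M)}, \chi_{l,k}^{(M)}) = \cov(x_{i,j}^{(M)} - \chi_{i,j}^{(M)},\, x_{l,k}^{(M)}) + \cov(\chi_{i,j}^{(M)},\, x_{l,k}^{(M)} - \chi_{l,k}^{(M)}),
\]
and then sum the resulting covariances over $l$ in the $M$-band about $i$. Expanding $x^{(M)}_{l,k}$ in martingale projections $P_s\,\cdot := \E[\cdot\mid\mathcal{F}_s] - \E[\cdot\mid\mathcal{F}_{s-1}]$ and using $\|P_s x^{(M)}_{l,k}\|_2 \leq \theta_{l-s,k,2}$, the standard projection argument collapses $\sum_l \cov(f,\,x^{(M)}_{l,k})$ into $O(\|f\|_2 \sum_{s\geq 0}(2s+1)\theta_{s,k,2}) = O(\|f\|_2)$ by (A3). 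Plugging in $f = x^{(M)}_{i,j} - \chi^{(M)}_{i,j}$ with $\|f\|_2 \leq C/M_x$ (Markov plus (A1)) produces the $O(1/M_x)$ piece; the symmetric second covariance is handled identically. Repeating the recipe on the Gaussian process $\{y^{(M)}\}$ uses sharper Gaussian-tail moment estimates in place of (A1) to supply the $O(1/M_y^2)$ term.

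For $\varphi^{(M)}(M_x,M_y)$, write $A^{(M)}_{ij} - \check A^{(M)}_{ij} = \sum_l (x^{(M)}_{l,j} - \chi^{(M)}_{l,j})$ and split
\[
\E\big(A^{(M)}_{ij} - \check A^{(M)}_{ij}\big)^2 = \var\big(A^{(M)}_{ij} - \check A^{(M)}_{ij}\big) + \big(\E(A^{(M)}_{ij} - \check A^{(M)}_{ij})\big)^2.
\]
For the variance, $M$-dependence annihilates all cross-covariances with $|l-l'|>M$, leaving $O(NM)$ nonzero terms; each is bounded via Cauchy--Schwarz by $\|x^{(M)} - \chi^{(M)}\|_2^2 \leq C/M_x^2$, giving $O(NM/M_x^2)$. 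For the mean, since $\E x^{(M)} = 0$, $|\E(x^{(M)} - \chi^{(M)})| = |\E \chi^{(M)}| \leq \E|x|\mathbb{I}(|x|>M_x) \leq C/M_x^3$, so the squared-mean contribution is $O(N^2/M_x^6)$. Dividing by $N$ yields $\varphi^2 \leq O(M/M_x^2 + N/M_x^6)$; in the regime $M \lesssim M_x^{1/3}$ imposed by (A2) when the lemma is invoked, the first piece converts to $O(1/M_x^{5/3})$, i.e.\ $\varphi = O(1/M_x^{5/6})$. The Gaussian analogue is parallel and yields the $1/M_y^2$ contribution.

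The main obstacle I anticipate is careful bookkeeping: verifying that the physical-dependence measures and moment bounds propagate cleanly through the conditional expectation (which forms $x^{(M)}$) and through the $1$-Lipschitz Huber truncation (which forms $\chi^{(M)}$), uniformly over the time index $i$ and coordinate indices $j,k$. The precise $1/M_x^{5/6}$ exponent also depends on a balancing of $M$ against $M_x$ tied to the ambient (A2)-type conditions of the host proposition, and making this balance explicit is likely the most delicate step.
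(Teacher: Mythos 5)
The paper's own proof of this lemma is a two-line citation to Steps~2 and~3 of the proof of Theorem~2.1 in \cite{zhang2018}, so you are in effect reconstructing that argument from scratch. Your treatment of $\phi^{(M)}$ and of the squared-mean contribution to $\varphi^{(M)}$ (giving $N^2/M_x^6$, hence the $\sqrt{N}/M_x^3$ term) is in the right spirit. The problem is the variance contribution to $\varphi^{(M)}(M_x)$.

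Bounding $\var\bigl(A^{(M)}_{ij}-\check A^{(M)}_{ij}\bigr)$ by crude $M$-dependence plus Cauchy--Schwarz gives $O(NM/M_x^2)$, hence a contribution $\sqrt{M}/M_x$ to $\varphi^{(M)}$. To turn this into the asserted $M_x^{-5/6}$ you invoke ``the regime $M\lesssim M_x^{1/3}$ imposed by (A2).'' But (A2) imposes no such relation: it constrains $M$ against $n$ and $l_n$ only, and $M_x$ is a free truncation level. Worse, in the paper's own application (proof of Theorem~\ref{thmA1} combined with the discussion after Assumption~\ref{assumption9}) one takes $M_x\asymp n^{3/8}M^{-1/2}l_n^{-5/8}$ and, for the DPPT, $M\asymp n^{1/4}\log^{-6}n$, so that $M_x\asymp n^{1/4}$ up to logarithms and $M_x^{1/3}\asymp n^{1/12}\ll M$. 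Your bound therefore only yields $\varphi^{(M)}(M_x)\lesssim \sqrt{M}/M_x\asymp n^{-1/8}$ (up to logs), which is strictly weaker than the claimed $M_x^{-5/6}\asymp n^{-5/24}$, and the lemma as stated (with no hypothesis linking $M$ and $M_x$) is not established. The fix is to avoid paying the factor $M$ at all: control $\var\bigl(\sum_l(x^{(M)}_{l,j}-\chi^{(M)}_{l,j})\bigr)$ through the physical dependence measures of the truncation-error process --- i.e.\ the same martingale-projection device you already use for $\phi^{(M)}$ --- using $\delta_{z,2}(k)\le\min\{2\theta_{k,j,2},\,C/M_x\}$ (Lipschitz truncation on one side, the fourth-moment tail bound on the other) and optimizing the split of $\sum_k\delta_{z,2}(k)$; this is how the fractional exponent $5/6$ arises in \cite{zhang2018}, with no factor of $M$ and no extra condition relating $M$ to $M_x$.
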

\begin{proof}
	The results follow from steps 2 and 3 in proof of Theorem 2.1 in \cite{zhang2018}.
	From  \cite[step 2 in proof of Theorem 2.1]{zhang2018}, we get $\phi^{(M)}(M_x) \le C'/M_x$ and $\varphi^{(M)}(M_x) \le C''( 1/M_x^{5/6} + \sqrt{N}/M_x^3)$.
	From \citep[step 3 in proof of Theorem 2.1]{zhang2018}, we get $ \phi^{(M)}(M_y) \le C'/M_y^2$ and $\varphi^{(M)}(M_y) \le C''/M_y^2 $ for $C',C'' >0$. 
\end{proof}

\begin{theorem}\label{thmA1}
	 Let $\{x_i\}_{i=1}^n$ be a time series satisfy Assumptions (A0)-(A3). Let $\{y_j \in \R^h\}$ be Gaussian random vectors with the same covariance structure as $\{x_j\}$. For any integer $M>0$ and positive real number $q\ge 1$, let $\Xi_M := \max_{1\leq k \leq h} \sum_{j = M}^\infty j \theta_{k,j,2}$ and $\Theta_{M,i,q} :=\sum_{j=M}^\infty \theta_{i,j,q}$, where $i=1,\ldots,h$. Assume that $q\ge 2$ and $\max_{1\le j\le h}\Theta_{0,j,q}<\infty$. Then
	\[
	\sup_{|x|  \gtrsim d_{n,h} }| \pr( T_X \leq x ) - \pr(T_Y \leq x) |\lesssim G(n,h), 
	\]
	where $T_X:= \max_{1\leq j \leq h} \sum_{i =1 }^n x_{ij}/\sqrt{n}$, $T_Y:= \max_{1\leq j \leq h} \sum_{i =1 }^n y_{ij}/\sqrt{n}$,
	\begin{align}\label{Definition:dnh}
	d_{n,h} := n^{-1/8} M^{1/2} l^{11/7}_n +l_n^{-\delta} + \Xi^{1/3}_M l_n^{\delta/3}  
	\end{align}
	and
	\begin{align}
	        G(n,h)  := &\, n^{-1/8} M^{1/2} l^{11/7}_n + \gamma \nonumber\\
	        &+ (n^{1/8} M^{-1/2} l^{-3/8}_n )^{q/(1+q)} \left( \sum_{j = 1}^h \Theta^q_{M,j,q} \right)^{1/(1+q)} + \Xi^{1/3}_M l_n^{\delta /3}.\label{Definition:Gnh}
	\end{align} 
\end{theorem}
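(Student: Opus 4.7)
The plan is to reduce to the $M$-dependent case handled by Proposition~\ref{prop:A1}, absorb the two approximation errors ($M$-dependent coupling and truncation) using the extended Nazarov inequality of Corollary~\ref{cor1}, and then optimize the free parameters $M_x$, $M_y$, $\beta$, $\psi$ appearing in Proposition~\ref{prop:A1}.

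First, I would introduce the $M$-dependent truncation $x_i^{(M)}:=\E(x_i\mid e_i,\ldots,e_{i-M})$ along with its Gaussian counterpart $\{y_i^{(M)}\}$ preserving covariances. Writing $R_X:=\max_{1\le j\le h}|\sum_{i=1}^n(x_{i,j}-x_{i,j}^{(M)})/\sqrt n|$ and $R_Y$ analogously, a Wu-type martingale decomposition $x_i-x_i^{(M)}=\sum_{k>M} P_{i-k}x_i$ combined with Doob's maximal inequality and a union bound yields $\E R_X^2\lesssim l_n\,\Xi_M$, and the same for $R_Y$. Setting $\eta:=\Xi_M^{1/3}l_n^{\delta/3}$, Chebyshev gives $\pr(R_X>\eta)\vee\pr(R_Y>\eta)\lesssim\Xi_M^{1/3}l_n^{\delta/3}$, which is the third summand of $G(n,h)$. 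On the complementary event one has the chain
\begin{equation*}
\pr(T_X\le x)\le \pr(T_{X^{(M)}}\le x+\eta)+\pr(R_X>\eta)\le\pr(T_{Y^{(M)}}\le x+\eta)+E_1+O(\eta),
\end{equation*}
where $E_1$ denotes the Proposition~\ref{prop:A1} error; passing from $T_{Y^{(M)}}$ back to $T_Y$ introduces another shift of size $\eta$, and the total shift $2\eta$ is absorbed by Corollary~\ref{cor1} applied to the centered Gaussian vector whose maximum defines $T_Y$. This is precisely why $d_{n,h}$ contains both an $l_n^{-\delta}$ piece (from the $2\sqrt 2\log^{-\delta}h$ in Corollary~\ref{cor1}) and an $\Xi_M^{1/3}l_n^{\delta/3}$ piece (the coupling shift). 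The reverse direction is symmetric.

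Second, the Proposition~\ref{prop:A1} error $E_1$ would be controlled by inserting the explicit bounds of Lemma~\ref{lemma:Bounds}, namely $\phi^{(M)}(M_x,M_y)\lesssim M_x^{-1}+M_y^{-2}$ and $\varphi^{(M)}(M_x,M_y)\lesssim M_x^{-5/6}+\sqrt N M_x^{-3}+M_y^{-2}$. The truncation level $M_x$ is dictated by Assumption (A1): $M_x\asymp\mathcal D_n l_n$ under the exponential tail condition \eqref{eqa12} and $M_x\asymp\mathcal D_n g^{-1}(n/\gamma)$ under the Orlicz condition \eqref{eqa11}, while $M_y\asymp l_n^{1/2}$ is optimal on the Gaussian side. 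Under the regime of Assumption (A2) with $\beta$ and $\psi$ balanced so that the five summands in Proposition~\ref{prop:A1} are of the same order, all contributions consolidate into $n^{-1/8}M^{1/2}l_n^{11/7}$; separately, the residual truncation tail $\pr(\max_{i,j}|x_{i,j}|>M_x)$ is retained as the term $(n^{1/8}M^{-1/2}l_n^{-3/8})^{q/(1+q)}(\sum_j\Theta_{M,j,q}^q)^{1/(1+q)}$ via a tail-integral estimate based on the $q$-th moment physical dependence measures. Adding the three contributions yields $G(n,h)$.

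The main obstacle will be book-keeping the range restriction. Without variance lower bounds, classical Nazarov is unavailable, so Corollary~\ref{cor1} must be threaded through every reduction (truncation, $M$-dependent coupling, and Gaussian replacement), and each reduction contributes an error-shift that must stay below $d_{n,h}$ for the anti-concentration estimate to apply. Verifying that all three summands of $d_{n,h}$—the Gaussian-approximation scale $n^{-1/8}M^{1/2}l_n^{11/7}$, the extended Nazarov constant $l_n^{-\delta}$, and the coupling shift $\Xi_M^{1/3}l_n^{\delta/3}$—are simultaneously compatible with the constraints of Assumption~(A2) and with the chosen truncation levels is where the calculation is most delicate; once this is done, the claimed bound follows by summing the three contributions.
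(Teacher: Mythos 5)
Your skeleton (reduce to the $M$-dependent pair, apply Proposition \ref{prop:A1}, then undo the two approximations and tune $\beta,\psi,M_x,M_y$) matches the paper, but the way you undo the approximations is not the paper's argument and contains a genuine gap. The paper handles $X$ versus $X^{(M)}$ \emph{inside the smooth test function} $m=g\circ F_\beta$, via inequality (16) of \cite{zhang2018}: $|\E[m(X)-m(Y^{(M)})]|\lesssim|\E[m(X^{(M)})-m(Y^{(M)})]|+(G_0G_1^q)^{1/(1+q)}\bigl(\sum_j\Theta^q_{M,j,q}\bigr)^{1/(1+q)}$. With $G_1\lesssim\psi$ this is exactly the source of the $(n^{1/8}M^{-1/2}l_n^{-3/8})^{q/(1+q)}\bigl(\sum_j\Theta^q_{M,j,q}\bigr)^{1/(1+q)}$ term in $G(n,h)$ --- you instead attribute that term to the truncation tail $\pr(\max_{i,j}|x_{ij}|>M_x)$, which in the paper is bounded by $\gamma$ (that is what the standalone $\gamma$ in $G(n,h)$ is) using Assumption (A1) and the choice $M_x=M_y=u\asymp n^{3/8}M^{-1/2}l_n^{-5/8}$, not $M_x\asymp\mathcal D_n l_n$ or $\mathcal D_n g^{-1}(n/\gamma)$ as you propose. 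Your replacement mechanism --- the coupling bound $\E R_X^2\lesssim l_n\Xi_M$ followed by Chebyshev --- is not justified: with only second moments, the martingale/Doob argument gives $\|\sum_i(x_{ij}-x_{ij}^{(M)})\|_2/\sqrt n\le\Theta_{M,j,2}$ coordinatewise, and a union bound over the $h$ coordinates then costs a factor of $h$, not $\log h$; moreover $\Xi_M=\max_k\sum_{j\ge M}j\,\theta_{k,j,2}$ carries an extra factor of $j$ that an $L^2$ bound on coordinate sums does not produce. Even granting the bound, your Chebyshev step with $\eta=\Xi_M^{1/3}l_n^{\delta/3}$ yields $\pr(R_X>\eta)\lesssim\Xi_M^{1/3}l_n^{1-2\delta/3}$, which is not the claimed $\Xi_M^{1/3}l_n^{\delta/3}$.

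The $Y$ versus $Y^{(M)}$ step is also different in kind: the paper does not couple the two Gaussian maxima but compares their \emph{distributions} via Theorem \ref{thm:B1}, using that the entrywise covariance discrepancy between $Y$ and $Y^{(M)}$ is $\lesssim\Xi_M$ (Step 5 of \cite{zhang2018}); this is precisely why $\Xi_M^{1/3}l_n^{\delta/3}$ appears in both $d_{n,h}$ and $G(n,h)$ --- it is the $\Delta^{1/3}\log^{\delta/3}(h)$ threshold of the comparison theorem, not a coupling shift absorbed by Corollary \ref{cor1}. To repair your proof you would need to (i) restore the smoothing-based bound for $|\E[m(X)-m(X^{(M)})]|$ in terms of $\sum_j\Theta^q_{M,j,q}$, (ii) replace the $Y$-side coupling by the Gaussian comparison with covariance discrepancy $\Xi_M$, and (iii) set $M_x=M_y=u\asymp n^{3/8}M^{-1/2}l_n^{-5/8}$ so that $\psi\,\varphi^{(M)}(M_x,M_y)\sqrt{\log(h/\gamma)}$ actually reduces to $n^{-1/8}M^{1/2}l_n^{7/8}$; with your smaller choice of $M_x$ the $\sqrt N/M_x^3$ term in $\varphi^{(M)}$ is not controlled at the stated rate.
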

\begin{remark}
Note that typically we assume that $\theta_{k,j,q}$ decays at a rate of $j^{-\alpha}$ for some $\alpha>2$ for each $k$. As a result both $\Xi_M$ and $\Theta_{M,i,q}$ decays at a polynomial rate with respect to $M$. In most cases one chooses $M$ that diverges at a small polynomial rate of $n$. Meanwhile $l_n$ typically diverges at the rate $O(\log n)$ according to its definition and the fact that the dimension $h$ diverges at a polynomial rate of $n$ under our assumptions. As a result it is easy to check that the right hand sides of \ref{Definition:dnh} and \ref{Definition:Gnh} converges to 0 in most applications. 
\end{remark}

\begin{proof}
    Follow the same $M$-dependent sequence construction in Lemma \ref{lemma:Bounds} and get $x_i^{(M)}$ and $y_i^{(M)}$.
	By construction, $x_{1,j}$ and $x_{1+l,k}^{(l-1)} $ are independent for any $1 \leq j , k \leq h$. 
Denote $X^{(M)}:=\sum_{i=1}^n x_i^{(M)}/\sqrt{n}$ and $Y^{(M)}:=\sum_{i=1}^n y_i^{(M)}/\sqrt{n}$.
	The triangular inequality and (16) in \cite{zhang2018} imply that
    \[
    |\E[m(X) - m(Y^{(M)})]| \lesssim |\E[m(X^{(M)}) -m(Y^{(M)})]| + (G_0 G_1^q)^{1/(1+q)}\left(  \sum_{j = 1}^h \Theta^q_{M,j,q} \right)\,,
    \]	
	where we recall the definitions of $G_0$ and $G_1$ from \eqref{definiteion:G_k}. Following the arguments in the proof of Proposition A.1 in \cite{zhang2018} and using Proposition \ref{prop:A1} {\em if} all conditions are satisfied, we have
	\begin{align}
	& \sup_{|x| \gtrsim d_{n,h} } |\pr(T_{X^{(M)}} \leq x ) -\pr(T_{Y^{(M)}} \leq x )| \nonumber \\  
	 \lesssim \,& (\psi^2 + \psi \beta)  \phi^{(M)}(M_x,M_y)+(\psi^3 + \psi^2 \beta + \psi \beta^2) \frac{(2M+1)^2}{\sqrt{n}}(\bar{m}^3_{x,3} + \bar{m}^3_{y,3} ) \nonumber \nonumber \\
	&  + G_1 \varphi^{(M)}(M_x,M_y) \sqrt{8\log(h/\gamma)} + G_0 \gamma + (\psi)^{1/(1+q)} \left( \sum_{j=1}^{h} \Theta^q_{M,j,q} \right)^{1/(1+q)}\nonumber\\
	&  +  (\beta^{-1}\log(h) + \psi^{-1})  (\log h)^{1+ \delta} + h^{-1} (\log h)^{-1/2}\,. \label{TheoremA1:key bound} 
	\end{align}
	By Assumption (A2) where $n^{7/4}M^{-1} l^{-9/4 } > C_3 M$, we have
	\begin{align} \label{eqn:Gaussian}
	& (\psi^2 + \psi \beta)  \phi^{(M)}(M_x,M_y) \lesssim n^{-1/8}M^{1/2} l_n^{-9/4} ,\\
	& (\psi^3 + \psi^2 \beta + \psi \beta^2) \frac{(2M+1)^2}{\sqrt{n}} \lesssim n^{-1/8}M^{1/2} l_n^{-9/4} ,\\
	& \psi \varphi(M_x,M_y)\sqrt{8\log(h/\gamma)} \lesssim n^{-1/8} M^{1/2} l_n^{7/8},\\
	& (\beta \log(h) + \psi^{-1}) (\log(h))^{1+\delta} \lesssim \frac{ l^{5/2}_n M u}{\sqrt{n}} \lesssim n^{-1/8} M^{1/2} l_n^{15/8} .\label{eqn:Gaussian2}
	\end{align}
	Finally, by Step 5 in the proof of Theorem 2.1 in \cite{zhang2018} and Theorem \ref{thm:B1}, it follows that \begin{equation}\label{eq:cond1} \sup_{|t| \gtrsim d_h' } \left| \pr(T_Y \leq t) - \pr(T_{Y^{(M)}} \leq t)\right| \lesssim \Xi^{1/3}_M (\log h)^{1+\delta} + \frac{1}{(\log h)^{1/2} h }, \end{equation} where $d_h' = \Xi^{1/3}_M \log(h)^{\delta/3}+  \log(h)^{-\delta}$.
	The  result follows from equations (\ref{eqn:Gaussian}) -(\ref{eq:cond1}).
	
The left is verifying the conditions in Proposition \ref{prop:A1} since \eqref{TheoremA1:key bound} is based on $x^{(M)}_i$ and $y^{(M)}_i$. 
Consider $g$ in Assumption (A1). We have
	\begin{align*}
	    & \pr\Big(\max_{1\leq i \leq n} \max_{1 \leq j \leq h } |x_{ij}^{(M)}| > u \Big) 
	     \leq \sum_{ i = 1 }^n \pr \left( g\left(  \max_{1\leq j  \leq h} |x_{ij}^{(M)}| /\mathcal{D}_n \right) > g(u/\mathcal{D}_n) \right)\\
	     \leq &\,n \max_{1\leq i \leq n}  \E g\left(  \max_{1\leq j  \leq h} |x_{ij}^{(M)}| /\mathcal{D}_n \right)  /g(u / \mathcal{D}_n) \leq n C_1 /g(u / \mathcal{D}_n) .
	\end{align*}
	The last inequality follows from Jensen's inequality and  Assumption (A1). By setting the above equation to $\gamma$, we get $u \leq \mathcal{D}_n g^{-1}(n/\gamma)$. Similarly, we have
	\begin{align*}
	    & \pr\left(\max_{1\leq i \leq n} \max_{1 \leq j \leq h } |y_{ij}^{(M)}| > u \right) \leq \sum_{ i = 1 }^n \sum_{ j = 1 }^h \pr \left( |y_{ij}^{(M)}| > u\right) \leq n h \exp( - u^2 / (2 \bar{\sigma}^2)),
	\end{align*}
	where $\bar{\sigma}^2 =\max_{1\leq i \leq n} \max_{1 \leq j \leq h } \E \left(y^{(M)}_{ij}\right)^2 < \infty$. Then, by setting the above equation to $\gamma$, we get $u \leq \sqrt{2 \bar{\sigma}^2} \sqrt{\log(nh}/\gamma) = C l_n^{1/2}$, where $C=\sqrt{2 \bar{\sigma}^2}$. Therefore, $u_x(\gamma) \lesssim \mathcal{D}_n g^{-1}( n/\gamma)$ and $u_y(\gamma) \lesssim l_n^{1/2}$. Then, by the assumption $n^{3/8} M^{-1/2} l_n^{-5/8} \geq C_1 \max\{ \mathcal{D}_n h^{-1}(n/\gamma) ,l_n^{1/2} \}$, we can choose $u \asymp n^{3/8} M^{-1/2} l_n^{-5/8}$, which leads to
	\begin{align} 
	 &\pr\left(\max_{1\leq i \leq n} \max_{1 \leq j \leq h } |x_{ij}^{(M)}| \leq u \right) \geq 1-\gamma,\nonumber \\
	  &\pr\left(\max_{1\leq i \leq n} \max_{1 \leq j \leq h } |y_{ij}^{(M)}| \leq u \right) \geq 1-\gamma.	\label{eq:con1}
	\end{align}
	Next we will quantify $\varphi^{(M)}(M_x,M_y)$ and $\phi^{(M)}(M_x,M_y)$. From Lemma \ref{lemma:Bounds}, we have $\phi^{(M)}(M_x,M_y) \le C'(1/M_x + 1/M_y^2)$, $\varphi^{(M)}(M_y) \le C''(1/M^{5/6}_x + \sqrt{N}/M_x^3 + 1/M_y^2) $ and  $\bar{m}^3_{x,3} + \bar{m}^3_{y,3} < \infty$. (Observe that $\E(|y_{ij}^{(M)}|^3)\le\E(|x_{ij}|^3)$ and $\max_{i,j}\E|x_{ij}|^3<\infty$ by Assumption (A1). Hence $\bar{m}^3_{x,3} + \bar{m}^3_{y,3} < \infty$.)  We can set 
	$$\psi \asymp n^{1/8} M^{-1/2} l^{-3/8}_n, \quad M_x = M_y  = u \asymp n^{3/8} M^{-1/2} l _n^{-5/8}.$$
	Let $\beta \asymp \sqrt{n}/(uM)$, which implies $2\sqrt{5}\beta(6M+1) M_{xy} / \sqrt{n}\asymp1$. 
	Thus, the conditions in Proposition \ref{prop:A1} are satisfied, and we have finished the proof.

\end{proof}

	\begin{corollary}\label{coro:abs}
	Let $x_i$  and $y_i$ be random vectors satisfying the conditions in Theorem \ref{thmA1}. Then, for $q\geq2$ and 
	\[ 
	T_x := \max_{1\leq j \leq h} \Big| \sum_{i =1 }^n x_{ij} \Big|/\sqrt{n} \quad \text{and} \quad T_y := \max_{1\leq j \leq h} \Big| \sum_{i =1 }^n y_{ij} \Big|/\sqrt{n}\,,
	\]
	we get the following Gaussian approximation bound
	\[\sup_{|x| \gtrsim d_{n,2h} }| \pr( T_x \leq x ) - \pr(T_y \leq x) |\lesssim G(n,2h),
	\]
	where $d_{n,2h}$ is defined in \eqref{Definition:dnh} and $G(n,2h)$ is defined in \eqref{Definition:Gnh}.
\end{corollary}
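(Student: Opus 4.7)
The plan is to reduce the absolute-value statistics $T_x$ and $T_y$ to the signed-maximum form of Theorem~\ref{thmA1} via the standard device of doubling coordinates. Concretely, I would define
\[
\tilde x_i := (x_i^\top, -x_i^\top)^\top \in \mathbb{R}^{2h}, \qquad \tilde y_i := (y_i^\top, -y_i^\top)^\top \in \mathbb{R}^{2h}.
\]
Using the identity $\max_{1 \le j \le h} |a_j| = \max_{1 \le j \le 2h} \tilde a_j$ with $\tilde a := (a^\top, -a^\top)^\top$, we immediately obtain $T_x = \max_{1 \le j \le 2h} \sum_{i=1}^n \tilde x_{ij}/\sqrt n$ and likewise $T_y = \max_{1 \le j \le 2h} \sum_{i=1}^n \tilde y_{ij}/\sqrt n$. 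Since $\{\tilde y_i\}$ is centered Gaussian with exactly the covariance of $\{\tilde x_i\}$, it plays the role of the Gaussian counterpart required by Theorem~\ref{thmA1}.

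The second step is to verify that the doubled array $\{\tilde x_i\}$ inherits Assumptions (A0)--(A3) with $h$ replaced by $2h$ and with the same constants, $M$, $\gamma$, and $\mathcal{D}_n$. Assumption (A0) is clear by taking $\tilde{\mathcal G}_{i,n}(\cdot) := (\mathcal G_{i,n}(\cdot)^\top, -\mathcal G_{i,n}(\cdot)^\top)^\top$. The crucial observation is that coordinate-wise sign-flipping leaves invariant every structural quantity appearing in (A1)--(A3) and in the definitions of $\Xi_M$, $\Theta_{M,j,q}$: maxima like $\max_j |x_{ij}|$ and the $\mathcal L^q$-type physical dependence measures are insensitive to sign flips of individual coordinates, and $\cov(-x_{k,j}, -x_{l,j}) = \cov(x_{k,j}, x_{l,j})$ preserves the variance-sum bound $a_1$ in (A3). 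For (A2), $l_n = \log(2hn/\gamma) \vee 1$ differs from the original $l_n$ only by an additive $\log 2$, so every rate condition carries over with unchanged asymptotic order.

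The last step is to invoke Theorem~\ref{thmA1} directly on $\{\tilde x_i\}$ and $\{\tilde y_i\}$; this yields
\[
\sup_{|x|>d_{n,2h}} \bigl|\pr(T_x \le x) - \pr(T_y \le x)\bigr| \lesssim G(n,2h),
\]
which is exactly the claim. The factor-of-two increase $\sum_{j=1}^{2h}\Theta_{M,j,q}^q = 2\sum_{j=1}^h \Theta_{M,j,q}^q$ is absorbed into the implied constant, and $d_{n,2h}$, $G(n,2h)$ are obtained from \eqref{Definition:dnh} and \eqref{Definition:Gnh} by the substitution $h \mapsto 2h$. I do not expect any real obstacle here: the only point requiring care is the systematic bookkeeping that every quantity in (A0)--(A3) and in the definitions of $\Xi_M$, $\Theta_{M,j,q}$, $d_{n,h}$, $G(n,h)$ is invariant (or at worst changes by a universal constant) under coordinate-wise negation, which is what makes the doubling trick work cleanly.
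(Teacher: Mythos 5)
Your proposal is correct and matches the paper's own proof, which uses exactly the same coordinate-doubling device $\tilde X = (X^\top,-X^\top)^\top$ to rewrite $\max_j|X_j|$ as $\max_{j\le 2h}\tilde X_j$ and then invokes Theorem \ref{thmA1}. Your additional bookkeeping that Assumptions (A0)--(A3) and the quantities $\Xi_M$, $\Theta_{M,j,q}$ are invariant under coordinate-wise negation is a correct elaboration of the step the paper leaves implicit.
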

\begin{proof}	
	Note that for $X \in \mathbb{R}^h$ and let $\tilde{X} := \begin{bmatrix}X\\ -X\end{bmatrix} \in \mathbb{R}^{2h} $ \begin{align*}
	\pr( \max_{1\leq j \leq h} |X_j| \leq x) &= \pr( |X_j| \leq x, \text{for all } j) \\
	&= \pr( X_j  \leq x, \text{for all } j \text{ and } -X_j \leq x, \text{ for all } j )\\
	&= 	\pr\left( \max_{1\leq j \leq 2h} \tilde{X}_j \leq x\right)\,.
	\end{align*}
	The result follows form Theorem \ref{thmA1}.
\end{proof}
The following proposition establishes a Gaussian approximation result for sums of high dimensional complex-valued non-stationary time series under the classic norm of complex numbers without assuming that the variances of the normalized sums have a positive lower bound.  Observe that the results of Theorem \ref{thmA1} are for Gaussian approximations on hypercubes of a high-dimensional Euclidean space. Suppose we take $|z|=\sqrt{\Re(z)^2+\Im(z)^2}$ to be the norm for a complex number $z$, the region $\{\max_{1\le k\le h}|z_k|\le x\}$ for an $h$-dimensional complex vector $\vec z=(z_1,z_2,\cdots,z_h)^\top$ is not a hypercube when $\vec z$ is viewed as a vector on ${\mathbb R}^{2h}$. Hence the results of Theorem \ref{thmA1} cannot be used directly here. To tackle this problem, we adopt the idea of simple convex set approximation used in \cite{chernozhukov2017central}. In particular, we shall approximate a circle on the plane by regular convex polygons from both inside and outside.   

\begin{proposition}\label{prop:complex_approximation}
Let $\{z_i=\Re(z_{i})+\sqrt{-1}\Im(z_{i})\}$ be a centered $h$-dimensional complex-valued time series, where $h>1$, such that the $2h$-dimensional real-valued time series $\{\tilde{z}_i=(\Re(z_{i})^\top,\,\Im(z_{i})^{\top})^\top\}$ satisfies the assumptions of Theorem \ref{thmA1}. Take the centered $h$-dimensional complex Gaussian time series $\{g_i\}$ that has the same covariance and pseudo-covariance structures as those of $\{z_i\}$. Define 
\[ 
\tilde T_z = \max_{1\leq j \leq h} \Big| \sum_{i =1 }^n z_{ij} \Big|/\sqrt{n} \quad \text{and} \quad \tilde T_g = \max_{1\leq j \leq h} \Big| \sum_{i =1 }^n g_{ij} \Big|/\sqrt{n}.
\]
Then, we have 
\begin{align}\label{eq:gbound}
&\sup_{x \gtrsim d^*_{n,h} }| \pr( \tilde{T}_z \leq x ) - \pr(\tilde T_g \leq x) |\\
\lesssim &\,G(n,2nh)+\frac{h}{n^{3q/2}}+n^{-1/2}\log^{1+\delta}(nh):=G^*(n,h)\,,\nonumber
\end{align}
where $d^*_{n,h}=d_{n,2nh}(1+\pi^2/(4n^2))$ and $d_{n,2nh}$ and $G(n,2nh)$ are defined in \eqref{Definition:dnh} and \eqref{Definition:Gnh}, respectively.
\end{proposition}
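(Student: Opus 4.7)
The plan is to sandwich the event $\{\tilde T_z \le x\}$ between events of the form $\{\max_{j,k}\langle Z_j,u_k\rangle \le r\}$, reducing the complex $\ell^\infty$ problem to a max-of-linear-functionals problem that fits Theorem \ref{thmA1}. Writing $Z_j := \sum_{i=1}^n z_{i,j}/\sqrt n$ and $G_j := \sum_{i=1}^n g_{i,j}/\sqrt n$, fix $N = 2n$, set $u_k := (\cos(\pi k/n),\sin(\pi k/n))$ for $k = 0,\ldots,2n-1$, and let $P_r := \{w \in \R^2 : \langle w,u_k\rangle \le r \text{ for all } k\}$ be the regular $2n$-gon with apothem $r$. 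Standard geometry of regular polygons yields $D_r \subset P_r \subset D_{r/\cos(\pi/(2n))}$ (where $D_r$ is the complex disk of radius $r$), hence the chain of inclusions $D_{x\cos(\pi/(2n))} \subset P_{x\cos(\pi/(2n))} \subset D_x \subset P_x \subset D_{x/\cos(\pi/(2n))}$. The multiplicative gap $1/\cos^2(\pi/(2n)) = 1 + \pi^2/(4n^2) + O(n^{-4})$ between the inner inscribed disk and the outer circumscribed disk of the two polygons exactly accounts for the factor in the definition of $d^*_{n,h}$.

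First, I would introduce the auxiliary $2nh$-dimensional real-valued time series $w_{i,(j,k)} := \Re(z_{i,j})\cos(\pi k/n) + \Im(z_{i,j})\sin(\pi k/n)$, whose partial sums $W_{(j,k)} := \sum_{i=1}^n w_{i,(j,k)}/\sqrt n$ equal $\langle Z_j,u_k\rangle$. Because every coordinate of $\{w_{i,(j,k)}\}$ is a linear combination of coordinates of $\{\tilde z_i\}$ with coefficients bounded by $1$ in absolute value, Assumptions (A0)--(A3) transfer from $\{\tilde z_i\}$ to $\{w_{i,(j,k)}\}$ up to constants, with $l_n$ now involving $\log(nh)$ in place of $\log h$. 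Moreover, matching the covariance \emph{and} pseudo-covariance of $\{z_i\}$ with those of $\{g_i\}$ is precisely what is needed to guarantee that the Gaussian counterpart of $\{w_{i,(j,k)}\}$ is the analogous real linear combination of $(\Re g_{i,j},\Im g_{i,j})$. Theorem \ref{thmA1} applied in effective dimension $2nh$ then yields
\begin{equation*}
\sup_{r > d_{n,2nh}}\bigl|\pr\bigl(\max_{j,k}\langle Z_j,u_k\rangle \le r\bigr) - \pr\bigl(\max_{j,k}\langle G_j,u_k\rangle \le r\bigr)\bigr| \lesssim G(n,2nh).
\end{equation*}

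Second, I would bound the Gaussian shell probability $\pr(\max_{j,k}\langle G_j,u_k\rangle \le x) - \pr(\max_{j,k}\langle G_j,u_k\rangle \le x\cos(\pi/(2n)))$, which is an anti-concentration statement with radial gap $x(1-\cos(\pi/(2n))) \asymp x/n^2$. Applying the extended Nazarov inequality (Lemma \ref{lemma1}, in the form of Corollary \ref{cor1}) to the $2nh$-dimensional Gaussian vector $(\langle G_j,u_k\rangle)_{j,k}$, and using that $x > d^*_{n,h}$ ensures the relevant thresholds lie outside the $2\sqrt 2\log^{-\delta}(2nh)$ neighborhood of zero required by Corollary \ref{cor1}, one obtains a shell bound of order $xn^{-2}\log^{1+\delta}(nh)$ plus a negligible Gaussian tail remainder. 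Combined with the polygonal sandwich at $r = x$ and $r = x\cos(\pi/(2n))$ and with the preceding Gaussian approximation, this controls $|\pr(\tilde T_z \le x) - \pr(\tilde T_g \le x)|$ by $G(n,2nh) + xn^{-2}\log^{1+\delta}(nh)$ on $x > d^*_{n,h}$. The factor $xn^{-2}\log^{1+\delta}(nh)$ is at most $n^{-1/2}\log^{1+\delta}(nh)$ when $x \lesssim \sqrt n$; beyond that range the left-hand side is bounded by $\pr(\tilde T_z > x) + \pr(\tilde T_g > x)$, and a Markov-type tail bound on $\E|Z_j|^q$ using Assumption~(A1) together with the decay of the physical dependence measures yields $h/n^{3q/2}$. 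Summing these contributions produces exactly $G^*(n,h) = G(n,2nh) + h/n^{3q/2} + n^{-1/2}\log^{1+\delta}(nh)$.

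The main obstacle is the anti-concentration step in the presence of unbalanced variances: for example if some $Z_j$ is nearly concentrated along a single direction, then $\langle G_j,u_k\rangle$ has vanishing variance for the nearly-orthogonal $u_k$, so classical Nazarov's inequality fails and the extended Lemma \ref{lemma1} must be invoked with the carefully tuned parameters $(b,c)$ from Corollary \ref{cor1}, using the restriction $x > d^*_{n,h}$ to absorb the Gaussian tail remainder. A secondary technical point is the bookkeeping required when transferring Assumptions (A0)--(A3) and the envelope constant $\mathcal D_n$ from the $2h$-dim series $\{\tilde z_i\}$ to the inflated $2nh$-dim series $\{w_{i,(j,k)}\}$, ensuring that the effective log-dimension $l_n$ grows only as $\log(nh)$; this is mechanical since $|\cos\theta|,|\sin\theta|\le 1$, but is precisely where the argument of $G$ inflates from $2h$ to $2nh$.
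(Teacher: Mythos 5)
Your argument follows the paper's proof essentially step for step: the regular-polygon sandwich of the disk by $2n$ half-planes, the real-valued Gaussian approximation (Theorem \ref{thmA1} through its absolute-value form, Corollary \ref{coro:abs}) in effective dimension $2nh$, the extended Nazarov inequality (Corollary \ref{cor1}) to control the shell between the inscribed and circumscribed polygons, and a Markov tail bound for very large $x$. The one place where your write-up would not deliver the stated bound is the regime split. You switch from the anti-concentration argument to the tail bound at $x \asymp \sqrt n$, but Markov applied at $x\asymp\sqrt n$ only gives $\pr(\max_j|Z_j|>x)\lesssim h/x^{q}\asymp h/n^{q/2}$, which is larger than the claimed $h/n^{3q/2}$ and is not in general absorbed by the other two terms. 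Note that $xn^{-2}\log^{1+\delta}(nh)\le n^{-1/2}\log^{1+\delta}(nh)$ actually holds for all $x\le n^{3/2}$, not just $x\lesssim\sqrt n$; so the polygon/Nazarov argument should be run over the whole range $d^*_{n,h}<x<n^{3/2}$, with the Markov bound invoked only for $x\ge n^{3/2}$, where it yields exactly $h/n^{3q/2}$. This is precisely the paper's splitting. A secondary, immaterial point: the margin in $d^*_{n,h}$ is used via the inequality $1/\cos(y)\le 1+y^{2}$ with $y=\pi/(2n)$, which guarantees $x\cos(\pi/(2n))>d_{n,2nh}$ whenever $x>d^*_{n,h}$; your attribution of the factor to $1/\cos^{2}$ does not affect this inclusion, so nothing breaks there.
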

\begin{proof}
We treat $\{\tilde z_i\}$ as a centered $2h$ dimensional non-stationary time series. Observe that, by the assumption that $\max_{1\le j\le 2h}\Theta_{0,j,q}<\infty$ for $\{\tilde z_i\}$, we have that 
$$
\Big\|\sum_{i=1}^n \Re (z_{ik})\Big\|_q\le C\ \ \mbox{and} \ \ \Big\|\sum_{i=1}^n \Im (z_{ik})\Big\|_q\le C
$$
uniformly for all $k=1,2,\cdots,h$ for some finite constant $C$. Therefore, 
$$
\Big\|\max_{1\le k\le h}\Big|\sum_{i=1}^n z_{ik}\Big|\Big\|_q\le Ch^{1/q}
$$
by a simple maximum inequality. 
Hence, if $x\ge n^{3/2}$,  by Markov's inequality, we have
\begin{eqnarray*}
\pr\left(\max_{1\le k\le h}\Big|\sum_{i=1}^n z_{ik}\Big|> x\right)\lesssim \frac{h}{n^{3q/2}}.
\end{eqnarray*}
Similarly, $\pr(\max_{1\le k\le h}|\sum_{i=1}^n g_{ik}|> x)\lesssim \frac{h}{n^{3q/2}}.$
Therefore, we obtain that
\begin{eqnarray}\label{eq:91}
\left|\pr\left(\max_{1\le k\le h}\Big|\sum_{i=1}^n z_{ik}\Big|\le x\right)-\pr\left(\max_{1\le k\le h}\Big|\sum_{i=1}^n g_{ik}\Big|\le x\right)\right|\lesssim \frac{h}{n^{3q/2}}
\end{eqnarray}
if $x\ge n^{3/2}$. Now if $d^*_{n,h}\le x<n^{3/2}$, we apply the {\em regular polygon approximation} technique. Define
$$
z_{ijl}:=\Re (z_{ij})\cos(\pi l/n)+\Im (z_{ij})\sin(\pi l/n), \quad j=1,2,\cdots, h, l=0,1,\cdots, n-1.
$$
Observe that $\cap_{l=1}^{n-1}\{|z_{ijl}|\le x\cos(\pi/(2n))\}$ is a subset of $\{|z_{ij}|\le x\}$, and $\{|z_{ij}|\le x\}$ is a subset of $\cap_{l=1}^{n-1}\{|z_{ijl}|\le x\}$. Therefore,
\begin{align}\label{eq:92}
&\pr\Big(\max_{1\le j\le h,0\le l\le n-1}|z_{ijl}|
\le x\cos(\pi/(2n))\Big)\le \pr(\tilde T_z\le x)\\
\le \,& \pr\Big(\max_{1\le j\le h,0\le l\le n-1}|z_{ijl}|\le x\Big). \nonumber
\end{align}
The same result holds with $z$ in \eqref{eq:92} replaced by $g$, where $g_{ijl}$ are defined analogously. By the inequality that $1/\cos(x)\le 1+x^2$ for $x\in[0,\pi/4]$, we have that $x\cos(\pi/(2n))\ge d_{n,2nh}$ for $n\ge 2$. Hence, by Corollary \ref{coro:abs},
\begin{align}\label{eq:93}
&\sup_{x\gtrsim d^*_{n,h}}\left|\pr\Big(\max_{1\le j\le h,0\le l\le n-1}|z_{ijl}|\le x\cos(\pi/(2n))\Big)-\pr\Big(\max_{1\le j\le h,0\le l\le n-1}|g_{ijl}|\le x\cos(\pi/(2n))\Big)\right|\nonumber\\
\lesssim &\,G(n,2nh)\,,
\end{align}
and similarly,
\begin{align}\label{eq:94}
&\sup_{x\gtrsim d^*_{n,h}}\left|\pr\Big(\max_{1\le j\le h,0\le l\le n-1}|z_{ijl}|\le x\Big)-\pr\Big(\max_{1\le j\le h,0\le l\le n-1}|g_{ijl}|\le x\Big)\right|\\
\lesssim \,&G(n,2nh).\nonumber
\end{align}
Note that $|x-x\cos(\pi/(2n))|\lesssim xn^{-2}\lesssim n^{-1/2}$ since $x\in [d^*_{n,h},n^{3/2}]$. Therefore by Corollary \ref{cor1}, we have that
\begin{align}
&\sup_{n^{3/2}>x\gtrsim d^*_{n,h}}\left|\pr\Big(\max_{1\le j\le h,0\le l\le n-1}|g_{ijl}|\le x\Big)-\pr\Big(\max_{1\le j\le h,0\le l\le n-1}|g_{ijl}|\le x\cos(\pi/2n)\Big)\right|\nonumber\\
\lesssim\,& n^{-1/2}\log^{1+\delta}(nh)+\frac{1}{nh\log^{1/2}(nh)}.\label{eq:95}
\end{align}
The proposition follows by \eqref{eq:91} to \eqref{eq:95} as we notice that $\frac{1}{nh\log^{1/2}(nh)}$ is dominated by $G(n,2nh)$.
\end{proof}

\section{Gaussian comparison without variance lower bounds}\label{sec:B}
We will be modifying the comparison of distribution theorem in \cite{chernozhukov2015} by dropping the assumption of lower variance bound in the anti-concentration inequality. We will also extend such results to complex-valued high-dimensional Gaussian vectors. The results established in this section are crucial for the theoretical investigation of the multiplier bootstrap proposed in this paper.  
	\begin{theorem}\label{thm:B1}
		(Comparison of distribution without variance lower bound). Let $X =$ $(X_1,\dots,X_h)^\top$ and $Y = (Y_1,\dots,Y_h)^\top$ be centered Gaussian random vectors in $\mathbb{R}^h$. Let $\Sigma_X=(\sigma^X_{i,j})_{i,j=1}^h$ with $\sigma^X_{i,j}=\cov(X_i,X_j)$, and define $\Sigma_Y$ and $\sigma^Y_{i,j}$ similarly. Suppose that $h\geq 2$ and there exists a finite constant $c$ such that $c>\sigma^Y_{jj} > 0$ for all $1\leq j \leq h$. Define
		\[
		\Delta = \max_{i,j \leq h} \big|\sigma^X_{i,j}-\sigma^Y_{i,j}\big| \,.
		\]
		Then,
		\begin{align*}
		&\sup_{|x| > d_h } \left| \pr\Big( \max_{1 \leq j \leq h } X_j \leq x \Big) - \pr\Big(\max_{1 \leq j \leq h } Y_j \leq x\Big)\right| \\
		 =&\,O\left(  \Delta^{1/3}  \log(h)^{1 + 4\delta/3} + \frac{1}{(\log h)^{1/2}h }\right), 
		\end{align*}
		where  $d_h =  2\Delta^{1/3}\log(h)^{\delta/3}+  2\log(h)^{-\delta}$. 
	\end{theorem}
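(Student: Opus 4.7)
The plan is to follow the Stein/Slepian smoothing framework that already underlies Proposition~\ref{prop:A1}, but with the Gaussian-to-Gaussian comparison replacing the $M$-dependent approximation step, and to control the subsequent distribution-conversion step with the extended Nazarov-type bound Lemma~\ref{lemma1} / Corollary~\ref{cor1} (which is the reason for the restricted supremum region $|x|>d_h$).

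First, introduce the smoothed max $m(y)=g(F_\beta(y))$, where $F_\beta(y)=\beta^{-1}\log\sum_{j=1}^h e^{\beta y_j}$ is the log-sum-exp softmax satisfying $\max_j y_j\le F_\beta(y)\le \max_j y_j+\beta^{-1}\log h$, and $g$ is the smooth step function from the proof of Proposition~\ref{prop:A1} with parameter $\psi$, so that $\mathbb{I}(s\le t)\le g(s)\le \mathbb{I}(s\le t+\psi^{-1})$ and $G_k=\sup|g^{(k)}|\lesssim \psi^k$. A direct computation gives $\sum_{i,j}|\partial_i\partial_j m(y)|\lesssim G_2+G_1\beta\lesssim \psi^2+\psi\beta$, since the weights $\partial_i F_\beta$ form a probability vector (hence $(\sum_i|\partial_iF_\beta|)^2=1$) and $\sum_{i,j}|\partial_i\partial_j F_\beta|\lesssim\beta$.

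Next, apply the classical Slepian interpolation $Z(t)=\sqrt{t}\,X+\sqrt{1-t}\,Y$ together with Gaussian integration by parts. Since $Z(t)$ is Gaussian and $X,Y$ are independent copies on the path, one obtains
\[
\bigl|\E m(X)-\E m(Y)\bigr|\le \tfrac{1}{2}\Delta\int_0^1\!\sum_{i,j}\bigl|\E\partial_i\partial_j m(Z(t))\bigr|\,dt\lesssim \Delta(\psi^2+\psi\beta).
\]
Translating this to a CDF comparison via the sandwich $\mathbb{I}(\max\le t)\le g\circ F_\beta\le \mathbb{I}(\max\le t+e_\beta+\psi^{-1})$ with $e_\beta=\beta^{-1}\log h$ yields
\[
\pr\bigl(\max_j X_j\le x\bigr)-\pr\bigl(\max_j Y_j\le x+e_\beta+\psi^{-1}\bigr)\lesssim \Delta(\psi^2+\psi\beta),
\]
and the same with roles of $X,Y$ reversed.

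The final step is to absorb the shift $e_\beta+\psi^{-1}$ using anti-concentration of $\max_j Y_j$. Here the absence of a lower variance bound on $Y$ forces us to use Corollary~\ref{cor1} rather than the usual Nazarov inequality: for $|x|>2\sqrt{2}\log(h)^{-\delta}+(e_\beta+\psi^{-1})$,
\[
\bigl|\pr(\max_j Y_j\le x+e_\beta+\psi^{-1})-\pr(\max_j Y_j\le x)\bigr|\lesssim (e_\beta+\psi^{-1})\log(h)^{1+\delta}+\tfrac{1}{h(\log h)^{1/2}}.
\]
Balancing $\Delta(\psi^2+\psi\beta)$ against $(e_\beta+\psi^{-1})\log(h)^{1+\delta}$ by setting $\psi^{-1}$ and $e_\beta$ of order $u\asymp \Delta^{1/3}\log(h)^{\delta/3}$ produces the claimed rate $\Delta^{1/3}\log(h)^{1+4\delta/3}$, and the exclusion region becomes exactly $d_h=u+2\sqrt{2}\log(h)^{-\delta}$.

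The main obstacle is precisely the anti-concentration step: in the presence of coordinates whose variance can be arbitrarily small, the standard Nazarov bound blows up, so the smoothing-to-CDF conversion fails near $x=0$. Lemma~\ref{lemma1} handles this by separating small-variance coordinates (whose tail probability outside a window of size $c=2\sqrt{2}\log(h)^{-\delta}$ is killed by a Gaussian tail bound) from moderate-variance coordinates (to which ordinary Nazarov applies). This dichotomy is what produces both the $\log(h)^{-\delta}$ summand in $d_h$ and the residual $h^{-1}(\log h)^{-1/2}$ term in the bound, and constitutes the only genuinely new ingredient relative to the standard Chernozhukov--Chetverikov--Kato comparison theorem.
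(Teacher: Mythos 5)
Your proposal is correct and follows essentially the same route as the paper: smooth the maximum via $g\circ F_\beta$, compare the smoothed expectations with a bound of order $\Delta(\psi^2+\psi\beta)$, convert back to CDFs, absorb the shift $e_\beta+\psi^{-1}$ with the extended Nazarov bound (Lemma \ref{lemma1}/Corollary \ref{cor1}), and balance $\psi^{-1}=e_\beta\asymp\Delta^{1/3}\log(h)^{\delta/3}$. The only cosmetic difference is that the paper invokes Theorem 1 of \cite{chernozhukov2015} for the smoothed Gaussian-to-Gaussian comparison, whereas you unpack that step explicitly via Slepian interpolation and the Hessian bound $\sum_{i,j}|\partial_i\partial_j m|\lesssim\psi^2+\psi\beta$.
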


\begin{proof}
	Let
	$$
	g_0(x) = \begin{cases}
	\ 0, & x \geq 1,\\
	\ 30 \int_{x}^1 s^2 (1-s)^2 ds , & 0 < x < 1,\\
	\ 1, & x \leq 0.
	\end{cases}
	$$
	and pick $g(s) = g_0 (\psi (s - t -e_\beta ))$ with $e_\beta = \beta^{-1} \log h$, where $\psi,\beta>0$ and $t$ will be picked later. 
	Denote 
		$G_k :=\sup_{x\in\mathbb{R}}\partial^k g(x)/\partial x^k$, 
		$k=1,2,\cdots$. We have
	 $G_0 \lesssim 1$, $G_1 \lesssim \psi$, $G_2 \lesssim \psi^2$ and $G_3 \lesssim \psi^3$. Moreover, the function also satisfies 
	$$
	\mathbb{I}(x \leq t + e_\beta) \leq g(x) \leq \mathbb{I}(x\leq t + e_\beta + \psi^{-1}),\,\, \forall x \in \mathbb{R}.
	$$
	
	Now assume $|t| >  e_{\beta} + \psi^{-1} + \log(h)^{-\delta} $,
	\begin{align*}
	P(\max_{1\leq j \leq h } X_j \leq t ) &\leq \pr( F_\beta (X) \leq t + e_\beta  )\\ &\leq \E [g (F_\beta(X)) ] \\
	& \leq \E[g(F_\beta(Y))] + C_0(\psi^2 +\beta \psi) \Delta \quad\\
	& \leq \pr (F_\beta(Y) \leq t + e_\beta + \psi^{-1}) + C_0(\psi^2 +\beta \psi) \Delta \quad\\
	& \leq \pr \left(\max_{1 \leq j \leq h} Y_j \leq t +e_\beta + \psi^{-1} \right) + C_0(\psi^2 +\beta \psi) \Delta\,, 
	\end{align*}
	for some absolute constant $C_0>0$, where we utilized Theorem 1 of \cite{chernozhukov2015} in the third inequality above. Then, by Lemma \ref{lemma1} and assume $|x| >  e_{\beta} + \psi^{-1}  + 2(\log h)^{-\delta} $,
	\begin{align*}
	&\pr \left(\max_{1 \leq j \leq h} Y_j \leq t +e_\beta + \psi^{-1} \right) - \pr \left(\max_{1 \leq j \leq h} Y_j \leq t \right) \\
	\leq &\,(e_\beta + \psi^{-1}) (\log{h})^{1+\delta} + h^{-1} (\log h)^{-1/2} .
	\end{align*}
	Thus, we conclude that when $|t| >  e_{\beta} + \psi^{-1}  + 2(\log h)^{-\delta} $,
	\begin{align*}
	&\pr \left(\max_{1\leq j \leq h } X_j \leq t \right) - \pr\left(\max_{1 \leq j \leq h} Y_j \leq t\right) \\
	&\qquad\qquad \leq C_0(\psi^2 +\beta \psi) \Delta  +  (e_\beta + \psi^{-1}) (\log{h})^{1+\delta} + h^{-1} (\log h)^{-1/2}.
	\end{align*}
	The opposite direction can be proven similarly by noting that
	$$
	\pr\left( \max_{1 \leq j \leq h } X_j \leq t\right) \geq \pr\left( \max_{1 \leq j \leq h }Y_j \leq t - e_\beta - \psi^{-1} \right) - C(\psi^2 +\beta \psi) \Delta .
	$$
	and
	\begin{align*}
	&\pr\left( \max_{1 \leq j \leq h }Y_j \leq t - e_\beta - \psi^{-1} \right) - \pr\left( \max_{1 \leq j \leq h } Y_j \leq t\right)\\
	  \geq&\,  -  (e_\beta + \psi^{-1}) (\log{h})^{1+\delta} - h^{-1} (\log h)^{-1/2}
	\end{align*}
	
	
  Finally, pick $\beta =\psi \log(h)$ and $\psi^{-1} = \Delta^{1/3} \log(h)^{\delta/3}$ (note that $e_\beta = \beta^{-1} \log(h) = \psi^{-1}).$ Then, for some constant $C>0$, the following inequality holds 
  \begin{align}
  C_0(\psi^2 +\beta \psi) \Delta  +  (e_\beta + \psi^{-1}) (\log{h})^{1+\delta} &\, = C_0 \psi^2 (1 + \log(h) ) \Delta + 2\psi^{-1} \log(h)^{1+\delta} \nonumber\\
  &\,\leq C \Delta^{1/3} \log(h)^{1+4\delta/3}.
  \end{align}
	
	The lower bound can be derived similar to the previous steps.
\end{proof}
The following proposition establishes a comparison result for complex-valued Gaussian random vectors.
\begin{proposition}\label{prop:complex_comparison}
Let $Z=(z_1,\cdots,z_h)^\top$ and $W=(w_1,\cdots,w_h)^\top$ be centered complex-valued Gaussian random vectors in $\mathbb{C}^h$. Write $z_{i}=\Re (z_{i})+\sqrt{-1}\Im(z_{i})$ and $w_{i}=\Re(w_{i})+\sqrt{-1}\Im(w_{i})$, $i=1,2,\cdots, h$. Let 
\[
\tilde Z=(\Re(z_{1}),\Im(z_{1}),\cdots,\Re(z_{h}),\Im(z_{h}))^\top\in\mathbb{R}^{2h},
\] 
\[
\tilde W=(\Re(w_{1}),\Im(w_{1}),\cdots,\Re(w_{h}),\Im(w_{h}))^\top\in\mathbb{R}^{2h}
\] 
and 
\[
\tilde\Delta=\max_{1\le i,j\le 2h}|\cov(\tilde Z)_{i,j}-\cov(\tilde W)_{i,j}|. 
\]
Suppose that $c\ge  \max(\var(\Re(z_{i})),\var(\Re(w_{i})),\var(\Im(z_{i})),\var(\Im(w_{i})))>0$, $i=1,2,\cdots, h$, for some positive and finite constant $c$. Then we have
\begin{align*}
		\sup_{x \gtrsim d^*_h } &\left| \pr\Big( \max_{1 \leq j \leq h } |z_j| \leq x \Big) - \pr\Big(\max_{1 \leq j \leq h } |w_j| \leq x\Big)\right|  \lesssim \tilde{\Delta}^{1/3}\log^{1+2\delta} (h^2)+h^{-1}\log^{1+\delta}(h^2) , 
		\end{align*}
where $d^*_h=C(\tilde\Delta^{1/3}\log(h^2)^{\delta/3}+  \log(h^2)^{-\delta})(1+\pi^2/(4h^2))$ for some absolute constant $C>0$.		
\end{proposition}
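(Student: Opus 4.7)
The plan is to mirror the regular polygon approximation argument used in Proposition \ref{prop:complex_approximation} in order to reduce the complex Gaussian comparison to a real Gaussian comparison on hyperrectangles, to which Theorem \ref{thm:B1} applies. Concretely, I would introduce the real linear projections
\[
u_{j,l} := z_{j,1}\cos(\pi l/h) + z_{j,2}\sin(\pi l/h), \qquad v_{j,l} := w_{j,1}\cos(\pi l/h) + w_{j,2}\sin(\pi l/h),
\]
for $j=1,\ldots,h$ and $l=0,1,\ldots,h-1$. Inscribing and circumscribing a regular $2h$-gon in the complex disk of radius $x$ yields the standard geometric sandwich
\[
\bigcap_{j,l}\{|u_{j,l}|\le x\cos(\pi/(2h))\} \;\subset\; \{\max_{1\le j\le h}|z_j|\le x\} \;\subset\; \bigcap_{j,l}\{|u_{j,l}|\le x\},
\]
and analogously for $W$ with $v_{j,l}$. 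Since $|u|\le r$ iff $\pm u\le r$, the doubling trick used in Corollary \ref{coro:abs} converts each side of the sandwich into a statement about the maximum of a $2h^2$-dimensional real Gaussian vector over a scalar threshold.

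Next I would apply Theorem \ref{thm:B1} to the pair $\{\pm u_{j,l}\}$ versus $\{\pm v_{j,l}\}$ in dimension $2h^2$. Since the projection coefficients have modulus at most $1$, the entrywise covariance gap between these two vectors is bounded by a universal multiple of $\tilde\Delta$, and an upper variance bound $4c$ is inherited from the hypothesis on $Z$ and $W$. Theorem \ref{thm:B1} then yields, for $|x|$ above its threshold,
\[
\Bigl|\pr(\max_{j,l}|u_{j,l}|\le x) - \pr(\max_{j,l}|v_{j,l}|\le x)\Bigr| \lesssim \tilde\Delta^{1/3}\log^{1+4\delta/3}(h^2) + h^{-2}\log^{-1/2}(h^2),
\]
and the same with $x$ replaced by $x\cos(\pi/(2h))$. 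To close the gap between inscribed and circumscribed polygons I would invoke Corollary \ref{cor1} on the Gaussian vector $\{\pm v_{j,l}\}$: the shift $x-x\cos(\pi/(2h))\le x\pi^2/(8h^2)$ is tiny, so the anti-concentration contribution is $O(xh^{-2}\log^{1+\delta}(h^2))$. A crude union-plus-Gaussian tail bound using only the upper variance bound confines the relevant $x$ to $O(\sqrt{\log h})$ (beyond that both probabilities are polynomially negligible), making this correction term dominated by the main error.

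The three estimates combine via the sandwich to yield the claimed comparison bound, and the multiplicative factor $(1+\pi^2/(4h^2))$ in $d^*_h$ arises purely from translating the lower threshold through $x\mapsto x\cos(\pi/(2h))$ via $1/\cos(\pi/(2h))\le 1+\pi^2/(4h^2)$: both Theorem \ref{thm:B1} and Corollary \ref{cor1} require the evaluation point to exceed a $\log^{-\delta}$-type threshold, so $x\cos(\pi/(2h))$ must be forced above the base threshold, which inflates the admissible region for $x$ by exactly that factor. The main technical obstacle is this simultaneous matching of the validity ranges of Theorem \ref{thm:B1} and Corollary \ref{cor1} in the absence of a lower variance bound; once it is verified that the projected vectors inherit the upper variance bound and that the covariance gap is controlled by $\tilde\Delta$, everything else is routine bookkeeping of constants.
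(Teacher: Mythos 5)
Your proposal is correct and follows essentially the same route as the paper's proof: the regular $2h$-gon sandwich via the projections $z_{j,1}\cos(\pi l/h)+z_{j,2}\sin(\pi l/h)$, the doubling trick to reduce to a $2h^2$-dimensional real comparison handled by Theorem \ref{thm:B1}, and Corollary \ref{cor1} to absorb the inscribed-versus-circumscribed shift, with the factor $1+\pi^2/(4h^2)$ in $d^*_h$ arising exactly as you say. The only (immaterial) difference is the truncation of the range of $x$ — the paper cuts off at $x\ge h$ with an Orlicz-norm tail bound, whereas you cut off at $x\gtrsim\sqrt{\log h}$ with a Gaussian tail bound; both make the anti-concentration correction negligible relative to the stated error.
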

\begin{proof}
The proof of this Proposition is similar to that of Proposition \ref{prop:complex_approximation}. We shall only outline the proofs here.    First of all, if $x\ge h$, then we have $\|\Re(z_{i})\|_{\Psi_2}\le C$, $\|\Im(z_{i})\|_{\Psi_2}\le C$, $\|\Re(w_{i})\|_{\Psi_2}\le C$ and $\|\Im(w_{i})\|_{\Psi_2}\le C$ for some finite constant $C$, where $\|\cdot\|_{\Psi_2}:=\inf\{c>0 : \E\Psi_2(|\cdot|/c)\le 1\}$ with $\Psi_2(x)=e^{x^2}-1$ is the Orcliz norm. A simple maximum inequality of the Orcliz norm yields that 
\begin{eqnarray}\label{eq:p31}
\pr\left(\max_{1\le k\le h}|z_{k}|> x\right)\lesssim h\exp(-h^2).
\end{eqnarray}
Similarly, 
\[
\pr\left(\max_{1\le k\le h}|w_{k}|> x\right)\lesssim h\exp(-h^2)\,.
\]
If $d^*_h<x<h$, then by the same regular-polygon-approximation technique used in Proposition \ref{prop:complex_approximation}, define
$\bar z_{i,l}=\Re(z_{i})\cos(\pi l/h)+\Im(z_{i})\sin(\pi l/h)$, where $i,l=1,2,\cdots,h$. Define $\bar w_{i,l}$ similarly. Then we have
\begin{align}\label{eq:p32}
&\pr\left(\max_{1\le i, l\le h}|\bar z_{i,l}|\le x\cos(\pi/(2h))\right)\\
\le \,&\pr\left(\max_{1\le i\le h}|z_i|\le x\right)\le \pr\left(\max_{1\le i,l\le h}|\bar z_{i,l}|\le x\right)\,. \nonumber
\end{align}
The same inequality holds with $z$ in \eqref{eq:p32} replaced by $w$. Now Theorem \ref{thm:B1} implies that
\begin{align}
&\sup_{x>d^*_h}\left|\pr\left(\max_{1\le i, l\le h}|\bar z_{i,l}|\le x\cos(\pi/(2h))\right)-\pr\left(\max_{1\le i, l\le h}|\bar w_{i,l}|\le x\cos(\pi/(2h))\right)\right| \nonumber\\
\lesssim \,& A(\tilde\Delta,h),\label{eq:p33}
\end{align}
where $A(\tilde\Delta,h)=\tilde{\Delta}^{1/3}\log^{1+2\delta} (2h^2)+(\log 2h^2)^{-1/2}(2h^2)^{-1}$. Similarly,
\begin{eqnarray}\label{eq:p34}
\sup_{x>d^*_h}\left|\pr\left(\max_{1\le i, l\le h}|\bar z_{i,l}|\le x\right)-\pr\left(\max_{1\le i, l\le h}|\bar w_{i,l}|\le x\right)\right|\lesssim A(\tilde\Delta,h).
\end{eqnarray}
Note that $|x-x\cos(\pi/(2h))|\lesssim |x|h^{-2}\lesssim h^{-1}$ if $|x|\le h$. Therefore by Corollary \ref{cor1}, we have that
\begin{align}\label{eq:p35}
\sup_{h>x>d^*_{h}}&\left|\pr\left(\max_{1\le i,l\le h}|z_{i,l}|\le x\right)-\pr\left(\max_{1\le i,l\le h}|z_{i,l}|\le x\cos(\pi/2h)\right)\right|\cr
&\,\qquad\qquad\lesssim h^{-1}\log^{1+\delta}(h^2)+\frac{1}{h^2\log^{1/2}(h^2)}.
\end{align}
Hence the proposition follows by noting that $h^{-1}\log^{1+\delta}(h^2)$ dominates both $\frac{1}{h^2\log^{1/2}(h^2)}$ and $h\exp(-h^2)$.
\end{proof}


\section{Proof of the theoretical results of the paper}\label{Section:AppenC}
In the sequel, we shall omit the subscript $n$ in $X_{i,n}$ and $\epsilon_{i,n}$ in the time series model \eqref{eq:model} to simplify notation. Meanwhile, the symbol $C$ denotes a generic positive and finite constant which may vary from place to place. Recall Assumptions \ref{assumption1}-\ref{assumption9}. We have an immediate comment. 
\begin{remark}
For Assumption \ref{assumption1}, observe that under the assumption that 
\[
\max_{1\le i\le n}\E(g(|X_i|))\le c\mbox{ or }\max_{1\le i\le n}\E(\exp(|X_i|))\le c
\] 
for some finite constant $c$, $\mathcal{D}_n$ can be chosen as a finite constant that does not depend on $n$.
\end{remark}

We would summarize how to implement the Gaussian approximation results in Section \ref{sec:A} to the DPPT.  Let $X_i$ be a centered univariate $PLS(r)$ noise, $i=1,2,\cdots,n$. Define 
	\begin{align*}
	\tilde{x} := [  \tilde{x}_1 , ...,\tilde{x}_n]  = 
	\left( \begin{array}{ccccc}
	X_1 & X_2 & \cdots & X_{n-1} & X_n \\
	X_1 & X_2 & \cdots & X_{n-1} & 0\\ 
	\vdots& \vdots  & \ddots    &       \vdots      & \vdots \\ 
	X_1 &0      & \cdots &   0       & 0\\ 
	\end{array} \right).
	\end{align*}
	Now, set $e_j = [ \exp(\sqrt{-1}  j w_1), \dots , \exp(\sqrt{-1} j w_p)]^\top\in \mathbb{C}^p$ for $j\in \mathbb{N}$, where $w_i \in W$. Then we define the vectorized data as
	\begin{align}\label{eq:defx}
	x_i := \tilde{x}_i \otimes e_i =\begin{pmatrix}\tilde{x}_i  e_{i,1}\\\vdots\\\tilde{x}_i  e_{i,p}\end{pmatrix}\in \mathbb{C}^{np}\,,
	\end{align}
	where $i=1,\ldots,n$.
Here, $x_i$ are complex random vectors satisfying 
\begin{equation}
\max_{1\leq j \leq pn} \Big| \sum_{i =1 }^n x_{ij} \Big|/\sqrt{n} = F(W)\,,
\end{equation}
where
\[
F(W)=\max_{l=1,\ldots,p}\max_{k=1,\ldots,n}\Big|\sum_{i=1}^kX_ie^{\sqrt{-1}i\omega_l}\Big|/\sqrt{n}\,.
\]

Thus, we can write $F(W)$ in a form matching Proposition \ref{prop:complex_approximation}. If $X_i$ satisfies Assumptions 1-3, it implies that $\{x_{i}\}$ satisfies conditions of Proposition \ref{prop:complex_approximation}. By Proposition \ref{prop:complex_approximation}, we get that, for $(np)$-dim centered complex Gaussian random vectors $\{y_{i}\}$ having the same covariance and pseudo-covariance structures of $\{x_i\}$, \begin{align}\label{eq2} 
\sup_{|x| > d^*_{n,pn} } \left| \pr( F(W) \leq x ) - \pr \left(  \max_{1\leq j \leq pn} \Big| \sum_{i =1 }^n y_{ij} \Big|/\sqrt{n}   \leq x \right) \right| &\lesssim G^*(n,pn).
\end{align}
We now briefly discuss when the magnitudes of $d^*_{n,pn}$ as well as the right hand side of \eqref{eq2} will converge to 0. Under Assumption \ref{assumption3}, we have that $\max_{1\le i\le n}|X_i|^4<\infty$. Hence we can choose $q=4$, $g(x)=x^4$ and let $\mathcal{D}_n$ be a constant.   
Observe that Assumption \ref{assumption3} implies that 
\[ 
\Theta_{M,j,q} \le \sum_{k = M}^{\infty} C(k+1)^{-d} \le C\int_{M}^\infty y^{-d} dy \le CM^{-d+1} 
\] 
uniformly in $j$. Choose $M=n^{1/4}\log^{-6}(n)$ and $\gamma=\log^{-1}(n)$. We have that Assumption \ref{assumption2} is satisfied for the above choices of $M$ and $\gamma$. Note that for our test $p=O(n^{3/2}\log(n))$. We obtain that the right hand side of \eqref{eq2} is of the order $O(\log^{-1}(n))$ which converges to 0. Meanwhile, simple calculations yield that $d^*_{n,pn}=O(\log^{-10/7}(n)+\log^{-\delta}(n))$, which goes to 0. As shown in Lemma \ref{lem6} below, the critical values of the DPPT under the null hypothesis of no oscillation is of the order $O(\sqrt{\log n})$ which is asymptotically larger than $d^*_{n,pn}$. 
Hence, under Assumptions \ref{assumption1}-\ref{assumption3}, the Gaussian approximation result established in Proposition \ref{prop:complex_approximation} is sufficient for the DPPT.  

Observe that a faster convergence of the Gaussian approximation error can be obtained under stronger moment and dependence assumptions. For instance, in the best scenario where \eqref{eq12} holds and the dependence of $\{X_i\}$ is of exponential decay, one can choose $M=O(\log n)$ and $\gamma=O(1/n)$. In this case the convergence rate of the Gaussian approximation is of the order $O(n^{-1/8}\log^{29/14}(n))$.

	\subsection{First Stage proof}

	\subsubsection{Consistency}

 Denote by $W$ the set of candidate frequencies and denote $p = |W|$. 
 Define a random vector $\Theta(W)$ by
 \[
 \Theta(W) := [\Theta(w_1)^\top,\ldots, \Theta(w_p)^\top]^\top/\sqrt{n} \in \mathbb{R}^{2np}
 \] 
 and the vector $\Theta(w) \in \R^{2n} $ is defined coordinate-wise 
 \[ 
 \Theta_k(w) :=  \begin{cases} \sum_{j= 1}^k\cos(  w j ) X_j \  \text{    for }\  k \leq n  \\ 
 \sum_{j= 1}^{k -n} \sin(  w j ) X_j \  \text{  for } \ n < k \leq 2n \,. \end{cases} 
 \]
Note we can write
\[
F(W) = \max_{1 \leq i \leq n,1\le j\le p } |\Theta_i(\omega_j)+\sqrt{-1}\Theta_{i+n}(\omega_j)|/\sqrt{n}\,.
\] 
Moreover, if we put 
\[
\bar\Theta_k(W):=[\sum_{l=1}^ke^{\sqrt{-1} \omega_1l}X_l,\ldots,\sum_{l=1}^ke^{\sqrt{-1} \omega_pl}X_l]^\top\in \mathbb{C}^p, 
\]
we have
\[
F(W) = \max_{1 \leq k \leq n} \|\bar\Theta_k(W)\|_\infty/\sqrt{n}\,.
\]

Recall that for the fixed integer bandwidth $m$, we defined 
\[S_{j,m}(w) = \sum_{i = j}^{j+m -1 } \sin(i w) X_i\] and \[C_{j,m}(w) = \sum_{i = j}^{j+m -1 } \cos( i w) X_i.\]

	For an i.i.d. standard Gaussian random variables $G_i$, define $S(w) \in \R^{2n} $ by
	\[
	S_k(w) := \begin{cases}
	S_{1,m}(w)G_1 &\text{ if } k\leq m\\
	\sum_{j= 1}^{k - m+1} S_{j,m}(w) G_j &\text{ if }  m < k\leq n \\
	C_{1,m}(w)G_1 &\text{ if } n <  k\leq n+m\\
	\sum_{j= 1}^{k - n - m+1} C_{j,m}(w) G_j &\text{ if }  n+m < k\leq 2n, \\
	\end{cases}
	\]
	where $S_k(w)$ denotes the $k$-th coordinate of $S(w)$. Define a $n$-dim complex vector $SZ(w)$ such that $SZ_k(w)=\sqrt{-1}S_k(w)+S_{k+n}(w)$, $k=1,2,\cdots,n$. Let 	
	\begin{align}\label{Definition SW in the proof of Gaussian comparison}
	S(W) := [S^\top(w_1),...,S^\top(w_p)]^\top/\sqrt{m(n-m)}\in \mathbb{R}^{2np}.
	\end{align}
	and 
	\begin{align}\label{Definition SZW in the proof of Gaussian comparison}
	SZ(W) := [SZ^\top(w_1),...,SZ^\top(w_p)]^\top/\sqrt{m(n-m)}\in \mathbb{C}^{np}.
	\end{align}
	Then we have
	\begin{align*}
	\tilde{F}(W)&\, = \max_{1\le j\le p}\max_{1\le i\le n} |\sqrt{-1}S_i(w_j)+S_{i+n}(w_j)|/\sqrt{m(n-m)}\\
	&\,=\max_{1\le i\le n} \|SZ_i(W)\|_\infty/\sqrt{m(n-m)}.
	\end{align*}

\begin{lemma}\label{lemma3}
	Assume the same conditions and notation as in Theorem \ref{thm:B1} hold for $h$-dim random vectors $X$ and $Y$. Also, assume that $ \max_{1\leq j \leq h } X_j = \max_{1\leq j \leq h } \epsilon_j + O(a_h)  $ and  $ \max_{1\leq j \leq h } Y_j = \max_{1\leq j \leq h } \xi_j + O(b_h) $ for another pair of $h$-dim Gaussian random vectors $\epsilon$ and $\xi$ and two sequences $\{a_h\}$ and $\{b_h\}$ that converge to $0$ when $h\to \infty$. Then 
	\begin{align*}  
	& \sup_{|x| \gtrsim d_h +a_h  + b_h} \left| \pr\left( \max_{1 \leq j \leq h } X_j \leq x \right) - \pr\left(\max_{1 \leq j \leq h } Y_j \leq x\right)\right|\\ 
 	=&  \sup_{|x| \gtrsim d_h + a_h  + b_h} \left| \pr\left( \max_{1 \leq j \leq h } \epsilon_j \leq x \right) - \pr\left(\max_{1 \leq j \leq h } \xi_j \leq x\right)\right| \\ 
	& \qquad + O(a_h +b_h )\log(h)^{1+\delta} +  \frac{1}{2 \sqrt{2\pi}(\log h)^{1/2} h} . 
	\end{align*}
\end{lemma}

\begin{proof}
	We have
	\begin{align*}
	&\sup_{|x| \gtrsim d_h + a_h  +b_h} \left| \pr\left( \max_{1 \leq j \leq h } X_j \leq x \right) - \pr\left(\max_{1 \leq j \leq h } Y_j \leq x\right)\right|\\
	=&\,\sup_{|x| \gtrsim d_h + a_h  +b_h} \left| \pr\left( \max_{1 \leq j \leq h } \epsilon_j \leq x+O(a_h) \right) - \pr\left(\max_{1 \leq j \leq h } \xi_j \leq x+O(b_h)\right)\right|
	\end{align*}
	and hence by the triangular inequality, 
	\begin{align*}
	&\left|\sup_{|x| \gtrsim d_h + a_h +b_h} \left| \pr\left( \max_{1 \leq j \leq h } \epsilon_j \leq x+O(a_h) \right) - \pr\left(\max_{1 \leq j \leq h } \xi_j \leq x+O(b_h)\right)\right|-\right.\\
	&\qquad
	\left.\sup_{|x| \gtrsim d_h + a_h  + b_h} \left| \pr\left( \max_{1 \leq j \leq h } \epsilon_j \leq x\right) - \pr\left(\max_{1 \leq j \leq h } \xi_j \leq x\right)\right|\right|\\
	\leq&\,\sup_{|x| \gtrsim d_h + a_h  + b_h} \left| \pr\left( \max_{1 \leq j \leq h } \epsilon_j \leq x+O(a_h) \right) - \pr\left(\max_{1 \leq j \leq h } \epsilon_j \leq x\right)\right|\\
	&+\sup_{|x| \gtrsim d_h + a_h  + b_h} \left| \pr\left( \max_{1 \leq j \leq h } \xi_j \leq x+O(b_h) \right) - \pr\left(\max_{1 \leq j \leq h } \xi_j \leq x\right)\right|\,.
	\end{align*}
	Then the result follows directly from Corollary \ref{cor1}. 
\end{proof}

\begin{lemma}\label{lemma4}
	Let $\{x_i \in \R^h\} $ be a centered PLS time series. Suppose that 
	$$
	\max_{1\le j\le h,\,1\le i\le n}\|x_{i,j}\|_{q}<\infty
	$$ 
	for some $q\ge 2$, where $x_{i,j}$ is the $j$-th component of $x_i$. Further assume that $\max_{1\le j\le h}\delta_{j,q}(k)=O((k+1)^{-d})$ for some $d>1$, where $\delta_{j,q}(k)$ is the physical dependence measure of the $j$-th component process of $\{x_i\}$. For $1\leq m<n$, define $S_{k,m} = \sum_{i = k}^{k+m -1} x_i $ and 
	\[ (\Lambda_{r,s})_{k,l} = \frac{1}{m(n-m)}\sum_{j=r}^s (S_{j,m})_k (S_{j,m})_l, 
	\]
	where $1\leq r\leq s\leq n-m$.
    Then for $ 1 \leq k, l \leq h$,
	\[ 
	\max_{1\leq r\leq s\leq n-m+1}\left\|  |(\Lambda_{r,s})_{kl} - \E (\Lambda_{r,s})_{kl} |  \right\|_{q'} = O(\sqrt{m/n}),
	\] 
	where $q'=q/2$.
\end{lemma}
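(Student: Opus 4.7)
The plan is to view $Y_j := (S_{j,m})_k(S_{j,m})_l$ as a process indexed by $j$, derive a physical dependence measure for $\{Y_j\}$ from the hypothesis on $\{x_i\}$, and then invoke a Rosenthal/Wu-type moment inequality to bound $\bigl\|\sum_{j=r}^s(Y_j-\E Y_j)\bigr\|_{q'}$ uniformly in $r,s$.

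First I would record the pointwise block bound $\|(S_{j,m})_k\|_q \le C\sqrt{m}$, which follows from the standard Rosenthal-type inequality for sums of weakly dependent variables together with $\sum_\ell \delta_{k,q}(\ell)<\infty$ (available since $d>1$). Next, let $Y_j^{(i_0)}$ denote the coupled version obtained by replacing the innovation $e_{i_0}$ with an i.i.d.\ copy. The product-difference identity $AB-A'B' = A(B-B')+(A-A')B'$ together with H\"older's inequality at exponents $q$ and $q$ (so the product lies in $L^{q/2}=L^{q'}$) gives
\begin{align*}
\|Y_j-Y_j^{(i_0)}\|_{q'} \;\lesssim\; \sqrt{m}\,\bigl(\|(S_{j,m})_k-(S_{j,m})_k^{(i_0)}\|_q + \|(S_{j,m})_l-(S_{j,m})_l^{(i_0)}\|_q\bigr).
\end{align*}
Each increment norm $\|(S_{j,m})_k-(S_{j,m})_k^{(i_0)}\|_q$ is controlled by the truncated sum $\sum_{i=\max(j,i_0)}^{j+m-1}\delta_{k,q}(i-i_0)$, which is $O(1)$ when $i_0\in[j,j+m-1]$ and of order $O((j-i_0)^{-d+1})$ when $i_0<j$, by the polynomial decay of $\delta_{k,q}$.

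Translating to the physical dependence measure of $\{Y_j\}$ at lag $\ell$ (taken relative to the current time $j+m-1$), this produces $\delta_{Y,q'}(\ell) = O(\sqrt{m})$ for $0\le\ell<m$ and $\delta_{Y,q'}(\ell) = O(\sqrt{m}\,(\ell-m+1)^{-d+1})$ for $\ell\ge m$. Summing over $\ell$ yields $\sum_{\ell\ge 0}\delta_{Y,q'}(\ell) = O(m^{3/2})$, where the tail converges under the working assumption of the paper (Assumption \ref{assumption3} has $d\ge 5>2$, so the tail integrability is comfortable). For $q'\ge 2$, the Wu-type moment inequality for weakly dependent non-stationary partial sums (via martingale decomposition) then gives
\begin{align*}
\Big\|\sum_{j=r}^s(Y_j-\E Y_j)\Big\|_{q'} \;\le\; C\sqrt{s-r+1}\cdot\sum_{\ell\ge 0}\delta_{Y,q'}(\ell) \;\le\; C\sqrt{n}\cdot m^{3/2};
\end{align*}
for $q'\in[1,2)$ (i.e.\ $q\in[2,4)$) the monotonicity $\|\cdot\|_{q'}\le\|\cdot\|_2$ combined with the $L^2$ analog of the inequality delivers the same rate. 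Dividing by the normalizer $m(n-m)\asymp mn$ yields $\|(\Lambda_{r,s})_{kl}-\E(\Lambda_{r,s})_{kl}\|_{q'}\lesssim\sqrt{m/n}$, and since this estimate depends on $r,s$ only through $s-r+1\le n-m$, the maximum over $r,s$ is automatically of the same order.

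The main obstacle is the careful propagation of the $L^q$ physical dependence measures of $\{x_i\}$ through the product structure of $Y_j$. One has to match H\"older exponents precisely to extract an $L^{q'}$ bound on $\{Y_j\}$ from the $L^q$ control of $\{x_i\}$, and must separate the two lag regimes (with $i_0$ inside the block $[j,j+m-1]$ versus strictly before it): otherwise one loses the correct scaling. The short-range contribution $O(m\cdot\sqrt{m})=O(m^{3/2})$ dictates the overall order, while the polynomial tail from $\ell\ge m$ contributes only $O(\sqrt{m})$ and does not degrade the $\sqrt{m/n}$ rate.
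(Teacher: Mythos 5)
Your argument matches the paper's: the paper likewise bounds $\| (S_{i,m})_l (S_{i,m})_k - (S^{(t)}_{i,m})_l (S^{(t)}_{i,m})_k \|_{q'}$ via the (symmetrized) product-difference identity and H\"older at exponents $(q,q)$, and then defers all remaining bookkeeping --- the split between coupling times inside versus before the block, and the $\sqrt{N}\sum_\ell\delta(\ell)$-type moment inequality for the partial sums of the product process --- to the proof of Lemma 1 of \cite{zhou2013}, which is precisely what you carry out explicitly. Your proposal is correct and essentially the same route, just with the cited steps written out.
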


\begin{proof}
Observe that $S_{j,m}$ can be written in a physical representation $S_{j,m}=R_{j,n}(\FF_{j+m})$, where $R_{j,n}$ is some filter function. For any $t\le i+m$, let $S^{(t)}_{j,m}=R_{j,n}(\FF^{(t)}_{j+m})$, where $\FF^{(t)}_{j+m}$ is a coupled version of $\FF_{j+m}$ such that the innovation $e_t$ is replaced by an i.i.d. copy. Let $(S^{(t)}_{j,m})_l$ be the $l$-th coordinate of the vector $S^{(t)}_{j,m}$. Note that \begin{align*}
	&\| (S_{i,m})_l (S_{i,m})_k -  (S^{(t)}_{i,m})_l (S^{(t)}_{i,m})_k \|_{q'} \\
	 =\,& \| ((S_{i,m})_l -  (S^{(t)}_{i,m})_l)( (S_{i,m})_k + (S^{(t)}_{i,m})_k ) \\ &
	+ ((S_{i,m})_l +  (S^{(t)}_{i,m})_l)( (S_{i,m})_k - (S^{(t)}_{i,m})_k )  \|_{q'} /2 \\
	\leq \, & ( \| (S_{i,m})_k\|_{q} + \|(S^{(t)}_{i,m})_k \|_{q}  ) \|(S_{i,m})_l -  (S^{(t)}_{i,m})_l \|_{q}/2 \\
	& +  ( \| (S_{i,m})_l\|_{q} + \|(S^{(t)}_{i,m})_l \|_{q}  ) \|(S_{i,m})_k -  (S^{(t)}_{i,m})_k \|_{q}/2.
	\end{align*}
	Then, the result follows by the proof of \cite[Lemma 1]{zhou2013}. 
\end{proof}

\begin{lemma} \label{lemma5}
    Let $\{\epsilon_i\}$ be a centered PLS($r$) time series satisfying Assumptions \ref{assumption1} - \ref{assumption3}.     Then for a bounded sequence $|a_i| \leq C$, where $C >0$, we have 
    \[ 
    \max_{1 \leq k \leq n} \Big| \sum_{i = 1}^k a_i \epsilon_i\Big| = O_p(\sqrt{n \log(n) }). 
    \]
\end{lemma}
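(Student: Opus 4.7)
The plan is to bound the $L^q$ norm of the maximum by $C\sqrt{n}$ for $q\ge 4$ using the martingale-projection decomposition of \cite{}, and then conclude the stochastic order via Markov's inequality. Note $O_\pr(\sqrt n)$ is in fact stronger than the claimed $O_\pr(\sqrt{n\log n})$, so any extra logarithmic slack in the argument is harmless.

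The first step exploits the fact that every $\epsilon_i$, regardless of which of the $r+1$ PLS pieces it lies in, is measurable with respect to the \emph{common} filtration $\FF_i=(\ldots,e_{i-1},e_i)$. Let $P_j(\cdot):=\E(\cdot|\FF_j)-\E(\cdot|\FF_{j-1})$. Then $\epsilon_i=\sum_{k\ge 0}P_{i-k}\epsilon_i$ in $L^q$, and the standard coupling bound (writing $\epsilon_i=\mathcal G_j(t_i,\FF_i)$ on piece $j$ and comparing to the version using $\FF_{i,i-k}$) gives $\|P_{i-k}\epsilon_i\|_q\le \delta_q(k)$ uniformly in $i$ by the very definition of the PLS dependence measure in Assumption \ref{assumption3}.

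The second step fixes $k\ge 0$ and observes that $\{a_iP_{i-k}\epsilon_i\}_i$ are martingale differences with respect to $\{\FF_{i-k}\}_i$. Writing $M_{j,k}:=\sum_{i=1}^j a_iP_{i-k}\epsilon_i$, Doob's $L^q$ maximal inequality followed by Burkholder's inequality (both valid since $q\ge 4\ge 2$) and Minkowski's inequality in $L^{q/2}$ yield
\[
\left\|\max_{1\le j\le n}|M_{j,k}|\right\|_q \lesssim \left\|\Big(\sum_{i=1}^n a_i^2(P_{i-k}\epsilon_i)^2\Big)^{1/2}\right\|_q \lesssim \Big(\sum_{i=1}^n a_i^2\|P_{i-k}\epsilon_i\|_q^2\Big)^{1/2}\lesssim \sqrt{n}\,\delta_q(k),
\]
where the last step uses $|a_i|\le C$ and the projection bound above.

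The third step aggregates over $k$. Minkowski's inequality gives
\[
\left\|\max_{1\le j\le n}\Big|\sum_{i=1}^j a_i\epsilon_i\Big|\right\|_q \le \sum_{k=0}^\infty\left\|\max_{1\le j\le n}|M_{j,k}|\right\|_q \lesssim \sqrt n\sum_{k=0}^\infty \delta_q(k),
\]
and Assumption \ref{assumption3} ($\delta_q(k)=O((k+1)^{-d})$ with $d\ge 5>1$) makes the last sum finite. Markov's inequality then delivers $\max_{1\le k\le n}|\sum_{i=1}^k a_i\epsilon_i|=O_\pr(\sqrt n)=O_\pr(\sqrt{n\log n})$.

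The only real subtlety is justifying the use of the single-filtration martingale decomposition in the piecewise (non-stationary) setting, which is immediate here because all pieces share the filtration $\FF_i$ and Assumption \ref{assumption3} supplies a uniform dependence decay across pieces; the piecewise structure therefore does not enter the bound at all, and no separate treatment of the break points $s_1,\ldots,s_r$ is needed.
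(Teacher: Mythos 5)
Your proof is correct, and it takes a genuinely different route from the paper's. The paper first invokes its high-dimensional Gaussian approximation result (Theorem \ref{thmA1}) to replace $\{\epsilon_i\}$ by a Gaussian sequence $\{y_i\}$ with the same covariance structure, then bounds $\max_{1\le k\le n}|\sum_{i\le k}a_iy_i|$ by treating the $n$ partial sums as $n$ Gaussian variables with a common variance bound $O(n)$ (obtained from the $|i-j|^{-d}$ decay of $\cov(y_i,y_j)$) and applying an Orlicz-norm maximal inequality; this union-bound-style step is where the $\sqrt{\log n}$ factor enters. Your argument stays entirely in the time domain: the projection decomposition $\epsilon_i=\sum_{k\ge0}P_{i-k}\epsilon_i$, the uniform coupling bound $\|P_{i-k}\epsilon_i\|_q\le\delta_q(k)$ (which, as you correctly note, holds across the PLS pieces because all pieces share the filtration $\FF_i$ and Assumption \ref{assumption3} defines $\delta_q(k)$ as a maximum over pieces), Doob plus Burkholder for each lag-$k$ martingale, and Minkowski over $k$ using $\sum_k\delta_q(k)<\infty$. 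This is more elementary and self-contained — it requires only the moment and dependence conditions of Assumption \ref{assumption3}, not Assumptions \ref{assumption1} and \ref{assumption2} or any Gaussian approximation machinery — and it yields the sharper bound $O_\pr(\sqrt n)$, because Doob's inequality correctly exploits the random-walk correlation among the partial sums that the paper's pointwise Gaussian maximal inequality ignores. The stated $O_\pr(\sqrt{n\log n})$ then follows a fortiori, as you observe.
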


\begin{proof}

	Note that if $X_i \sim N(0,\sigma_i^2)$ and $\sigma_i \leq \sigma$, then $\E[\max_{1\leq i\leq n} |X_i| ]   \leq \sigma \sqrt{ \log(2n)} $ by a simple Orcliz norm maximum inequality. Let $\tilde{y}_i$, $i=1,2,\cdots, n$ be a sequence of centered Gaussian random variables which preserve the covariance structure of $\epsilon_i$, $i=1,2,\cdots, n$. We have
	\[ 
	\max_{1 \leq k \leq n}\Big|   \sum_{i = 1}^{k} a_i \tilde{y}_i \Big|  =O_p\left( \sqrt{ \log(2n)}  \max_{1 \leq i \leq n}  \sqrt{ \var\left( \sum_{j = 1}^{i} a_j \tilde{y}_j  \right) }  \right) \,. 
	\]
	Note that we also have
	\begin{align*}
	\var\left( \sum_{i = 1}^{n} a_i \tilde{y}_i  \right) &\leq 4 \sum_{i = 1}^n \var(\tilde{y}_i) + 4\sum_{i \neq j} Cov(\tilde{y}_i,\tilde{y}_j) \\
	&\leq 4n \sigma^2+ 4C\sum_{k = 1}^{n-1} (n - k) k^{-d}   = O(n), 
	\end{align*}
	where we utilized Lemma 6 of \cite{zhou2014} that guarantees $Cov(\tilde{y}_i,\tilde{y}_j)\le C|i-j|^{-d}$ under Assumption 3. 
	It implies that 
	\[
	\max_{1 \leq k \leq n} \Big|\sum_{i = 1}^k a_i \tilde{y}_i\Big| = O_p(\sqrt{n  \log(n)} ) \,. 
	\]
	
	Combining with the Gaussian approximation results, we get 
	\[
	\max_{1 \leq k \leq n} \Big|\sum_{i = 1}^k a_i \epsilon_i\Big| = O_p(\sqrt{n  \log(n)} )\, . 
	\]
	Indeed, by Gaussian approximation results in Theorem \ref{thmA1}, for $y_i$ normally distributed with the same covariance structure as $\epsilon_i$, we have  
	\[ 
	\sup_{t > d_n } \left| \pr\left(  \max_{1 \leq k \leq n} \Big| \sum_{i = 1}^k a_i \epsilon_i/\sqrt{n}\Big| > t  \right) - \pr\left( \max_{1 \leq k \leq n} \Big| \sum_{i = 1}^k a_i \tilde{y}_i/\sqrt{n} \Big| > t \right) \right| \rightarrow 0, 
	\] 
	where $d_n \rightarrow 0$. 
\end{proof}

The next lemma controls the smooth trend of the first stage statistics under the null; that is, $\mu_i = f(i/n)$.

\begin{lemma}
\label{lemma:smooth}
    Suppose that function $f$ is twice differentiable on $[0,1]$ with Lipschitz continuous second derivatives. Then, for $w\in  [cn^{-\theta},\pi]$ for some constant $c>0$ and $\theta\in[0,1/4)$, we have 
    \[  
    \max_{1\le j\le n} \left| \sum_{i = 1 }^{j} f(i/n)  \exp(\im i w  ) \right| = O(n^{1/4+\theta})\,. 
    \]
\end{lemma}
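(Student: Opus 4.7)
The plan is to prove the bound via Abel summation (summation by parts), treating $f(i/n)$ as a smoothly varying weight applied to the pure oscillation $e^{iw\sqrt{-1}}$. The key observation is that the partial sums of the pure oscillation admit an explicit geometric formula, so all the ``oscillatory cancellation'' comes for free, and smoothness of $f$ is used only to control the variation of the weight.

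First I would define the Dirichlet-type partial sums
\[
D_k(w) := \sum_{l=1}^{k} e^{lw\sqrt{-1}} = e^{w\sqrt{-1}} \cdot \frac{1 - e^{kw\sqrt{-1}}}{1 - e^{w\sqrt{-1}}},
\]
and observe the elementary bound $|D_k(w)| \le 1/|\sin(w/2)|$. For $w \in [cn^{-\theta}, \pi]$, the inequality $\sin(w/2) \ge w/\pi$ on $[0,\pi/2]$ yields $|D_k(w)| \le \pi/w \le (\pi/c)\, n^{\theta}$, uniformly in $k$. This is the only place where the lower bound $cn^{-\theta}$ on $w$ enters.

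Next I would apply Abel summation to rewrite
\[
\sum_{i=1}^{j} f(i/n) e^{iw\sqrt{-1}} = f(j/n)\, D_j(w) - \sum_{i=1}^{j-1} \bigl(f((i+1)/n) - f(i/n)\bigr) D_i(w).
\]
By the hypothesis on $f$, both $\|f\|_\infty$ and $\|f'\|_\infty$ are finite. Hence the boundary term satisfies $|f(j/n) D_j(w)| = O(n^{\theta})$, while the telescoped sum is bounded using the mean value theorem by
\[
\sum_{i=1}^{j-1} \bigl|f((i+1)/n) - f(i/n)\bigr| \cdot |D_i(w)| \;\le\; (n-1) \cdot \frac{\|f'\|_\infty}{n} \cdot \frac{\pi}{c} n^{\theta} \;=\; O(n^{\theta}).
\]
Taking the maximum over $1 \le j \le n$ gives $\max_j |\sum_{i=1}^{j} f(i/n) e^{iw\sqrt{-1}}| = O(n^{\theta})$, which is stronger than and therefore implies the claimed $O(n^{1/4+\theta})$ rate.

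There is no serious obstacle in this argument; the proof is essentially a one-line application of summation by parts once the uniform Dirichlet-kernel estimate is in hand. The only subtle point is to ensure that the denominator $|1 - e^{w\sqrt{-1}}|$ in the geometric sum cannot degenerate faster than $n^{-\theta}$, which is precisely the role of the lower bound $w \ge cn^{-\theta}$. The $C^2$-with-Lipschitz-second-derivative assumption on $f$ is stronger than needed here (mere $C^1$ suffices), reflecting that the lemma is written for consistency with the other global smoothness assumptions used elsewhere in the paper.
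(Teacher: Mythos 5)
Your proof is correct, and it takes a genuinely different route from the paper's. The paper partitions $\{1,\dots,j\}$ into blocks of length $m\asymp n^{3/4}$, Taylor-expands $f$ to second order inside each block (with a cubic remainder), and evaluates the resulting weighted geometric sums $\sum_i (i/n-jm/n)^k e^{\sqrt{-1}wi}$ explicitly for $k=0,1,2$; balancing the per-block boundary contribution $O(1/|w|)$ against the accumulated remainder $O(m^4/n^3)$ yields $O(n/(m|w|)+m^3/n^2)=O(n^{1/4+\theta})$. You instead perform a single Abel summation against the Dirichlet-type partial sums $D_k(w)$, use the uniform bound $|D_k(w)|\le 1/\sin(w/2)\le \pi/w\lesssim n^{\theta}$, and control the telescoped increments of $f$ by $\|f'\|_\infty/n$. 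Your computation is sound (the summation-by-parts identity, the sine lower bound on $[0,\pi]$, and the counting of the $n-1$ increments are all correct), it yields the strictly sharper rate $O(n^{\theta})$, and you are right that only $C^1$ regularity of $f$ is used. What the paper's heavier machinery buys is not sharper output for this lemma but reusability: the same blocking-plus-Taylor template is applied later (e.g., in the proof of Lemma C.6's companion result on $\mbox{Cov}(CL_y(\omega),CL_y(\omega'))$ in Lemma \ref{lem6}) to weights such as $\Gamma(i,i+k)$ that are only piecewise smooth in $i/n$ and where a clean closed-form partial sum of the oscillating factor is less convenient. For the statement as written, your argument is a valid and cleaner substitute, and since $O(n^{\theta})\subset O(n^{1/4+\theta})$ it implies the claimed bound everywhere the lemma is invoked.
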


\begin{proof}
We shall only prove the case where $j=n$ as the other cases follow by the similar arguments. Choose an integer $m \asymp n^{3/4}$. First, group $\sum_{i = 1 }^{n} f(i/n)  \exp(w i \im) $ into
\begin{align*}
    & \sum_{j = 0}^{\lfloor n/m\rfloor-1} \sum_{i = jm+1 }^{ (j+1)m} f(i/n)  \exp( \im i w) +\sum_{i=\lfloor\frac{n}{m}\rfloor m+1}^nf(i/n)\exp(\im i w)\,.
    \end{align*}
    Below we only show the control of $\sum_{j = 0}^{\lfloor n/m\rfloor-1} \sum_{i = jm+1 }^{ (j+1)m} f(i/n)  \exp(w i \im)$, since the term $\sum_{i=\lfloor\frac{n}{m}\rfloor m+1}^nf(i/n)\exp(w i \im)$ can be handled in the same way.
    By the Taylor expansion, we have
    \begin{align*}
    =&\, \sum_{j = 0}^{n/m-1} \sum_{i = jm +1}^{ (j+1)m} \sum_{k = 0}^{2} \big[f^{(k)}(jm) (i/n - jm/n)^k / k! + O((i/n - jm/n)^{3}\big]  \exp(w i \im)\,.  
\end{align*}
 The first term in the expansion can be computed as 
 \[ 
 \left|  \sum_{i = jm+1 }^{ (j+1)m} f(jm) \exp(w i \im)\right| =   |f(jm)| \left| \frac{\exp(\im m w ) - 1 }{\exp(\im w ) - 1}  \right|\leq |\pi f(jm)|/|w|.   
 \]
The second order terms can be simplified as  \begin{align*} 
& \left|\sum_{i = jm}^{(j+1)m -1} f'(jm)  (i/n - jm/n) \exp(\im w_l i) \right| \\
\le&\, C\left| \frac{m \exp(\im (m+2) w )-(m+1) \exp(\im (m+1) w ) + \exp(\im w)  }{n(\exp(\im w ) - 1)^2 }  \right|\\
=&\, O\left( \frac{m}{n} \frac{1}{|w|} + \frac{1}{n} \frac{1}{|w|^2}  \right) = O(1).
\end{align*}  
The third term is tedious to compute but the simplified results show that 
\begin{align*} 
& \left|\sum_{i = jm}^{(j+1)m -1} f''(jm)  (i/n - jm/n)^2 \exp(\im w i) \right|\\
\le&\, C\left| \frac{m^2 e^{\im (m+3) w }+(-2m^2-2m+1) e^{\im (m+2) w } +(m+1)^2 e^{\im (m+1) w } - e^{\im  w }- e^{\im 2 w } }{n^2(\exp(\im w ) - 1)^3 }  \right|\\
=&\, O\left( \frac{m^2}{n^2} \frac{1}{|w|^2} + \frac{1}{n^2} \frac{1}{|w|^3}  \right) = O(1).
\end{align*}  
Using a less precise bound for the remaining term, we have
\begin{align*}
      & \left|\sum_{i = jm}^{(j+1)m -1}  (i/n - jm/n)^3 \exp(\im w_l i) \right| \\
      \le &\, \sum_{i = jm}^{(j+1)m -1}  \left|(i/n - jm/n)^3  \right|  =  O\left( m^4/n^3 \right)
  \end{align*}
By combining the previous results, we get
\begin{align*}
    & \sum_{i = 1 }^{n} f(i/n)  \exp(w i \im) 
    =\sum_{j = 0}^{n/m} O(1/|w| + m^4/n^3)\\
     =&\, O(n/(m |w|) + m^3/n^2) = O(n^{1/4+\theta}).
\end{align*}
\end{proof}

Next we shall prove the first main result of this paper. Define 
\begin{eqnarray}\label{eq:thetamu}
\bar\Theta_k(\omega)=\sum_{i = 1 }^{k} X_i  \exp(\im i\omega),\quad \bar\Theta_k^{(\mu)}(\omega)=\sum_{i = 1 }^{k} \mu_i  \exp(\im i \omega),
\end{eqnarray}
and
\begin{eqnarray}\label{eq:thetaepsilon}
\bar\Theta_k^{(\epsilon)}(\omega)=\sum_{i = 1 }^{k} \epsilon_i  \exp(\im i\omega).
\end{eqnarray}
Clearly, we have 
\[
\bar\Theta_k(\omega)=\bar\Theta_k^{(\mu)}(\omega)+\bar\Theta_k^{(\epsilon)}(\omega).
\]

\begin{proof}[Proof of Theorem \ref{thm:stage1boots}]

	First note that  \begin{align*}
	   & \sup_{x} \left| \pr( F(W) \leq x) -  \pr( \tilde{F}(W) \leq x | X ) \right|\\
	   =&\, \sup_{x} \left| \pr( \max_k \|n^{-1/2}\bar\Theta_k(W) \|_\infty \leq x) -  \pr( \|SZ(W)\|_\infty \leq x | X ) \right|.
	\end{align*}
	By Assumption \ref{assumption4} and Lemma \ref{lemma:smooth}, we have, under the null hypothesis of no oscillation, 
	\begin{align*} 
	\max_{1\leq k\leq n} |\bar\Theta_k^{(\mu)}(W)|  &\, \lesssim \max_{1 \leq k \leq n} \max_{w \in W} \left\{ \left|\sum_{i = 1}^k f(i/n) \cos( w i ) \right|,\left|\sum_{i = 1}^k f(i/n) \sin( w i ) \right|  \right\} \\
	&\, = O(n^{1/4})\,,
	\end{align*}
	which implies that
	\[  
	\Big|\max_k |n^{-1/2}\bar\Theta_k(W)| -  \max_k |n^{-1/2}\bar\Theta_k^{(\epsilon)}(W)|\Big| \leq  \max_k |n^{-1/2}\bar\Theta_k^{(\mu)}(W)| = O(n^{-1/4})\,. 
	\]
	Write it as 
	\begin{eqnarray}\label{eq:thm44}
	\max_k |n^{-1/2}\bar\Theta_k(W)| = \max_k |n^{-1/2}\bar\Theta_k^{(\epsilon)}(W)| + O(n^{-1/4})\, .
	\end{eqnarray}

	Now, let $\{y_i\}$ be centred Gaussian random variables with the same covariance structure as $\{\epsilon_i\}$. Let  $\bar\Theta_k^{(Y)}(W)$ be defined the same way as $\bar\Theta_k^{(\epsilon)}(W)$ by replacing $\epsilon_i$ with $y_i$.  Then by Proposition \ref{prop:complex_approximation} we get   
	\begin{align}
	&\sup_{|x| > d^*_{n,np} } \left| \pr( \max_k \|n^{-1/2}\bar\Theta^{(\epsilon)}_k(W)\|_\infty \leq x) -  \pr( \max_k \|n^{-1/2}\bar\Theta^{(Y)}_k(W)\|_\infty \leq x) \right|\nonumber\\
	 \lesssim  \,& G^*(n,np)\,. \label{eq:thm45}
	\end{align} 
	Note that $G^*(n,np)$ is the Gaussian approximation bound in Proposition \ref{prop:complex_approximation}. 
	By \eqref{eq:thm44}, \eqref{eq:thm45} and similar proofs as those of Lemma \ref{lemma3} and Proposition \ref{prop:complex_comparison}, we have
	\begin{align}
	&\sup_{|x| > d^*_{n,np} + O(n^{-1/4})  } \left| \pr( \max_k \|n^{-1/2}\bar\Theta^{(Y)}_k(W)\|_\infty \leq x ) -  \pr( \max_k \|n^{-1/2}\bar\Theta_k(W)\|_\infty \leq x ) \right|\nonumber\\ 
	\lesssim &\,G^*(n,np)+ \log(np)^{1+ \delta} n^{-1/4}+ 1/(np\log^{1/2}(np)) \lesssim G^*(n,np).\label{eq:thm46} 
	\end{align} 
Note that $d^*_{n,np} + O(n^{-1/4})\le 2d^*_{n,np}$ for a sufficiently large $n$.

By \eqref{eq:thm46}, we thus need to control the distance between
$\pr( \max_k \|n^{-1/2}\bar\Theta^{(Y)}_k(W)\|_\infty \leq x )$ and $\pr(  \|SZ(W)\|_\infty \leq x|X)$, which boils down to controlling $\Delta$ by Proposition \ref{prop:complex_comparison}.
Since $\cov(\Theta^{(Y)}(W))=\cov(\Theta^{(\epsilon)}(W))$, we consider
\[ 
\Delta :=  \sup_{i,j}|\cov(\Theta^{(\epsilon)}(W)) - \cov(S(W))|X)|_{ij}
\]
in Proposition \ref{prop:complex_comparison}.

The rest of the proof is to bound $\Delta$. Recall the definition of $S(W)$ and $SZ(W)$ in \eqref{Definition SW in the proof of Gaussian comparison} and \eqref{Definition SZW in the proof of Gaussian comparison}. Observe that the $(s,t)$-th entry of $\cov(S(W)|X)$, $\cov(S(W)|X)_{st}$, where $1\leq s,t\leq 2np$, can be represented as  
	\begin{eqnarray}\label{eq:41}
	\cov(S(W)|X)_{st}  =\frac{1}{m(n-m)}  \sum_{i = 1}^{ k -m +1 }S_{i,m}(w) S_{i,m}(w'),
	\end{eqnarray}
	or $\frac{1}{m(n-m)}  \sum_{i = 1}^{ k -m +1 }S_{i,m}(w) C_{i,m}(w')$, or $\frac{1}{m(n-m)}  \sum_{i = 1}^{ k -m +1 }C_{i,m}(w) C_{i,m}(w')$ for some $k$, $\omega$ and $\omega'$ depending on $s,t$ when $m<k\leq n$, or other format when $1\leq k\leq m$. Without loss of generality, we will focus on the case \eqref{eq:41} with $m<k\le n$ instead of other combinations of  $S_{i,m}$ and $C_{i,m}$ since it would not affect the rest of proof beside cosmetic reasons.
	
	To study \eqref{eq:41}, we further break down $S_{i,m}(w)$ into 
	\begin{align}
	S_{i,m}(w)=S_{i,m}^{(\epsilon)}(w)+S_{i,m}^{(\mu)}(w),\label{Definition: S_im(omega)}
	\end{align}
	where $S_{i,m}^{(\epsilon)}(w)$ is the stochastic part and $S_{i,m}^{(\mu)}(w)$ is the deterministic part. Observe that, under the null hypothesis of no oscillation, $S^{(\mu)}_{i,m}(w)=O(1)$ uniformly over $i$ and $w\in W$. Therefore
	\begin{align*}
	    &S_{i,m}(w) S_{i,m} (w')\\
	      =&\, 
	(S_{i,m}^{(\epsilon)}(w) + S_{i,m}^{(\mu)}(w) )  (S_{i,m}^{(\epsilon)}(w') + S_{i,m}^{(\mu)}(w')) \\
     = &\,(S_{i,m}^{(\epsilon)}(w) +O(1) )  (S_{i,m}^{(\epsilon)}(w') + O(1) )\\
     =&\, S_{i,m}^{(\epsilon)}(w) S_{i,m}^{(\epsilon)}(w') +  S_{i,m}^{(\mu)}(w')S_{i,m}^{(\epsilon)}(w) + S_{i,m}^{(\mu)}(w) S_{i,m}^{(\epsilon)}(w') + O(1)  .
	\end{align*} 
	Note that $\sum_{i=1}^{k-m+1}S_{i,m}^{(\mu)}(w')S_{i,m}^{(\epsilon)}(w) + S_{i,m}^{(\mu)}(w) S_{i,m}^{(\epsilon)}(w')$ is a linear combination of $\{\epsilon_i\}_{i=1}^n$. Using the result that $S^{(\mu)}_{i,m}(\omega)=O(1)$ and Lemma 6 of \cite{zhou2013}, we have
\begin{eqnarray*}
\left\|\max_k\bigg|\sum_{i=1}^{k-m+1}S_{i,m}^{(\mu)}(w')S_{i,m}^{(\epsilon)}(w) + S_{i,m}^{(\mu)}(w) S_{i,m}^{(\epsilon)}(w')\bigg|\right \|_q=O(m\sqrt{n})\,.
\end{eqnarray*}
Hence
\begin{align}\label{eq:thm47}
&\left\|\max_k\max_{w,w'}
\bigg|\sum_{i=1}^{k-m+1}S_{i,m}^{(\mu)}(w')S_{i,m}^{(\epsilon)}(w) + S_{i,m}^{(\mu)}(w) S_{i,m}^{(\epsilon)}(w')\bigg|\right\|_q\\
=\,& O(m\sqrt{n}p^{2/q})\,.\nonumber
\end{align}
	By Lemma \ref{lemma4}, for fixed $w,w' \in W$ 
	\begin{align}
	&\frac{1}{m(n-m)} \left\| \max_{1\leq k \leq n-m} \bigg| \sum_{i = 1}^{ k -m +1 } S_{i,m}^{(\epsilon)}(w) S_{i,m}^{(\epsilon)} (w')- \sum_{i = 1}^{ k -m +1 } \E S_{i,m}^{(\epsilon)}(w) S_{i,m}^{(\epsilon)} (w') \bigg| \right\|_{q'}\nonumber\\
	 =&\, O(\sqrt{m/n})\,. 
	\end{align}
	Recall that $q'=q/2$. Therefore, 
\begin{align}  
	&\frac{1}{m(n-m)} \left\|  \max_{ w,w' \in W}\max_{1\leq k \leq n-m} \left| \sum_{i = 1}^{ k -m +1 } S_{i,m}^{(\epsilon)}(w) S_{i,m}^{(\epsilon)} (w')- \sum_{i = 1}^{ k -m +1 } \E S_{i,m}^{(\epsilon)}(w) S_{i,m}^{(\epsilon)} (w') \right| \right\|_{q'}\nonumber\\
	=\,&O(p^{4/q} \sqrt{m/n}). \label{eq:thm48}
\end{align}

Observe that, if $\cov(S(W)|X)_{st}$ can be written in the form of \eqref{eq:41}, then the corresponding $(s,t)$ entry of $\mbox{Cov}(\Theta^{(\epsilon)}(W))$ is of the form $n^{-1}\E[S_k^{(\epsilon)}(w)S_l^{(\epsilon)}(w')]$ or $n^{-1}\E[S_k^{(\epsilon)}(w')S_l^{(\epsilon)}(w)]$ for some $l\ge k$, where we recall the definition of $S_k^{(\epsilon)}(w)$ and $C_k^{(\epsilon)}(w)$ in Section \ref{sec:nullboots}. We shall only focus on the first case due to symmetry.

Observe that, for all possible $k,l$ so that $k\le l$ and $k\geq m$, and $w,w'$, we have
	\begin{align*}
	& \left| \frac{1}{m(n-m)} \sum_{i = 1}^{ k -m +1 } \E S_{i,m}^{(\epsilon)}(w) S_{i,m}^{(\epsilon)} (w') - \frac{1}{n} \E S_{k}^{(\epsilon)} (w) S_{l}^{(\epsilon)} (w') \right| \\ 
	\leq &\, 	\left| \frac{1}{m(n-m)} \sum_{i = 1}^{ k - m+1 } \E S_{i,m}^{(\epsilon)}(w) S_{i,m}^{(\epsilon)} (w')- \frac{1}{n}\E S_{k}^{(\epsilon)} (w) S_{k}^{(\epsilon)} (w')  \right| \\
	&\,\,+ \left| \frac{1}{n}\E\left(S_{k}^{(\epsilon)} (w) \sum_{i = k + 1}^{l} \sin( w' i ) \epsilon_i \right) \right| .
	\end{align*}
Note that
	\begin{align*}
	\left| \frac{1}{n}\E\left(S_{k}^{(\epsilon)} (w) \sum_{i = k + 1}^{l} \sin(2\pi  w' i ) \epsilon_i \right) \right| \le &\,   \frac{1}{n} \sum_{i = k + 1}^{l} \sum_{j = 1}^k  \left| \cov(\epsilon_i,\epsilon_j)  \right| \\
	\le&\, C \frac{1}{n} \sum_{i = k + 1}^{l} \sum_{j = 1}^k  \left| i-j  \right|^{-d}= O(1/n),
	\end{align*}	
	where we utilized $|\cov(\epsilon_i,\epsilon_j)|\le C|i-j|^{-d} $ by Lemma 6 of Zhou (2014). Recall $m = n^\theta$ and $\theta<1$. 
	Hence,
	\begin{align*}
	&\left| \frac{1}{m(n-m)} \sum_{i = 1}^{ k - m+1 } \E S_{i,m}^{(\epsilon)}(w) S_{i,m}^{(\epsilon)} (w')-  \frac{1}{n} \E S_{k}^{(\epsilon)} (w) S_{k}^{(\epsilon)} (w') \right|\\
	\leq&\,  \sum_{ |i - j| \leq m}  \left| \frac{m - |i-j| }{m(n-m)} - \frac{1}{n} \right| 	\left| \E \epsilon_i \epsilon_j \right| + \frac{1}{n}  \sum_{ |i - j| >  m} \left| \E \epsilon_i \epsilon_j \right|  \\
	&\qquad\qquad\qquad ( \text{the indices } i,j \text{ satisfy } 1\leq  i,j \leq k ) \\ 
	\lesssim &\,  \sum_{ |i - j| \leq m} \left( \frac{m}{n^2} + \frac{|i-j|}{mn} \right) |i - j|^{-d} + \frac{1}{n}  \sum_{ |i - j| >  m } |i - j|^{-d}  \\ 
	\lesssim &\,  \sum_{ s= 0}^{m - 1} \left( \frac{m}{n^2} + \frac{s}{mn} \right) (k - s) s^{-d} +  \frac{1}{n}  \sum_{t = m}^{k -1} (k - t)  t^{-d}  \\
	= &\,O\left(\frac{m}{n}  \right)+ O\left( \frac{1}{m}  \right) + O(m^{-d + 1})= O(m/n)  + O(1/m). 
	\end{align*}
	Therefore,
	\begin{align*}
	&\left|  \frac{1}{m(n-m +1)}  \sum_{i = 1}^{ k -m +1 } \E S_{i,m}^{(\epsilon)}(w) S_{i,m}^{(\epsilon)} (w')  - \frac{1}{n} \E S_{1,k}^{(\epsilon)} (w) S_{1,l}^{(\epsilon)} (w')  \right| \\
	= \,& O(m/n)  + O(m^{-1})
	\end{align*}
uniformly for all $1\le k\le l\le n-m$ and $w,w'\in W$.	Together with \eqref{eq:thm47} and \eqref{eq:thm48}, we have that	
\begin{align*}
	\Delta  &\,= O(1/m)+O_\pr(p^{2/q}/\sqrt{n})+O_\pr(p^{4/q} \sqrt{m/n}) +O(m/n)  + O(1/m)\\
	&\, =O_\pr(p^{4/q} \sqrt{m/n} +1/m ). \end{align*}
Finally, by Markov's inequality,  \eqref{eq:thm47} and \eqref{eq:thm48}, we have
$$
\pr(A_n)\ge 1-\|\Delta\|^{q'}_{q'}/[h^{q'}_n(p^{4/q}\sqrt{m/n} +1/m )^{q'}]\ge 1-C/h_n^{q'}
$$
with right hand side converging to 1. Combing with Proposition \ref{prop:complex_comparison} with $\delta=0.5$, we have the theorem.
\end{proof}

\subsubsection{Estimation accuracy}

The following lemma establishes that, for any given costant $c\in (0,1)$, the $(1-c)$ quantile of  $F(W)\ge \sqrt{\log (Cn^{1/2+\theta}) }$ for any $\theta\in (0,1/7)$. Observe that the threshold $d^*_{n,np}$ goes to 0 in Theorem \ref{thm:stage1boots}. Hence, Theorem \ref{thm:stage1boots} can be applied to the DPPT.  

\begin{lemma}\label{lem6}
Under Assumptions \ref{assumption1} to \ref{assumption7} and the null hypothesis that $\mu_{i,n}=f(i/n)$, we have that for any $\theta\in(0,1/7)$ and $c\in(0,1)$, there exists a positive constant $C<\infty$ that does not depend on $n$ such that
\begin{eqnarray*}
\pr\Big(F(W)\ge \sqrt{\log (Cn^{1/2+\theta}) }\Big)\ge c
\end{eqnarray*}
for a sufficiently large $n$. 
\end{lemma}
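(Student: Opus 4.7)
The plan is to reduce the problem to an analogous lower bound for the Gaussian counterpart of the statistic, and then extract $\sqrt{\log n}$-type growth from the maximum of a moderate-sized sub-family of approximately decoupled complex Gaussians. First I would dispose of the deterministic trend: under Assumption \ref{assumption4} and Lemma \ref{lemma:smooth}, $\max_{k,\omega\in W}|\bar\Theta^{(\mu)}_k(\omega)|/\sqrt{n}=O(n^{-1/4})$, so $F(W)$ and $F^{(\epsilon)}(W):=\max_{\omega,k}|\bar\Theta^{(\epsilon)}_k(\omega)|/\sqrt{n}$ differ by a negligible additive term. Vectorizing $\{\epsilon_i\exp(\sqrt{-1}\omega i):\omega\in W\}$ as in the proof of Theorem \ref{thm:stage1boots} and applying Proposition \ref{prop:complex_approximation} shows that the law of $F^{(\epsilon)}(W)$ is close, uniformly outside an $o(1)$ threshold region $|x|\le d^*_{n,np}$, to the law of $F^{(Y)}(W):=\max_{\omega,k}|\bar\Theta^{(Y)}_k(\omega)|/\sqrt{n}$, where $\{Y_i\}$ is a centered Gaussian process sharing the covariance and pseudo-covariance structure of $\{\epsilon_i\}$. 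It therefore suffices to prove the same quantile lower bound for $F^{(Y)}(W)$.

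Next I would crudely lower-bound $F^{(Y)}(W)$ by restricting the maximum to $k=n$ and to a sub-collection $W'=\{\tilde\omega_1,\ldots,\tilde\omega_K\}\subset W\cap[\delta_0,\pi-\delta_0]$ of $K\asymp n^{1/2+\theta}$ equispaced frequencies with separation $\eta_n\asymp n^{-1/2-\theta}$. Writing $Z_k:=\bar\Theta^{(Y)}_n(\tilde\omega_k)/\sqrt{n}$, the goal becomes to show that $\max_k|Z_k|\ge \sqrt{\log(Cn^{1/2+\theta})}$ with probability at least $c$. Two ingredients are needed: (a) uniform positive lower bounds $\var(\Re Z_k)\wedge \var(\Im Z_k)\ge v_0>0$, obtained by approximating the variances of $\Re Z_k$ and $\Im Z_k$ by Riemann integrals of the instantaneous spectral density $v(t,\tilde\omega_k)$ over $t\in[0,1]$ (justified piece-by-piece by Assumption \ref{assumption7}), with strict positivity supplied by Assumption \ref{assumption6}; and (b) an off-diagonal cross-covariance control $|\cov(Z_j,Z_k)|\vee|\cov(Z_j,\overline{Z_k})|=O(n^{-1/2+\theta})=o(1/\log K)$ for $j\neq k$, via the $|\tilde\omega_j-\tilde\omega_k|^{-1}/n$ leakage bound adapted to the PLS setting. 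With (a) and (b) in hand, a Sudakov--Fernique comparison applied to the real $2K$-dimensional Gaussian vector $(\Re Z_k,\Im Z_k)_{k=1}^K$ yields $\E\max_k|Z_k|\ge (1-o(1))\sqrt{2v_0\log K}$, while Gaussian concentration bounds its standard deviation by $O(1)$; Chebyshev's inequality then delivers the desired quantile lower bound, with $C$ absorbing $v_0$ and $c$.

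The main obstacle is (b): unlike in the stationary setting, where cross-periodogram leakage is textbook, for a PLS process the covariance $\cov(Y_j,Y_l)$ is only piecewise smooth in its arguments. The strategy is to split by lag $k=j-l$, use the covariance decay bound from Assumption \ref{assumption3} to truncate $|k|\le L$ for some $L\ll n$, and then apply summation by parts on $\sum_j \gamma(j,j-k,n)\exp(\sqrt{-1}(\tilde\omega_a-\tilde\omega_b)j)$ using the piecewise Lipschitz smoothness of $\gamma$ from Assumption \ref{assumption7}. The $O(r)$ boundary contributions across the PLS breakpoints and the logarithmic factors picked up in the summation-by-parts step, balanced against the separation $\eta_n$ and the Gaussian approximation rate $G^*(n,np)$, are what ultimately force the stated range $\theta<1/7$ rather than the naive $\theta<1/2$ that the decoupling alone would permit.
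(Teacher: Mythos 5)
Your proposal follows the paper's proof almost step for step: dispose of the trend via Lemma \ref{lemma:smooth}, pass to the Gaussian counterpart via Proposition \ref{prop:complex_approximation}, restrict to $k=n$ and an equispaced subgrid of $\asymp n^{1/2+\theta}$ frequencies, get a variance lower bound from Assumption \ref{assumption6} and an off-diagonal covariance decay of order $n^{-\alpha}$ from Assumption \ref{assumption7} together with the lag-truncation/oscillatory-sum argument. The only genuine divergence is the last step. The paper replaces the correlated Gaussian maximum by the maximum of an \emph{independent} Gaussian vector with matched variances using the distributional comparison theorem of Chernozhukov et al.\ (2015, Theorem 2), which converts the $O(n^{-\alpha})$ covariance discrepancy into a uniform Kolmogorov-distance bound $O(n^{-\alpha/3}\log^{2/3}n)$ and then reads off the quantile directly for independent Gaussians; you instead invoke Sudakov--Fernique to lower-bound the expected maximum and then use concentration/Chebyshev to move from the mean to a quantile. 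Both routes work; the paper's gives the full distributional closeness (reusable elsewhere), while yours is more elementary and only needs second moments. Two small cautions on your version: (i) Sudakov--Fernique should be applied to $(\Re Z_k)_{k=1}^K$ alone (which still lower-bounds $\max_k|Z_k|$, and is exactly what the paper does with its cosine statistic $CL_y$), not to the full $2K$-vector $(\Re Z_k,\Im Z_k)_k$, since $\cov(\Re Z_k,\Im Z_k)$ at the \emph{same} frequency need not be small and the pairwise increment condition could fail there; (ii) your constant-bookkeeping ``$C$ absorbing $v_0$'' glosses over the fact that $C$ sits inside the logarithm, so the leading coefficient of $\sqrt{\log n}$ is really governed by $2v_0(1/2+\theta)$ --- but this is exactly the same looseness present in the paper's own final step, and the lemma is only ever used to certify an $O(\sqrt{\log n})$ lower bound on the quantile, so it is harmless.
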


\begin{proof}
Denote  
\[
F^*(W)=\max_{\omega\in W}|L_n(n,\omega)|/\sqrt{n}\,. 
\]
Obviously $F(W)\ge F^*(W)$.  Let $\{y_i^*\}$ be a centered Gaussian process that preserves the covariance structure of $\{x_i\}$. Define 
\begin{align}
T^*_y=\max_{\omega\in W}|L_{n,y}(n,\omega)|/\sqrt{n}\,, 
\end{align}
where 
\begin{align}
    L_{n,y}(n,\omega)=\sum_{k=1}^ny_k^*e^{\sqrt{-1}\omega k}. \label{Definition:Lny}
\end{align}
By Proposition \ref{prop:complex_approximation} and Lemma \ref{lemma:smooth}, we have that    
\begin{eqnarray}\label{eq:51}
\sup_{|x|>d^*_{n,p}}|\pr(F^*(W)\le x)-\pr(T_y^*\le x)|\lesssim G^*(n,p)\rightarrow 0\,. 
\end{eqnarray}
Observe that $d^*_{n,p}\rightarrow 0$. Consider an equally-spaced subset $W^*$ of $W$ such that the mesh size of $W^*$ is proportional to $ n^{-1/2-\theta}$ for some $\theta\in (0,1/7)$. Let 
\[
CL_{y}(\omega):=\sum_{k=1}^ny_k^*\cos(\omega k)/\sqrt{n}\ \mbox{ and }\ T^{**}_y:=\max_{\omega\in W^*}|CL_{y}(\omega)|\,. 
\]
Then clearly $T^*_y\ge T^{**}_y$. By \eqref{eq:51} and the fact that $\sqrt{\log (Cn^{1/2+\theta})}\gg 1$ when $n$ is sufficiently large, it suffices to show that
\begin{eqnarray}\label{eq:52}
\pr\Big(T^{**}_y\ge \sqrt{\log (Cn^{1/2+\theta})}\Big)\ge c
\end{eqnarray}
for sufficiently large $n$. Define $\Gamma(i,j)=\cov(y^*_i,y^*_j)$. Then, by Lemma 6 of \cite{zhou2014} and Assumption \ref{assumption3}, we have $|\Gamma(i,j)|\le C(|i-j|+1)^{-d}$. Hence, for any $\omega$, $\omega'\in W^*$, we have
\begin{eqnarray*}
\cov(CL_y(\omega),CL_y(\omega'))=\frac{1}{n}\sum_{|k|<n}\sum_{1\le i,i+k\le n}\Gamma(i,i+k)\cos(\omega i)\cos(\omega'(i+k))\,.
\end{eqnarray*}
By Assumption 7 and the proof of Lemma \ref{lemma:smooth}, we have that, for each $|k|\leq n^{1/4}$,
\begin{eqnarray*}
\Big|\sum_{1\le i,i+k\le n}\Gamma(i,i+k)\cos(\omega i)\cos(\omega'(i+k))\Big|=O(n^{5/8+3\theta/4})\,;
\end{eqnarray*}
for $|k|>n^{1/4}$, we have 
\begin{eqnarray*}
\Big|\sum_{1\le i,i+k\le n}\Gamma(i,i+k)\cos(\omega i)\cos(\omega'(i+k))\Big|\le n\max_i|\Gamma(i,i+k)|\le nC|k|^{-d}.
\end{eqnarray*}
Hence
\begin{eqnarray}\label{eq:53}
\cov(CL_y(\omega),CL_y(\omega'))\le Cn^{-\alpha}
\end{eqnarray}
for some $\alpha>0$.
For any $\omega\in W^*$, let $CL'_y(\omega)$ be a centered Gaussian random variable with the same {\em variance} as $CL_y(\omega)$, and $CL'_y(\omega)$ and $CL'_y(\omega')$ are independent for $\omega\neq\omega'$. Let 
$T^{***}_y=\max_{\omega\in W^*}|CL'_{y}(\omega)|$. By Assumption \ref{assumption6}, we have that $\text{Var}(CL'_{y}(\omega))\ge \delta_1 /2$ for sufficiently large $n$. Hence by Theorem 2 of \cite{chernozhukov2015}, we have
\begin{eqnarray}\label{eq:54}
\sup_x|\pr(T^{**}_y\le x)-\pr(T^{***}_y\le x)|\le Cn^{-\alpha/3}\log^{2/3}(n).
\end{eqnarray}
Since $T_y^{***}$ is the maximum of an independent Gaussian vector of length $O(n^{1/2+\theta})$ and each component's variance is bounded from below. It is straightforward to derive that
$$\pr\Big(T^{***}_y\ge \sqrt{\log (Cn^{1/2+\theta})}\Big)\ge c
$$
for some finite constant $C$ that does not depend on $n$. Therefore \eqref{eq:52} holds by \eqref{eq:54}. The lemma follows. 
\end{proof}

\begin{lemma}\label{lem:9}
Suppose that the assumptions of Theorem \ref{thm:s1_accuracy} hold true. Then on a sequence of events $D_n$ with $P(D_n)\ge 1-2h^{-q'}_n$, where $q'=q/2$ and $h_n$ is a positive sequence of real numbers which diverges to infinity at an arbitrarily slow rate, we have that, conditional on $X=\{X_{i}\}_{i=1}^n$, 
\begin{enumerate}
\item[(a)] if $\Omega=\emptyset$, then we have
$\pr(\tilde{F}_{m,l}(W)\ge \log(pn))\lesssim n^{-c}$; 

\item[(b)] if $\Omega\neq \emptyset$, then $\pr(\tilde{F}_{m,l}(W)\ge \sqrt{m}\log (pn))\lesssim n^{-c}$,
\end{enumerate}
where $c$ is any positive and finite constant.
\end{lemma}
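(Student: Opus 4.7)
The plan is to exploit the fact that $\tilde F_{m,l}(W)$ is, conditional on $X$, the maximum of centered complex-valued Gaussian random variables whose conditional variances are explicit functionals of the data. Writing $\sigma^2(\omega,k) := \frac{1}{m(n-m)}\sum_{j=1}^k|E_{j,m}(\omega)|^2$ and $\sigma_*^2 := \max_{\omega\in W,\,1\le k\le n-m}\sigma^2(\omega,k)$, applying a Gaussian tail bound separately to $\sum_j C_{j,m}(\omega)G_{j,l}$ and $\sum_j S_{j,m}(\omega)G_{j,l}$ and then a union bound over the $pn$ indices gives
\[
\pr\!\big(\tilde F_{m,l}(W) \ge t \mid X\big) \le 4pn\exp\!\left(-\frac{t^2}{4\sigma_*^2}\right).
\]
The task therefore reduces to exhibiting an event $D_n$ with $\pr(D_n)\ge 1-2h_n^{-q'}$ on which $\sigma_*^2 \le C$ in case (a) and $\sigma_*^2 \le Cm$ in case (b); plugging in $t=\log(pn)$ or $t=\sqrt{m}\log(pn)$ respectively then yields the desired $n^{-c}$ tail, because in either case $t^2/\sigma_*^2 \gtrsim \log^2(pn)\gg \log n$.

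To control $\sigma_*^2$, I will decompose $E_{j,m}(\omega)=E^{(\mu)}_{j,m}(\omega)+E^{(\epsilon)}_{j,m}(\omega)$ into its deterministic and noise components and handle the three resulting pieces separately. The Taylor-expansion / summation-by-parts argument behind Lemma \ref{lemma:smooth}, applied to the sum of length $m$ starting at $j$, gives $|E^{(\mu)}_{j,m}(\omega)|=O(1)$ uniformly when $\Omega=\emptyset$ (where $\mu_i=f(i/n)$ is smooth), and the trivial bound $|E^{(\mu)}_{j,m}(\omega)|=O(m)$ otherwise; this contributes $O(1/m)$ and $O(m)$ to $\sigma_*^2$ in the two cases. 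For the pure-noise piece $V(\omega):=\frac{1}{m(n-m)}\sum_{j=1}^{n-m}|E^{(\epsilon)}_{j,m}(\omega)|^2$, the autocovariance decay from Assumption \ref{assumption3} (via Lemma 6 of \cite{zhou2014}) yields $\E V(\omega)=O(1)$ uniformly in $\omega$, while Lemma \ref{lemma4} supplies the $\mathcal{L}^{q/2}$-concentration $\|V(\omega)-\E V(\omega)\|_{q/2}=O(\sqrt{m/n})$ for each fixed $\omega$. An $\mathcal{L}^{q/2}$ maximal inequality over the $p=|W|$ candidate frequencies then upgrades this to
\[
\Big\|\max_{\omega\in W}|V(\omega)-\E V(\omega)|\Big\|_{q/2}=O\!\big(p^{2/q}\sqrt{m/n}\big),
\]
which by the moment condition $q>12/(1-\theta)$ implicit in \eqref{eq:thmcon} decays polynomially in $n$. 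The cross term $\frac{2}{m(n-m)}\sum_j\Re[\overline{E^{(\mu)}_{j,m}(\omega)}E^{(\epsilon)}_{j,m}(\omega)]$ rearranges to a linear functional $\sum_i \epsilon_i\beta_i(\omega)e^{\sqrt{-1}\omega i}$ with deterministic coefficients satisfying $|\beta_i(\omega)|=O(m)$ (case (a)) or $O(m^2)$ (case (b)), so a maximal version of Lemma \ref{lemma5} (using $\mathcal{L}^q$ moments followed by a $p^{1/q}$ union bound over $\omega$) shows this term is $o_p(1)$ in case (a) and $O_p(m\sqrt{\log n/n})$ in case (b), uniformly over $\omega\in W$.

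Putting these pieces together, I will take $D_n$ to be the intersection of the events on which each of the three deviation controls above holds with a fixed absolute constant, so that $\sigma_*^2\le C$ in case (a) and $\sigma_*^2\le Cm$ in case (b). Since all three deviations admit $\mathcal{L}^{q'}$ bounds that decay polynomially in $n$, Markov's inequality yields $\pr(D_n)\ge 1-2h_n^{-q'}$ for every arbitrarily slowly diverging sequence $h_n$, as required. The main obstacle is the maximal inequality over $\omega\in W$: the naive union bound produces a factor $p^{2/q}$ which must be absorbed into the $o(1)$ deviation, and this is precisely where the moment condition $q>12/(1-\theta)$ in \eqref{eq:thmcon} becomes essential; without it, $\sigma_*^2$ could grow polynomially in $p$, destroying the Gaussian tail estimate needed to obtain the $n^{-c}$ bound.
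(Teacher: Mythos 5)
Your proposal is correct and follows essentially the same route as the paper's proof: decompose the blocks $E_{j,m}(\omega)$ into deterministic and noise parts, bound the expected conditional variance by $O(1)$ (resp.\ $O(m)$ for the deterministic contribution under the alternative), control the deviations of the squared-noise and cross terms in $\mathcal{L}^{q/2}$ and $\mathcal{L}^{q}$ with the $p^{2/q}\sqrt{m/n}$ and $p^{1/q}/\sqrt{n}$ rates, define $D_n$ by Markov's inequality, and finish with a Gaussian maximal/tail bound conditional on $X$. The only differences are presentational (you make the union-bound tail estimate explicit where the paper invokes an Orlicz-norm maximal inequality), so no further comment is needed.
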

Recall that $\tilde{F}_{m,l}(W)$ is the multiplier bootstrap statistic. Hence Lemma \ref{lem:9} implies that the critical values of the bootstrap are bounded by $\log (pn)$ with high probability if $\Omega=\emptyset$. If $\Omega$ is not empty, then the critical values are bounded by $\sqrt{m}\log (pn)$ with high probability.  

\begin{proof}
Recall the definition of $S_{j,n}(\omega)$, $S^{(\mu)}_{j,n}(\omega)$ and $S^{(\epsilon)}_{j,n}(\omega)$ in \eqref{Definition: S_im(omega)}. Decompose $C_{j,n}(\omega)$ into $C^{(\mu)}_{j,n}(\omega)$ and $C^{(\epsilon)}_{j,n}(\omega)$ in a similar way. Since $G_i$ are i.i.d standard Gaussian, we observe that, conditional on $X$, 
\begin{equation*}
	    \begin{split}
	   &\frac{1}{m(n-m)} \left[\var \left( \sum_{j = 1}^{n-m}  \sum_{l = j }^{j + m}  \cos(wl) X_l G_{j} \right) +  \var \left( \sum_{j = 1}^{n-m}  \sum_{l = j }^{j + m} \sin(wl) X_l G_{j} \right) \right]\\
	         =&\,  \frac{1}{m(n-m)} \left[\sum_{j = 1}^{n-m} \left(  \sum_{l = j }^{j + m}\cos(wl) X_l \right)^2  +   \sum_{j = 1}^{n-m} \left(  \sum_{l = j }^{j + m}\sin(wl) X_l \right)^2\right] \\
	  =&\,\frac{1}{m(n-m)}\left[\sum_{j=1}^{n-m}(C_{j,n}(\omega))^2+\sum_{j=1}^{n-m}(S_{j,n}(\omega))^2\right].
	    \end{split}
	\end{equation*}
	For the stochastic part, based on similar arguments as those for the proof of \eqref{eq:thm48}, we have that
\begin{eqnarray}\label{eq:lem61}
\Big\|\frac{1}{m(n-m)}\max_{1\le k\le n-m}\max_{\omega\in W}  \sum_{j = 1}^{k}\Big[  (S^{(\epsilon)}_{j,m}(w))^2-\E(S^{(\epsilon)}_{j,m}(w))^2\Big]  \Big\|_{q'}\lesssim p^{2/q}\sqrt{\frac{m}{n}}, 
\end{eqnarray}
and
\begin{eqnarray}\label{eq:lem62}
\frac{1}{m(n-m)}\Big\|\max_k\max_{w}
\Big|\sum_{i=1}^{k}S_{i,m}^{(\mu)}(w)S_{i,m}^{(\epsilon)}(w)\Big|\Big\|_q\lesssim p^{1/q}/\sqrt{n}.
\end{eqnarray}
Note that $p^{1/q}/\sqrt{n}\ll p^{2/q}\sqrt{\frac{m}{n}}$ and $p^{2/q}\sqrt{\frac{m}{n}}$ converges to zero polynomially fast. Therefore, if we define a sequence of events 
$D_n:= D_{n,1}\cap D_{n,2}$, where
$$
D_{n,1}:=\left\{\frac{1}{m(n-m)}\max_{1\le k\le n-m}\max_{\omega\in W} \sum_{j = 1}^{n-m}\Big[  (S^{(\epsilon)}_{j,m}(w))^2-\E(S^{(\epsilon)}_{j,m}(w))^2\Big] \le p^{2/q}\sqrt{m/n}h_n\right\}, 
$$
$$
D_{n,2}:=\left\{\frac{1}{m(n-m)}\max_{1\le k\le n-m}\max_{w\in W}
\Big|\sum_{i=1}^{k-m+1}S_{i,m}^{(\mu)}(w)S_{i,m}^{(\epsilon)}(w)\Big|\le p^{1/q}/\sqrt{n}h_n\right\},
$$
then $P(D_n)\ge 1-2/h^{q'}_n$ by Markov's inequality.  
Meanwhile, it is easy to show that
\begin{eqnarray}\label{eq:lem63}
\frac{1}{m(n-m)}\max_{1\le k\le n-m}\max_{\omega\in W}  \sum_{j = 1}^{n-m}\E (S^{(\epsilon)}_{j,m}(w))^2  \lesssim 1.
\end{eqnarray}

For the deterministic part, due to Assumption \ref{assumption4}, it is easy to see that
	 \begin{eqnarray}\label{eq:lem64}
	 \max_{w \in W} \max_{1 \leq j \leq n-m} \left| \sum_{l = j }^{j + m} \sin(wl) \mu_l \right| = O({m})
	 \end{eqnarray}
	if $\Omega\neq\emptyset$.   Otherwise, by Assumption \ref{assumption4}, the bound in \eqref{eq:lem64} becomes $O(1)$.

	Similar result holds for the cosine terms. Hence, by a simple Orcliz norm maximum inequality, we have that, conditional on $X$ and on the event $D_n$, 
\begin{eqnarray}\label{eq:l92}	
\pr\Big(	 \max_{w \in W} \max_{1 \leq k \leq n-m} \left| \sum_{j = 1}^{k}  \sum_{l = j }^{j + m}e^{ \im w l}X_l G_{j} /\sqrt{m (n-m) } \right|\ge \sqrt{m}\log(pn)\Big)\lesssim n^{-c}
\end{eqnarray}	 
for any positive and finite constant $c$ if $\Omega\neq \emptyset$. If $\Omega=\emptyset$, then by Lemma \ref{lemma5}, we have that, conditional on $X$ and on the event $D_n$, 	
\begin{eqnarray}\label{eq:l93}	
\pr\Big(\max_{w \in W} \max_{1 \leq k \leq n-m} \left| \sum_{j = 1}^{k}  \sum_{l = j }^{j + m}e^{ \im w l}X_l G_{j} /\sqrt{m (n-m) } \right|\ge \log(pn)\Big)\lesssim n^{-c}
\end{eqnarray}
for any positive and finite $c$. The lemma follows.

\end{proof}

\begin{lemma}\label{lemb6}
Assume $\Omega\neq \emptyset$. Let $\Omega=\{\omega_1,\ldots,\omega_K\}$. Then under the assumptions of Proposition \ref{prop:nophasechange} and \eqref{eq:thmcon} held true, we have
$\max_{1\le k\le K} |\hat{\omega}_k - \omega_k|=o_\pr(n^{-3/2}\log n)$ and
$$\pr\left(\max_{1\le k\le K} |\hat{\omega}_k - \omega_k|\ge n^{-3/2}\log n\right)\lesssim (\log n)^qn^{(\theta-1)q/8}+G^*(n,pn)\,.
$$
\end{lemma}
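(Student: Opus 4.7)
\textbf{Proof plan for Lemma \ref{lemb6}.} Under the no-phase-change condition of Proposition~\ref{prop:nophasechange}, I can write
$A_{r,k}\cos(\omega_k i)+B_{r,k}\sin(\omega_k i)=\tfrac12 C_{r,k}\bigl(e^{\sqrt{-1}(\omega_k i-\theta_k)}+e^{-\sqrt{-1}(\omega_k i-\theta_k)}\bigr)$,
where $C_{r,k}=\sqrt{A_{r,k}^2+B_{r,k}^2}\ge 0$ and the phase $\theta_k$ depends only on $k$. Define the amplitude profile $C_k(j):=C_{r,k}$ for $b_{k,r}<j\le b_{k,r+1}$; by Assumption~\ref{assumption9} consecutive values differ by at least $\delta_4$. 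The plan is to compare $\bar F(\omega)$ with $\bar F(\omega_k)$ and show a strict positive gap whenever $|\omega-\omega_k|\ge n^{-3/2}\log n$, then combine with Algorithm~1's iteration and the consistency of $|\hat\Omega|=|\Omega|$ already furnished by Theorem~\ref{thm:stage1boots}.

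\textbf{Step 1 (Signal gap).} Write $L_n^{(\mu)}(n,\omega)=\tfrac12 e^{-\sqrt{-1}\theta_k}G_n(\omega-\omega_k)+R_n(\omega)$ with $G_i(\delta):=\sum_{j=1}^i C_k(j)e^{\sqrt{-1}\delta j}$ and $R_n(\omega)$ containing the "conjugate" $e^{\sqrt{-1}(\omega+\omega_k)j}$ contribution, the other oscillatory components $\omega_{k'}\neq\omega_k$, and the trend $f(i/n)$. Assumption~\ref{assumption8} bounds $\omega+\omega_k$ and $\omega-\omega_{k'}$ away from $0\bmod 2\pi$, so partial summation yields $|R_n(\omega)|=O(\log n)$ uniformly over the shrinking neighborhood of $\omega_k$ we consider, and Lemma~\ref{lemma:smooth} handles $f$. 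A Taylor expansion of $|G_n(\delta)|^2$ at $\delta=0$ gives vanishing first derivative (since $G_n(0)\in\R$ and $G_n'(0)$ is purely imaginary), while Cauchy--Schwarz with strict inequality coming from Assumption~\ref{assumption9} yields $\partial_\delta^2|G_n(\delta)|^2\big|_{\delta=0}=-2c_k n^4(1+o(1))$ for some $c_k>0$. Hence
\begin{equation*}
|L_n^{(\mu)}(n,\omega_k)|^2-|L_n^{(\mu)}(n,\omega)|^2\ \ge\ c_k n^4(\omega-\omega_k)^2/2
\end{equation*}
whenever $|\omega-\omega_k|$ is in an order-one neighborhood of zero; in particular the gap is $\gtrsim n\log^2 n$ on the relevant scale $|\omega-\omega_k|\ge n^{-3/2}\log n$.

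\textbf{Step 2 (Noise control).} Applying Proposition~\ref{prop:complex_approximation} to $\{\epsilon_j e^{\sqrt{-1}\omega j}\}_{\omega\in W,\,j\le n}$ and a Gaussian maximal inequality as in Lemma~\ref{lemma5}, I obtain, on an event $E_n$ with $\pr(E_n)\ge 1-c\,n^{(\theta-1)q/8}-c\,G^*(n,pn)$ (using Markov on the $q'$-moment bounds developed in the proof of Theorem~\ref{thm:stage1boots} together with the choice $p\asymp n^{3/2}\log n$ and $m\asymp n^\theta$), the uniform tail bound
$\max_{\omega\in W}\max_{1\le i\le n}|L_n^{(\epsilon)}(i,\omega)|\le C\sqrt{n\log n}$.
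In addition I need a Lipschitz-in-$\omega$ bound: since $|e^{\sqrt{-1}\omega j}-e^{\sqrt{-1}\omega' j}|\le j|\omega-\omega'|$, a chaining argument over a refinement of $W$ (mesh $\asymp n^{-3/2}$) combined with moment bounds on $\{\epsilon_j\}$ yields $|L_n^{(\epsilon)}(n,\omega)-L_n^{(\epsilon)}(n,\omega')|\le C n^{3/2}|\omega-\omega'|\sqrt{\log n}$ uniformly on $E_n$.

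\textbf{Step 3 (Combining).} Expand $|L_n(n,\omega_k)|^2-|L_n(n,\omega)|^2$ into a deterministic signal-gap, a cross-term, and a noise-squared difference. The signal gap is $\gtrsim n\log^2 n$ by Step 1; the cross-term difference is bounded by $|L_n^{(\mu)}(n,\omega_k)-L_n^{(\mu)}(n,\omega)|\cdot\max|L_n^{(\epsilon)}|+|L_n^{(\mu)}|\cdot|L_n^{(\epsilon)}(n,\omega_k)-L_n^{(\epsilon)}(n,\omega)|\lesssim n^2|\omega-\omega_k|\sqrt{n\log n}+n\cdot n^{3/2}|\omega-\omega_k|\sqrt{\log n}$, which for $|\omega-\omega_k|\le h_n n^{-1}$ is $o(n\log^2 n)$; the noise-squared difference is similarly $o(n\log^2 n)$. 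Hence on $E_n$, $\bar F(\omega_k)^2>\bar F(\omega)^2$ for all $\omega\in W$ with $n^{-3/2}\log n\le|\omega-\omega_k|\le h_n/n$, where $h_n$ is the slowly diverging sequence from Theorem~\ref{thm:s1_accuracy} that already localizes $\hat\omega_k$ within $h_n/n$ of $\omega_k$. Iterating through Algorithm~1, and noting that the removed neighborhoods of width $\log m/(4m^{1/2})\asymp n^{-\theta/2}\log n$ are much larger than $n^{-3/2}\log n$, yields $\max_k|\hat\omega_k-\omega_k|<n^{-3/2}\log n$ on $E_n$.

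\textbf{Main obstacle.} The delicate step is the uniform Lipschitz-in-$\omega$ control of $L_n^{(\epsilon)}(n,\omega)$ on a scale finer than the mesh of $W$. A direct moment bound amplified by the $O(n)$ signal magnitude is perilously close to the $n\log^2 n$ signal gap, so the chaining must be sharp, leveraging Proposition~\ref{prop:complex_approximation} on an enriched grid together with the fact that the $\omega$-derivative of $L_n^{(\epsilon)}$ is itself a sum of weighted (by $j$) noise terms whose variance grows only like $n^3$.
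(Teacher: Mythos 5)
Your overall strategy coincides with the paper's: expand the signal Fourier transform around the true frequency, use the no-phase-change condition to make the first derivative of the squared modulus vanish and Cauchy--Schwarz to make the second derivative strictly negative of order $n^4$ (the paper's version is $\bar\Theta^{(\mu)}(\omega_0)-\bar\Theta^{(\mu)}(\omega)\ge C_0n^3|\omega-\omega_0|^2$), control the noise uniformly at level $\sqrt{n\log n}$ via the complex Gaussian approximation, and control the $\omega$-increments of the noise by a chaining bound on $\sum_j j\epsilon_j e^{\sqrt{-1}tj}$, which is exactly the paper's event $B_n$; the balance $n^3\delta^2$ versus $n^{3/2}(\log n)\,\delta$ that pins down the $n^{-3/2}\log n$ rate is the same in both arguments.

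There is, however, a genuine gap. Your Steps 1 and 3 compare only the endpoint quantities $|L_n(n,\omega_k)|$ and $|L_n(n,\omega)|$, whereas $\hat\omega_k$ maximizes $\bar F(\omega)=\max_{1\le i\le n}|L_n(i,\omega)|/\sqrt n$. Lower-bounding $\bar F(\omega_k)$ by its $i=n$ term is legitimate, but concluding $\bar F(\omega_k)>\bar F(\omega)$ requires an \emph{upper} bound on $\max_{i}|L_n(i,\omega)|$, and neither your signal-gap expansion nor your Lipschitz-in-$\omega$ chaining bound is stated uniformly in $i$. This matters precisely in the regime Proposition \ref{prop:nophasechange} is designed for (short-term oscillations whose amplitude vanishes outside a subinterval), where the partial-sum maximum at a competitor frequency need not sit at $i=n$. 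The paper's proof works with $\bar\Theta(\omega)=\max_k|\bar\Theta_k(\omega)|$ throughout: it reduces the signal maximum to the finitely many change points $b_i$ and proves the chaining bound for $\max_k\sup_t|\tilde\Theta^{(\epsilon)}_k(t)|$, i.e.\ uniformly in $k$. Your argument can be repaired along the same lines (the Cauchy--Schwarz deficiency persists at every $i$ whose cumulative amplitude $\sum_{j\le i}C_k(j)$ is within $o(n)$ of the total, and smaller $i$ are eliminated by the triangle inequality), but as written the conclusion of Step 3 does not follow. A secondary quantitative slip: the remainder $R_n(\omega)$ is not $O(\log n)$ --- the trend alone contributes $O(n^{1/4})$ by Lemma \ref{lemma:smooth} --- so the cross term $|G_n|\cdot|R_n|=O(n^{5/4})$ would swamp the $n\log^2 n$ signal gap unless you exploit cancellation in the difference $G_n(0)\overline{R_n(\omega_k)}-G_n(\delta)\overline{R_n(\omega)}$ through the $\omega$-derivative of $R_n$, which your write-up does not do.
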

\begin{proof} For the simplicity of presentation, we shall only prove the case where $\Omega=\{\omega_0\}$; that is, there is only one oscillatory frequency. The general case follows by similar arguments since the number of oscillatory frequencies is assumed to be bounded and the frequencies are well separated by Assumption \ref{assumption8}. 

Under the above assumption, the mean function is
$$\mu_j=\sum_{0\le r\le M_0}(A_r\cos(\omega_0 j)+B_r\sin(\omega_0 j))\mathbb{I}({b_{r} \leq j \leq b_{r+1} })+f(j/n).$$
By Lemma \ref{lemma:smooth} and Assumption \ref{assumption4}, the contribution of  $f(\cdot)$ is negligible. Hence, without loss of generality, we set $f(\cdot)=0$ in the sequel.
Recall the definitions of $\bar\Theta^{(\mu)}_k(\omega)$, $\bar\Theta^{(\epsilon)}_k(\omega)$ and $\bar\Theta_k(\omega)$ in \eqref{eq:thetamu}. Define
\begin{eqnarray*}
\bar\Theta^{(\mu)}(\omega):=\max_{1\le k\le n}|\bar\Theta^{(\mu)}_k(\omega)|\quad\text{and}\quad \bar\Theta(\omega):=\max_{1\le k\le n}|\bar\Theta_k(\omega)|\,.
\end{eqnarray*}
Note that $F(W)=\max_{\omega\in W}\bar\Theta(\omega)/\sqrt{n}$.
It is easy to see that for any $\omega\in W$,
$$
\bar\Theta^{(\mu)}(\omega)=\max_{1\le i\le M_0+1}|\bar\Theta^{(\mu)}_{b_i}(\omega)| \,.
$$
Elementary but tedious calculations using sums of trigonometric series yield that there exist finite and positive constants $c$ and $C$ such that 
\begin{align}
cn\le\bar\Theta^{(\mu)}(\omega_0)\le Cn    
\end{align}

Note that by \eqref{eq:lem61} to \eqref{eq:lem64} in Lemma \ref{lem:9} and the Markov's inequality, the critical value of the DPPT is no larger than $n/\log n$ with probability at least $1-C(\log n)^q\exp(\log p-q(1-\theta)\log n/4)\ge 1-C(\log n)^q n^{(\theta-1)q/8}$, where we used assumption \eqref{eq:thmcon} in the above inequality. Therefore, the critical values will be surpassed with probability at least $1-C (\log n)^q n^{(\theta-1)q/8}$. If $|\omega-\omega_0|\ge 1/n$, then 
\begin{eqnarray}\label{eq:66}
\bar\Theta^{(\mu)}(\omega)/\bar\Theta^{(\mu)}(\omega_0)\le 1-c_0
\end{eqnarray}
for some $c_0>0$. If $|\omega-\omega_0|<1/n$, then using the formula for sums of trigonometric series, the assumption that there is no phase jump, as well as Taylor expansion, we have 
\begin{eqnarray}\label{eq:61}
\bar\Theta^{(\mu)}(\omega_0)-\bar\Theta^{(\mu)}(\omega)\ge C_0n^3|\omega-\omega_0|^2
\end{eqnarray}
for some finite and positive constant $C_0$. 

Recall that  $\bar\Theta^{(Y)}_k(\omega)=\sum_{i=1}^ky_i\exp(\sqrt{-1}i\omega)$, which comes from replacing $\epsilon_i$ in $\bar\Theta^{(\epsilon)}_k(\omega)$ by $y_i$, where $\{y_i\}$ is a centered Gaussian sequence having the same auto-covariance structure as that of $\{\epsilon_i\}$. By the proof of Lemma \ref{lemma5}, we have that
\begin{align}
&\pr\left(\max_{1\le k\le n}\max_{\omega\in\Omega}|\bar\Theta^{(\epsilon)}_k(\omega)|\ge \sqrt{n}\log n\right)\nonumber\\
=\,&\pr\left(\max_{1\le k\le n}\max_{\omega\in\Omega}|\bar\Theta^{(Y)}_k(\omega)|\ge \sqrt{n}\log n\right)+G^*(n,pn),\label{eq:69}
\end{align}
 Using an Orcliz norm maximum inequality (note that $p\asymp n^{3/2}\log n$), we have 
\begin{eqnarray}\label{eq:64}
\pr\left(\max_{1\le k\le n}\max_{\omega\in\Omega}|\bar\Theta^{(Y)}_k(\omega)|\ge \sqrt{n}\log n\right)\le n^{-c}
\end{eqnarray}
for any positive and finite constant $c$ if $n$ is sufficiently large. Define a sequence of events 
$$ 
A_n=\Big\{\max_{1\le k\le n}\max_{\omega\in\Omega}|\bar\Theta^{(\epsilon)}_k(\omega)|\ge \sqrt{n}\log n\Big\}\,.
$$ 
Since $G^*(n,pn)$ dominates $n^{-c}$, we have $\pr(A_n)\lesssim G^*(n,pn)$, which converges to 0. On the event $A^c_n$, we have by \eqref{eq:66} and \eqref{eq:61} that, if $|\omega-\omega_0|\ge n^{-1}/\log n$, then
\begin{eqnarray}\label{eq:62}
\bar\Theta(\omega)-\bar\Theta(\omega_0)\le -C\min(n, n^3|\omega-\omega_0|^2)+2\max_{1\le k\le n}\max_{\omega\in\Omega}|\bar\Theta^{(\epsilon)}_k(\omega)|<0.
\end{eqnarray}
On the other hand, if $|\omega-\omega_0|< n^{-1}/\log n$, then note that
$$
|\bar\Theta_k(\omega)-\bar\Theta_k(\omega_0)|\le \int_{\omega_0}^{\omega}\Big|\sum_{j=1}^k j\epsilon_j\exp(tj\sqrt{-1})\Big|\,dt:=\int_{\omega_0}^{\omega}|\tilde{\Theta}_k^{(\epsilon)}(t)|\,dt.$$
Similarly to the arguments above and by a simple chaining technique, we have that
\begin{align}
&\pr\left(\max_{1\le k\le n}\sup_{|t-\omega_0|< n^{-1}/\log n}|\tilde{\Theta}_k^{(\epsilon)}(t)|\ge C_0n^{3/2}\log n\right)\nonumber\\
\lesssim &\,G^*(n,pn)+n^{-1}\lesssim G^*(n,pn)\label{eq:63}\,.
\end{align}
Define the sequence of events $$
B_n=\left\{\max_{1\le k\le n}\sup_{|\omega-\omega_0|< n^{-1}/\log n}|\tilde\Theta^{(\epsilon)}_k(\omega)|\ge C_0n^{3/2}\log n\right\}.
$$ 
Then we have, on the event $B^c_n$, if $n^{-1}/\log n>|\omega-\omega_0|> n^{-3/2}\log n$, 
\begin{align}
\bar\Theta(\omega)-\bar\Theta(\omega_0)\le&\, -C_0n^3|\omega-\omega_0|^2+\max_{1\le k\le n}\max_{|\omega-\omega_0|< n^{-1}/\log n}|\bar\Theta^{(\epsilon)}_k(\omega_0)-\bar\Theta^{(\epsilon)}_k(\omega)|\nonumber\\
<&\, -C_0n^3|\omega-\omega_0|^2+C_0n^{3/2}\log n|\omega-\omega_0|<0\,.\label{eq:65}
\end{align}
By \eqref{eq:62} and \eqref{eq:65}, we have that, on the event $ A^c_n\cap B^c_n$, $|\hat{\omega}_0-\omega_0|\le n^{-3/2}\log n$. Note that $1-\pr( A^c_n\cap B^c_n)= O(G^*(n,pn))$. Similarly, for any fixed $\delta>0$, we can derive 
$$
\pr(|\hat{\omega}_0-\omega_0|> \delta n^{-3/2}\log n )=O(G^*(n,pn))\,.
$$ 
Hence the lemma follows.

\end{proof}

\begin{lemma}\label{lem:8}
Suppose $\Omega\neq \emptyset$. Let $\Omega=\{\omega_1, \ldots, \omega_K\}$. Then under the assumptions of Theorem \ref{thm:s1_accuracy}, we have that
$\max_{1\le k\le K} |\hat{\omega}_k - \omega_k|=O_\pr(n^{-1})$ and
$$\pr\left(\max_{1\le k\le K} |\hat{\omega}_k - \omega_k|\ge n^{-1}h_n\right)\lesssim (\log n)^qn^{(\theta-1)q/8}+G^*(n,pn)$$
for any sequence $h_n>0$ that diverges to infinity arbitrarily slowly.
\end{lemma}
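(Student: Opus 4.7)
The plan is to mirror the proof of Lemma~\ref{lemb6}, retaining its decomposition $\bar\Theta_k(\omega)=\bar\Theta_k^{(\mu)}(\omega)+\bar\Theta_k^{(\epsilon)}(\omega)$ and high-probability control of the stochastic part, but replacing the local quadratic expansion of $\bar\Theta^{(\mu)}$ near each $\omega_k$ (which in Lemma~\ref{lemb6} relied on the no-phase-change condition of Proposition~\ref{prop:nophasechange}) with a weaker global envelope. First, by Lemma~\ref{lemma:smooth} and Assumption~\ref{assumption4}, the contribution of the smooth trend $f$ to $\bar\Theta_k^{(\mu)}(\omega)$ is $O(n^{1/4})$ uniformly in $k$ and $\omega\in W$ and can be absorbed into lower-order terms; I will write $\bar G^{(\mu)}(\omega):=\max_{1\le k\le n}|\bar\Theta_k^{(\mu)}(\omega)|$ and similarly $\bar G^{(\epsilon)}$ and $\bar G$.

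Expanding each $A_{r,l}\cos(\omega_l j)+B_{r,l}\sin(\omega_l j)$ in exponential form and applying the elementary bound $|\sum_{j=a+1}^b e^{\im\alpha j}|\le \pi/|\alpha|$ segment by segment yields
\[
|\bar\Theta_k^{(\mu)}(\omega)|\le C\sum_{l=1}^{K}\Big(\frac{1}{|\omega-\omega_l|}+\frac{1}{|\omega+\omega_l|}\Big)
\]
uniformly in $k$. By Assumption~\ref{assumption4}, $|\omega+\omega_l|\ge 2\delta_0$, so the envelope simplifies to $\bar G^{(\mu)}(\omega)\lesssim 1/\min_l|\omega-\omega_l|$; in particular, when $\min_l|\omega-\omega_l|\ge h_n/n$ we obtain $\bar G^{(\mu)}(\omega)\le Cn/h_n$. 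Conversely, Assumption~\ref{assumption9} forces some segment index $r$ at each $\omega_k$ to satisfy $A_{r,k}^2+B_{r,k}^2\ge \tilde\delta^2$ for an absolute $\tilde\delta>0$, because the opposite would violate the lower bound $|A_{r,k}-A_{r-1,k}|^2+|B_{r,k}-B_{r-1,k}|^2\ge\delta_4^2$ on consecutive jumps. Telescoping $\bar\Theta_{b_{k,s}}^{(\mu)}(\omega_k)$ and using $|\omega_l-\omega_k|\ge c/\log n$ from Assumption~\ref{assumption8} to bound the cross-frequency contributions by $O(\log n)$, I get
\[
\bar\Theta_{b_{k,s+1}}^{(\mu)}(\omega_k)-\bar\Theta_{b_{k,s}}^{(\mu)}(\omega_k)=\tfrac12(A_{s,k}+\im B_{s,k})(b_{k,s+1}-b_{k,s})+O(\log n),
\]
so the triangle inequality applied to the favourable segment delivers $\bar G^{(\mu)}(\omega_k)\ge c_1 n$ for an absolute constant $c_1>0$.

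As in Lemma~\ref{lemb6}, Proposition~\ref{prop:complex_approximation} combined with a sub-Gaussian maximal inequality on the Gaussian counterpart of $\{\epsilon_i\}$ produces a sequence of events $A_n$ with $\pr(A_n^c)\lesssim G^*(n,pn)$ on which $\max_{\omega\in W}\bar G^{(\epsilon)}(\omega)\le C\sqrt n\log n$; Lemma~\ref{lem:9}(b) together with Markov's inequality provides a further event $B_n$ with $\pr(B_n^c)\lesssim n^{(\theta-1)q/8}$ on which every bootstrap critical value arising in Algorithm~1 is at most $\sqrt m\log(pn)=o(\sqrt n)$. On $A_n\cap B_n$: for any candidate set $W'\subset W$ that still contains a true frequency $\omega_l$, $F(W')\ge\bar G(\omega_l)/\sqrt n\ge c_1\sqrt n/2$ dominates the critical value, while $F(\{\omega\})\le Cn/(h_n\sqrt n)+C\sqrt{\log n}=o(\sqrt n)$ for every $\omega$ with $\min_l|\omega-\omega_l|\ge h_n/n$. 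Hence each iterate's maximiser falls within $h_n/n$ of some true frequency; since $h_n/n\ll \log m/(4m^{1/2})\ll c/\log n$ for $m\asymp n^\theta$ with $\theta<1$, the exclusion radius captures exactly one true frequency per iteration, so the successive $\hat\omega_k$ pair off with distinct $\omega_l$ and the accuracy claim holds on $A_n\cap B_n$, whose complement has probability $\lesssim G^*(n,pn)+n^{(\theta-1)q/8}$.

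The principal obstacle, relative to Lemma~\ref{lemb6}, is precisely the loss of the local quadratic lower bound for $\bar\Theta^{(\mu)}$ near $\omega_k$: the two-segment example preceding Proposition~\ref{prop:nophasechange} shows that an abrupt phase jump shifts the local maximiser of $|\bar\Theta_n^{(\mu)}(\omega)|$ by $\Theta(n^{-1})$ relative to $\omega_k$ and replaces quadratic decay with linear decay, capping the attainable rate at $O(n^{-1})$. The coarser envelope $|\bar\Theta_k^{(\mu)}(\omega)|\lesssim 1/\min_l|\omega-\omega_l|$ I use instead is nevertheless sufficient because the stochastic fluctuation is only $O(\sqrt{n\log n})=o(n)$, so the signal-to-noise ratio remains large enough to localise $\hat\omega_k$ within $h_n/n$ for any $h_n\to\infty$.
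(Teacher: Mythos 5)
Your proposal is correct and follows essentially the same route as the paper's own (much terser) proof: a lower bound $\bar\Theta^{(\mu)}(\omega_k)\gtrsim n$ at the true frequencies, the envelope $\bar\Theta^{(\mu)}(\omega')=O(n/h_n)$ when $\min_l|\omega'-\omega_l|\ge h_n/n$, control of the stochastic part at level $\sqrt{n}\log n$ via the Gaussian approximation (cost $G^*(n,pn)$), and control of the bootstrap critical values via Lemma~\ref{lem:9} and Markov (cost $n^{(\theta-1)q/8}$). You in fact supply the "elementary calculations using sums of trigonometric series" that the paper leaves implicit, including the telescoping argument that extracts a segment with non-degenerate amplitude from Assumption~\ref{assumption9}.
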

\begin{proof}
This lemma follows from similar and simpler arguments as those of Lemma \ref{lemb6}. The only difference between the condition of Lemma \ref{lemb6} and that of the current lemma is \eqref{eq:nophasechange}.
We shall briefly outline the proof here. First of all, elementary calculations using sums of trigonometric series yield that there exist finite and positive constants $c$ and $C$ such that $cn\le \bar\Theta^{(\mu)}(\omega_k)\le Cn $ for any $\omega_k\in\Omega$. Note that by Lemma \ref{lemb6}, the critical values of the DPPT is no larger than $n/\log n$ with probability at least  $1-Cn^{(\theta-1)q/8}$. Therefore the critical values will be surpassed with probability as least $1-Cn^{(\theta-1)q/8}$ for the first $K$ steps of the DPPT. Furthermore, elementary calculations yield that
$\bar\Theta^{(\mu)}(\omega')=O(n/h(n))$ if $h(n)/n\le |\omega'-\omega_k|$ for all $\omega_k\in\Omega$. Note that by Assumption \ref{assumption8}, two oscillatory frequencies are at least $O(1/\log n)$ away. On the other hand, we have by \eqref{eq:69} and \eqref{eq:64} that
\begin{eqnarray}\label{eq:71}
\pr(\tilde A_n)\lesssim G^*(n,np),
\end{eqnarray}
where 
\begin{eqnarray}\label{eq:72}
\tilde A_n=\left\{\max_{1\le k\le n}\max_{\omega\in\Omega}|\bar\Theta^{(\epsilon)}_k(\omega)|\ge \sqrt{n}\log n\right\}\,.
\end{eqnarray}
On the event $\tilde A^c_n$, we have $\bar\Theta(\omega')<\bar\Theta(\omega_k)$ for all $\omega_k\in\Omega$ if $h(n)/n\le |\omega'-\omega_k|$. Hence the lemma follows.
\end{proof}

\begin{proof}[Proof of Theorem \ref{thm:s1_accuracy}]
Part 1 of Theorem \ref{thm:s1_accuracy} follows directly from Theorem \ref{thm:stage1boots}. Hence, we only need to prove Part 2. By  Lemma \ref{lem:8}, we have that
\begin{eqnarray*}
\pr\left(\max_{1\le k\le K} |\hat{\omega}_k - \omega_k|\ge n^{-1}h_n\right)\lesssim n^{(\theta-1)q/8}+G^*(n,pn).
\end{eqnarray*}
Observe that $G^*(n,pn)$ converges to $0$. 
Hence, to prove Part 2, we only need to prove that 
\begin{eqnarray}\label{eq:thm31}
\pr(|\hat{\Omega}|=|\Omega|)\rightarrow 1-\alpha.
\end{eqnarray}
By the proof of Lemma \ref{lem:8}, we have that with probability at least  $1-Cn^{(\theta-1)q/8}$, the critical values will be surpassed for the first $K$ steps of the DPPT. That is, $\pr(|\hat{\Omega}|<|\Omega|)\lesssim n^{(\theta-1)q/8}$ %
Thus, to prove \eqref{eq:thm31}, it suffices to prove
\begin{eqnarray}\label{eq:thm32}
\pr(|\hat{\Omega}|>|\Omega|)\rightarrow \alpha.
\end{eqnarray}
Define the sets 
$$
V=\bigcup_{i = 1}^{K} \left[w_i - (h_n/n+\log(m)/(4m^{1/2})), w_i + (h_n/n+\log(m)/(4m^{1/2}))\right]\bigcap W
$$ 
and 
$$
V'=\bigcup_{i = 1}^{K} \left[\hat w_i - \log(m)/(4m^{1/2}), \hat w_i + \log(m)/(4m^{1/2})\right]\bigcap W\,.
$$ 
Observe that $V$ is a fixed set and $V'$ is a random set. Recall the definition of $\tilde A_n$ in \eqref{eq:72}. On the event $\tilde A_n^c$, by Lemma \ref{lem:8}, $V'$ is a subset of $V$ and the cardinality of the set difference $|V\backslash V'|=O(\sqrt{n}{h_n}\log n)$, where note that $p\asymp n^{3/2}\log n$. 

Recall the definition of $L_{n,y}$ in \eqref{Definition:Lny}. Note that, for any $\omega\in V\backslash V'$, we have that 
$\E(\max_{1\le i\le n}|L_{n,y}(i,\omega)|^2)= O(n)$.
Hence by the proof of Lemma \ref{lem6}, it follows that
$$
\pr\Big(F(V\backslash V')\ge \sqrt{\log( Cn^{1/2+\theta_1})}\Big)\le G^*(n,n^{3/2}h_n\log n)+n^{-c}\lesssim G^*(p,pn)
$$
for any positive and finite constant $c\in(1/8,\theta_1)$, where $\theta_1$ is any constant in $(1/8,1/7)$. Note that $n^{-c}$ is bounded by $G^*(p,pn)$ when $c$ is larger than $1/8$. Furthermore, $G^*(n,n^{3/2}{h_n}\log n)$ is bounded by $G^*(n,pn)$. Define the events 
$$
E_n=\Big\{F(V\backslash V')\ge \sqrt{\log( Cn^{1/2+\theta_1})}\Big\},
$$
which by the above argument satisfies $\pr(E_n)\to 0$ when $n\to \infty$.
By Lemma \ref{lem6}, we have that there exists constants $C_{\alpha}>0$ and $\theta_2\in(\theta_1,1/7)$ such 
that 
$$
\pr\Big(F(W\backslash V)\ge \sqrt{\log( C_{\alpha}n^{1/2+\theta_2})}\Big)>\alpha.
$$
 Hence, on the event $E_n^c$, we have that the $(1-\alpha)$-quantile of 
$F(W\backslash V)$ and $F(W\backslash V')$ are asymptotically equivalent. In particular, asymptotically the $(1-\alpha)$-quantile of $F(W\backslash V)$ is at least as large as $O(\sqrt{\log n})$.

Next, we discuss the multiplier bootstrap statistics $\tilde{F}_{m,l}$. We have that, on the event $\tilde A^c_n$ and for any $\omega\in W\backslash V'$,
\begin{eqnarray*}
\sum_{j=1}^k\Big|\sum_{i=j}^{j+m-1}\exp(\sqrt{-1}i\omega)\mu_{i,n}\Big|^2/(m(n-m))\lesssim \log^{-1} m
\end{eqnarray*}
uniformly in $k$ since $\omega$ is a least $O(\log m/m^{1/2})$ away from an oscillatory frequency. Therefore, with similar arguments as above, we have that the $(1-\alpha)$-quantile of 
$\tilde F(W\backslash V)$ and $\tilde F(W\backslash V')$ are asymptotically equivalent with probability at least $1-O(h_n^{-q'})$.
Now, by the construction of $V$, there is no oscillation left, so by the proof of Theorem \ref{thm:stage1boots}, we have that, with probability as least $1-O(h_n^{-q'})$,
\begin{eqnarray}\label{eq:thm33}
|\pr(F(W\backslash V)\le x)-\pr(\tilde F(W \backslash V)\le x|X)|\rightarrow 0
\end{eqnarray}
uniformly for all $x>d^\circ_{n,|W\backslash V|}$. Note that $d^\circ_{n,|W \backslash V|}$ converges to 0 and the $(1-\alpha)$-quantile of $F(W\backslash V)$ is at least as large as $O(\sqrt{\log n})$. Hence by \eqref{eq:thm33}, we have $\pr(|\hat \Omega|>|\Omega|)\rightarrow \alpha$. 
The theorem follows.
\end{proof}
\begin{proof}[Proof of Proposition \ref{prop:nophasechange}]
The proof of this proposition follows from Lemma \ref{lemb6} and the essentially the same arguments as those in the proof of Theorem \ref{thm:s1_accuracy}. Details are omitted.    
\end{proof}

\subsection{Proof of results in stage 2}\label{sec:E2}

\subsubsection{Consistency}

The following Lemmas \ref{lem:b8}, \ref{lem:b9} and \ref{lem:12} show that the error in estimating the oscillatory frequency has asymptotically negligible impact on the local change point detection algorithm.

\begin{lemma} \label{lem:b8}
	Let $\{\epsilon_k\}$ be a zero mean PLS time series satisfying Assumption \ref{assumption3} and $m^\circ$ satisfy $m^\circ\asymp n^{\gamma}$, where $\gamma\in(0,1)$. Define 
	\[ 
	T_n^{(\epsilon)}(w) := \sup_{ m^\circ < l < n-m^\circ} \left|\sum_{k = l - m^\circ}^{l}  e^{ \im  w (k-l) } \epsilon_k  -  \sum_{k = l +1}^{l+m^\circ+1}  e^{ \im  w (k-l) } \epsilon_k \right|/\sqrt{2m^\circ} .
	\] 
Suppose $\hat{w}$ is an estimator of the oscillatory frequency $w$. If $\pr(|w - \hat{w}|\ge a_n) = O(b_n) $ for some sequences $a_n$ and $b_n$, then 
	\[ 
	\pr\Big(|T_n^{(\epsilon)}(w) - T_n^{(\epsilon)}(\hat{w} )|\ge a_n m^\circ n^{1/q}\log(n)\Big)\lesssim b_n+1/\log^q n.
	\]	
\end{lemma}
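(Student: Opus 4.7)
The plan is to control the difference by isolating the phase-derivative via the identity $e^{\im w \alpha} - e^{\im \hat w \alpha} = \im \alpha \int_{\hat w}^{w} e^{\im u \alpha}\, du$. Write $X_l(w) := \sum_{k=l-m^\circ}^{l} e^{\im w(k-l)}\epsilon_k - \sum_{k=l+1}^{l+m^\circ+1} e^{\im w(k-l)}\epsilon_k$, so that $T_n^{(\epsilon)}(w) = \sup_l |X_l(w)|/\sqrt{2m^\circ}$. Applying the identity termwise yields $X_l(w) - X_l(\hat w) = \im \int_{\hat w}^{w} D_l(u)\, du$, where
\begin{equation*}
D_l(u) := \sum_{k=l-m^\circ}^{l} (k-l)\, e^{\im u(k-l)}\epsilon_k - \sum_{k=l+1}^{l+m^\circ+1} (k-l)\, e^{\im u(k-l)}\epsilon_k.
\end{equation*}
By the reverse triangle inequality,
\begin{equation*}
|T_n^{(\epsilon)}(w) - T_n^{(\epsilon)}(\hat w)| \le \frac{1}{\sqrt{2m^\circ}} \sup_{l} |X_l(w) - X_l(\hat w)| \le \frac{|w - \hat w|}{\sqrt{2m^\circ}}\cdot \Psi,
\end{equation*}
where $\Psi := \sup_{u \in [0, 2\pi]} \sup_l |D_l(u)|$. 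It therefore suffices to show $\Psi = O((m^\circ)^{3/2}\, n^{1/q} \log n)$ with probability $1 - O(1/\log^q n)$.

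For each fixed $u$, $D_l(u)$ is a weighted sum of $O(m^\circ)$ consecutive noise terms $\epsilon_k$ with complex weights of modulus at most $m^\circ$. Under Assumption \ref{assumption3}, the rapid algebraic decay of the physical dependence measures ($d\ge 5$) gives $\var(D_l(u)) = O\big(\sum_k (k-l)^2\big) = O((m^\circ)^3)$, and a Rosenthal-type $L^q$ moment inequality for weighted sums of PLS sequences (along the lines of Lemma~6 of \cite{zhou2014}) yields
\begin{equation*}
\|D_l(u)\|_q \le C\, (m^\circ)^{3/2}
\end{equation*}
uniformly in $u$ and $l$. Since $D_l(u)$ is a trigonometric polynomial in $u$ of degree at most $m^\circ+1$, Bernstein's inequality for trigonometric polynomials gives $\sup_{u\in[0,2\pi]}|D_l(u)| \le C_0 \max_{u_j\in\mathcal{G}}|D_l(u_j)|$ on an equispaced grid $\mathcal{G}$ of $O(m^\circ)$ points. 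A union bound over $\mathcal{G}$ and $l\in\{1,\ldots,n\}$, combined with Markov's inequality applied to the $q$-th moment, then yields
\begin{equation*}
\pr\big(\Psi > C\, (m^\circ)^{3/2}\, n^{1/q}\log n\big) \lesssim \frac{n\, m^\circ \cdot (m^\circ)^{3q/2}}{(m^\circ)^{3q/2}\, n\, (\log n)^q} \asymp \frac{1}{\log^q n},
\end{equation*}
where a subpolynomial factor $(m^\circ)^{1/q}$ has been absorbed into $\log n$ by choosing $q$ sufficiently large relative to the growth exponent of $m^\circ$.

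Combining the event $\{|w - \hat w| \le a_n\}$, whose complement has probability $O(b_n)$, with $\{\Psi \le C(m^\circ)^{3/2} n^{1/q}\log n\}$, whose complement has probability $O(1/\log^q n)$, yields on their intersection
\begin{equation*}
|T_n^{(\epsilon)}(w) - T_n^{(\epsilon)}(\hat w)| \le \frac{a_n}{\sqrt{2m^\circ}}\cdot C\, (m^\circ)^{3/2}\, n^{1/q}\log n \lesssim a_n\, m^\circ\, n^{1/q}\log n,
\end{equation*}
which is the required bound, and the probability estimate follows by a union bound on the two excluded events. The main technical obstacle is the variance-scale $L^q$ control $\|D_l(u)\|_q = O((m^\circ)^{3/2})$: the trivial triangle-inequality bound gives only $O((m^\circ)^2)$ and would worsen the final rate by a factor $(m^\circ)^{1/2}$. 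Exploiting the algebraic decay of the physical dependence measures is essential, as is the Bernstein reduction that avoids additional logarithmic losses when passing from an $O(m^\circ)$-grid to the continuous supremum in $u$.
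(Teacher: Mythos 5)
Your overall strategy is the paper's strategy: represent the phase perturbation through the derivative in the frequency variable, establish the $L^q$ bound $\|D_l(u)\|_q\lesssim (m^\circ)^{3/2}$ via the physical-dependence moment inequality (the paper invokes Lemma 6 of Zhou (2013) for exactly this, its display (C.42)), pay $n^{1/q}$ for the maximum over $l$, and finish with Markov's inequality to get the $\log^{-q}n$ tail. However, there is a genuine quantitative gap in how you pass to the supremum over the frequency variable. You bound the global quantity $\Psi=\sup_{u\in[0,2\pi]}\sup_l|D_l(u)|$ by discretizing $[0,2\pi]$ into $O(m^\circ)$ points via Bernstein's inequality and taking a union bound over $O(n\,m^\circ)$ terms. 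Your own display then reads
\begin{equation*}
\frac{n\,m^\circ\,(m^\circ)^{3q/2}}{(m^\circ)^{3q/2}\,n\,(\log n)^q}=\frac{m^\circ}{(\log n)^q},
\end{equation*}
not $(\log n)^{-q}$. The factor you propose to absorb, $(m^\circ)^{1/q}=n^{\gamma/q}$, is a fixed positive power of $n$, not subpolynomial, so it cannot be hidden in $\log n$ for any fixed $q$; as written your probability bound is $n^{\gamma}/\log^q n\to\infty$, and the lemma's conclusion does not follow.

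The fix — and the route the paper takes — is to never leave the $a_n$-neighborhood of $w$. On the event $\{|w-\hat w|<a_n\}$ one has $|X_l(w)-X_l(\hat w)|\le\int_{w-a_n}^{w+a_n}|D_l(u)|\,du$ with a nonnegative integrand, so by Minkowski's integral inequality
\begin{equation*}
\Bigl\|\max_{l}\sup_{|w'-w|<a_n}|X_l(w)-X_l(w')|\Bigr\|_q\le\int_{w-a_n}^{w+a_n}\bigl\|\max_l|D_l(u)|\bigr\|_q\,du\lesssim a_n\,(m^\circ)^{3/2}\,n^{1/q},
\end{equation*}
with no discretization in $u$ and hence no union-bound cost in the frequency direction; dividing by $\sqrt{2m^\circ}$ and applying Markov with the extra $\log n$ in the threshold gives exactly $\log^{-q}n$. (Alternatively, your discretization would be harmless if restricted to the interval $[w-a_n,w+a_n]$, where the Bernstein mesh $1/m^\circ$ requires only $O(a_nm^\circ)+1=O(1)$ points in the relevant regimes; it is the needless extension of the supremum to all of $[0,2\pi]$ that creates the polynomial loss.) The remaining ingredients of your argument — the reverse triangle inequality, the variance-scale $L^q$ control of $D_l(u)$, and the combination of the two bad events — are correct and coincide with the paper's.
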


\begin{proof}
Define  
$$
H^{(\epsilon)}_{l,m^\circ}(\omega)=\sum_{k = l - m^\circ}^{l}  e^{ \im  w (k-l) } \epsilon_k  -  \sum_{k = l +1}^{l+m^\circ+1}  e^{ \im  w (k-l) } \epsilon_k\,.
$$
For any $w'$ such that $|w'-w|<a_n$, we have
\begin{align*}
H^{(\epsilon)}_{l,m^\circ}(\omega)-&H^{(\epsilon)}_{l,m^\circ}(\omega')
=\int^\omega_{\omega'}(H^{(\epsilon)}_{l,m^\circ})'(\theta)\,d\theta\\
=&\,\int^\omega_{\omega'}\Big[\sum_{k = l - m^\circ}^{l}  \im(k-l)e^{ \im  \theta (k-l) } \epsilon_k  -  \sum_{k = l +1}^{l+m^\circ+1}  \im(k-l)e^{ \im  \theta (k-l) } \epsilon_k\Big]\,d\theta\,.
\end{align*}
By Lemma 6 of \cite{zhou2013}, we have that
\begin{eqnarray}\label{eq:b81}
\max_{1\le l\le n}\sup_{\theta\in[0,\pi]}\|(H^{(\epsilon)}_{l,m^\circ})'(\theta)\|_q=O((m^\circ)^{3/2}).
\end{eqnarray}
Therefore, by \eqref{eq:b81} and a simple ${\cal L}^q$ maximum inequality, we have
\begin{align}\label{eq:b82}
\Big\|\max_{1\le l\le n}\sup_{|\omega-\omega'|< a_n}|H^{(\epsilon)}_{l,m^\circ}(\omega)-H^{(\epsilon)}_{l,m^\circ}(\omega')|\Big\|_q \le&\, \int_{\omega-a_n}^{\omega+a_n}\Big\|\max_{1\le l\le n}|(H^{(\epsilon)}_{l,m^\circ})'(\theta)|\Big\|_q\,d\theta\nonumber\\
\lesssim&\, a_n(m^{\circ})^{3/2}n^{1/q}.
\end{align}
The above inequality implies that, on the event $\bar B_n^c$, where the events $\bar{B}_n := \{|w - \hat{w}|\ge a_n\}$, we have
\begin{equation}\label{eq:b83}
\Big\|\max_l|H^{(\epsilon)}_{l,m^\circ}(\omega)-H^{(\epsilon)}_{l,m^\circ}(\hat\omega)|\Big\|_q\lesssim a_n(m^{\circ})^{3/2}n^{1/q}.   
\end{equation}
Hence, the lemma follows by \eqref{eq:b83} and the Markov's inequality.
\end{proof}

Note that the probability bound can be decreased to $b_n+n^{-\theta q}$ for some $\theta>0$ if we increase the threshold to $a_nm^\circ n^{1/q+\theta}$. Lemma \ref{lem:b8} implies that if $a_n = n^{-3/2} \log(n)$, then 
\[
|T^{(\epsilon)}_n(w) - T^{(\epsilon)}_n(\hat{w} )|  = O_p(n^{1/q-3/2}m^{\circ}\log n)\,.
\] 
And if $a_n = n^{-1} h_n$, then 
\[
|T^{(\epsilon)}_n(w) - T^{(\epsilon)}_n(\hat{w} )|  = O_p(n^{1/q-1}m^{\circ}h(n))\,.
\] 
Both bounds can converge to 0 for sufficiently large $q$ when $m^{\circ}\asymp n^{\gamma}$, $\gamma\in(0,1)$. 
 
\begin{lemma} \label{lem:b9}
	Let $\mu_k=\mu_{k,n}$ be the mean function defined in \eqref{eq:mean} and $m^\circ$ satisfy $m^\circ\asymp n^{\gamma}$, where $\gamma\in(0,1)$. Assume that $\Omega\neq\emptyset$, $\omega\in\Omega$ and there is no change point at frequency $\omega$. Define 
	\[  
	T_n^{(\mu)}(w) = \sup_{ m^\circ < l < n-m^\circ}  \left|\sum_{k = l - m^\circ}^{l}  e^{ \im  w (k-l) } \mu_k  -  \sum_{k = l +1}^{l+m^\circ+1}  e^{ \im  w (k-l) } \mu_k \right|/\sqrt{2m^\circ} .
	\] 
	Suppose $\hat{w}$ is an estimator of the oscillatory frequency $w$.
	If $\pr(|w - \hat{w}|\ge a_n) = O(b_n) $ for some sequences $a_n$ and $b_n$ with $a_n m^\circ\rightarrow 0$, then \[ \pr\Big(|T_n^{(\mu)}(w) - T_n^{(\mu)}(\hat{w} )|> C(m^\circ)^{3/2}a_n\Big)\lesssim b_n \]
	for some finite constant $C>0$, which does not depend on $n$.
\end{lemma}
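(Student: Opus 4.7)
The plan is to mimic the structure of the proof of Lemma \ref{lem:b8}, but this time the deterministic nature of $\mu_k$ makes things much simpler: we only need a uniform pointwise bound on the mean function, not stochastic moment estimates.

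First, for each $\theta \in [0,\pi]$ and each admissible $l$, define the unnormalized local difference
\[
H^{(\mu)}_{l,m^\circ}(\theta) := \sum_{k=l-m^\circ}^{l} e^{\im \theta(k-l)} \mu_k - \sum_{k=l+1}^{l+m^\circ+1} e^{\im \theta(k-l)} \mu_k,
\]
so that $T_n^{(\mu)}(w) = \sup_l |H^{(\mu)}_{l,m^\circ}(w)|/\sqrt{2m^\circ}$. By the fundamental theorem of calculus,
\[
H^{(\mu)}_{l,m^\circ}(w) - H^{(\mu)}_{l,m^\circ}(\hat w) = \int_{\hat w}^{w} (H^{(\mu)}_{l,m^\circ})'(\theta)\,d\theta,
\]
where $(H^{(\mu)}_{l,m^\circ})'(\theta) = \sum_{k=l-m^\circ}^{l} \im(k-l) e^{\im\theta(k-l)}\mu_k - \sum_{k=l+1}^{l+m^\circ+1} \im(k-l) e^{\im\theta(k-l)}\mu_k$.

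Next I would observe that under the model \eqref{eq:mean}, the coefficients $A_{r,k}, B_{r,k}$ are assumed bounded and $f$ is bounded on $[0,1]$ by Assumption \ref{assumption5}, so $\sup_k |\mu_k| \leq C$ for some finite constant $C$. Therefore, uniformly in $l$ and $\theta$,
\[
|(H^{(\mu)}_{l,m^\circ})'(\theta)| \leq C \sum_{j=-m^\circ}^{m^\circ+1} |j| \leq C(m^\circ)^2.
\]
Combining with the integral representation yields
\[
\sup_{m^\circ<l<n-m^\circ} |H^{(\mu)}_{l,m^\circ}(w) - H^{(\mu)}_{l,m^\circ}(\hat w)| \leq C(m^\circ)^2 |w - \hat w|.
\]
By the reverse triangle inequality, this gives $|T_n^{(\mu)}(w) - T_n^{(\mu)}(\hat w)| \leq C(m^\circ)^{3/2} |w-\hat w|$.

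Finally, on the event $\{|w-\hat w| < a_n\}$ the right-hand side is bounded by $C(m^\circ)^{3/2} a_n$, so
\[
\pr\bigl(|T_n^{(\mu)}(w) - T_n^{(\mu)}(\hat w)| > C(m^\circ)^{3/2} a_n \bigr) \leq \pr(|w-\hat w|\geq a_n) = O(b_n),
\]
which is the desired conclusion. The condition $a_n m^\circ \to 0$ is never actually invoked in the bound itself; it is inherited from the application context where one needs the error $(m^\circ)^{3/2} a_n$ to be negligible relative to $\sqrt{m^\circ}$. There is no real obstacle here—the only routine point to verify is the uniform bound $\sup_k|\mu_k|<\infty$, which is immediate from the oscillatory-plus-smooth structure of \eqref{eq:mean}.
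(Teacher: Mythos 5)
Your proof is correct, and it reaches the stated bound by a more elementary route than the paper. Both arguments share the same skeleton: write the difference as $\int_{\hat w}^{w}(H^{(\mu)}_{l,m^\circ})'(\theta)\,d\theta$ and bound the frequency-derivative uniformly in $l$ and $\theta$. The difference is in how that derivative is controlled. You use only $\sup_k|\mu_k|\le C$ and the trivial estimate $\sum_{|j|\le m^\circ+1}|j|\lesssim (m^\circ)^2$, which immediately gives the un-normalized bound $(m^\circ)^2|w-\hat w|$ and hence $(m^\circ)^{3/2}a_n$ after dividing by $\sqrt{2m^\circ}$. The paper instead decomposes $\mu_k$ into the oscillation at the target frequency $\omega$, the oscillations at the other frequencies in $\Omega$, and the smooth trend $f$, and bounds each piece separately: the other frequencies contribute only $O(m^\circ\log n)$ to the derivative (via Assumption \ref{assumption8}), the trend contributes $O((m^\circ)^{5/4})$ (via the summation-by-parts argument of Lemma \ref{lemma:smooth}), and the target-frequency term is computed explicitly through trigonometric-sum identities, yielding $A\sin^2(m^\circ(\omega'-\omega)/2)/|\sin((\omega'-\omega)/2)|+O(m^\circ a_n)\lesssim (m^\circ)^2 a_n$. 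Since the target-frequency term genuinely has derivative of order $(m^\circ)^2$, the paper's refinement does not improve the final rate for this lemma, so your crude bound loses nothing here; what the refined decomposition buys is the component-wise information that is reused in nearby arguments (e.g.\ in Lemma \ref{lem:13}) and the explicit trigonometric expression \eqref{eq:l102}. You are also right that $a_n m^\circ\to 0$ is not needed for your bound (it enters only in the paper's Taylor-type expansion of the trigonometric ratio and in the downstream requirement that the error be negligible). The only hypothesis you should make explicit is the uniform boundedness of the amplitudes $A_{r,k},B_{r,k}$, which is implicit in the model (and forced by Assumption \ref{assumption1}), exactly as you note.
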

\begin{proof}
Recall the events $\bar{B}_n=\{|w - \hat{w}|\ge a_n\}$ defined in the proof of Lemma \ref{lem:b8}. Define
$$
\ H^{(\mu)}_{l,m^\circ}(\omega)=\sum_{k = l - m^\circ}^{l}  e^{ \im  w (k-l) } \mu_k  -  \sum_{k = l +1}^{l+m^\circ+1}  e^{ \im  w (k-l) } \mu_k.
$$
For any $w'$ such that $|w'-w|<a_n$, we have
\begin{align}
 H^{(\mu)}_{l,m^\circ}(\omega)- &H^{(\mu)}_{l,m^\circ}(\omega')=\int^\omega_{\omega'} H^{(\mu)'}_{l,m^\circ}(\theta)\,d\theta. \label{eq:l101}\\
 =&\,\int^\omega_{\omega'}\Big[\sum_{k = l - m^\circ}^{l}  \im(k-l)e^{ \im  \theta (k-l) } \mu_k  -  \sum_{k = l +1}^{l+m^\circ+1}  \im(k-l)e^{ \im  \theta (k-l) } \mu_k\Big]\,d\theta \nonumber.
\end{align}
If $|\omega-\omega'|<a_n$ and for any oscillatory part in $\mu_k$ with frequency different from $\omega$, we have by Assumption \ref{assumption8} that the contribution of such oscillation to $H^{(\mu)'}_{l,m^\circ}(\theta)$ is $O(m^\circ\log n)$ uniformly in $l$ when $n$ is sufficiently large. 

Similarly, by similar arguments as those in Lemma \ref{lemma:smooth}, we have that the contribution of the smooth function $f(\cdot)$ to  $H^{(\mu)'}_{l,m^\circ}(\theta)$ is $O((m^\circ)^{5/4})$ uniformly in $l$. 
Hence by \eqref{eq:l101}, we have the contributions of the oscillatory frequencies other than $\omega$ and the smooth function $f(\cdot)$ to $| H^{(\mu)}_{l,m^\circ}(\omega)- H^{(\mu)}_{l,m^\circ}(\omega')|$ is of the order $a_n(m^\circ)^{5/4}$ uniformly over $l$ if $|\omega-\omega'|<a_n$. 
On the other hand, elementary but tedious calculations yield that, when $\mu_k=A\cos(\omega k+\theta)$, $A>0,0\le \theta<2\pi$, and $|\omega'-\omega|\le a_n$,
\begin{eqnarray}\label{eq:l102}
|H^{(\mu)}_{l,m^\circ}(\omega)- H^{(\mu)}_{l,m^\circ}(\omega')|=A\frac{\sin^2(m^\circ (\omega'-\omega)/2)}{|\sin [(\omega'-\omega)/2]|}+O(m^{\circ}a_n),
\end{eqnarray}
where the $O(m^{\circ}a_n)$ term is uniformly in $l$. Therefore, 
\[
| H^{(\mu)}_{l,m^\circ}(\omega)- H^{(\mu)}_{l,m^\circ}(\omega')|\lesssim (m^{\circ})^2|w-w'|+m^{\circ}a_n\lesssim (m^{\circ})^2 a_n
\]
uniformly in $l$
if $|\omega'-\omega|\le a_n.$
Since $\pr(\bar B_n^c)=O(b_n)$, the lemma follows.
\end{proof}
\begin{remark}
In order for the bound $(m^{\circ})^{3/2}a_n$ to converge to 0, we need $m^{\circ}\ll n/\log ^{2/3}n$ if $a_n=n^{-3/2}\log n$ and $m^{\circ}\ll n^{2/3}$ if $a_n=n^{-1}$, where $n^{-3/2}\log(n)$ and $n^{-1}$ come from Lemmas \ref{lemb6} and \ref{lem:8}. In fact, Lemma \ref{lem:b9} demonstrates the effectiveness of the local change point test in relieving the energy leak problem. Observe that in classic global change point detection algorithms such as those based on the CUSUM statistic, binary segmentation or dynamic programming, typically the detection window length $m^\circ$ in some steps are proportional to $n$. In this case  $(m^{\circ})^{3/2}a_n\ge C$ for some positive constant $C$ even if $a_n=n^{-3/2}$. Therefore, one cannot plug-in the estimated oscillatory frequency $\hat\omega$ for change point detection without appropriate adjustments for the estimation error. Unfortunately, estimating the error in frequency estimation under complex oscillation is a very difficult problem.  
\end{remark}

\begin{lemma}\label{lem:12}
Let $\omega\in\Omega$ and assume that there is no change point at frequency $\omega$.
Take $B$ to be the set of potential change points. 
Define 
\begin{align}
\Phi_{i}(j,\hat w) &\,= \left(\sum_{l = i -m'  }^{i }  \cos(   \hat{w} (l -j )) X_{l,n} -\sum_{l = i +1 }^{i + m' +1} \cos(  \hat{w} (l -j ) ) X_{l,n} \right) /\sqrt{2m'},\label{eq:stage2boots1}\\
\Psi_{i}(j,\hat w)&\,= \left(\sum_{l =i -m'  }^{i }  \sin(   \hat{w} (l -j ) ) X_{l,n} -\sum_{l =i +1 }^{i + m' +1} \sin(  \hat{w} (l -j ) ) X_{l,n} \right) /\sqrt{2m'}, \label{eq:stage2boots2}
\end{align}
and recall that
\[ 
\Upsilon_{k}(i,w) := \left(\sum_{l = k -m'  }^{k }  \exp(\im{   w (l - i ) } )X_l -\sum_{l = k +1 }^{k + m' +1} \exp(\im{ w (l - i ) } ) X_l \right) /\sqrt{2m'}\,. 
\]
For $ \mt+m' \leq i  \leq n - \mt - m' $, define 
\[ 
\hat{T}(i,w) =   \left( \sum_{k = i -\mt  }^{i }  \Upsilon_{k}(i,w) G_k -\sum_{k = i +1 }^{i + \mt +1}  \Upsilon_{k}(i,w) G_k \right)/\sqrt{2\mt} 
\]
and $\hat T(B,w)=\max_{i}|\hat{T}(i,w)|$, where $G_k$ are i.i.d. standard normal random variables independent of the data. If $\pr(|w - \hat{w}|\ge a_n) = O(b_n) $ for some sequences $a_n$ and $b_n$, $\tilde m\asymp n^{\gamma_1}$ with $1/6<\gamma_1<1$, $m'\asymp n^{\eta}$ with $0<\eta<1$, $a_n m'\rightarrow 0$, and Assumptions \ref{assumption3} to \ref{assumption8} hold true, then one can construct a sequence of events $H_n$ with $\pr(H_n)\ge 1-C(b_n+1/\log^q n)$, such that on the events $H_n$ and conditional on the data,
\begin{equation*}
\sup_{x\in\mathbb{R}}\Big|\pr(\hat T(B,w)\le x)-\pr (\hat T(B,\hat w)\le x)\Big|\lesssim [a_nm'(\sqrt{m'}+n^{1/q}\log n)]^{1/3}\log^2 n+n^{-1}\log^{1.5} n. 
\end{equation*}
\end{lemma}
\begin{proof}
Take 
$$
H_n:=\left\{|T_n^{(\epsilon)}(w) - T_n^{(\epsilon)}(\hat{w} )|\le a_n m' n^{1/q}\log(n)\right\}\bigcap\left\{|T_n^{(\mu)}(w) - T_n^{(\mu)}(\hat{w} )|\le  C(m')^{3/2}a_n\right\}\,.
$$ 
This lemma follows from Lemmas \ref{lem:b8} and \ref{lem:b9} and Proposition \ref{prop:complex_comparison}. Note that there is no need to put constraint on the range of $x$ here since the conditional variances of the real and imaginary parts of $\hat{T}(i,w)$ have positive lower bounds uniformly in $i$ for sufficiently large $n$ by Assumption \ref{assumption6} (see also \eqref{eq:10_8} below with $i=k$ therein).    
\end{proof}

The following Lemma \ref{lem:10} establishes a uniform Gaussian approximation result for the local change point detection algorithm.

\begin{lemma}\label{lem:10}
Assume Assumptions \ref{assumption3} to \ref{assumption8} hold, and $\tilde m\asymp n^{\gamma_1}$ with $\gamma_1>16/29$. Take $B$ to be the set of potential change points.
Define
$$
T^{(\epsilon)}(B, \omega):=\max_{i\in B} \left| \sum_{k = i -\mt  }^{i }  \exp(\im{ \omega (k - i ) }) \epsilon_k -\sum_{k = i +1 }^{i + \mt +1} \exp(\im{ \omega (k - i ) }) \epsilon_k \right|/\sqrt{2\mt}.
$$
And let $T^{(y)}(B, \omega)$ be the version of $T^{(\epsilon)}$ with $\epsilon_k$ therein replaced by $y_k$, where $\{y_k\}$ is a centered Gaussian time series preserving the covariance structure of $\{\epsilon_k\}$.  
Then, for any $\omega\in \Omega$, we have that
\begin{eqnarray}
\sup_{x\in\mathbb{R}}\Big|\pr(T^{(\epsilon)}(B,\omega)\le x)-\pr(T^{(y)}(B,\omega)\le x)\Big|\rightarrow 0.
\end{eqnarray}
\end{lemma}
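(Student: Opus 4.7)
The plan is to cast $T^{(\epsilon)}(B,\omega)$ as the $\ell^\infty$-norm of a normalized partial sum of a complex-valued high-dimensional non-stationary time series and then invoke Proposition~\ref{prop:complex_approximation}. Let $h=|B|\le n$ and enumerate $B=\{i_1,\ldots,i_h\}$. Define the bounded weights
$w_j(k):=e^{\sqrt{-1}\omega(k-i_j)}\bigl(\mathbb{1}_{\{i_j-\tilde m\le k\le i_j\}}-\mathbb{1}_{\{i_j+1\le k\le i_j+\tilde m+1\}}\bigr)$
and set $z_k\in\mathbb{C}^h$ with $z_{k,j}:=\sqrt{n/(2\tilde m)}\,w_j(k)\epsilon_k$. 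This rescaling is the essential bookkeeping: it ensures $\sum_{k=1}^n z_{k,j}/\sqrt n=J_{i_j}/\sqrt{2\tilde m}$, so $T^{(\epsilon)}(B,\omega)=\max_{1\le j\le h}|\sum_{k=1}^n z_{k,j}|/\sqrt n$, and the Gaussian counterpart $\{y^*_k\}$ (obtained by replacing $\epsilon_k$ by $y_k$) delivers $T^{(y)}(B,\omega)$ after the same rescaling.

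Next I would verify the hypotheses of Proposition~\ref{prop:complex_approximation} (i.e.\ Assumptions (A0)-(A3) of Section~\ref{sec:A}) for $\{z_k\}$. Because $|w_j(k)|\le 1$, the coordinate-wise moment bound inflates only by $\sqrt{n/\tilde m}$, so $\max_{k,j}\|z_{k,j}\|_q\lesssim\sqrt{n/\tilde m}\,C_q$, giving Assumption~(A1) with $\mathcal D_n\asymp\sqrt{n/\tilde m}$; the physical dependence measures $\theta^{(z)}_{j,k,q}$ inherit from $\delta_q(k)$ of $\{\epsilon_k\}$ with the same scaling, yielding Assumption~(A3) via Assumption~\ref{assumption3}. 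Assumption~(A2) is arranged by choosing an $M$-dependent truncation level $M=n^{\alpha}$ for a suitably small $\alpha>0$ and $\gamma=1/\log n$.

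Once the assumptions are in place, Proposition~\ref{prop:complex_approximation} produces
\begin{equation*}
\sup_{x>d^*_{n,h}}\bigl|\pr(T^{(\epsilon)}(B,\omega)\le x)-\pr(T^{(y)}(B,\omega)\le x)\bigr|\lesssim G^*(n,h),
\end{equation*}
and the remaining task is to extend the supremum to all $x\in\mathbb{R}$. For $x<0$ both sides vanish. For $0\le x\le d^*_{n,h}$, I would apply anti-concentration directly: Assumption~\ref{assumption6} forces $\var(\Re(J^{(y)}_{i_1})/\sqrt{2\tilde m})\ge c^2>0$ for large $n$, so $\pr(T^{(y)}\le x)\le \pr(|J^{(y)}_{i_1}|/\sqrt{2\tilde m}\le x)\lesssim x^2\lesssim (d^*_{n,h})^2=o(1)$; the analogous bound for $T^{(\epsilon)}$ follows either from a one-coordinate CLT for the complex Gaussian absolute value or from a separate application of Proposition~\ref{prop:complex_approximation} with $h=1$. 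Combining the two regimes yields the uniform $o(1)$ bound claimed.

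The main obstacle is the rate accounting in the second step. The scaling $\mathcal D_n\asymp\sqrt{n/\tilde m}$ enters Assumption~(A2), squeezing the admissible range of $M$, and the resulting bounds for $d^*_{n,h}$ and $G^*(n,h)$ must both be $o(1)$ simultaneously. Tracking the competing contributions $n^{-1/8}M^{1/2}l_n^{11/7}$, $l_n^{-\delta}$ and $\Xi_M^{1/3}l_n^{\delta/3}$ in $d^*_{n,h}$, together with the Gaussian approximation error $G^*(n,h)$, reveals that the threshold $\gamma_1>16/29$ is precisely what permits a single choice of $(M,\alpha,\delta)$ for which both rates converge to zero; this optimization is the delicate part of the proof and the place where the specific exponent $16/29$ originates.
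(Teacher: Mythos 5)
Your reduction to Proposition \ref{prop:complex_approximation} via zero-padding and rescaling by $\sqrt{n/\tilde m}$ is not the paper's route, and it contains a quantitative gap that I do not think can be repaired on the stated range of $\gamma_1$. The error bounds in Theorem \ref{thmA1} and Proposition \ref{prop:A1} are calibrated for coordinates with $O(1)$ moments, and the rescaling inflates them by powers of $n/\tilde m$ that are not absorbed. Concretely, for your $z_{k,j}=\sqrt{n/(2\tilde m)}\,w_j(k)\epsilon_k$ only $O(\tilde m)$ indices $k$ are nonzero per coordinate, so $\bar m^3_{z,3}=\max_j\sum_k\E|z_{k,j}|^3/n\asymp \tilde m\,(n/\tilde m)^{3/2}/n=(n/\tilde m)^{1/2}=n^{(1-\gamma_1)/2}$. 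The Stein term $(\psi^3+\psi^2\beta+\psi\beta^2)\frac{(2M+1)^2}{\sqrt n}\bar m^3_{z,3}$ in Proposition \ref{prop:A1}, with the tuning $\psi,\beta\asymp n^{1/8}M^{-1/2}$ used in the proof of Theorem \ref{thmA1}, becomes $\asymp n^{3/8-\gamma_1/2}M^{1/2}$, which \emph{diverges} for every $\gamma_1\le 3/4$ — in particular on $(16/29,\,3/4]$, and on the whole range $\gamma_1<2/3$ required by Theorem \ref{thm:step2acc} when phases change. Re-tuning $\psi,\beta$ does not help: decreasing them inflates the anti-concentration penalty $(\beta^{-1}\log h+\psi^{-1})(\log h)^{1+\delta}$, and $\beta$ is pinned by $\beta\lesssim\sqrt n/(M M_{xy})$ with the truncation level $M_{xy}\ge u_x(\gamma)\asymp\sqrt{n/\tilde m}\,n^{1/q}$ under the polynomial moment condition; balancing these forces $\tilde m\ll n^{4/q}$, incompatible with $\tilde m\asymp n^{\gamma_1}$, $\gamma_1>16/29$, for the values of $q$ the theory needs. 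Your final claim that the threshold $16/29$ "originates" from optimizing $(M,\alpha,\delta)$ in this scheme is therefore unsupported — you have not carried out that accounting, and it does not come out that way.

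The paper proves the lemma by an entirely different mechanism: a KMT-type \emph{strong invariance principle}. It sets $U_k=[\cos(\omega k)\epsilon_k,\ \sin(\omega k)\epsilon_k]^\top$ and invokes Theorem 1 and Corollary 2 of \cite{wu2011gaussian} to construct, on a richer probability space, i.i.d.\ Gaussian vectors with $\max_{i\le n}|S_{U,i}-S_{Y,i}|=O_\pr(n^{8/29}\log^{35/29}n)$. Since the local statistic at $i$ is the second difference $|\tilde S_{U,i+\tilde m+1}+\tilde S_{U,i-\tilde m-1}-2\tilde S_{U,i}|/\sqrt{2\tilde m}$, the coupling error on the max statistic is $O_\pr(n^{8/29}\log^{35/29}n/\tilde m^{1/2})$, which is $o_\pr(1)$ precisely when $\gamma_1>16/29$ — that is where the exponent comes from. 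The argument is finished with the polygon approximation plus Nazarov's inequality (no restriction on $x$ is needed because Assumption \ref{assumption6} bounds the relevant variances below), and a covariance comparison via Proposition \ref{prop:complex_comparison} to pass from the constructed Gaussian vectors to the target process $\{y_k\}$. If you want to salvage your high-dimensional-CLT route, you would need a Gaussian approximation theorem whose rate is driven by the effective local sample size $\tilde m$ rather than $n$; the paper's Proposition \ref{prop:complex_approximation} is not such a result.
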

\begin{proof}
Let $U_k=[\cos(\omega k)\epsilon_k,\,\sin(\omega k)\epsilon_k]^\top$, where $k=1,2,\cdots, n$, and we let $U_j=0$ if $j>n$. For any $j\in \mathbb{Z}$, define the projection operator ${\cal P}_j(\cdot)=\E(\cdot|\FF_j)-\E(\cdot|\FF_{j-1})$. Let 
\[
D_i=\sum_{j=i}^\infty{\cal P}_i(U_j)\,, 
\]
where $i=1,2,\cdots,n$. Let 
\[
\Sigma_k=\E(D_kD^\top_k)\,,
\]
where $k=1,2,\cdots, n$. 
Now, by the proof of Lemma 5 in \cite{zhou2014}, we have that
\begin{align}
&\max_{1\le k\le \mt}\max_i\left|\sum_{j=i-k}^i\left[\Sigma_j-\mbox{Diag}(v(j/n,\omega),v(j/n,\omega))\right]/\tilde m\right|\nonumber\\
=\,&O(n^{3/8}/\tilde m)=O(n^{-\alpha_4})\label{eq:10_6}
\end{align}
for some $\alpha_4>0$, where recall that $v$ is defined in Assumption \ref{assumption6}.
By Assumptions \ref{assumption3} to \ref{assumption8} 
 and the proofs of Theorem 1 and Corollary 2 of \cite{wu2011gaussian}, we have that, on a possibly richer probability space from a proper construction procedure, there exist i.i.d. standard normal random vectors $Y_i$, where  $i=1,2,\cdots,$ such that
\begin{eqnarray}\label{eq:10_1}
\max_{1\le i\le n}|S_{U,i}-S_{Y,i}|=O_\pr(n^{8/29}(\log(n))^{35/29})\,,
\end{eqnarray}
where 
\[
S_{U,i}:=\sum_{j=1}^iU_j\quad\text{and}\quad S_{Y,i}:=\sum_{j=1}^i\Sigma_j^{1/2}Y_j\,.
\]
Let $\tilde U_i=(U_{i})_1+\im (U_{i})_2$ and $\tilde{Y}_i=(\Sigma_i^{1/2}Y_i)_1+\im (\Sigma_i^{1/2}Y_i)_2$, where $x_j$ denotes the $j$-th entry of a vector $x$. Then, \eqref{eq:10_1} implies that
\begin{eqnarray}\label{eq:10_2}
\max_{1\le i\le n}|\tilde S_{U,i}-\tilde S_{Y,i}|=O_\pr(n^{8/29}(\log(n))^{35/29})\,,
\end{eqnarray}
where $\tilde S_{U,i}=\sum_{j=1}^i\tilde U_j$, and $\tilde S_{Y,i}$ is defined similarly. Note that 
\begin{align}
&\left| \sum_{k = i -\mt  }^{i }  \exp(\im{ \omega (k - i ) }) \epsilon_k -\sum_{k = i +1 }^{i + \mt +1} \exp(\im{ \omega (k - i ) }) \epsilon_k \right|\\
=\,&|\tilde S_{U,i+\tilde m+1}+\tilde S_{U,i-\tilde m-1}-2\tilde S_{U,i}|.\nonumber
\end{align}
Therefore, by combining this with \eqref{eq:10_2}, we have that
\begin{eqnarray}\label{eq:10_3}
T^{(\epsilon)}(B,\omega)= T^{(\tilde Y)}(B,\omega)+O_\pr(n^{8/29}(\log(n))^{35/29}/\sqrt{\tilde m})=o_{\pr} (n^{-\alpha_3})
\end{eqnarray}
for some $\alpha_3>0$, where 
\[
T^{(\tilde Y)}(B,\omega):=\max_i\left|\sum_{k=i-\tilde m}^i \tilde Y_k-\sum_{k=i+1}^{i+\tilde m+1} \tilde Y_k\right|/\sqrt{2\tilde m}\,. 
\]
By \eqref{eq:10_3}, a similar regular convex  polygon approximation argument as that in the proof of Proposition \ref{prop:complex_comparison} and Nazarov's inequality \citep{nazarov2003}, we have
\begin{eqnarray}\label{eq:10_4}
\sup_{x\in\mathbb{R}}\Big|\pr(T^{(\epsilon)}(B,\omega)\le x)-\pr(T^{(\tilde Y)}(B,\omega)\le x)\Big|\rightarrow 0\,.
\end{eqnarray}
Note that the range of $x$ does not have to be constrained since the eigenvalues of $\sum_{j={i-\tilde m}}^{i}\Sigma_j/\tilde{m}$ are uniformly bounded away from 0 for sufficiently large $n$ by Assumption \ref{assumption6} and \eqref{eq:10_6} above. 

Define 
$$
\hat Y_k=\mbox{Diag}(v^{1/2}(k/n,\omega),v^{1/2}(k/n,\omega))Y_k
$$
and
$$
T^{(D)}(\omega):=\max_i\left|\sum_{k=i-\tilde m}^i \hat{Y}_k-\sum_{k=i+1}^{i+\tilde m+1} \hat Y_k\right|/\sqrt{2\tilde m}\,.
$$
By \eqref{eq:10_6} and Proposition \ref{prop:complex_comparison}, we have that \begin{eqnarray}\label{eq:10_7}
\sup_{x\in\mathbb{R}}\Big|\pr(T^{(\tilde Y)}(B,\omega)\le x)-\pr(T^{(D)}(B,\omega)\le x)\Big|\rightarrow 0.
\end{eqnarray}
Note that again the range of $x$ does not need to be constrained in \eqref{eq:10_7} since all $v(j/n,\omega)$ are bounded away from 0. Finally, again by the proof of Lemma 5 in \cite{zhou2014}, we obtain that, 
\begin{align}
&\max_{i\le k}\left|\E[ \mathsf S_{y,i}\mathsf S^\top_{y,k}]-\sum_{j=k-\tilde m}^{i+\tilde m +1}\mbox{Diag}(v(j/n,\omega),v(j/n,\omega))/(2\mt)\right|\nonumber\\
=\,&O(n^{-\alpha_5}),\label{eq:10_8}
\end{align}
where 
$$
\mathsf S_{y,i}=\left(\sum_{j=i-\tilde m}^i[\cos(\omega j)y_j,\sin(\omega j)y_j]^\top-\sum_{j=i+1}^{i+\tilde m+1}[\cos(\omega j)y_j,\sin(\omega j)y_j]^\top\right)/\sqrt{2\tilde m}
$$ 
and $\alpha_5>0$ is a constant.
Hence, we obtain by Proposition \ref{prop:complex_comparison} that
\begin{eqnarray*}
\sup_{x\in\mathbb{R}}\Big|\pr(T^{(y)}(B,\omega)\le x)-\pr(T^{(D)}(B,\omega)\le x)\Big|\rightarrow 0\,.
\end{eqnarray*}
By combining the above bounds, the lemma follows.
\end{proof}

The following lemma establishes that the covariance structure of the real part of the phase-adjusted OBMB is asymptotically close to that of the target Gaussian process. The same result can be established for the covariance structure of the imaginary part and the covariance between the real and imaginary parts using the same arguments.

\begin{lemma}\label{lem:11}
Assume that assumptions \ref{assumption3} to \ref{assumption8} hold true. 
Take $\omega\in\Omega$. Further, assume that there is no change point at frequency $\omega$. For any $k\in [\mt+m',n-\mt-m']$, set $k^*=k-\mt-m'+1$. Define
$$
\Delta'' := \max_{\mt+m' \leq i,j  \leq n - \mt - m'} \left| \cov(\tilde S^{(2)}_{i^*}, \tilde S^{(2)}_{j^*}|X) -  \cov(\Theta^{(2)}_{i^*}, \Theta^{(2)}_{j^*})\right|\,,
$$
where recall the definitions of $\Theta^{(2)}_{i^*}$ and $\tilde S^{(2)}_{i^*}$ in Section \ref{sec:nullboots}.
Under the assumption that $m'\rightarrow\infty$ and $m'/\mt\rightarrow 0$, we have
$$
\pr(\Delta''\lesssim 1/m' +m'/\mt+ n^{1/q'} \sqrt{m'/\mt}\log n)\ge 1-C/(\log(n))^{q'} 
$$
for some finite positive constant $C$, where recall $q'=q/2$.
\end{lemma}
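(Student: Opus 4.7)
The plan is to reprise the bias-plus-concentration split employed in the proof of Theorem~\ref{thm:stage1boots}, now for the phase-adjusted local statistics. Conditionally on the data, $G_k$ being i.i.d.\ standard normal yields
\[
\cov(\tilde S^{(2)}_{i^*},\tilde S^{(2)}_{j^*}|X)=\frac{1}{2\mt}\sum_k s_i(k)s_j(k)\Phi_k(i,w)\Phi_k(j,w),
\]
where $s_i(k)\in\{0,\pm 1\}$ encodes block membership and vanishes outside $[i-\mt,i+\mt+1]$, so that only pairs with $|i-j|\le 2\mt+1$ contribute. Decomposing $X_l=\mu_l+\epsilon_l$, I would write $\Phi_k=\Phi^{(\mu)}_k+\Phi^{(\epsilon)}_k$ and expand the product into four summands to be controlled separately.

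The first step is to show that the mean part is dominated. Under the no-change-point-at-$\omega$ assumption, the amplitudes of the $\omega$-oscillation are constant, so sum-to-product identities make the two $m'$-block sums defining $\Phi_k(i,w)$ cancel to leading order, leaving $O(1/\sqrt{m'})$; oscillations at other frequencies $\omega'\in\Omega$ contribute $O((\log n)/\sqrt{m'})$ via Assumption~\ref{assumption8} and the Dirichlet kernel bound; and the smooth trend contributes $O(1/\sqrt{m'})$ by Lemma~\ref{lemma:smooth}. Together with $\|\Phi^{(\epsilon)}_k(i,w)\|_q=O(1)$ uniformly (Assumption~\ref{assumption3}), the three summands touching $\Phi^{(\mu)}$ are absorbed into the claimed rate via a routine maximum inequality, so attention reduces to the $\Phi^{(\epsilon)}\Phi^{(\epsilon)}$ term.

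For the $\Phi^{(\epsilon)}\Phi^{(\epsilon)}$ piece I would split into bias and fluctuation. The bias reduces to comparing two double sums of cosine-weighted autocovariances of $\{\epsilon_l\}$: an $m'$-nested-in-$\mt$ version arising from the bootstrap and a single-$\mt$ version arising from $\Theta^{(2)}$. Using the piecewise Lipschitz smoothness of $\gamma_{j,k}(\cdot)$ (Assumption~\ref{assumption7}) and the decay $|\cov(\epsilon_l,\epsilon_{l'})|\lesssim(|l-l'|+1)^{-d}$, a boundary-term computation completely parallel to the $O(m/n+1/m)$ step in the proof of Theorem~\ref{thm:stage1boots} gives a uniform bias of order $1/m'+m'/\mt$. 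For the fluctuation, set
\[
Y_k(i,j):=s_i(k)s_j(k)\Phi^{(\epsilon)}_k(i,w)\Phi^{(\epsilon)}_k(j,w);
\]
since $\Phi^{(\epsilon)}_k$ depends on $\{\epsilon_l\}$ only through $|l-k|\le m'$, $\{Y_k\}_k$ is approximately $2m'$-dependent with bounded $L^{q'}$ norm. Lemma~\ref{lemma4} applied to the product process (with $h=1$ and effective block length $m'$) then gives $\|\mt^{-1}\sum_k[Y_k(i,j)-\E Y_k(i,j)]\|_{q'}=O(\sqrt{m'/\mt})$ pointwise in $(i,j)$, and a Markov $L^{q'}$ maximum inequality over the $O(n\mt)$ nontrivial pairs, with the residual $\mt^{1/q'}$ penalty absorbed into the $\log n$ factor via the moment condition $q>4/(\gamma_1-\eta)$ and a slight truncation refinement, yields the claimed $n^{1/q'}\sqrt{m'/\mt}\log n$ rate on an event of probability at least $1-C/\log^{q'}n$.

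The main obstacle is the two-scale concentration argument: the inner $m'$-window creates long-range dependence among the $Y_k$, and the key point is to exploit the fact that blocks separated by more than $2m'$ are independent so that the outer $\mt$-average has effective variance of order $m'/\mt$ rather than $1/\mt$. A secondary subtlety is passing the supremum over the $O(n\mt)$ pairs $(i,j)$ without paying a prohibitive power of $n$; here one must exploit the heavy overlap of summation windows associated with neighboring index pairs, together with the finite-$q$-moment tail from Assumption~\ref{assumption3}, to reach the advertised $n^{1/q'}\log n$ factor rather than the crude $(n\mt)^{1/q'}$ one.
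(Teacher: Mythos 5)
Your proposal follows the same architecture as the paper's proof: the same $\Phi_k=\Phi^{(\mu)}_k+\Phi^{(\epsilon)}_k$ decomposition, the same observation that only pairs with $|i-j|\le 2\mt+2$ contribute, a bias computation parallel to the Stage-1 one yielding $1/m'+m'/\mt$ (plus a harmless $1/\mt$), and a two-scale concentration bound of order $\sqrt{m'/\mt}$ for the $\Phi^{(\epsilon)}\Phi^{(\epsilon)}$ fluctuation via (the proof of) Lemma~\ref{lemma4}. Your bound $\Phi^{(\mu)}_k(i,w)=O(1/\sqrt{m'})$ is in fact sharper than the $O(1)$ the paper records, and it is what makes the $\mu\times\mu$ term $O(1/m')$. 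For the two cross terms, however, be explicit that $\sum_k\Phi^{(\mu)}_k(j,w)\Phi^{(\epsilon)}_k(i,w)$ must be rewritten as a single linear combination of the $\epsilon_l$'s before applying a maximal inequality (the analogue of \eqref{eq:thm47} in Stage 1): a termwise triangle inequality using only $\|\Phi^{(\epsilon)}_k\|_q=O(1)$ does not reach the claimed rate for all admissible $(\gamma_1,\eta,q)$.

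The one step that does not work as you describe it is the passage to uniformity over $(i,j)$. The residual $\mt^{1/q'}$ from a union bound over all $O(n\mt)$ pairs is a positive power of $n$ (namely $n^{2\gamma_1/q}$) and cannot be ``absorbed into the $\log n$ factor''; the $\log n$ in the statement is nothing but Markov slack --- the $L^{q'}$ bound one actually proves is $O(n^{1/q'}\sqrt{m'/\mt})$ with no logarithm, and then $\pr(\Delta''>h_n\log n)\le\|\Delta''\|_{q'}^{q'}/(h_n\log n)^{q'}$ delivers the $1-C/\log^{q'}n$ probability. To avoid the extra $\mt^{1/q'}$ you must do exactly what you relegate to a ``secondary subtlety'': for each fixed $j$, the maximum over $i$ with $|i-j|\le\mt$ is a maximum over partial-sum endpoints of a single two-scale block process, so it is controlled at $O(\sqrt{m'/\mt})$ with no loss by the maximal inequality already built into Lemma~\ref{lemma4} (equivalently, Lemma 1 of Zhou (2013)); only the outer maximum over the $O(n)$ values of $j$ is paid for with the crude $\ell^{q'}$ bound, giving $n^{1/q'}$. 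With that correction your argument coincides with the paper's.
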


Note that $\Delta''$ defined in this Lemma is different from $\Delta'$ defined in Section \ref{sec:nullboots}.

\begin{proof}
We first separate $\Theta^{(2)}_{i^*}$ into individual summations. Denote 
$$
E_{i,\mt }^+ = \sum_{k = i - \mt  }^{i}  \cos( w (k - i + \mt) ) \epsilon_k\ \mbox{ and }\ E_{i,\mt}^- = \sum_{k = i + 1   }^{i+\mt +1}  \cos( w (k - i + \mt )) \epsilon_k\,.
$$ 
Then 
\[ 
\Theta^{(2)}_{i^*} = ( E_{i,\mt }^+  - E_{i,\mt }^- )/\sqrt{2\mt } 
\] 
and
	\begin{align*}
	2\mt \cov(&\Theta^{(2)}_{i^*}, \Theta^{(2)}_{j^*}) = \cov \left(   E_{i,\mt}^+  - E_{i,\mt}^- , E_{j,\mt}^+  - E_{j,\mt}^- \right) \\
	& = \cov\left(E_{i,\mt}^+,E_{j,\mt}^+ \right) -  \cov\left(E_{i,\mt}^-,E_{j,\mt}^+ \right) - \cov\left(E_{i,\mt}^+,E_{j,\mt}^- \right) +  \cov\left(E_{i,\mt}^-,E_{j,\mt}^- \right)\,.
	\end{align*}
If $  i \leq j $, $E_{i,\mt}^+$ and $\,E^-_{j,\mt}$ do not overlap, and we have 
	\begin{align*}
	\cov\left(E_{i,\mt}^+,E_{j,\mt}^- \right) &\leq \sum_{ i - \mt  \leq k \leq i} \sum_{ j \leq l \leq j + \mt} | \E \epsilon_k \epsilon_l | \\
	& \leq  \sum_{k = 0}^{\mt} (\mt - k) |j-i+\mt+k|^{-q} + \sum_{k = 1}^{\mt} k |j-i + k|^{-q} \\ &  \leq   \sum_{k = 0}^{\mt} (\mt - k) k^{-q} + \sum_{k = 1}^{\mt} k^{-q+1}\leq \mt^{-q+2} = O(1)
	\end{align*}
since $q\geq 4$ by Assumption \ref{assumption3}. For other terms, we will control them one by one below.
	
	For the multiplier bootstrap part we have
	\begin{align*}
	&2\tilde{m}\cov(\tilde S^{(2)}_{i^*}, \tilde S^{(2)}_{j^*}|X) \\
	 =&\, \cov \left(  \sum_{k = i - \mt  }^{i }  \Phi_{k}(i,w) G_k -\sum_{k = i +1 }^{i + \mt +1} \Phi_{k}(i,w) G_k  , \sum_{k = j -\mt  }^{j} \Phi_{k}(j,w) G_k -\sum_{k = j +1 }^{j + \mt +1} \Phi_{k}(j,w) G_k  \right) \\
	=&\, \begin{cases}
	0 & \text{ if } |i - j| > 2 \mt + 2\\ 
	- \sum_{k = i+ \delta_1}^{j - \delta_1}\Phi_{k}(i,w)\Phi_{k}(j,w)  & \text{ if }  \mt + 1< |i - j| \leq 2 \mt + 2\\
	\left[\sum_{k = i+ \delta_1}^{i}-  \sum_{k = i}^{j} + \sum_{k = j}^{j-\delta_1}\right]\Phi_{k}(i,w)\Phi_{k}(j,w) 
	&  \text{ if } 0  \leq |i - j| \leq \mt + 1\,,
	\end{cases}
	\end{align*}
	where $\delta_1 = |i -j| -\mt$. 
	
	\underline{Case 1, $|i - j| > 2\mt+2$}. By the above analysis, since both summations do not overlap, we have  
	\[ 	
	\cov(\Theta^{(2)}_{i^*}, \Theta^{(2)}_{j^*}) = O(1/\mt)\ \mbox {and }\ \cov(\tilde S^{(2)}_{i^*}, \tilde S^{(2)}_{j^*}|X) = 0\,.
	\]
	Thus, 
	\[ 
	| \cov(\Theta^{(2)}_{i^*}, \Theta^{(2)}_{j^*}) - \cov(\tilde S^{(2)}_{i^*}, \tilde S^{(2)}_{j^*}|X)| = O(1/\mt).  
	\]
	
	\underline{Case 2, $\mt+1< |i - j| \leq 2\mt+2$.} Without loss of generality, assume $i \leq j$. Note that $ \cov\left(E_{i,\mt}^-,E_{j  ,\mt }^+ \right)  $ is the only term with overlapping entries. Thus, 
	\[ 	
	\cov(\Theta^{(2)}_{i^*}, \Theta^{(2)}_{j^*})  = -\cov\left(E_{i,\mt}^-,E_{j ,\mt}^+ \right)/(2\mt) +  O(1/\mt) \,.
	\]
	We can further expand overlapping and non-overlapping parts and note the covariance between the non-overlapping part is well controlled: 
	\begin{align}
	\cov\left(E_{i,\mt}^-,E_{j ,\mt }^+ \right)  =&\,  \cov\left(\sum_{k = i +1  }^{i+\mt +1}  \cos({w (k - i ) }) \epsilon_k ,\sum_{k = j-\mt }^{j}  \cos({w (k - j) }) \epsilon_k \right) \label{Expansion of CovE-E+}\\
	 =&\, \cov\left(\sum_{k = i +1  }^{i+ \delta_1}  \cos({w (k - i) }) \epsilon_k ,\sum_{k = j-\mt}^{i+\mt+1}  \cos({w (k - j) }) \epsilon_k \right)\nonumber \\
	& + \cov\left(\sum_{k = i+1  }^{i+\mt}  \cos({ w (k - i ) }) \epsilon_k ,\sum_{k = j-\delta_1 }^{j}  \cos({w (k - j) }) \epsilon_k \right)\nonumber \\
	& + \cov\left(\sum_{k = i + \delta_1  }^{j - \delta_1}  \cos({ w (k - i  ) }) \epsilon_k ,\sum_{k =i + \delta_1}^{j- \delta_1}  \cos({ w (k - j ) }) \epsilon_k \right)\nonumber \\
	=&\, \cov\left(\sum_{k = i + \delta_1  }^{j - \delta_1}  \cos({ w (k - i ) }) \epsilon_k ,\sum_{k =i +\delta_1}^{j- \delta_1}  \cos({w (k - j ) }) \epsilon_k \right) +  O(1)\,.\nonumber
	\end{align}
	Next we move on to the  multiplier bootstrap part in this case.
	Note that under our assumption  about the trend,
	\[  
	\left(\sum_{l = k -m'  }^{k }  \cos({   w (l - i) } ) \mu_l -\sum_{l = k +1 }^{k + m' +1} \cos({   w (l - i ) } ) \mu_l \right) /\sqrt{2m'} = O(1).
	\] 
	Thus, we can break down 
	\begin{align*}
	\Phi_{k}(i,w) & =  \left(\sum_{l = k -m'  }^{k }  \cos({  w (l - i ) } ) \epsilon_l -\sum_{l = k +1 }^{k + m' +1} \cos({   w (l - i ) } ) \epsilon_l \right) /\sqrt{2m'} + O(1) \\
	& = \Phi_{k}^{(\epsilon)}(i,w)  + O(1)\,.
	\end{align*}	
	Breaking down the covariance structure of the bootstrap statistics in the same way like that in \eqref{Expansion of CovE-E+}, we get
	\begin{align}
	&\cov(\tilde S^{(2)}_{i^*},\tilde S^{(2)}_{j^*}|X) =  - \sum_{k = i+ \delta_1}^{j - \delta_1} \Phi_{k}(i,w)\Phi_{k}(j,w)/(2\mt) \label{Equation cov Fi Fj}\\
	=&\,  - \sum_{k = i+ \delta_1}^{j - \delta_1} (\Phi_{k}^{(\epsilon)}(i,w) + O(1) )(\Phi_{k}^{(\epsilon)}(j,w)+ O(1) )/(2\mt) \nonumber\\
	 =&\,  \sum_{k = i+ \delta_1}^{j - \delta_1} \Phi_{k}^{(\epsilon)}(i,w)\Phi_{k}^{(\epsilon)}(j,w) + a_n  \sum_{k = i+ \delta_1}^{j - \delta_1} \Phi_{k}^{(\epsilon)}(i,w)/(2\mt)\nonumber \\
	 &\,+ b_n  \sum_{k = i+ \delta_1}^{j - \delta_1} \Phi_{k}^{(\epsilon)}(j,w)/(2\mt) + O(1/\mt)\,,\nonumber
	\end{align}
	where $a_n$ and $b_n$ are some bounded sequence.
	Since \eqref{Equation cov Fi Fj} holds for all $j$ satisfying $\mt+1< |i - j| \leq 2\mt+2$, 	
	\begin{align*}
	&\left\| \max_{j\in [i+\mt+2,i+2\mt+2]} \Big|\sum_{k = i+ \delta_1}^{j - \delta_1}   \Phi_{k}^{(\epsilon)}(i,w)\Big| \right\|_q \\
	\leq \,&\left\|  m' \max_{j\in [i+\mt+2,i+2\mt+2]}\Big| \sum_{k = i+ \delta_1 - m'}^{j - \delta_1+m'} \epsilon_k\Big|/\sqrt{2m'}  \right\|_q  =  O( \mt m' )\,.
	\end{align*}
	We thus have
	\[ 
	\left\|\max_{i\in B, j\in [i+\mt+2,i+2\mt+2]}  \Big| \sum_{k = i+ \delta_1}^{j - \delta_1}  \Phi_{k}^{(\epsilon)}(i,w) \Big|\right\|_q  = O(\sqrt{\mt m'} n^{1/q} ).
	\]
	Then, the following holds uniformly for all $\mt+1< |i - j| \leq 2\mt+2$ 
	\[
	\cov(\tilde S^{(2)}_{i^*},\tilde S^{(2)}_{j^*}|X) =  - \sum_{k = i+ \delta_1}^{j - \delta_1} \Phi_{k}^{(\epsilon)}(i,w)\Phi_{k}^{(\epsilon)}(j,w) / (2\mt) + O_{|q|}( n^{1/q}\sqrt{m'/\mt})\,,
	\]
	where the asymptotic notation $X_n=O_{|q|}(a_n)$ means that the random variable $X_n$ and the scalar $a_n$ satisfy $\|X_n/a_n\|_q=O(1)$. 
	Define
\[ 
S_{i,k,m'} = \sum_{l = k   }^{k +m' } \cos({   w (l - i ) } ) \epsilon_l \,. 
\]	
	Expand the quadratic term, and we have
	\begin{align*}
	&\sum_{k = i+ \delta_1}^{j - \delta_1} \Phi_{k}^{(\epsilon)}(i,w)\Phi_{k}^{(\epsilon)}(j,w) / (2\mt) \\
	 = &\,\sum_{k = i+ \delta_1}^{j - \delta_1} (S_{i,k-m',m'} -S_{i,k+1,m'} )(S_{j,k-m',m'} -S_{j,k+1,m'} )  / (4\mt m') \\
	=& \,\sum_{k = i+ \delta_1}^{j - \delta_1} S_{i,k-m',m'}S_{j,k-m',m'} / (4\mt m') +\sum_{k = i+ \delta_1}^{j - \delta_1} S_{i,k+1,m'}S_{j,k+1,m'} / (4\mt m')\\
	&- \sum_{k = i+ \delta_1}^{j - \delta_1} S_{i,k-m',m'}S_{j,k+1',m'} / (4\mt m') - \sum_{k = i+ \delta_1}^{j - \delta_1} S_{i,k+1,m'}S_{j,k-m',m'} / (4\mt m')\,.
	\end{align*} 
	By the proof of Lemma 1 of \cite{zhou2013}, we have for a fixed $j \in [\mt+m',\dots,n - \mt-m' ]$ 
	\[ 
	\left\|  \max_{|i-j| \leq \mt }\left( \sum_{k = i+ \delta_1}^{j - \delta_1} \Phi_{k}^{(\epsilon)}(i,w)\Phi_{k}^{(\epsilon)}(j,w) - \E\sum_{k = i+ \delta_1}^{j - \delta_1} \Phi_{k}^{(\epsilon)}(i,w)\Phi_{k}^{(\epsilon)}(j,w) \right)  \right\|_{q'} = O(\sqrt{m'/\mt})\,.   
	\]
	This implies that for all $i,j$, 
	\begin{align*}
	&\left\| \max_{\mt + m'\leq j \leq n-\mt-m'} \,\, \max_{|i-j| \leq \mt }\left( \sum_{k = i+ \delta_1}^{j - \delta_1} \Phi_{k}^{(\epsilon)}(i,w)\Phi_{k}^{(\epsilon)}(j,w) - \E\sum_{k = i+ \delta_1}^{j - \delta_1} \Phi_{k}^{(\epsilon)}(i,w)\Phi_{k}^{(\epsilon)}(j,w) \right)   \right\|_{q'}\\
	 = \,&O(n^{1/q'} \sqrt{m'/\mt})\,.   
	\end{align*}
	Lastly, note that the non-overlapping sums satisfy 
	\begin{align*}
	&\left| \E \sum_{k = i+ \delta_1}^{j - \delta_1} S_{i,k-m',m'}S_{j,k+1,m'} / (4\mt m') \right|\\
	 =\,&2\left| \sum_{k = i+ \delta_1}^{j - \delta_1}  \cov\left(S_{i,k-m',m',}S_{j,k+1,m'} \right)  / (4\mt m')  \right| = O(1/m').
	\end{align*}
	So we have
	\begin{align*}
	&\sum_{k = i+ \delta_1}^{j - \delta_1} \Phi_{k}^{(\epsilon)}(i,w)\Phi_{k}^{(\epsilon)}(j,w) / (2\mt) \\
	= &\,\E \sum_{k = i+ \delta_1}^{j - \delta_1} S_{i,k-m',m'}S_{j,k-m',m'} / (4\mt m') \\
	&\,+ \E \sum_{k = i+ \delta_1}^{j - \delta_1} S_{i,k+1,m'}S_{j,k+1,m'} / (4\mt m') + O(1/m').  
	\end{align*}
	Combining the previous results, we have
	\begin{align*}
	& \left|	\cov(\tilde S^{(2)}_{i^*}, \tilde S^{(2)}_{j^*}|X) -  \cov(\Theta^{(2)}_{i^*}, \Theta^{(2)}_{j^*})\right| \\
	=& \left\vert \E \sum_{k = i+ \delta_1}^{j - \delta_1} S_{i,k-m',m'}S_{j,k-m',m'}+ S_{i,k,m'}S_{j,k,m'} / (4\mt m') \right. \\
	&-\left.\cov\left(\sum_{k = i+ \delta_1  }^{j - \delta_1}  \cos({w (k - i ) }) \epsilon_k ,\sum_{k = i + \delta_1 }^{j - \delta_1}  \cos({w (k - j) }) \epsilon_k \right)/(2\mt) \right| \\
	&+ O({1/\mt}) + O(1/m') + O_{|q'|}(n^{1/q'}\sqrt{m'/\mt })\\
	\le & \left|  \E \sum_{k = i+ \delta_1}^{j - \delta_1-m'} S_{i,k,m'}S_{j,k,m'}  / (2\mt m') -  \frac{1}{2\mt}  \sum_{k = i+ \delta_1}^{j - \delta_1}  \sum_{l = i+ \delta_1}^{j - \delta_1}  \cos({ w (k - i) }) \cos({ w (l - j) })  \E \epsilon_k \epsilon_l  \right|  \\
	& + \left|  \E \sum_{k = i+ \delta_1-m'}^{i + \delta_1}  S_{i,k,m'}S_{j,k,m'}  / (4\mt m')  \right|+  \left|  \E \sum_{k = j - \delta_1 - m'}^{j - \delta_1 }  S_{i,k,m'}S_{j,k,m'}  / (4\mt m')  \right|\\
	& +O({1/\mt}) + O(1/m') + O_{|q'|}(n^{1/q'}\sqrt{m'/m}) \\
	\le &   \sum_{ |k- l| \leq m'}  \left| \frac{m'-|k-l|}{2\mt m'} - \frac{1}{2\mt} \right| 	\left| \E \epsilon_k \epsilon_l \right| + \frac{1}{2\mt}  \sum_{ |k - l| >  m'} \left| \E \epsilon_k \epsilon_l \right|\\
	&+ O(1/\mt) + O(1/m') +O_{|q'|}(n^{1/q'} \sqrt{m'/\mt}) +  O(m'/\mt)  \\ 
	=&\,  O({1/\mt}) + O(1/m') + O_{|q'|}(n^{1/q'} \sqrt{m'/\mt}) + O(m'/\mt).
	\end{align*}
	\underline{Case 3, $|i - j| \leq \mt+1 $}.  Assume $i \leq j$.  Then, 
	\[ 
	2\mt	\cov(\Theta^{(2)}_{i^*}, \Theta^{(2)}_{j^*})  =  \cov\left(E_{i,\mt}^+,E_{j,\mt}^+ \right) -  \cov\left(E_{i,\mt}^-,E_{j,\mt}^+ \right)  +  \cov\left(E_{i,\mt}^-,E_{j,\mt}^- \right) +  O(1) 
	\] 
	and  
	\[ 	
	2\mt \cov(\tilde S^{(2)}_{i^*}, \tilde S^{(2)}_{j^*}|X)  =  \left[	\sum_{k = i- \delta_2}^{i}  -  \sum_{k = i}^{j}+ \sum_{k = j}^{j+\delta_2}\right] \Phi_{k}(i,w)\Phi_{k}(j,w) \,.
	\]
	Similarly to Case 2, we get for all $|i-j| \leq \mt+1$ the following bounds hold simultaneously:
	\begin{align*}
	\Big| \cov\left(E_{i,\mt}^+,E_{j,\mt}^+ \right) /(2\mt) &-   \sum_{k = i- \delta_2}^{i} \Phi_{k}(i,w)\Phi_{k}(j,w)/(2\mt) \Big| \\
	= &\, O({1/\mt}) + O(1/m') + O_{|q'|}(n^{1/q'} \sqrt{m'/\mt}) + O(m'/\mt)\,,  \\
	\Big| \cov\left(E_{i,\mt}^-,E_{j,\mt}^+ \right) /(2\mt) & -   \sum_{k = i}^{j}\Phi_{k}(i,w)\Phi_{k}(j,w) /(2\mt)  \Big| \\
	=&\, O({1/\mt}) + O(1/m') + O_{|q'|}(n^{1/q'} \sqrt{m'/\mt}) + O(m'/\mt), 
	\end{align*}
and
\begin{align*}
	\Big|\cov\left(E_{i,\mt}^-,E_{j,\mt}^- \right)/(2\mt) &-  \sum_{k = j}^{j+\delta_2} \Phi_{k}(i,w)\Phi_{k}(j,w)/(2\mt) \Big| \\
	=&\, O({1/\mt}) + O(1/m') + O_{|q'|}(n^{1/q'} \sqrt{m'/\mt}) + O(m'/\mt)\,. 
	\end{align*}
	As a result, with Cases 1, 2 and 3, for all $i ,j \in [\mt+m'+1,\dots,n - \mt-m']$, we have 
	\begin{align*} 
	&\left| \cov(\tilde S^{(2)}_{i^*}, \tilde S^{(2)}_{j^*}|X) -  \cov(\Theta^{(2)}_{i^*}, \Theta^{(2)}_{j^*})\right|\\
	=  &\, O({1/\mt}) + O(1/m') + O_{|q'|}(n^{1/q'} \sqrt{m'/\mt}) + O(m'/\mt)\\
	= &\,O(1/m')+ O(m'/\mt) + O_{|q'|}(n^{1/q'} \sqrt{m'/\mt})\,. 
	\end{align*}
The lemma follows by a simple application of Markov's inequality.	
	  
\end{proof}

\begin{proof}[Proof of Theorem \ref{thm:s2_consistancy}]
Recall that \begin{align*}
T(B,\omega) =  &  \max_{i\in B}  \left| \sum_{l = i - \mt  }^{i}  \exp( \sqrt{-1} w( l- i)) X_{l,n} -\sum_{l = i +1 }^{i + \mt +1} \exp( \sqrt{-1} w ( l- i) ) X_{l,n} \right| /\sqrt{2\mt}.
\end{align*}
The first claim of the Theorem follows from Lemma \ref{lem:11} and the observation that $m'/\mt$ is dominated by $n^{2/q}\sqrt{m'/\mt}\log n$. Note that Lemmas \ref{lem:b8} to \ref{lem:12} imply that $T(B,\hat\omega)$  can be well approximated by $T(B,\omega)$ with asymptotically negligible errors and, conditional on the data, $\hat T(B,\hat\omega)$ can be well approximated by $\hat T(B,\omega)$ with asymptotically negligible errors with high probability. Note that under the hypothesis that there is no change point at frequency $\omega$, we have $T(B,\omega)=T^{(\epsilon)}(B,\omega)+O(\mt^{-1/2})$. By Lemma \ref{lem:10}, we have
\begin{eqnarray*}
\sup_{x\in\mathbb{R}}\Big|\pr(T^{(\epsilon)}(B,\omega)\le x)-\pr(T^{(y)}(B,\omega)\le x)\Big|\rightarrow 0\,,
\end{eqnarray*}
where $\{y_k\}$ is a centered Gaussian time series preserving the covariance structure of $\{\epsilon_k\}$.
On the other hand, by Lemma \ref{lem:11} and Proposition \ref{prop:complex_comparison}, we have, on an event with probability at least $1-C/(\log n)^{q'}$,
\begin{align*}
&\sup_{x\in\mathbb{R}}\Big|\pr(\hat T(B,\omega)\le x|X)-\pr(T^{(y)}(B,\omega)\le x)\Big|\\
\le&\,
\Big(\frac{1}{m'}+n^{1/q'}\sqrt{\frac{m'}{\mt}}\log n\Big)^{1/3}(\log n)^{7/6}+ \frac{(\log n)^{13/12}}{n}\,.
\end{align*}
Note that the range of $x$ does not need to be constrained here as the marginal variances of the components in $T^{(y)}(\omega)$ are bounded away from 0 by \eqref{eq:10_8} with $i=k$ therein. The theorem follows.
\end{proof}

\subsubsection{Estimation accuracy}

\begin{lemma}\label{lem:13}
Under the assumptions of Theorem \ref{thm:step2acc}, we have that
\begin{equation*}
\pr\left(\max_{1\leq r\leq M_k}|\hat{b}_{r,k}-b_{r,k}|\ge h_n\log \mt\right)\rightarrow 0    
\end{equation*}
for any $\omega_k\in\Omega$ such that $M_k\neq 0$ and any sequence $h_n$ that diverges to infinity at an arbitrarily slowly rate.
\end{lemma}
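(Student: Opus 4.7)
The plan is to show that in each iteration of Algorithm~2 the argmax of $|T(\cdot,\hat w)|$ over the remaining index set lies within $h_n\log\mt$ of some true change point $b_{r,k}$. Since Assumption~\ref{assumption9} separates the $b_{r,k}$ by $\Theta(n)\gg\mt$, the $\mt$-neighborhood removal in Algorithm~2 then yields a one-to-one correspondence $\hat b_{r,k}\leftrightarrow b_{r,k}$. I first replace $\hat w$ by $\omega_k$ using Lemmas~\ref{lem:b8} and~\ref{lem:b9} with $m^\circ=\mt$: plugging in the rates $a_n=n^{-3/2}\log n$ from Proposition~\ref{prop:nophasechange} (no phase change) or $a_n=n^{-1}h_n$ from Theorem~\ref{thm:s1_accuracy} (phase change), the restrictions $\gamma_1<1$, respectively $\gamma_1<2/3$, imposed in Theorem~\ref{thm:step2acc} force both $\mt^{3/2}a_n$ and $a_n\mt\,n^{1/q}\log n$ to be $o(1)$, so it suffices to analyze $T(i,\omega_k)=|R(i,\omega_k)|/\sqrt{2\mt}$ with $R=R^{(\mu)}+R^{(\epsilon)}$.

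For $i=b_{r,k}+s$ with $|s|<\mt/2$ the window contains a single change point by Assumption~\ref{assumption9}; expanding $\cos,\sin$ in exponentials and using $\omega_k\in[\delta_0,\pi-\delta_2]$ gives
\[
u_s:=R^{(\mu)}(b_{r,k}+s,\omega_k)=e^{-\im\omega_k(b_{r,k}+s)}\,\tfrac{(A_{r-1,k}-A_{r,k})+\im(B_{r-1,k}-B_{r,k})}{2}(\mt+1-|s|)+O(1),
\]
so $u_s=\alpha_s\beta_s u_0+O(1)$ with $\alpha_s=(\mt+1-|s|)/(\mt+1)$, $\beta_s=e^{-\im\omega_k s}$, and $|u_0|\ge(\mt+1)\delta_4/2$. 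The crucial step is the companion identity for the noise, obtained by a direct index-shift of the two summation ranges defining $R^{(\epsilon)}$:
\[
v_s:=R^{(\epsilon)}(b_{r,k}+s,\omega_k)=\beta_s\bigl[v_0+\Delta(s)\bigr],
\]
where $\Delta(s)$ is a weighted sum of $O(|s|)$ noise terms. A maximal inequality via the Gaussian approximation of Section~\ref{sec:A} gives $\sup_{|s|\le\mt/2}|\Delta(s)|/\sqrt{|s|\log\mt}=O_\pr(1)$ and $|v_0|=O_\pr(\sqrt\mt)$. The essential feature is that $u_s$ and $v_s$ share the common phase $\beta_s$, so $|R(b_{r,k}+s,\omega_k)|=|\alpha_s u_0+v_0+\Delta(s)|+O(1)$.

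Expanding the difference of squares,
\begin{align*}
|R(b_{r,k})|^2-|R(b_{r,k}+s)|^2=\,&(1-\alpha_s^2)|u_0|^2+2(1-\alpha_s)\mathrm{Re}(\bar u_0 v_0)\\
&-2\mathrm{Re}(\bar v_0\Delta(s))-|\Delta(s)|^2-2\alpha_s\mathrm{Re}(\bar u_0\Delta(s))+O(\mt).
\end{align*}
The leading term $(1-\alpha_s^2)|u_0|^2$ is of order $\mt|s|\delta_4^2$, while each remaining stochastic term is uniformly $O_\pr(\mt\sqrt{|s|\log\mt})$ (the worst being $2\alpha_s\mathrm{Re}(\bar u_0\Delta(s))$). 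Dividing by $|R(b_{r,k})|+|R(b_{r,k}+s)|\asymp\mt$, the gap satisfies $|R(b_{r,k})|-|R(b_{r,k}+s)|\ge c|s|-C\sqrt{|s|\log\mt}$, which is positive for all $|s|\ge C'\log\mt$; a union bound over the $O(1)$ true change points then forces $|\hat b_{r,k}-b_{r,k}|\le h_n\log\mt$ with the claimed probability.

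The main obstacle is securing the phase-cancellation $v_s=\beta_s(v_0+\Delta(s))$ that matches $u_s=\alpha_s\beta_s u_0$. Without this common factor, the cross term $|\mathrm{Re}(\bar u_0(v_s-v_0))|$ could only be controlled by $|u_0|\cdot|1-\beta_s|\cdot|v_0|$, which after normalization is of order $|s|\sqrt\mt$ and already dominates the drift $c|s|$, yielding no better than a $\sqrt{\mt\log\mt}$-type rate. Only the exact cancellation reduces the effective noise fluctuation to $|\Delta(s)|$ and delivers the near-parametric $\log\mt$ accuracy.
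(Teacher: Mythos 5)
Your oracle-frequency computation is essentially the paper's own: the identity $v_s=\beta_s\bigl(v_0+\Delta(s)\bigr)$ is exactly the paper's use of the rephased increments $e^{\im \hat\omega l}H^{(\epsilon)}_{l,\mt}(\hat\omega)-e^{\im \hat\omega b_1}H^{(\epsilon)}_{b_1,\mt}(\hat\omega)=O_\pr(\sqrt{r_n\log \mt})$, and your conclusion $|R(b_{r,k})|-|R(b_{r,k}+s)|\ge c|s|-C\sqrt{|s|\log\mt}$ matches the paper's final bound $-C_2^2\mt r_n/20+O(\mt\log\mt+\mt\sqrt{r_n\log\mt}+r_n\sqrt{\mt\log\mt})$ for the squared moduli. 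So the core drift-versus-fluctuation argument is fine.

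The genuine gap is in your first step. You replace $\hat w$ by $\omega_k$ via the flat bounds $\sup_i|T(i,\hat w)-T(i,\omega_k)|\lesssim \mt^{3/2}a_n+a_n\mt n^{1/q}\log n=o(1)$ and conclude that "it suffices to analyze $T(i,\omega_k)$." But the benchmark is not $1$: the signal gap you establish at distance $|s|$ from a change point is, on the scale of $T$, only $c|s|/\sqrt{2\mt}$, which at $|s|\asymp\log\mt$ is $\log\mt/\sqrt{\mt}\to 0$. A uniform perturbation of size $\epsilon_n$ can move the argmax by roughly $\epsilon_n\sqrt{\mt}$; with the deterministic part $\epsilon_n\asymp\mt^{3/2}a_n$ this is $\mt^2a_n$, i.e.\ $n^{2\gamma_1-1}h_n$ when $a_n=n^{-1}h_n$ and $n^{2\gamma_1-3/2}\log n$ when $a_n=n^{-3/2}\log n$. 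Over the admissible ranges $\gamma_1\in(16/29,2/3)$, resp.\ $\gamma_1\in(16/29,1)$, these diverge polynomially, so your reduction only localizes $\hat b_{r,k}$ at a polynomial scale, not at $\log\mt$; and the bound $\mt^{3/2}a_n$ of Lemma \ref{lem:b9} is pointwise sharp, so this cannot be repaired by sharpening constants. The paper never discards $\hat\omega$: it evaluates $|H^{(\mu)}_{l,\mt}(\hat\omega)|^2-|H^{(\mu)}_{b_1,\mt}(\hat\omega)|^2$ directly at the estimated frequency and shows the frequency perturbation contributes only $O(r_n\mt^2 g_n/n)=o(\mt r_n)$ to this \emph{difference} — the same nearby-window cancellation you correctly exploit for the noise must also be applied to the frequency error. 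Two smaller omissions: you do not rule out locations at distance greater than $\mt/2$ from every change point (where $T=O_\pr(\sqrt{\log n})$ while $T(b_{r,k},\hat\omega)\asymp\sqrt{\mt}$), and Lemma \ref{lem:b9} is stated under the hypothesis of no change point at $\omega$, so only its proof technique, not its statement, can be invoked here.
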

\begin{proof}
Without loss of generality, we assume that $M_k=1$ since other cases follow by essentially the same arguments. We shall omit the subscript $k$ in the sequel for simplicity. Write the mean function as
$\mu_i=C_0\cos(\omega i+\theta_0)+f(i/n)$ when $1\le i\le b_{1}$ and $\mu_i=C_1\cos(\omega i+\theta_1)+f(i/n)$ when $b_1<i\le n$. Define 
\[
{H}_{l,\mt}(\omega):=\sum_{k=l-\mt}^l\exp(\im \omega(k-l) )X_{k}-\sum_{k=l+1}^{l+\mt+1}\exp(\im \omega(k-l) )X_{k}\,.
\]
Recall the definitions of $H^{(\epsilon)}_{l,\mt}(\omega)$ and $H^{(\mu)}_{l,\mt}(\omega)$ in Lemmas \ref{lem:b8} and \ref{lem:b9}. Note that $H_{l,\mt}(\omega)=H^{(\epsilon)}_{l,\mt}(\omega)+H^{(\mu)}_{l,\mt}(\omega)$.
By the same argument for Lemmas \ref{lem:b8}, \ref{lem:b9} and Assumption \ref{assumption9}, we have that
\begin{align}\label{eq:HX bound lemma14}
|{H}_{b_1,\mt}(\hat\omega)|=[C_2+o(1)]\mt+o_\pr(\mt^{1/2}),
\end{align}
where $C_2=|C_0\exp(-\im\theta_0)-C_1\exp(-\im \theta_1)|$.
 Note that $C_2\ge \delta_4$ by Assumption \ref{assumption9}.

First of all, elementary calculations and Lemma \ref{lem:b9} show that
\begin{eqnarray*}
\max_{|l-b_1|>\mt}|{H}^{(\mu)}_{l,\mt}(\hat{\omega})|-|{H}^{(\mu)}_{b_1,\mt}(\hat{\omega})|=-C_2\mt+o_\pr(\mt^{1/2}).
\end{eqnarray*}
On the other hand, by the same argument for Lemmas \ref{lem:b8} and \ref{lem:10}, we have that
\begin{equation*}
\max_{|l-b_1|>\mt}|H^{(\epsilon)}_{l,\mt}(\hat{\omega})|=O_\pr((\mt\log n)^{1/2}).    
\end{equation*}
Hence, with probability converging to 1, 
\[
\max_{|l-b_1|>\mt}| H_{l,\mt}(\hat{\omega})|< H_{b_1,\mt}(\hat{\omega}),
\] 
which implies that $\pr(|\hat b_1-b_1|> \mt)\rightarrow 0$. 

Now, let $a_n$ be a diverging sequence that is dominated by $\mt$. Elementary but tedious calculations and the proof of Lemma \ref{lem:b9} yield that, uniformly for all $l$ such that $a_n< |l-b_1|<\mt$ and $\omega'$ such that $|\omega'-\omega|=O(g_n/n)$ for some $g_n>0$ diverging at an arbitrarily slowly rate, 
\begin{eqnarray}\label{eq:131}
|{H}^{(\mu)}_{l,\mt}(\omega')|^2-|{H}^{(\mu)}_{b_1,\mt}(\omega')|^2=\Upsilon_{1,n}-\Upsilon_{2,n}+O(r_n\mt^2g_n/n),
\end{eqnarray}
where 
$$\Upsilon_{1,n}=\frac{C^2_2}{4}\Big|\frac{\sin(R_nb/2)}{\sin (b/2)}\Big|^2\,,\ \ 
\Upsilon_{2,n}=\frac{C^2_2}{4}\Big|\frac{\sin(\mt b/2)}{\sin (b/2)}\Big|^2,$$
$R_n=l+\mt-b_1$, $r_n=|l-b_1|$, and $b=\omega'-\omega$. Elementary calculations using the assumption $\gamma_1<2/3$ and the fact that $x/2\le\sin(x)\le x$ for sufficiently small non-negative $x$ show that
\[
\Upsilon_{1,n}-\Upsilon_{2,n}\le -C_2^2r_n(\mt+R_n)/16\lesssim -\mt r_n
\] 
for a sufficiently large $n$. Therefore, we have that with probability approaching $1$, 
\begin{eqnarray}\label{eq:132}
|{H}^{(\mu)}_{l,\mt}(\hat\omega)|^2-|{H}^{(\mu)}_{b_1,\mt}(\hat\omega)|^2\lesssim -\mt r_n +O(r_n\mt^2g_n/n)\lesssim -\mt r_n
\end{eqnarray}
uniformly for all $l$ satisfying $|l-b_1|\in [a_n,\mt]$, where we utilized the fact that $g_n$ can approach infinity arbitrarily slowly and $\mt\ll n$ and hence $\mt^2g_n/n\ll \mt$. Furthermore, by the same argument for Lemmas \ref{lem:b8} and \ref{lem:10}, we have that
\begin{equation}
\max_{|l-b_1|\le\mt }|H^{(\epsilon)}_{l,\mt}(\hat{\omega})|=O_\pr((\mt\log \mt)^{1/2})\,. \label{eq:Hepsilon bound}   
\end{equation} 
Choose $a_n=h_n\log \mt$, where $h_n>0$ is diverging at an arbitrarily slow rate.  Then, we find that uniformly for all $l$ such that $|b_1-l|\in [a_n,\mt]$,
\begin{align*}
&|{H}_{l,\mt}(\hat\omega)|^2-|{H}_{b_1,\mt}(\hat\omega)|^2\\
\le\,& |{H}^{(\mu)}_{l,\mt}(\hat\omega)|^2-|{H}^{(\mu)}_{b_1,\mt}(\hat\omega)|^2+|H^{(\epsilon)}_{l,\mt}(\hat\omega)|^2-|H^{(\epsilon)}_{b_1,\mt}(\hat\omega)|^2+2\texttt{I}+2\texttt{II},
\end{align*}
where 
\[
\texttt{I}=|e^{\im \hat\omega l}{H}^{(\mu)}_{l,\mt}(\hat\omega)-e^{\im \hat\omega b_1}{H}^{(\mu)}_{b_1,\mt}(\hat\omega)||H^{(\epsilon)}_{l,\mt}(\hat\omega)| 
\]
and 
$$
\texttt{II}=|e^{\im \hat\omega l}H^{(\epsilon)}_{l,\mt}(\hat\omega)-e^{\im \hat\omega b_1}H^{(\epsilon)}_{b_1,\mt}(\hat\omega)|| H^{(\mu)}_{b_1,\mt}(\hat\omega)|\,.
$$
Following the arguments above, it is easy to show that 
\[
|e^{\im \hat\omega l}H^{(\epsilon)}_{l,\mt}(\hat\omega)-e^{\im \hat\omega b_1}H^{(\epsilon)}_{b_1,\mt}(\hat\omega)|=O_\pr(\sqrt{r_n\log \mt})
\] 
and 
\[
|e^{\im \hat\omega l}{H}^{(\mu)}_{l,\mt}(\hat\omega)-e^{\im \hat\omega b_1}{H}^{(\mu)}_{b_1,\mt}(\hat\omega)|=O(r_n)\,,
\] 
where the bounds are uniform across $l$ satisfying $|b_1-l|\in [a_n,\mt]$. Hence, by \eqref{eq:HX bound lemma14}, \eqref{eq:132}, and \eqref{eq:Hepsilon bound}, uniformly for all $l$ such that $|b_1-l|\in [a_n,\mt]$, we have, with probability approaching 1,
\begin{align}\label{eq:133}
&|{H}_{l,\mt}(\hat\omega)|^2-|{H}_{b_1,\mt}(\hat\omega)|^2\\
\le\,& -C_2^2\mt r_n/20+O(\mt\log \mt+\sqrt{r_n\log \mt}\mt+r_n\sqrt{\mt\log \mt}).    \nonumber
\end{align}
Observe that \eqref{eq:133} is negative for sufficiently large $n$. Hence the lemma follows.
\end{proof}

\begin{proof}[Proof of Theorem \ref{thm:step2acc}]
Part 1 of the Theorem follows by Theorem \ref{thm:s2_consistancy}. By Lemma \ref{lem:13}, we just need to show that
\begin{eqnarray}\label{eq:s2a1}
\pr(|\hat D_k|=D_k)\rightarrow1-\beta.
\end{eqnarray}
Note that, by \eqref{eq:HX bound lemma14} in Lemma \ref{lem:13}, ${H}_{b_{r,k},\mt}\ge C\mt$ with probability approaching 1. On the other hand, the critical values for the first $M_k$ steps are at most $O_\pr(\max(1,m'/\sqrt{\mt})\sqrt{\log n})$, which is dominated by $\mt$. Hence $\pr(|\hat D_k|<|D_k|)\rightarrow 0$. 

We now show that $\pr(|\hat D_k|>|D_k|)\rightarrow \beta$. By the similar arguments as those in the proof of Theorem \ref{thm:s1_accuracy}, we have that, after $M_k$ steps, the $(M_{k}+1)$-th step of change point estimation is asymptotically equivalent to performing change point test on the set $\tilde{B}_{M_{k}+1}:=B_1-\cup_{j=1}^{M_k}[b_{j,k}-h_n\log\mt-\mt,b_{j,k}+h_n\log\mt+\mt]$. Note that there is no change point on the set $\tilde{B}_{M_{k}+1}$ and each point in $\tilde{B}_{M_{k+1}}$ is at least $\mt+h_n\log \mt$ away from a change point. By the proof of Theorem \ref{thm:s2_consistancy}, we have that 
\begin{eqnarray*}
\pr(T(\tilde B_{M_{k}+1})>crit_{\beta,M_{k}+1}(\tilde B_{M_{k}+1}))\rightarrow \beta.
\end{eqnarray*}
The Theorem follows.

\end{proof}

\end{document}